\newtheorem{definition}{Definition}[section]
\newtheorem{lemma}[definition]{Lemma}
\newtheorem{theorem}[definition]{Theorem}
\newtheorem{proposition}[definition]{Proposition}
\newtheorem{remark}[definition]{Remark}
\def\emph#1{{\bfseries\itshape{#1}}}
\def\a{\mathrm{ a}}
\def\b{\mathrm{b}}
\def\R{\mathbb{R}}               
\newcommand{\pd}[2]{\frac{\partial #1}{\partial #2}}
\def\quand{\qquad\text{and}\qquad}
\def\lcf{\lbrack\! \lbrack}
\def\rcf{\rbrack\! \rbrack}
\newcommand{\ltilde}[2]{\ancho=#1 \anchom=\ancho \divide\anchom by 2
            \anchoa=\ancho \divide\anchoa by 4
        \anchob=\anchom \advance\anchob by \anchoa
      $\kern-5pt \begin{array}[b]{c}
                 \begin{picture}(1,1)(\anchom,0)
         \qbezier(0,2)(\anchoa,5)(\anchom,2)
         \qbezier(\anchom,2)(\anchob,-1)(\ancho,4)
         \qbezier(0,2)(\anchoa,4.5)(\anchom,1.8)
         \qbezier(\anchom,1.8)(\anchob,-1.5)(\ancho,4)
      \end{picture} \\[-4pt]
       \mbox{#2}
       \end{array} \kern-9pt$
       }
\newcommand\prol{\@ifstar{\@proldf}{\@prolpf}}  
\def\@prolpf{\@ifnextchar[{\@prolpf@wrt}{\@prolpf@}}
\def\@prolpf@wrt[#1]#2{\@ifnextchar[{\@prolpf@wrt@at{#1}{#2}}{\@prolpf@wrt@{#1}{#2}}}
\def\@prolpf@wrt@at#1#2[#3]{\prolsymbol^{#1}_{#3}#2}
\def\@prolpf@wrt@#1#2{\prolsymbol^{#1}#2}
\def\@prolpf@#1{\@ifnextchar[{\@prolpf@at{#1}}{\@prolpf@@{#1}}}
\def\@prolpf@at#1[#2]{\prolsymbol_{#2}#1}
\def\@prolpf@@#1{\prolsymbol#1}
\def\@proldf{\@ifnextchar[{\@proldf@wrt}{\@proldf@}}
\def\@proldf@wrt[#1]#2{\@ifnextchar[{\@proldf@wrt@at{#1}{#2}}{\@proldf@wrt@{#1}{#2}}}
\def\@proldf@wrt@at#1#2[#3]{\prolsymbol^{*#1}_{#3}#2}
\def\@proldf@wrt@#1#2{\prolsymbol^{*#1}#2}
\def\@proldf@#1{\@ifnextchar[{\@proldf@at{#1}}{\@proldf@@{#1}}}
\def\@proldf@at#1[#2]{\prolsymbol^*_{#2}#1}
\def\@proldf@@#1{\prolsymbol^*#1}
\def\prolsymbol{\mathcal{T}}
\begin{document}

\title[ Nonholonomic Systems on Lie affgebroids]{A general framework for nonholonomic mechanics:\newline Nonholonomic Systems on Lie affgebroids}

\author[D. Iglesias]{David Iglesias}
\address{D.\ Iglesias:
Instituto de Matem\'aticas y F{\'\i}sica Fundamental, Consejo
Superior de Investigaciones Cient\'{\i}ficas, Serrano 113 bis,
28006 Madrid, Spain} \email{iglesias@imaff.cfmac.csic.es}

\author[J.\ C.\ Marrero]{Juan C.\ Marrero}
\address{Juan C.\ Marrero:
Departamento de Matem\'atica Fundamental, Facultad de
Ma\-te\-m\'a\-ti\-cas, Universidad de la Laguna, La Laguna,
Tenerife, Canary Islands, Spain} \email{jcmarrer@ull.es}

\author[D.\ Mart\'{\i}n de Diego]{D. Mart\'{\i}n de Diego}
\address{D.\ Mart\'{\i}n de Diego:
Instituto de Matem\'aticas y F{\'\i}sica Fundamental, Consejo
Superior de Investigaciones Cient\'{\i}ficas, Serrano 123, 28006
Madrid, Spain} \email{d.martin@imaff.cfmac.csic.es}

\author[D. Sosa]{Diana Sosa}
\address{Diana Sosa:
Departamento de Matem\'atica Fundamental, Facultad de
Ma\-te\-m\'a\-ti\-cas, Universidad de la Laguna, La Laguna,
Tenerife, Canary Islands, Spain} \email{dnsosa@ull.es}

\keywords{Lie algebroids, Lie affgebroids, Lagrangian Mechanics,
Hamiltonian Mechanics, Nonholonomic Mechanics, Lagrange-d'Alembert
equations, Projectors, AV-bundles, Aff-Poisson brackets,
Nonholonomic brackets}

\subjclass[2000]{17B66, 37J60, 53D17, 70F25, 70H33}

\begin{abstract}
\noindent This paper presents a geometric description of
Lagrangian and Hamiltonian systems on Lie affgebroids subject to
affine nonholonomic constraints. We define the notion of
nonholonomically constrained system, and characterize regularity
conditions that guarantee that the dynamics of the system can be
obtained as a suitable projection of the unconstrained dynamics.
It is shown that one can define an almost aff-Poisson bracket on
the constraint AV-bundle, which plays a prominent role in the
description of nonholonomic dynamics. Moreover, these developments
give a  general description of nonholonomic systems and the
unified treatment permits to study nonholonomic systems after or
before reduction in the same framework. Also, it is not necessary
to distinguish between  linear or affine constraints and the
methods are valid for explicitly time-dependent systems.

\end{abstract}



\maketitle

\tableofcontents

\setcounter{section}{0}

\section{Introduction}
Nonholonomic constraints is one of the more fascinating and
applied  topics actually. First, there are many open questions
related with this subject: characterization of the integrability,
construction of geometric integrators, stabilization,
controllability... But morever, there is a wide range of
applications of this kind of systems in engineering, robotics...
(see \cite{B,Cort,NF} and references therein).

During the last years, many authors have studied in detail the
geometry of nonholonomic systems. Some of them developed a
geometric formalism for the most typical nonholonomic systems, the
ones determined by a mechanical Lagrangian, that is,
\[
L(v_q)=\frac{1}{2}g(v_q, v_q)-V(q), \ v_q\in T_qQ,
\]
where $V: Q \to \R$ is the potential energy on the configuration
space $Q$, $g$ is a Riemannian metric on $Q$ and, additionally,
the system is subjected to linear constraints on the velocities,
expressed as a nonintegrable distribution on $Q$
\cite{Cort,Lewis}. The usual formalism for these systems was the
use of adequate  projections of the Levi-Civita connection
associated to $g$ to obtain the equations of motion of the system.
Other authors preferred to work on the tangent bundle of the
configuration space, which permits to introduce more general
Lagrangians and different type of constraints (linear or
nonlinear). Usually these systems are specified by a Lagrangian
function $L: TQ\to \R$ and a constraint submanifold $M$ of $TQ$
\cite{LeMaMa,LD}. Moreover, also a description in terms of a
nonholonomic bracket was introduced \cite{CaLeMa,KoMa,VaMa} with
considerable applications to reduction. In this sense, reduction
of nonholonomic systems was intensively studied u\-sing different
geometric techniques. For instance, introducing modified Ereshmann
connections (nonholonomic connection) \cite{BKMM}, using a
symplectic distribution on the constraint submanifold
\cite{BaSn,Sn} and distinguishing the different casuistic
depending on the `position' of the Lie group of symmetries acting
on the system and the nonholonomic distribution or in a Poisson
context \cite{Mar} (see \cite{Cort} and references therein).
Moreover, using jet bundle techniques, many authors studied the
geometry of nonholonomic systems admitting an extension to
explicitly time-dependent nonholonomic systems
\cite{CLMM,LMdM,LeMaMa2,SaCaSa,SaSaCa,SaSaCa2} and ready for
extension to nonholonomic field theories \cite{VCLM}.

Today we find an scenario with a very rich  theory but with an
important lack: a general framework  unifying the different
casuistic (unreduced and reduced equations, systems subjected to
linear or affine constraints, time-dependent  or time-independent
systems...).

The aim of the present paper is cover this lack and present a
geometrical framework covering the different cases. Obviously, in
order to reach this objective it will be necessary to use some new
and sophisticated  techniques, in particular, the concept of a Lie
affgebroid (an affine version of a Lie algebroid) and appropriate
versions of  Lagrangian and nonholonomic mechanics in this
setting.

The  previous and other motivations coming from other fields
(topology, algebraic geometry...) have recently caused a  lot of
interest in the study of Lie algebroids \cite{Ma}, which in our
setting can be thought of as `generalized tangent bundles', since
it allows to consider, in a unified formalism, mechanical systems
on Lie algebras, integrable distributions, tangent bundles,
quotients of tangent bundles by Lie groups when an action is
considered... In this sense,  Lie algebroids can be used to give
unified geometric descriptions of Hamiltonian and Lagrangian
Mechanics. In \cite{We}, A. Weinstein introduced Lagrangian
systems on a Lie algebroid $E\to M$ by means of the linear Poisson
structure on the dual bundle $E^*$ and a Legendre-type map from
$E$ to $E^*$. Motivated by this description, E. Mart\'{\i}nez
\cite{EMar} developed a geometric formalism on Lie algebroids
extending the Klein's formalism in ordinary Lagrangian Mechanics
on tangent bundles. This line of research has been followed in
\cite{LMM,M2}. In fact, a geometric description of Lagrangian and
Hamiltonian dynamics on Lie algebroids, in terms of Lagrangian
submanifolds of symplectic Lie algebroids, was given in
\cite{LMM}. More recently, in \cite{CoLeMaMa} (see also
\cite{CoMa,Mest,MeLa}) a comprehensive treatment of Lagrangian
systems on Lie algebroids subject to linear nonholonomic
constraints was developed. The proposed formalism allows us to
treat in a unified way a variety of situations for
time-independent Lagrangian systems subject to linear nonholonomic
constraints (systems with symmetry, nonholonomic LL systems,
nonholonomic LR systems,...).

On the other hand, in \cite{GGrU,MMeS} an affine version of the
notion of a Lie algebroid structure was introduced. The resultant
geometric object is called a Lie affgebroid structure. Lie
affgebroid structures may be used to develop a time-dependent
version of Lagrange and Hamilton equations on Lie algebroids (see
\cite{GGU2,IMPS,M,MMeS,SaMeMa}). In fact, one may obtain Lagrange
and Hamilton equations on Lie affgebroids using a cosymplectic
formalism \cite{IMPS,M}. In the same setting of Lie affgebroids,
one also may obtain the Hamilton equations using the notion of an
aff-Poisson structure on an AV-bundle (see \cite{GGU2,IMPS}). An
AV-bundle is an affine bundle of rank $1$ modelled on the trivial
vector bundle and an aff-Poisson bracket on an AV-bundle may be
considered as an affine version of a Poisson bracket on a manifold
(see \cite{GGrU}).

The aim of this paper is develop a geometric description of
Lagrangian systems subject to affine constraints using the Lie
affgebroid theory. The general geometric framework proposed covers
the most interesting previous methods in the literature.

The paper is organized as follows. In Section II, we recall
several constructions (which will be useful in the sequel) about
the geometric description of Lagrangian and Hamiltonian Mechanics
on Lie affgebroids and the equivalence between both formalisms in
the hyperregular case. In Section III, we introduce the
Lagrange-d'Alembert equations for an affine nonholonomic
Lagrangian system on a Lie affgebroid. In Section IV, we discuss
the existence and uniqueness of solutions for this type of
systems. Moreover, we prove that in the regular case the
nonholonomic dynamics can be obtained by different projections
from the unconstrained dynamics. In Section V, we develop the
Hamiltonian description and we discuss the equivalence between the
Lagrangian and Hamiltonian formalism. We also introduce the
nonholonomic bracket which gives the evolution of an observable
for the nonholonomic dynamics. The nonholonomic bracket is an
almost aff-Poisson bracket (an aff-Poisson bracket which doesn't
satisfy, in general, "the Jacobi identity") on the constraint
AV-bundle. In Section VI, we apply the results obtained in the
paper to two particular cases: a linear nonholonomic Lagrangian
system on a Lie algebroid and an standard affine nonholonomic
Lagrangian system on the $1$-jet bundle of local sections of a
fibration $\tau: M \to \R$. As a consequence, we directly deduce
some results obtained in \cite{CoLeMaMa,LMdM,LeMaMa2} (see also
\cite{CLMM}). In addition, we also analyze (in the Lie affgebroid
setting) the reduced equations for a well-known example of an
affine nonholonomic mechanical system: a homogeneous rolling ball
without sliding on a rotating table with time-dependent angular
velocity. For this purpose, we will use a particular class of Lie
affgebroids, namely, Atiyah affgebroids.

The paper ends with our conclusions and a description of future
research directions.

\section{Lagrangian and Hamiltonian formalism on Lie affgebroids} \label{LaLa}
In this section we recall some well-known facts concerning the
geometry of  Lie algebroids and Lie affgebroids.

\subsection{Lie algebroids}
Let $E$ be a vector bundle of rank $n$ over the manifold $M$ of
dimension $m$ and $\tau_{E}:E\rightarrow M$ be the vector bundle
projection. Denote by $\Gamma(\tau_{E})$ the
$C^{\infty}(M)$-module of sections of $\tau_{E}:E\rightarrow M$. A
{\it Lie algebroid structure} $(\lcf\cdot,\cdot\rcf_{E},\rho_{E})$
on $E$ is a Lie bracket $\lcf\cdot,\cdot\rcf_{E}$ on the space
$\Gamma(\tau_{E})$ and a bundle map $\rho_{E}:E\rightarrow TM$,
called {\it the anchor map}, such that if we also denote by
$\rho_{E}:\Gamma(\tau_{E})\rightarrow\mathfrak{X}(M)$ the
homomorphism of $C^{\infty}(M)$-modules induced by the anchor map
then $\lcf X,fY\rcf_{E}=f\lcf X,Y\rcf_{E}+\rho_{E}(X)(f)Y,$ for
$X,Y\in\Gamma(\tau_{E})$ and $f\in C^{\infty}(M)$. The triple
$(E,\lcf\cdot,\cdot\rcf_{E},\rho_{E})$ is called a {\it Lie
algebroid over $M$} (see \cite{Ma}). In such a case, the anchor
map $\rho_{E}:\Gamma(\tau_{E})\rightarrow\mathfrak{X}(M)$ is a
homomorphism between the Lie algebras
$(\Gamma(\tau_{E}),\lcf\cdot,\cdot\rcf_{E})$ and
$(\mathfrak{X}(M),[\cdot,\cdot])$.

If $(E,\lcf\cdot,\cdot\rcf_{E},\rho_{E})$ is a Lie algebroid, one
may define a cohomology operator which is called {\it the
differential of $E$},
$d^E:\Gamma(\wedge^k\tau_{E}^*)\longrightarrow\Gamma(\wedge^{k+1}\tau_{E}^*)$,
as follows
\begin{equation}\label{dE}
\begin{array}{lcl}
(d^E\mu)(X_0,\dots,X_k)&=&\displaystyle\sum_{i=0}^k(-1)^i\rho_{E}(X_i)(\mu(X_0,\dots,\widehat{X_i},\dots,X_k))\\
 &+&\displaystyle\sum_{i<j}(-1)^{i+j}\mu(\lcf
 X_i,X_j\rcf_{E},X_0,\dots,\widehat{X_i},\dots,\widehat{X_j},\dots,X_k),
\end{array}
\end{equation}
for $\mu\in\Gamma(\wedge^k\tau_{E}^*)$ and
$X_0,\dots,X_k\in\Gamma(\tau_{E})$. Moreover, if
$X\in\Gamma(\tau_{E})$, one may introduce, in a natural way, {\it
the Lie derivative with respect to $X$}, as the operator
${\mathcal
L}_X^E:\Gamma(\wedge^k\tau_{E}^*)\longrightarrow\Gamma(\wedge^{k}\tau_{E}^*)$
given by ${\mathcal L}_X^E=i_X\circ d^E+d^E\circ i_X.$

If $E$ is the standard Lie algebroid $TM$ then the differential
$d^E=d^{TM}$ is the usual exterior differential associated with
$M$. On the other hand, if $E$ is a real Lie algebra ${\mathfrak
g}$ of finite dimension then ${\mathfrak g}$ is a Lie algebroid
over a single point and the differential $d^{\mathfrak g}$ is the
algebraic differential of the Lie algebra.

Now, suppose that $(E,\lcf\cdot,\cdot\rcf_{E},\rho_{E})$ and
$(E',\lcf\cdot,\cdot\rcf_{E'},\rho_{E'})$ are Lie algebroids over
$M$ and $M'$, respectively, and that $F:E\to E'$ is a vector
bundle morphism over the map $f:M\to M'.$ Then $(F,f)$ is said to
be a {\it Lie algebroid morphism} if
$$d^{E}((F,f)^*\phi')=(F,f)^*(d^{E'}\phi'),\mbox{ for }\phi'\in\Gamma(\wedge^k(\tau_{E'})^*) \mbox{ and for all } k.$$
Note that $(F,f)^*\phi'$ is the section of the vector bundle
$\wedge^k E^*\rightarrow M$ defined by
$$((F,f)^*\phi')_{x}(\a_1,\dots,\a_k)=\phi'_{f(x)}(F(\a_1),\dots,F(\a_k)),$$
for $x\in M$ and $\a_1,\dots,\a_k\in E_{x}$. If $(F,f)$ is a Lie
algebroid morphism, $f$ is an injective immersion and
$F_{|E_x}:E_x\rightarrow E'_{f(x)}$ is injective, for all $x\in
M$, then $(E,\lcf\cdot,\cdot\rcf_{E},\rho_{E})$ is said to be a
{\it Lie subalgebroid} of
$(E',\lcf\cdot,\cdot\rcf_{E'},\rho_{E'})$.

If we take local coordinates $(x^i)$ on $M$ and a local basis
$\{e_\alpha\}$ of sections of $E$, then we have the corresponding
local coordinates $(x^i,y^\alpha)$ on $E$, where $y^\alpha(\a)$ is
the $\alpha$-th coordinate of $\a\in E$ in the given basis. Such
coordinates determine local functions $\rho_\alpha^i$,
$C_{\alpha\beta}^{\gamma}$ on $M$ which contain the local
information of the Lie algebroid structure, and accordingly they
are called the {\it structure functions of the Lie algebroid.}
They are given by
\[
\rho_{E}(e_\alpha)=\rho_\alpha^i\frac{\partial }{\partial
x^i}\;\;\;\mbox{ and }\;\;\; \lcf
e_\alpha,e_\beta\rcf_{E}=C_{\alpha\beta}^\gamma e_\gamma,
\]
and they satisfy the following relations
\begin{align*}
  \rho^j_\alpha\pd{\rho^i_\beta}{x^j} -
  \rho^j_\beta\pd{\rho^i_\alpha}{x^j} = \rho^i_\gamma
  C^\gamma_{\alpha\beta}
  \quand
  \sum_{\mathrm{cyclic}(\alpha,\beta,\gamma)} \left[\rho^i_\alpha\pd{
      C^\nu_{\beta\gamma}}{x^i} + C^\mu_{\beta\gamma}
    C^\nu_{\alpha\mu}\right]=0.
\end{align*}

If $f\in C^\infty(M)$, we have that
$$d^E f=\frac{\partial f}{\partial x^i}\rho_\alpha^i e^\alpha,$$
where $\{e^\alpha\}$ is the dual basis of $\{e_\alpha\}$.
Moreover, if $\theta\in \Gamma(\tau_{E}^*)$ and
$\theta=\theta_\gamma e^\gamma$ it follows that
$$d^E \theta=(\frac{\partial \theta_\gamma}{\partial
x^i}\rho^i_\beta-\frac{1}{2}\theta_\alpha
C^\alpha_{\beta\gamma})e^{\beta}\wedge e^\gamma.$$

\subsection{The prolongation of a Lie algebroid over a
fibration}\label{sec1.1.1}

In this section, we will recall the definition of the Lie
algebroid structure on the prolongation of a Lie algebroid over a
fibration (see \cite{HM,LMM}).

Let $(E,\lcf\cdot,\cdot\rcf_{E},\rho_{E})$ be a Lie algebroid of
rank $n$ over a manifold $M$ of dimension $m$ with vector bundle
projection $\tau_{E}:E\rightarrow M$ and $\pi:M'\rightarrow M$ be
a fibration.

We consider the subset ${\mathcal T}^EM'$ of $E\times TM'$ and the
map $\tau_{E}^{\pi}:{\mathcal T}^EM'\rightarrow M'$ defined by
$${\mathcal T}^EM'=\{(\b,v')\in E\times
TM'/\rho_{E}(\b)=(T\pi)(v')\},\;\;\;\;\tau_{E}^{\pi}(\b,v')=\pi_{M'}(v'),$$
where $T\pi:TM'\rightarrow TM$ is the tangent map to $\pi$ and
$\pi_{M'}:TM'\rightarrow M'$ is  the canonical projection. Then,
$\tau_{E}^{\pi}:{\mathcal T}^EM'\rightarrow M'$ is a vector bundle
over $M'$ of rank $n+\dim M'-m$ which admits a Lie algebroid
structure $(\lcf\cdot,\cdot\rcf_{E}^{\pi},\rho_{E}^{\pi})$
characterized by
$$\lcf(X\circ\pi,U'),(Y\circ\pi,V')\rcf_{E}^{\pi}=(\lcf X,Y\rcf_{E}\circ\pi
,[U',V']),\;\;\rho_{E}^{\pi}(X\circ\pi,U')=U',$$
for all  $X,Y\in \Gamma(\tau_{E})$  and $U', V'$ vector fields
which are $\pi$-projectable to $\rho_{E}(X)$ and $\rho_{E}(Y)$,
res\-pec\-tively.
 $({\mathcal
T}^EM',\lcf\cdot,\cdot\rcf_{E}^{\pi},\rho_{E}^{\pi})$ is called
{\it the prolongation of the Lie algebroid $E$ over the fibration
$\pi$ or the $E$-tangent bundle to $M'$} (for more details, see
\cite{HM,LMM}).

An element $Z\in{\mathcal T}^EM'$ is said to be {\it vertical} if
its projection onto the first factor is zero. Therefore, it is of
the form $(0,v_{x'})$, with $v_{x'}$ a tangent vector to $M'$ at
$x'$ which is $\pi$-vertical.

Next, we consider a particular case of the above construction. Let
$E$ be a Lie algebroid over a manifold $M$ with vector bundle
projection $\tau_{E}:E\rightarrow M$ and ${\mathcal T}^EE^*$ be
the prolongation of $E$ over the projection
$\tau_{E}^*:E^*\rightarrow M$. ${\mathcal T}^EE^*$ is a Lie
algebroid over $E^*$ and we can define a canonical section
$\lambda_E$ of the vector bundle $({\mathcal T}^EE^*)^*\rightarrow
E^*$ as follows. If $\a^*\in E^*$ and $(\b,v)\in{\mathcal
T}^E_{\a^*}E^*$ then
\begin{equation}\label{lambdaE}
\lambda_E(\a^*)(\b,v)=\a^*(\b).
\end{equation}
$\lambda_E$ is called the {\it Liouville section} associated with
the Lie algebroid $E.$

Now, one may consider the nondegenerate $2$-section
$\Omega_E=-d^{{\mathcal T}^EE^*}\lambda_E$ of $({\mathcal
T}^EE^*)^*\rightarrow E^*$. It is clear that $d^{{\mathcal
T}^EE^*}\Omega_E=0$. In other words, $\Omega_E$ is a symplectic
section. $\Omega_E$ is called {\it the canonical symplectic
section} associated with the Lie algebroid $E$. Using the
symplectic section $\Omega_E$ one may introduce a linear Poisson
structure $\Pi_{E^*}$ on $E^*$, with linear Poisson bracket
$\{\cdot, \cdot\}_{E^*}$ given by
\begin{equation}\label{2.2'}
\{F,G\}_{E^*}=\Omega_E(X_F^{\Omega_E},X_G^{\Omega_E}), \mbox{ for
} F,G\in C^\infty(E^*), \end{equation} where $X_F^{\Omega_E}$ and
$X_G^{\Omega_E}$ are the Hamiltonian sections associated with $F$
and $G$, that is, $i_{X_F^{\Omega_E}}\Omega_E=d^{{\mathcal
T}^EE^*}F$ and $i_{X_G^{\Omega_E}}\Omega_E=d^{{\mathcal
T}^EE^*}G.$

Suppose that $(x^i)$ are local coordinates on an open subset $U$
of $M$ and that $\{e_{\alpha}\}$ is a local basis of sections of
the vector bundle $\tau_{E}^{-1}(U)\rightarrow U$ as above. Then,
$\{\tilde{e}_{\alpha},\bar{e}_{\alpha}\}$ is a local basis of
sections of the vector bundle
$(\tau_{E}^{\tau_{E}^*})^{-1}((\tau_{E}^*)^{-1}(U))\rightarrow
(\tau_{E}^*)^{-1}(U)$, where $\tau_{E}^{\tau_{E}^*}:{\mathcal
T}^EE^*\rightarrow E^*$ is the vector bundle projection and
$$\tilde{e}_{\alpha}(\a^*)=\Big(e_{\alpha}(\tau_{E}^*(\a^*)),\rho_{\alpha}^i\displaystyle\frac{\partial}{\partial
x^i}_{|\a^*}\Big),\;\;\;\bar{e}_{\alpha}(\a^*)=\Big(0,\displaystyle\frac{\partial}{\partial
y_{\alpha}}_{|\a^*}\Big).
$$
Here, $(x^i,y_{\alpha})$ are the local coordinates on $E^*$
induced by the local coordinates $(x^i)$ and the dual basis
$\{e^{\alpha}\}$ of $\{e_{\alpha}\}$. Moreover, we have that
$$
\begin{array}{c}
\lcf\tilde{e}_{\alpha},\tilde{e}_{\beta}\rcf_{E}^{\tau_{E}^*}\kern-2pt=C_{\alpha\beta}^{\gamma}\tilde{e}_{\gamma},\;\;
\lcf\tilde{e}_{\alpha},\bar{e}_{\beta}\rcf_{E}^{\tau_{E}^*}\kern-2pt=\lcf\bar{e}_{\alpha},\bar{e}_{\beta}\rcf_{E}^{\tau_{E}^*}\kern-2pt=0,\;
\rho_{E}^{\tau_{E}^*}(\tilde{e}_{\alpha})\kern-3pt=\rho_{\alpha}^i\displaystyle\frac{\partial}{\partial
x^i},\;\rho_{E}^{\tau_{E}^*}(\bar{e}_{\alpha})\kern-3pt=\displaystyle\frac{\partial}{\partial
y_{\alpha}},
\end{array}$$
and
\begin{equation}\label{formas}
\lambda_E(x^i,y_{\alpha})=y_{\alpha}\tilde{e}^{\alpha},\;\;\;\Omega_E(x^i,y_{\alpha})=\tilde{e}^{\alpha}\wedge\bar{e}^{\alpha}+\displaystyle\frac{1}{2}C_{\alpha\beta}^{\gamma}y_{\gamma}\tilde{e}^{\alpha}\wedge\tilde{e}^{\beta},
\end{equation}
\begin{equation}\label{eq2.3'}
\Pi_{E^*}=\rho_\alpha^i\frac{\partial}{\partial
x^i}\wedge\frac{\partial}{\partial y_\alpha
}-\displaystyle\frac{1}{2}C_{\alpha\beta}^\gamma
y_\gamma\frac{\partial}{\partial
y_\alpha}\wedge\frac{\partial}{\partial y_\beta}.
\end{equation}
(For more details, see \cite{LMM,M2}).

\subsection{Lie affgebroids}\label{secaff}

Let $\tau_{\mathcal A}:{\mathcal A}\rightarrow M$ be an affine
bundle with associated vector bundle $\tau_V:V\rightarrow M$.
Denote by $\tau_{{\mathcal A}^+}:{\mathcal A}^+={{\mathcal
A}\hspace{-0.05cm}f\hspace{-0.075cm}f}({\mathcal A},\R)\rightarrow
M$ the dual bundle whose fibre over $x\in M$ consists of affine
functions on the fibre ${\mathcal A}_x$. Note that this bundle has
a distinguished section $1_{\mathcal A}\in\Gamma(\tau_{{\mathcal
A}^+})$ corresponding to the constant function $1$ on ${\mathcal
A}$. We also consider the bidual bundle
$\tau_{\widetilde{{\mathcal A}}}:\widetilde{{\mathcal
A}}\rightarrow M$ whose fibre at $x\in M$ is the vector space
$\widetilde{{\mathcal A}}_x=({\mathcal A}_x^+)^*$. Then,
${\mathcal A}$ may be identified with an affine subbundle of
$\widetilde{{\mathcal A}}$ via the inclusion $i_{\mathcal
A}:{\mathcal A}\rightarrow\widetilde{{\mathcal A}}$ given by
$i_{\mathcal A}(\a)(\varphi)=\varphi(\a)$, which is an injective
affine map whose associated linear map is denoted by
$i_V:V\rightarrow\widetilde{{\mathcal A}}$. Thus, $V$ may be
identified with a vector subbundle of $\widetilde{{\mathcal A}}$.
Using these facts, one can prove that there is a one-to-one
correspondence between affine functions on ${\mathcal A}$ and
linear functions on $\widetilde{{\mathcal A}}$. On the other hand,
there is an obvious one-to-one correspondence between affine
functions on ${\mathcal A}$ and sections of ${\mathcal A}^+$.

A {\it
 Lie affgebroid structure} on ${\mathcal A}$ consists of a Lie algebra structure
 $\lcf\cdot,\cdot\rcf_V$ on the space
 $\Gamma(\tau_V)$ of the sections of $\tau_V:V\rightarrow M$, a $\R$-linear action
 $D:\Gamma(\tau_{\mathcal A})\times\Gamma(\tau_V)\rightarrow\Gamma(\tau_V)$ of
 the sections of ${\mathcal A}$ on $\Gamma(\tau_V)$ and an affine map
 $\rho_{\mathcal A}:{\mathcal A}\rightarrow TM$, the {\it anchor map}, satisfying the following
 conditions:
  \begin{enumerate}
\item[$ \bullet$] $D_X\lcf\bar{Y},\bar{Z}\rcf_V=\lcf
D_X\bar{Y},\bar{Z}\rcf_V+\lcf\bar{Y},D_X\bar{Z}\rcf_V,$
\item[$\bullet$]
$D_{X+\bar{Y}}\bar{Z}=D_X\bar{Z}+\lcf\bar{Y},\bar{Z}\rcf_V,$
\item[$\bullet$] $D_X(f\bar{Y})=fD_X\bar{Y}+\rho_{\mathcal
A}(X)(f)\bar{Y},$
\end{enumerate}
\noindent for $X\in\Gamma(\tau_{\mathcal A})$,
$\bar{Y},\bar{Z}\in\Gamma(\tau_V)$ and $f\in C^{\infty}(M)$ (see
\cite{GGrU,MMeS}).

If $(\lcf\cdot,\cdot\rcf_V,D,\rho_{\mathcal A})$ is a Lie
affgebroid structure on an affine bundle ${\mathcal A}$ then
$(V,\lcf\cdot,\cdot\rcf_V,\rho_V)$ is a Lie algebroid, where
$\rho_V:V\rightarrow TM$ is the vector bundle map associated with
the affine morphism $\rho_{\mathcal A}:{\mathcal A}\rightarrow
TM$.

A Lie affgebroid structure on an affine bundle $\tau_{\mathcal
A}:{\mathcal A}\rightarrow M$ induces a Lie algebroid structure
$(\lcf\cdot,\cdot\rcf_{\widetilde{{\mathcal
A}}},\rho_{\widetilde{{\mathcal A}}})$ on the bidual bundle
$\widetilde{{\mathcal A}}$ such that $1_{\mathcal
A}\in\Gamma(\tau_{{\mathcal A}^+})$ is a $1$-cocycle in the
corresponding Lie algebroid cohomology, that is,
$d^{\widetilde{{\mathcal A}}}1_{\mathcal A}=0$. Indeed, if
$X_0\in\Gamma(\tau_{\mathcal A})$ then for every section
$\widetilde{X}$ of $\widetilde{{\mathcal A}}$ there exists a
function $f\in C^{\infty}(M)$ and a section
$\bar{X}\in\Gamma(\tau_V)$ such that $\widetilde{X}=fX_0+\bar{X}$
and
$$\begin{array}{rcl}
\rho_{\widetilde{{\mathcal A}}}(fX_0+\bar{X})&=&f\rho_{\mathcal A}(X_0)+\rho_V(\bar{X}),\\
\lcf fX_0+\bar{X},gX_0+\bar{Y}\rcf_{\widetilde{{\mathcal
A}}}&=&(\rho_V(\bar{X})(g)-\rho_V(\bar{Y})(f)+f\rho_{\mathcal
A}(X_0)(g)\\&&-g\rho_{\mathcal A}(X_0)(f))X_0
+\lcf\bar{X},\bar{Y}\rcf_V+fD_{X_0}\bar{Y}-gD_{X_0}\bar{X}.
\end{array}$$

Conversely, let $(U,\lcf\cdot,\cdot\rcf_U,\rho_U)$ be a Lie
algebroid over $M$ and $\phi:U\rightarrow\R$ be a $1$-cocycle of
$(U,\lcf\cdot,\cdot\rcf_U,\rho_U)$ such that $\phi_{|U_x}\neq 0$,
for all $x\in M$. Then, ${\mathcal A}=\phi^{-1}\{1\}$  is an
affine bundle over $M$ which admits a Lie affgebroid structure in
such a way that $(U,\lcf\cdot,\cdot\rcf_U,\rho_U)$ may be
identified with the bidual Lie algebroid $(\widetilde{{\mathcal
A}},\lcf\cdot,\cdot\rcf_{\widetilde{{\mathcal
A}}},\rho_{\widetilde{{\mathcal A}}})$ to ${\mathcal A}$ and,
under this identification, the $1$-cocycle $1_{\mathcal
A}:\widetilde{{\mathcal A}}\rightarrow\R$ is just $\phi$. The
affine bundle $\tau_{\mathcal A}:{\mathcal A}\rightarrow M$ is
modelled on the vector bundle $\tau_V:V=\phi^{-1}\{0\}\rightarrow
M$. In fact, if $i_V:V\rightarrow U$ and $i_{\mathcal A}:{\mathcal
A}\rightarrow U$ are the canonical inclusions, then
$$\begin{array}{rcl} i_V\circ\lcf\bar{X},\bar{Y}\rcf_V&=&\lcf
i_V\circ\bar{X},i_V\circ\bar{Y}\rcf_U,\;\;i_V\circ D_X\bar{Y}=\lcf
i_{\mathcal A}\circ X,i_V\circ\bar{Y}\rcf_U,\\
\rho_{\mathcal A}(X)&=&\rho_{U}(i_{\mathcal A}\circ X),
\end{array}$$
 for $\bar{X},\bar{Y}\in\Gamma(\tau_V)$ and $X\in\Gamma(\tau_{\mathcal A}).$

Let $\tau_{\mathcal A}:{\mathcal A}\to M$ be a Lie affgebroid
modelled on the Lie algebroid $\tau_V:V\to M$. Suppose that
$(x^i)$ are local coordinates on an open subset $U$ of $M$ and
that $\{e_0,e_{\alpha}\}$ is a local basis of sections of
$\tau_{\widetilde{{\mathcal A}}}:\widetilde{{\mathcal A}}\to M$ in
$U$ which is adapted to the $1$-cocycle $1_{\mathcal A}$, i.e.,
such that $1_{\mathcal A}(e_0)=1$ and $1_{\mathcal
A}(e_{\alpha})=0,$ for all $\alpha.$ Note that if
$\{e^0,e^{\alpha}\}$ is the dual basis of $\{e_0,e_{\alpha}\}$
then $e^0=1_{\mathcal A}$. Denote by $(x^i,y^0,y^{\alpha})$ the
corresponding local coordinates on $\widetilde{{\mathcal A}}$.
Then, the local equation defining the affine subbundle ${\mathcal
A}$ (respectively, the vector subbundle $V$) of
$\widetilde{{\mathcal A}}$ is $y^0=1$ (respectively, $y^0=0$).
Thus, $(x^i,y^{\alpha})$ may be considered as local coordinates on
${\mathcal A}$ and $V$.

The standard example of a Lie affgebroid may be constructed as
follows. Let $\tau:M\to\R$ be a fibration and
$\tau_{1,0}:J^1\tau\to M$ be the $1$-jet bundle of local sections
of $\tau:M\to\R$. It is well known that $\tau_{1,0}:J^1\tau\to M$
is an affine bundle modelled on the vector bundle
$\pi=(\pi_M)_{|V\tau}:V\tau\to M$, where $V\tau$ is the vertical
bundle of $\tau:M\to\R$. Moreover, if $t$ is the usual coordinate
on $\R$ and $\eta$ is the closed $1$-form on $M$ given by
$\eta=\tau^*(dt)$ then we have the following identification
$J^1\tau\cong\{v\in TM/\eta(v)=1\}$ (see, for instance,
\cite{Sa}). Note that $V\tau=\{v\in TM/\eta(v)=0\}.$ Thus, the
bidual bundle $\widetilde{J^1\tau}$ to the affine bundle
$\tau_{1,0}:J^1\tau\to M$ may be identified with the tangent
bundle $TM$ to $M$ and, under this identification, the Lie
algebroid structure on $\pi_M:TM\to M$ is the standard Lie
algebroid structure and the $1$-cocycle $1_{J^1\tau}$ on
$\pi_M:TM\to M$ is just the $1$-form $\eta$.

\subsection{The Lagrangian formalism on Lie
affgebroids}\label{Section2.4}

In this section, we will develop a geometric framework, which
allows to write the Euler-Lagrange equations associated with a
Lagrangian function $L$ on a Lie affgebroid ${\mathcal A}$ in an
intrinsic way (see \cite{MMeS}).

Suppose that $(\tau_{\mathcal A}:{\mathcal A}\rightarrow M,
\tau_V:V\rightarrow M,(\lcf\cdot,\cdot\rcf_V,D,\rho_{\mathcal
A}))$  is a Lie affgebroid over $M$. Then, the bidual bundle
$\tau_{\widetilde{{\mathcal A}}}:\widetilde{{\mathcal A}}\to M$ to
${\mathcal A}$ admits a Lie algebroid structure
$(\lcf\cdot,\cdot\rcf_{\widetilde{{\mathcal
A}}},\rho_{\widetilde{{\mathcal A}}})$ in such a way that the
section $1_{\mathcal A}$ of the dual bundle ${\mathcal A}^+$ is a
$1$-cocycle.

Now, we consider the Lie algebroid prolongation (${\mathcal
T}^{\widetilde{{\mathcal A}}}{\mathcal A},$
$\lcf\cdot,\cdot\rcf_{\widetilde{{\mathcal A}}}^{\tau_{\mathcal
A}},\rho_{\widetilde{{\mathcal A}}}^{\tau_{\mathcal A}})$ of the
Lie algebroid $(\widetilde{{\mathcal
A}},\lcf\cdot,\cdot\rcf_{\widetilde{{\mathcal A}}},$ $
\rho_{\widetilde{{\mathcal A}}})$ over the fibration
$\tau_{{\mathcal A}}:{\mathcal A}\rightarrow M$ with vector bundle
projection $\tau^{\tau_{{\mathcal A}}}_{\widetilde{{\mathcal
A}}}:{\mathcal T}^{\widetilde{{\mathcal A}}}{\mathcal
A}\rightarrow {\mathcal A}$.

If $(x^i)$ are local coordinates on an open subset $U$ of $M$ and
$\{e_0,e_{\alpha}\}$ is a local basis of sections of the vector
bundle $\tau^{-1}_{\widetilde{{\mathcal A}}}(U)\rightarrow U$
adapted to $1_{\mathcal A}$, then
$\{\tilde{T}_0,\tilde{T}_{\alpha},\tilde{V}_{\alpha}\}$ is a local
basis of sections of the vector bundle $(\tau^{\tau_{{\mathcal
A}}}_{\widetilde{{\mathcal A}}})^{-1}(\tau^{-1}_{{{\mathcal
A}}}(U))\rightarrow\tau^{-1}_{{{\mathcal A}}}(U)$, where
\begin{equation}\label{tildeT}
\tilde{T}_0(\a)=\Big(e_0(\tau_{\mathcal
A}(\a)),\rho_0^i\displaystyle\frac{\partial}{\partial
x^i}_{|\a}\Big),\;\;\;
\tilde{T}_{\alpha}(\a)=\Big(e_{\alpha}(\tau_{\mathcal
A}(\a)),\rho_{\alpha}^i\displaystyle\frac{\partial}{\partial
x^i}_{|\a}\Big),\;\;\;
\tilde{V}_{\alpha}(\a)=\Big(0,\displaystyle\frac{\partial}{\partial
y^{\alpha}}_{|\a}\Big),
\end{equation}
$(x^i,y^{\alpha})$ are the local coordinates on ${\mathcal A}$
induced by the local coordinates $(x^i)$ and the basis
$\{e_{\alpha}\}$ and $\rho_0^i,\;\rho_{\alpha}^i$ are the
components of the anchor map $\rho_{\widetilde{{\mathcal A}}}$. We
also have that
\begin{equation}\label{corTVtilde}
\begin{array}{c}
\lcf\tilde{T}_0,\tilde{T}_{\alpha}\rcf_{\widetilde{{\mathcal
A}}}^{\tau_{{\mathcal
A}}}=C_{0\alpha}^{\gamma}\tilde{T}_{\gamma},\;\;\;
\lcf\tilde{T}_{\alpha},\tilde{T}_{\beta}\rcf_{\widetilde{{\mathcal A}}}^{\tau_{{\mathcal A}}}=C_{\alpha\beta}^{\gamma}\tilde{T}_{\gamma},\\[8pt]
\lcf\tilde{T}_0,\tilde{V}_{\alpha}\rcf_{\widetilde{{\mathcal A}}}^{\tau_{{\mathcal A}}}=\lcf\tilde{T}_{\alpha},\tilde{V}_{\beta}\rcf_{\widetilde{{\mathcal A}}}^{\tau_{{\mathcal A}}}=\lcf\tilde{V}_{\alpha},\tilde{V}_{\beta}\rcf_{\widetilde{{\mathcal A}}}^{\tau_{{\mathcal A}}}=0,\\[10pt]
\rho_{\widetilde{{\mathcal A}}}^{\tau_{{\mathcal
A}}}(\tilde{T}_0)=\rho_0^i\displaystyle\frac{\partial}{\partial
x^i},\;\;\;\rho_{\widetilde{{\mathcal A}}}^{\tau_{{\mathcal
A}}}(\tilde{T}_{\alpha})=\rho_{\alpha}^i\displaystyle\frac{\partial}{\partial
x^i},\;\;\;\rho_{\widetilde{{\mathcal A}}}^{\tau_{{\mathcal
A}}}(\tilde{V}_{\alpha})=\displaystyle\frac{\partial}{\partial
y^{\alpha}},
\end{array}
\end{equation}
where $C_{0\beta}^{\gamma}$ and $C_{\alpha\beta}^{\gamma}$ are the
structure functions of the Lie bracket
$\lcf\cdot,\cdot\rcf_{\widetilde{{\mathcal A}}}$ with respect to
the basis $\{e_0,e_{\alpha}\}$.  Note that, if
$\{\tilde{T}^0,\tilde{T}^{\alpha},\tilde{V}^{\alpha}\}$ is the
dual basis of
$\{\tilde{T}_0,\tilde{T}_{\alpha},\tilde{V}_{\alpha}\}$, then
$\tilde{T}^0$ is globally defined and it is a $1$-cocycle. We will
denote by $\phi_0$ the $1$-cocycle $\tilde{T}^0$. Thus, we have
that
$$\phi_0(\a)(\tilde{\b},X_\a)=1_{\mathcal A}(\tilde{\b}),\mbox{ for
}(\tilde{\b},X_\a)\in{\mathcal  T}^{\widetilde{{\mathcal
A}}}_\a{\mathcal A}.$$

One may also consider {\it the vertical endomorphism} $S:{\mathcal
T}^{\widetilde{{\mathcal A}}}{\mathcal A}\rightarrow{\mathcal
T}^{\widetilde{{\mathcal A}}}{\mathcal A}$, as a section of the
vector bundle ${\mathcal T}^{\widetilde{{\mathcal A}}}{\mathcal
A}\otimes({\mathcal T}^{\widetilde{{\mathcal A}}}{\mathcal
A})^*\to {\mathcal A}$, whose local expression is (see
\cite{MMeS})
\begin{equation}\label{verend}
S=(\tilde{T}^{\alpha}-y^{\alpha}\phi_0)\otimes\tilde{V}_{\alpha}.
\end{equation}

A section $\xi$ of $\tau_{\widetilde{{\mathcal
A}}}^{\tau_{\mathcal A}}:{\mathcal T}^{\widetilde{{\mathcal
A}}}{\mathcal A}\to {\mathcal A}$ is said to be a {\it second
order differential equation} (SODE) on ${\mathcal A}$ if
$\phi_0(\xi)=1$ and $S\,\xi=0$. If
$\xi\in\Gamma(\tau_{\widetilde{{\mathcal A}}}^{\tau_{\mathcal
A}})$ is a SODE then
$\xi=\tilde{T}_0+y^{\alpha}\tilde{T}_{\alpha}+\xi^{\alpha}\tilde{V}_{\alpha},$
where $\xi^{\alpha}$ are  local functions on ${\mathcal A}$, and
$$\rho^{\tau_{\mathcal A}}_{\widetilde{{\mathcal A}}}(\xi)=(\rho_0^i+y^{\alpha}\rho_{\alpha}^i)\displaystyle\frac{\partial}{\partial
x^i}+\xi^{\alpha}\displaystyle\frac{\partial}{\partial
y^{\alpha}}.$$

Now, a curve $\gamma:I\subseteq\R\rightarrow {\mathcal A}$ in
${\mathcal A}$ is said to be {\it admissible} if
$\rho_{\widetilde{{\mathcal A}}}\circ i_{\mathcal
A}\circ\gamma=\dot{\widehat{(\tau_{\mathcal A}\circ\gamma)}}$ or,
equivalently, $(i_{\mathcal
A}(\gamma(t)),\dot{\gamma}(t))\in{\mathcal
T}^{\widetilde{{\mathcal A}}}_{\gamma(t)}{\mathcal A}$, for all
$t\in I$, $i_{\mathcal A}:{\mathcal
A}\rightarrow\widetilde{{\mathcal A}}$ being the canonical
inclusion. We will denote by $Adm({\mathcal A})$ the space of
admissible curves on ${\mathcal A}$. Thus, if
$\gamma(t)=(x^i(t),y^{\alpha}(t)),$ for all $t\in I$, then
$\gamma$ is an admissible curve if and only if
$$\displaystyle\frac{dx^i}{dt}=\rho_0^i+\rho_{\alpha}^iy^{\alpha},\makebox[1cm]{for}i\in\{1,\dots,m\}.$$

It is clear that if $\xi$ is a SODE then the integral curves of
the vector field $\rho^{\tau_{\mathcal A}}_{\widetilde{{\mathcal
A}}}(\xi)$ are admissible.

On the other hand, let $L:{\mathcal A}\rightarrow\R$ be a
Lagrangian function. Then, we introduce {\it the
Poincar\'{e}-Cartan $1$-section}
$\Theta_L\in\Gamma((\tau^{\tau_{\mathcal A}}_{\widetilde{{\mathcal
A}}})^*)$ and {\it the Poincar\'{e}-Cartan $2$-section}
$\Omega_L\in\Gamma(\wedge^2(\tau_{\widetilde{{\mathcal
A}}}^{\tau_{\mathcal A}})^*)$ a\-sso\-cia\-ted with $L$ defined by
\begin{equation}\label{PC}
\Theta_L=L\phi_0+(d^{{\mathcal  T}^{\widetilde{{\mathcal
A}}}{\mathcal A}}L)\circ S,\;\;\;\Omega_L=-d^{{\mathcal
T}^{\widetilde{{\mathcal A}}}{\mathcal A}}\Theta_L.
\end{equation}

From (\ref{corTVtilde}), (\ref{verend}) and (\ref{PC}), we obtain
that
\begin{equation}\label{PCloc}
\begin{array}{lcl}
\Theta_L&=&(L-y^{\alpha}\displaystyle\frac{\partial L}{\partial
y^{\alpha}})\phi_0+\displaystyle\frac{\partial L}{\partial
y^{\alpha}}\tilde{T}^{\alpha},\\[8pt]
\Omega_L&=&\Big(i_{\xi_0}(d^{{\mathcal T}^{\widetilde{{\mathcal
A}}}{\mathcal A}}(\displaystyle\frac{\partial L}{\partial
y^{\alpha}}))-\displaystyle\frac{\partial L}{\partial
y^{\gamma}}(C_{0\alpha}^{\gamma}+C_{\beta\alpha}^{\gamma}y^{\beta})-\rho_{\alpha}^i\displaystyle\frac{\partial
L}{\partial
x^i}\Big)\theta^{\alpha}\wedge\phi_0\\[8pt]
&+&\displaystyle\frac{\partial^2 L}{\partial y^{\alpha}\partial
y^{\beta}}\theta^{\alpha}\wedge\psi^{\beta}
+\displaystyle\frac{1}{2}\Big(\rho_{\beta}^i\displaystyle\frac{\partial^2
L}{\partial x^i\partial
y^{\alpha}}-\rho_{\alpha}^i\displaystyle\frac{\partial^2
L}{\partial x^i\partial y^{\beta}}+\displaystyle\frac{\partial
L}{\partial
y^{\gamma}}C_{\alpha\beta}^{\gamma}\Big)\theta^{\alpha}\wedge\theta^{\beta},
\end{array}
\end{equation}
where $\theta^{\alpha}=\tilde{T}^{\alpha}-y^{\alpha}\phi_0$,
$\psi^{\alpha}=\tilde{V}^{\alpha}-\xi_0^{\alpha}\phi_0$ and
$\xi_0=\tilde{T}_0+y^{\alpha}\tilde{T}_{\alpha}+\xi_0^{\alpha}\tilde{V}_{\alpha}$
is an arbitrary SODE.

Now, a curve $\gamma:I=(-\epsilon,\epsilon)\subseteq\R\rightarrow
{\mathcal A}$ in ${\mathcal A}$ is {\it a solution of the
Euler-Lagrange equations} associated with $L$ if and only if
$\gamma$ is admissible and $i_{(i_{\mathcal
A}(\gamma(t)),\dot{\gamma}(t))}\Omega_L(\gamma(t))=0$, for all
$t$.

If $\gamma(t)=(x^i(t),y^{\alpha}(t))$ then $\gamma$ is a solution
of the Euler-Lagrange equations if and only if
\begin{equation}\label{EL}
\displaystyle\frac{dx^i}{dt}=\rho_0^i+\rho_{\alpha}^iy^{\alpha},\;\;\;\displaystyle\frac{d}{dt}\left(\displaystyle\frac{\partial
L}{\partial
y^{\alpha}}\right)=\rho_{\alpha}^i\displaystyle\frac{\partial
L}{\partial
x^i}+(C_{0\alpha}^{\gamma}+C_{\beta\alpha}^{\gamma}y^{\beta})\displaystyle\frac{\partial
L}{\partial y^{\gamma}},
\end{equation}
for $i\in\{1,\dots,m\}$ and $\alpha\in\{1,\dots,n\}$.

If we denote by ${\mathcal C}({\mathcal A}^+)$ the set of curves
in ${\mathcal A}^+$, we can define the {\it Euler-Lagrange
operator} $\delta L:Adm({\mathcal A})\to {\mathcal C}({\mathcal
A}^+)$ by
\begin{equation}\label{E-Loperator}\delta
L_{\gamma(t)}(\tilde{\a})=\Omega_L(\gamma(t))((i_{\mathcal
A}(\gamma(t)),\dot{\gamma}(t)),(\tilde{\a},v_{\gamma(t)})),
\end{equation}
for $\gamma\in Adm({\mathcal A})$ and
$\tilde{\a}\in\widetilde{{\mathcal A}}_{\tau_{\mathcal
A}(\gamma(t))}$, where $v_{\gamma(t)}\in T_{\gamma(t)}{\mathcal
A}$ is such that $(\tilde{\a},v_{\gamma(t)})\in{\mathcal
T}^{\widetilde{{\mathcal A}}}_{\gamma(t)}{\mathcal A}$, for all
$t$. It is easy to prove that the map $\delta L$ doesn't depend on
the chosen tangent vector $v_{\gamma(t)}$. From (\ref{PCloc}), we
deduce that its local expression is
$$\delta L=\Big(-\displaystyle\frac{d}{dt}\left(\displaystyle\frac{\partial
L}{\partial
y^{\alpha}}\right)+\rho_{\alpha}^i\displaystyle\frac{\partial
L}{\partial
x^i}+(C_{0\alpha}^{\gamma}+C_{\beta\alpha}^{\gamma}y^{\beta})\displaystyle\frac{\partial
L}{\partial y^{\gamma}}\Big)(e^\alpha-y^\alpha e^0),$$ where
$\{e^0,e^\alpha\}$ is the dual basis of $\{e_0,e_\alpha\}$. Then
the Euler-Lagrange differential equations read as $\delta L=0.$

The Lagrangian $L$ is {\it regular} if and only if the matrix
$(W_{\alpha\beta})=\Big(\displaystyle\frac{\partial^2L}{\partial
y^{\alpha}\partial y^{\beta}}\Big)$ is regular or, in a intrinsic
way, if  the pair $(\Omega_L,\phi_0)$ is a cosymplectic structure
on ${\mathcal T}^{\widetilde{{\mathcal A}}}{\mathcal A}$, that is,
\begin{eqnarray*}
\Big\{ \underbrace{\Omega_L\wedge \ldots \wedge
\Omega_L}_{(n}\wedge \phi_0 \Big\}(\a)\not=0,\quad d^{{\mathcal
T}^{\widetilde{{\mathcal A}}}{\mathcal A}}\Omega_L=0\;\;\mbox{ and
}\;\; d^{{\mathcal T}^{\widetilde{{\mathcal A}}}{\mathcal
A}}\phi_0=0,\mbox{ for all }a\in{\mathcal A}.
\end{eqnarray*}
Note that the first condition is equivalent to the fact that the
map $\flat_L:{\mathcal T}^{\widetilde{{\mathcal A}}}{\mathcal
A}\to({\mathcal T}^{\widetilde{{\mathcal A}}}{\mathcal A})^*$
defined by
$$\flat_L(X)=i_X\Omega_L+\phi_0(X)\phi_0$$
is an isomorphism of vector bundles.

Now, suppose that the Lagrangian $L$ is regular and let
$R_L\in\Gamma(\tau_{\widetilde{{\mathcal A}}}^{\tau_{\mathcal
A}})$ be the Reeb section of the cosymplectic structure
$(\Omega_L,\phi_0)$ characterized by the following conditions
$$i_{R_L}\Omega_L=0\makebox[1cm]{and}i_{R_L}\phi_0=1.$$

Then, $R_L$ is the unique Lagrangian SODE associated with $L$,
that is, the integral curves of the vector field
$\rho^{\tau_{\mathcal A}}_{\widetilde{{\mathcal A}}}(R_L)$ are
solutions of the Euler-Lagrange equations associated with $L$. In
such a case, $R_L$ is called {\it the Euler-Lagrange section
associated with $L$} and its local expression is
$$R_L=\tilde{T}_0+y^{\alpha}\tilde{T}_{\alpha}+W^{\alpha\beta}\Big(\rho_{\beta}^i\displaystyle\frac{\partial
L}{\partial
x^i}-(\rho_0^i+y^{\gamma}\rho_{\gamma}^i)\frac{\partial^2L}{\partial
x^i\partial
y^{\beta}}+(C_{0\beta}^{\gamma}+y^{\mu}C_{\mu\beta}^{\gamma})\frac{\partial
L}{\partial y^{\gamma}}\Big)\tilde{V}_{\alpha},$$
where $(W^{\alpha\beta})$ is the inverse matrix of
$(W_{\alpha\beta})$.

\subsection{The Hamiltonian formalism}\label{sec3.1}

In this section, we will develop a geometric framework, which
allows to write the Hamilton equations associated with a
Hamiltonian section on a Lie affgebroid (see \cite{IMPS,M}).

Suppose that $\left(\tau_{\mathcal A}:{\mathcal A}\rightarrow M,
\tau_V:V\rightarrow M, (\lcf\cdot,\cdot\rcf_V,D,\rho_{\mathcal
A})\right)$ is a Lie affgebroid. Now, consider the prolongation
$({\mathcal T}^{\widetilde{{\mathcal
A}}}V^*,\lcf\cdot,\cdot\rcf_{\widetilde{{\mathcal
A}}}^{\tau_{V}^*},\rho_{\widetilde{{\mathcal A}}}^{\tau_{V}^*})$
of the bidual Lie algebroid $(\widetilde{{\mathcal
A}},\lcf\cdot,\cdot\rcf_{\widetilde{{\mathcal A}}},$ $
\rho_{\widetilde{{\mathcal A}}})$ over the fibration
$\tau_{V}^*:V^*\rightarrow M$.

Let $(x^i)$ be local coordinates on an open subset $U$ of $M$ and
$\{e_0,e_{\alpha}\}$ be a local basis of sections of the vector
bundle $\tau^{-1}_{\widetilde{{\mathcal A}}}(U)\rightarrow U$
adapted to $1_{\mathcal A}$  and
$$\begin{array}{rclrclccrclrcl}
\lcf e_0,e_{\alpha}\rcf_{\widetilde{{\mathcal
A}}}=C_{0\alpha}^{\gamma}e_{\gamma},\;\;\lcf
e_{\alpha},e_{\beta}\rcf_{\widetilde{{\mathcal
A}}}=C_{\alpha\beta}^{\gamma}e_{\gamma},\;\;
\rho_{\widetilde{{\mathcal
A}}}(e_0)=\rho_0^i\displaystyle\frac{\partial}{\partial x^i},\;\;
\rho_{\widetilde{{\mathcal
A}}}(e_{\alpha})=\rho_{\alpha}^i\displaystyle\frac{\partial}{\partial
x^i}.
\end{array}
$$

Denote by $(x^i,y^0,y^{\alpha})$ the induced local coordinates on
$\widetilde{{\mathcal A}}$ and by $(x^i,y_0,y_{\alpha})$ the dual
coordinates on the dual vector bundle $\tau_{{\mathcal
A}^+}:{\mathcal A}^+\to M$ to $\widetilde{{\mathcal A}}$. Then,
$(x^i,y_{\alpha})$ are local coordinates on $V^*$ and
$\{\tilde{e}_0,\tilde{e}_{\alpha},\bar{e}_{\alpha}\}$ is a local
basis of sections of the vector bundle
$\tau^{\tau_{V}^*}_{\widetilde{{\mathcal A}}}:{\mathcal
T}^{\widetilde{{\mathcal A}}}V^*\rightarrow V^*$, where
$$\tilde{e}_0(\psi)=\Big(e_0(\tau_V^*(\psi)),\rho_0^i\displaystyle\frac{\partial}{\partial
x^i}_{|\psi}\Big),\;\;\tilde{e}_{\alpha}(\psi)=\Big(e_{\alpha}(\tau_V^*(\psi)),\rho_{\alpha}^i\displaystyle\frac{\partial}{\partial
x^i}_{|\psi}\Big),\;\;\bar{e}_{\alpha}(\psi)=\Big(0,\displaystyle\frac{\partial}{\partial
y_{\alpha}}_{|\psi}\Big).$$ Using this local basis one may
introduce local coordinates $(x^i,y_{\alpha};z^0,$
$z^{\alpha},v_{\alpha})$ on ${\mathcal  T}^{\widetilde{{\mathcal
A}}}V^*$. A direct computation proves that
$$
\begin{array}{l}
\lcf\tilde{e}_0,\tilde{e}_\beta\rcf_{\widetilde{{\mathcal
A}}}^{\tau_V^*}=C_{0\beta}^\gamma\tilde{e}_\gamma,\;\;\;
\lcf\tilde{e}_\alpha,\tilde{e}_\beta\rcf_{\widetilde{{\mathcal A}}}^{\tau_V^*}=C_{\alpha\beta}^\gamma\tilde{e}_\gamma,\\[8pt]
\lcf\tilde{e}_0,\bar{e}_\alpha\rcf_{\widetilde{{\mathcal
A}}}^{\tau_V^*}=\lcf\tilde{e}_\alpha,\bar{e}_\beta\rcf_{\widetilde{{\mathcal
A}}}^{\tau_V^*}=\lcf \bar{e}_\alpha,
\bar{e}_\beta\rcf_{\widetilde{{\mathcal A}}}^{\tau_V^*}=0,\\[8pt]
\rho_{\widetilde{{\mathcal
A}}}^{\tau_V^*}(\tilde{e}_0)=\rho_0^i\displaystyle\frac{\partial
}{\partial x^i},\;\;\;\rho_{\widetilde{{\mathcal
A}}}^{\tau_V^*}(\tilde{e}_\alpha)=\rho_\alpha^i\displaystyle\frac{\partial
}{\partial x^i},\;\;\; \rho_{\widetilde{{\mathcal
A}}}^{\tau_V^*}(\bar{e}_\alpha)=\displaystyle\frac{\partial
}{\partial y_\alpha},
\end{array}
$$
for all $\alpha$ and $\beta$. Thus, if
$\{\tilde{e}^0,\tilde{e}^\alpha,\bar{e}^\alpha\}$ is the dual
basis of $\{\tilde{e}_0,\tilde{e}_\alpha,\bar{e}_\alpha\}$ then
$$\begin{array}{rcl} d^{{\mathcal T}^{\widetilde{{\mathcal
A}}}V^*}f&=& \rho_0^i\displaystyle\frac{\partial f}{\partial
x^i}\tilde{e}^0+ \rho_\alpha^i\displaystyle\frac{\partial
f}{\partial x^i}\tilde{e}^\alpha + \displaystyle\frac{\partial
f}{\partial
y_\alpha} \bar{e}^\alpha,\\[8pt]
d^{{\mathcal T}^{\widetilde{{\mathcal
A}}}V^*}\tilde{e}^\gamma&=&\displaystyle
-\frac{1}{2}C_{0\alpha}^\gamma\tilde{e}^0\wedge\tilde{e}^\alpha
\displaystyle -\frac{1}{2} C_{\alpha\beta}^{\gamma}
\tilde{e}^\alpha \wedge \tilde{e}^\beta, \makebox[0.75cm]{}
d^{{\mathcal T}^{\widetilde{{\mathcal
A}}}V^*}\tilde{e}^0=d^{{\mathcal T}^{\widetilde{{\mathcal
A}}}V^*}\bar{e}^\gamma=0,
\end{array}$$
for $f\in C^\infty(V^*).$

Let $\mu:{\mathcal A}^+\rightarrow V^*$ be the canonical
projection given by $\mu(\varphi)=\varphi^l$, for $\varphi\in
{\mathcal A}^+_x$, with $x\in M$, where $\varphi^l\in V^*_x$ is
the linear map associated with the affine map $\varphi$ and
$h:V^*\rightarrow {\mathcal A}^+$ be a {\it Hamiltonian section}
of $\mu$.

Now, we consider the Lie algebroid prolongation ${\mathcal
T}^{\widetilde{{\mathcal A}}}{\mathcal A}^+$ of the Lie algebroid
$\widetilde{{\mathcal A}}$ over $\tau_{{\mathcal A}^+}:{\mathcal
A}^+\to M$ with vector bundle projection $\tau^{\tau_{{\mathcal
A}^+}}_{\widetilde{{\mathcal A}}}:{\mathcal
T}^{\widetilde{{\mathcal A}}}{\mathcal A}^+\rightarrow {\mathcal
A}^+$ (see Section \ref{sec1.1.1}). Then, we may introduce the map
${\mathcal T}h:{\mathcal T}^{\widetilde{{\mathcal
A}}}V^*\rightarrow{\mathcal T}^{\widetilde{{\mathcal A}}}{\mathcal
A}^+$ defined by ${\mathcal
T}h(\tilde{\a},X_{\alpha})=(\tilde{\a},(T_{\alpha}h)(X_{\alpha})),$
for $(\tilde{\a},X_{\alpha})\in {\mathcal
T}_\alpha^{\widetilde{{\mathcal A}}}V^*$, with $\alpha\in V^*.$ It
is easy to prove that the pair $({\mathcal  T}h,h)$ is a Lie
algebroid morphism between the Lie algebroids
$\tau_{\widetilde{{\mathcal A}}}^{\tau_V^*}:{\mathcal
T}^{\widetilde{{\mathcal A}}}V^*\rightarrow V^*$ and
$\tau_{\widetilde{{\mathcal A}}}^{\tau_{{\mathcal A}^+}}:{\mathcal
T}^{\widetilde{{\mathcal A}}}{\mathcal A}^+\rightarrow {\mathcal
A}^+$.

\noindent Next, denote by $\lambda_h$ and $\Omega_h$ the sections
of the vector bundles $({\mathcal T}^{\widetilde{{\mathcal
A}}}V^*)^*\rightarrow V^*$ and $\Lambda^2({\mathcal
T}^{\widetilde{{\mathcal A}}}V^*)^*\rightarrow V^*$ given by
\begin{equation}\label{Omegah}
\lambda_h=({\mathcal T}h,h)^*(\lambda_{\widetilde{{\mathcal
A}}}),\;\;\Omega_h=({\mathcal
T}h,h)^*(\Omega_{\widetilde{{\mathcal A}}}),
\end{equation}
where $\lambda_{\widetilde{{\mathcal A}}}$ and
$\Omega_{\widetilde{{\mathcal A}}}$ are the Liouville section and
the canonical symplectic section, respectively, associated with
the Lie algebroid $\widetilde{{\mathcal A}}.$ Note that
$\Omega_h=-d^{{\mathcal T}^{\widetilde{{\mathcal
A}}}V^*}\lambda_h.$

On the other hand, let $\eta:{\mathcal T}^{\widetilde{{\mathcal
A}}}V^*\rightarrow\R$ be the section of $({\mathcal
T}^{\widetilde{{\mathcal A}}}V^*)^*\rightarrow V^*$ defined by
\begin{equation}\label{eta}
\eta(\tilde{\a},X_{\nu})=1_{\mathcal A}(\tilde{\a}),
\end{equation}
 for
$(\tilde{\a},X_{\nu})\in {\mathcal T}_\nu^{\widetilde{{\mathcal
A}}}V^*$, with $\nu\in V^*$. Note that if $pr_1:{\mathcal
T}^{\widetilde{{\mathcal A}}}V^*\to \widetilde{{\mathcal A}}$ is
the canonical projection on the first factor then
$(pr_1,\tau_V^*)$ is a morphism between the Lie algebroids
$\tau_{\widetilde{{\mathcal A}}}^{\tau_V^*}:{\mathcal
T}^{\widetilde{{\mathcal A}}}V^*\rightarrow V^*$ and
$\tau_{\widetilde{{\mathcal A}}}:\widetilde{{\mathcal A}}\to M$
and $(pr_1,\tau_V^*)^*(1_{\mathcal A})=\eta$. Thus, since
$1_{\mathcal A}$ is a $1$-cocycle of $\tau_{\widetilde{{\mathcal
A}}}:\widetilde{{\mathcal A}}\rightarrow M$, we deduce that $\eta$
is a $1$-cocycle of the Lie algebroid $\tau_{\widetilde{{\mathcal
A}}}^{\tau_V^*}:{\mathcal T}^{\widetilde{{\mathcal
A}}}V^*\rightarrow V^*.$

Suppose that
$h(x^i,y_{\alpha})=(x^i,-H(x^j,y_{\beta}),y_{\alpha})$ and that
$\{\tilde{e}^0,\tilde{e}^{\alpha},\bar{e}^{\alpha}\}$ is the dual
basis of $\{\tilde{e}_0,\tilde{e}_{\alpha},$ $\bar{e}_{\alpha}\}$.
Then $\eta=\tilde{e}^0$ and, from (\ref{formas}), (\ref{Omegah})
and the definition of the map ${\mathcal T}h$, it follows that
\begin{equation}\label{Omegahloc}
\Omega_h=\tilde{e}^{\alpha}\wedge\bar{e}^{\alpha}+\frac{1}{2}C_{\alpha\beta}^{\gamma}y_{\gamma}\tilde{e}^{\alpha}\wedge\tilde{e}^{\beta}+(\rho_{\alpha}^i\frac{\partial
H}{\partial
x^i}-C_{0\alpha}^{\gamma}y_{\gamma})\tilde{e}^{\alpha}\wedge\tilde{e}^0+\frac{\partial
H }{\partial y_{\alpha}}\bar{e}^{\alpha}\wedge\tilde{e}^0.
\end{equation}

Thus, it is easy to prove that the pair $(\Omega_h,\eta)$ is a
cosymplectic structure on the Lie algebroid
$\tau_{\widetilde{{\mathcal A}}}^{\tau_V^*}:{\mathcal
T}^{\widetilde{{\mathcal A}}}V^*\rightarrow V^*$.

\begin{remark} {\em Let ${\mathcal  T}^VV^*$ be the prolongation of
the Lie algebroid $V$ over the projection $\tau_V^*:V^*\to M$.
Denote by $\lambda_V$ and $\Omega_V$ the Liouville section and the
canonical symplectic section, respectively, of $V$ and by
$(i_V,Id):{\mathcal  T}^VV^*\to{\mathcal T}^{\widetilde{{\mathcal
A}}}V^*$ the canonical inclusion. Then, using (\ref{lambdaE}),
(\ref{Omegah}), (\ref{eta}) and the fact that $\mu\circ h=Id$, we
obtain that
$$(i_V,Id)^*(\lambda_h)=\lambda_V,\;\;\;(i_V,Id)^*(\eta)=0.$$
Thus, since $(i_V,Id)$ is a Lie algebroid morphism, we also deduce
that
$$(i_V,Id)^*(\Omega_h)=\Omega_V.$$\hfill$\diamondsuit$}
\end{remark}

Now, let $R_h\in\Gamma(\tau_{\widetilde{{\mathcal
A}}}^{\tau_V^*})$ be the Reeb section of the cosymplectic
structure $(\Omega_h,\eta)$ characterized by the following
conditions
$$i_{R_h}\Omega_h=0\makebox[1cm]{and}i_{R_h}\eta=1.$$
With respect to the basis
$\{\tilde{e}_0,\tilde{e}_{\alpha},\bar{e}_{\alpha}\}$ of
$\Gamma(\tau_{\widetilde{{\mathcal A}}}^{\tau_V^*})$, $R_h$ is
locally expressed as follows:
\begin{equation}\label{rh}
R_h=\tilde{e}_0+\frac{\partial H}{\partial
y_{\alpha}}\tilde{e}_{\alpha}-(C_{\alpha\beta}^{\gamma}y_{\gamma}\frac{\partial
H}{\partial y_{\beta}}+\rho^i_{\alpha}\frac{\partial H}{\partial
x^i}-C_{0\alpha}^{\gamma}y_{\gamma})\bar{e}_{\alpha}.
\end{equation}

Thus, the vector field $\rho_{\widetilde{{\mathcal
A}}}^{\tau_{V}^*}(R_h)$ is locally given by
\begin{equation}\label{eq3.7'}
\rho_{\widetilde{{\mathcal
A}}}^{\tau_{V}^*}(R_h)=(\rho_0^i+\rho_{\alpha}^i\frac{\partial H
}{\partial y_{\alpha}})\frac{\partial}{\partial
x^i}+\Big(-\rho_{\alpha}^i\frac{\partial H}{\partial
x^i}+y_\gamma(C_{0\alpha}^{\gamma}+C_{\beta\alpha}^{\gamma}\frac{\partial
H}{\partial y_{\beta}})\Big)\frac{\partial}{\partial y_{\alpha}}
\end{equation}
and the integral curves of $R_h$ (i.e., the integral curves of
$\rho_{\widetilde{{\mathcal A}}}^{\tau_{V}^*}(R_h)$) are just {\it
the solutions of the Hamilton equations for $h$},
$$\frac{dx^i}{dt}=\rho_0^i+\rho_{\alpha}^i\frac{\partial
H}{\partial
y_{\alpha}},\;\;\;\frac{dy_{\alpha}}{dt}=-\rho_{\alpha}^i\frac{\partial
H }{\partial
x^i}+y_{\gamma}(C_{0\alpha}^{\gamma}+C_{\beta\alpha}^{\gamma}\frac{\partial
H }{\partial y_{\beta}}),$$
for $i\in\{1,\dots,m\}$ and $\alpha\in\{1,\dots,n\}$.

Next, we will present an alternative approach in order to obtain
the Hamilton equations. For this purpose, we will use the notion
of an aff-Poisson structure on an AV-bundle which was introduced
in \cite{GGrU} (see also \cite{GGU2}).

Let $\tau_Z:Z\to M$ be an affine bundle of rank $1$ modelled on
the trivial vector bundle $\tau_{M\times\R}:M\times\R\to M$, that
is, $\tau_Z:Z\to M$ is an {\em AV-bundle } in the terminology of
\cite{GGU2}.

Then, we have an action of $\R$ on the fibres of $Z$. This action
induces a vector field $X_Z$ on $Z$ which is vertical with respect
to the projection $\tau_Z:Z\to M$.

On the other hand, there exists a one-to-one correspondence
between the space of sections of $\tau_Z:Z\to M$,
$\Gamma(\tau_Z)$, and the set
$$\{F_h\in C^{\infty}(Z)/X_Z(F_h)=-1\}.$$
In fact, if $h\in\Gamma(\tau_Z)$ and $(x^i,s)$ are local fibred
coordinates on $Z$ such that
$X_Z=\displaystyle\frac{\partial}{\partial s}$ and $h$ is locally
defined by $h(x^i)=(x^i,-H(x^i))$, then the function $F_h$ on $Z$
is locally given by
\begin{equation}\label{eq3.81}
F_h(x^i,s)=-H(x^i)-s,
\end{equation}
(for more details, see \cite{GGU2}).

Now, an {\em aff-Poisson structure } on the AV-bundle $\tau_Z:Z\to
M$ is a bi-affine map
$$\{\cdot,\cdot\}:\Gamma(\tau_Z)\times\Gamma(\tau_Z)\to
C^{\infty}(M)$$ which satisfies the following properties:
\begin{enumerate}
\item[i)] Skew-symmetric: $\{h_1,h_2\}=-\{h_2,h_1\}$.
\item[ii)] Jacobi identity:
$$\{h_1,\{h_2,h_3\}\}_V+\{h_2,\{h_3,h_1\}\}_V+\{h_3,\{h_1,h_2\}\}_V=0,$$
where $\{\cdot,\cdot\}_V$ is the affine-linear part of the
bi-affine bracket.
\item[iii)] If $h\in\Gamma(\tau_Z)$ then
$$\{h,\cdot\}:\Gamma(\tau_Z)\to
C^{\infty}(M),\;\;\;h'\mapsto\{h,h'\},$$ is an affine derivation.
\end{enumerate}
Condition iii) implies that, for each $h\in\Gamma(\tau_Z)$ the
linear part $\{h,\cdot\}_V:C^{\infty}(M)\to C^{\infty}(M)$ of the
affine map $\{h,\cdot\}:\Gamma(\tau_Z)\to C^{\infty}(M)$ defines a
vector field on $M$, which is called {\it the Hamiltonian vector
field of $h$} (see \cite{GGU2}).

In \cite{GGU2}, the authors proved that there is a one-to-one
correspondence between aff-Poisson bra\-ckets $\{\cdot,\cdot\}$ on
$\tau_Z:Z\to M$ and Poisson brackets $\{\cdot,\cdot\}_{\Pi}$ on
$Z$ which are $X_Z$-invariant, i.e., which are associated with
Poisson 2-vectors $\Pi$ on $Z$ such that ${\mathcal
L}_{X_Z}\Pi=0$. This correspondence is determined by
$$\{h_1,h_2\}\circ\tau_Z=\{F_{h_1},F_{h_2}\}_\Pi,\makebox[1cm]{for}h_1,h_2\in\Gamma(\tau_Z).$$
Note that the function $\{F_{h_1},F_{h_2}\}_\Pi$ on $Z$ is
$\tau_Z$-projectable, i.e., ${\mathcal
L}_{X_Z}\{F_{h_1},F_{h_2}\}_\Pi$ $=0$ (because the Poisson
2-vector $\Pi$ is $X_Z$-invariant).

Using this correspondence we will prove the following result.

\begin{theorem}\label{teor3.2}\cite{IMPS} Let $\tau_{\mathcal A}:{\mathcal A}\to M$ be a Lie affgebroid modelled
on the vector bundle $\tau_V:V\to M$. Denote by $\tau_{{\mathcal
A}^+}:{\mathcal A}^+\to M$ (resp., $\tau_V^*:V^*\to M$) the dual
vector bundle to ${\mathcal A}$ (resp., to $V$) and by
$\mu:{\mathcal A}^+\to V^*$ the canonical projection. Then:
\begin{enumerate}
\item[i)] $\mu:{\mathcal A}^+\to V^*$ is an AV-bundle which admits
an aff-Poisson structure. \item[ii)] If $h:V^*\to {\mathcal A}^+$
is a Hamiltonian section (that is, $h\in\Gamma(\mu)$) then the
Hamiltonian vector field of $h$ with respect to the aff-Poisson
structure is a vector field on $V^*$ whose integral curves are
just the solutions of the Hamilton equations for $h$.
\end{enumerate}
\end{theorem}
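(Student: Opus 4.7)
The plan is to split the argument into three steps: identify $\mu\colon{\mathcal A}^+\to V^*$ as an AV-bundle, build an aff-Poisson bracket on it from the linear Poisson structure associated with the bidual Lie algebroid $\widetilde{\mathcal A}$, and then compute the Hamiltonian vector field attached to a section $h$ and match it with the vector field generating the Hamilton equations.

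For the first two steps, note that the fibre $\mu^{-1}(\psi)$ consists of the affine functions on ${\mathcal A}_{x}$ whose linear part is $\psi\in V_{x}^{*}$, and any two such functions differ by a real constant (namely a multiple of the $1$-cocycle $1_{\mathcal A}$); hence $\mu$ is an affine line bundle modelled on the trivial vector bundle $V^*\times\R\to V^*$, and in adapted local coordinates $(x^i,y_0,y_\alpha)$ one has $\mu(x^i,y_0,y_\alpha)=(x^i,y_\alpha)$ and $X_Z=\partial/\partial y_0$. I would then invoke the correspondence recalled just before the theorem: to obtain an aff-Poisson bracket on $\mu$ it suffices to exhibit an $X_Z$-invariant Poisson bracket on ${\mathcal A}^+$. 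The natural candidate is the linear Poisson structure $\Pi_{\widetilde{\mathcal A}^*}$ on ${\mathcal A}^+=\widetilde{\mathcal A}^*$ coming from the Lie algebroid structure of $\widetilde{\mathcal A}$ via formula (\ref{eq2.3'}). The crucial observation is that $1_{\mathcal A}$ being a $1$-cocycle of $\widetilde{\mathcal A}$ forces $C_{0\alpha}^{0}=C_{\alpha\beta}^{0}=0$, so the structure functions appearing in $\Pi_{\widetilde{\mathcal A}^*}$ depend only on $x^i$ and $y_\gamma$ (with $\gamma\neq 0$), never on $y_0$. Since moreover $X_Z=\partial/\partial y_0$ commutes with all the basis vector fields of $\Pi_{\widetilde{\mathcal A}^*}$, the invariance ${\mathcal L}_{X_Z}\Pi_{\widetilde{\mathcal A}^*}=0$ is immediate, and (i) follows.

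For (ii), I would compute $X_h$ directly from the aff-Poisson bracket and show that it equals $\rho_{\widetilde{\mathcal A}}^{\tau_V^*}(R_h)$, whose integral curves are the solutions of the Hamilton equations by (\ref{eq3.7'}). Writing $h(x^i,y_\alpha)=(x^i,-H(x^i,y_\alpha),y_\alpha)$ gives $F_h=-H-y_0$, and one verifies that the affine translation $h\mapsto h+g$ for $g\in C^\infty(V^*)$ corresponds to $F_h\mapsto F_h+\mu^*g$. The linear part of $\{h,\cdot\}$ therefore acts on $g$ by $X_h(g)=\{F_h,\mu^*g\}_{\Pi_{\widetilde{\mathcal A}^*}}$, a quantity which is $\tau_Z$-projectable by the invariance just proved. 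Expanding this Poisson bracket in coordinates, using the skew-symmetry $C_{\beta\alpha}^\gamma=-C_{\alpha\beta}^\gamma$ and the vanishing of the upper-$0$ structure constants, reproduces exactly the vector field displayed in (\ref{eq3.7'}). The only step requiring genuine care is the sign bookkeeping in the correspondence $h\leftrightarrow F_h$; getting the direction of the affine shift wrong would produce $-\rho_{\widetilde{\mathcal A}}^{\tau_V^*}(R_h)$ instead, so this is where I expect the main opportunity for error rather than any conceptual difficulty.
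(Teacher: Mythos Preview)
Your proposal is correct and follows essentially the same route as the paper: both identify the AV-bundle structure via the $\R$-action by multiples of $1_{\mathcal A}$, use the linear Poisson structure $\Pi_{{\mathcal A}^+}$ on ${\mathcal A}^+=\widetilde{\mathcal A}^*$ and the $1$-cocycle condition on $1_{\mathcal A}$ to obtain $X_Z$-invariance, and then compute the Hamiltonian vector field in coordinates against (\ref{eq3.7'}). Your explicit observation that the cocycle condition kills the structure constants $C_{0\alpha}^{0}$ and $C_{\alpha\beta}^{0}$ (so that no $y_0$ appears in the coefficients of $\Pi_{{\mathcal A}^+}$) is exactly the mechanism behind the invariance that the paper merely asserts; the remainder of your argument and the paper's proof coincide.
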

\begin{proof}i) It is clear that $\mu:{\mathcal {\mathcal A}}^+\to V^*$ is an
AV-bundle. In fact, if $\a^+\in {\mathcal A}^+_x$, with $x\in M$,
and $t\in\R$ then
$$\a^++t=\a^++t1_{\mathcal A}(x).$$
Thus, the $\mu$-vertical vector field $X_{{\mathcal A}^+}$ on
${\mathcal A}^+$ is just the vertical lift $1_{\mathcal A}^V$ of
the section $1_{\mathcal A}\in\Gamma(\tau_{{\mathcal A}^+})$.
Moreover, one may consider the Lie algebroid
$\tau_{\widetilde{{\mathcal A}}}:\widetilde{{\mathcal
A}}=({\mathcal A}^+)^*\to M$ and the corres\-ponding linear
Poisson 2-vector $\Pi_{{\mathcal A}^+}$ on ${\mathcal A}^+$. Then,
using the fact that $1_{\mathcal A}$ is a 1-cocycle of
$\tau_{\widetilde{{\mathcal A}}}:\widetilde{{\mathcal
A}}=({\mathcal A}^+)^*\to M$, it follows that the Poisson 2-vector
$\Pi_{{\mathcal A}^+}$ is $X_{{\mathcal A}^+}$-invariant.
Therefore, $\Pi_{{\mathcal A}^+}$ induces an aff-Poisson structure
$\{\cdot,\cdot\}$ on $\mu:{\mathcal A}^+\to V^*$ which is
characterized by the condition
\begin{equation}\label{eq3.82}
\{h_1,h_2\}\circ\mu=\{F_{h_1},F_{h_2}\}_{\Pi_{{\mathcal
A}^+}},\makebox[1cm]{for}h_1,h_2\in\Gamma(\mu).
\end{equation}
One may also prove this first part of the theorem using the
relation between special Lie affgebroid structures on an affine
bundle ${\mathcal A}'$ and aff-Poisson structures on the AV-bundle
$AV(({\mathcal A}')^\sharp)$ (see Theorem $23$ in \cite{GGU2}).

ii) From (\ref{eq3.81}) and (\ref{eq3.82}), we deduce that the
linear map $\{h,\cdot\}_V:C^\infty(V^*)\to C^\infty(V^*)$
associated with the affine map $\{h,\cdot\}:\Gamma(\mu)\to
C^\infty(V^*)$ (that is, the Hamiltonian vector field of $h$) is
given by
\begin{equation}\label{eq3.83}
\{h,\cdot\}_V(\varphi)\circ\mu=\{F_h,\varphi\circ\mu\}_{\Pi_{{\mathcal
A}^+}},\makebox[1cm]{for}\varphi\in C^\infty(V^*).
\end{equation}
Now, suppose that the local expression of $h$ is
\begin{equation}\label{eq3.84}
h(x^i,y_\alpha)=(x^i,-H(x^j,y_\beta),y_\alpha).
\end{equation}
On the other hand, using (\ref{eq2.3'}), we have that
\begin{equation}\label{eq3.85}
\Pi_{{\mathcal A}^+}=\rho_0^i\displaystyle\frac{\partial}{\partial
x^i}\wedge\frac{\partial}{\partial
y_0}+\rho_\alpha^i\displaystyle\frac{\partial}{\partial
x^i}\wedge\frac{\partial}{\partial y_\alpha}-C_{0\alpha}^\gamma
y_\gamma\frac{\partial}{\partial
y_0}\wedge\frac{\partial}{\partial
y_\alpha}-\displaystyle\frac{1}{2}C_{\alpha\beta}^\gamma
y_\gamma\frac{\partial}{\partial
y_\alpha}\wedge\frac{\partial}{\partial y_\beta}.
\end{equation}
Thus, from (\ref{eq3.83}), (\ref{eq3.84}) and (\ref{eq3.85}), we
conclude that the Hamiltonian vector field of $h$ is locally given
by
$$(\rho_0^i+\rho_\alpha^i\frac{\partial H
}{\partial y_{\alpha}})\frac{\partial}{\partial
x^i}+\Big(-\rho_{\alpha}^i\frac{\partial H}{\partial
x^i}+y_\gamma(C_{0\alpha}^{\gamma}+C_{\beta\alpha}^{\gamma}\frac{\partial
H}{\partial y_{\beta}})\Big)\frac{\partial}{\partial y_{\alpha}}$$
which proves our result (see (\ref{eq3.7'})).
\end{proof}

\subsection{The Legendre transformation and the equivalence
between the Lagrangian and Hamiltonian formalisms}\label{sec3.3}

Let $L:{\mathcal A}\to \R$ be a Lagrangian function and
$\Theta_L\in \Gamma((\tau^{\tau_{\mathcal
A}}_{\widetilde{{\mathcal A}}})^*)$ be the Poincar\'{e}-Cartan
$1$-section associated with $L$. We introduce {\it the extended
Legendre transformation associated with $L$} as the smooth map
$Leg_L:{\mathcal A}\to {\mathcal A}^+$ defined by
$Leg_L(\a)(\b)=\Theta_L(\a)(z),$
for $\a,\b\in {\mathcal A}_x,$ where $z\in{\mathcal
T}_\a^{\widetilde{{\mathcal A}}}{\mathcal A}$ is such that
$pr_1(z)=i_{\mathcal A}(\b),$ $pr_1:{\mathcal
T}^{\widetilde{{\mathcal A}}}{\mathcal A}\to\widetilde{{\mathcal
A}}$ being the restriction to ${\mathcal  T}^{\widetilde{{\mathcal
A}}}{\mathcal A}$ of the first canonical projection
$pr_1:\widetilde{{\mathcal A}}\times T{\mathcal
A}\to\widetilde{{\mathcal A}}$. The map $Leg_L$ is well-defined
and its local expression in fibred coordinates on ${\mathcal A}$
and  ${\mathcal A}^+$ is
\begin{equation}\label{locLegL}
Leg_L(x^i,y^\alpha)=(x^i,L-\frac{\partial L}{\partial
y^{\alpha}}y^{\alpha},\frac{\partial L}{\partial y^{\alpha}}).
\end{equation}

Thus, we can define {\it the Legendre transformation associated
with $L$}, $leg_L:{\mathcal A}\to V^*$, by $leg_L=\mu\circ Leg_L.$
From (\ref{locLegL}) and since
$\mu(x^i,y_0,y_{\alpha})=(x^i,y_{\alpha})$, we have that
\begin{equation}\label{loclegL}
leg_L(x^i,y^{\alpha})=(x^i,\displaystyle\frac{\partial L}{\partial
y^\alpha}).
\end{equation}

The maps $Leg_L$ and $leg_L$ induce the maps ${\mathcal
T}{Leg_L}:{\mathcal  T}^{\widetilde{{\mathcal A}}}{\mathcal A}\to
{\mathcal T}^{\widetilde{{\mathcal A}}}{\mathcal A}^+$ and
${\mathcal T}{leg_L}:{\mathcal T}^{\widetilde{{\mathcal
A}}}{\mathcal A}\to {\mathcal T}^{\widetilde{{\mathcal A}}}V^*$
defined by
\begin{equation}\label{LLegL}
({\mathcal T}{Leg_L})(\tilde{\b},X_\a)=(\tilde{\b},(T_\a
Leg_L)(X_\a)),\;\;({\mathcal
T}{leg_L})(\tilde{\b},X_\a)=(\tilde{\b},(T_\a leg_L)(X_\a)),
\end{equation}
for $\a\in {\mathcal A}$ and $(\tilde{\b},X_\a)\in {\mathcal
T}_\a^{\widetilde{{\mathcal A}}}{\mathcal A}$.

Now, let $\{\tilde{T}_0,\tilde{T}_{\alpha},\tilde{V}_{\alpha}\}$
(respectively,
$\{\tilde{e}_0,\tilde{e}_{\alpha},\bar{e}_0,\bar{e}_{\alpha}\}$)
be a local basis of $\Gamma(\tau_{\widetilde{{\mathcal
A}}}^{\tau_{\mathcal A}})$ as in Section \ref{Section2.4}
(respectively, of $\Gamma(\tau_{\widetilde{{\mathcal
A}}}^{\tau_{{\mathcal A}^+}})$ as in Section \ref{sec1.1.1}) and
denote by $(x^i,y^{\alpha};z^0,z^{\alpha},v^{\alpha})$
(respectively,
$(x^i,y_0,y_{\alpha};z^0,z^{\alpha},v_0,v_{\alpha})$) the
corresponding local coordinates on ${\mathcal
T}^{\widetilde{{\mathcal A}}}{\mathcal A}$ (respectively,
${\mathcal T}^{\widetilde{{\mathcal A}}}{\mathcal A}^+$). In
addition, suppose that
$(x^i,y_{\alpha};z^0,z^{\alpha},v_{\alpha})$ are local coordinates
on ${\mathcal  T}^{\widetilde{{\mathcal A}}}V^*$ as in Section
\ref{sec3.1}. Then, from (\ref{locLegL}), (\ref{loclegL}) and
(\ref{LLegL}), we deduce that the local expression of the maps
${\mathcal T}{Leg_L}$ and ${\mathcal T}leg_L$ is
\begin{equation}\label{LLegloc}
\begin{array}{rcl}
{\mathcal
T}{Leg_L}(x^i,y^\alpha;z^0,z^\alpha,v^\alpha)\kern-10pt&=&\kern-10pt(x^i,L-\displaystyle\frac{\partial
L}{\partial y^{\alpha}}y^{\alpha},\frac{\partial L}{\partial
y^\alpha};z^0,z^\alpha,z^0\rho_0^i(\frac{\partial L}{\partial
x^i}-\frac{\partial^2L}{\partial x^i\partial
y^{\alpha}}y^{\alpha})\\[8pt]&&
\kern-20pt+z^\alpha\rho_\alpha^i(\displaystyle\frac{\partial
L}{\partial x^i}-\frac{\partial^2 L}{\partial x^i\partial
y^\beta}y^{\beta})- v^\alpha\frac{\partial^2 L}{\partial
y^\alpha\partial y^\beta}y^{\beta}, z^0\rho_0^i\frac{\partial^2
L}{\partial x^i\partial
y^{\alpha}}\\[8pt]&&\kern-20pt+z^{\beta}\rho_{\beta}^i\displaystyle\frac{\partial^2L}{\partial
x^i\partial
y^{\alpha}}+v^{\beta}\displaystyle\frac{\partial^2L}{\partial
y^{\alpha}\partial y^{\beta}} ),
\end{array}\end{equation}
\begin{equation}\label{Llegloc}
\begin{array}{rcl}
 {\mathcal
T}{leg_L}(x^i,y^\alpha;z^0,z^\alpha,v^\alpha)&=&(x^i,\displaystyle\frac{\partial
L}{\partial
y^\alpha};z^0,z^{\alpha},z^0\rho_0^i\displaystyle\frac{\partial^2L}{\partial
x^i\partial
y^{\alpha}}+z^{\beta}\rho_{\beta}^i\frac{\partial^2L}{\partial
x^i\partial
y^{\alpha}}\\[8pt] &&+v^{\beta}\displaystyle\frac{\partial^2L}{\partial
y^{\alpha}\partial y^{\beta}}).  \end{array}\end{equation} Thus,
using (\ref{formas}), (\ref{PCloc}), (\ref{LLegL}) and
(\ref{LLegloc}), we can prove the following result.
\begin{theorem}\label{t3.2}\cite{IMPS}
The pair $({\mathcal  T}{Leg_L},Leg_L)$ is a morphism between the
Lie algebroids $({\mathcal  T}^{\widetilde{{\mathcal A}}}{\mathcal
A}, \linebreak \lcf\cdot,\cdot\rcf^{\tau_{\mathcal
A}}_{\widetilde{{\mathcal A}}},\rho^{\tau_{\mathcal
A}}_{\widetilde{{\mathcal A}}})$ and $({\mathcal
T}^{\widetilde{{\mathcal A}}}{\mathcal
A}^+,\lcf\cdot,\cdot\rcf^{\tau_{{\mathcal
A}^+}}_{\widetilde{{\mathcal A}}},\rho^{\tau_{{\mathcal
A}^+}}_{\widetilde{{\mathcal A}}}).$ Moreover, if $\Theta_L$ and
$\Omega_L$ (respectively, $\lambda_{\widetilde{{\mathcal A}}}$ and
$\Omega_{\widetilde{{\mathcal A}}})$ are the Poincar\'{e}-Cartan
$1$-section and $2$-section associated with $L$ (respectively, the
Liouville $1$-section and the canonical symplectic section
associated with $\widetilde{{\mathcal A}}$) then
\begin{equation}\label{pullback}
({\mathcal T}{Leg_L},Leg_L)^*(\lambda_{\widetilde{{\mathcal
A}}})=\Theta_L,\;\;\; ({\mathcal
T}{Leg_L},Leg_L)^*(\Omega_{\widetilde{{\mathcal A}}})=\Omega_L.
\end{equation}
\end{theorem}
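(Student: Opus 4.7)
The plan is to split the proof into three steps: verify that $(\mathcal{T}Leg_L, Leg_L)$ is a Lie algebroid morphism, compute the pullback of $\lambda_{\widetilde{\mathcal{A}}}$ pointwise, and then deduce the pullback of $\Omega_{\widetilde{\mathcal{A}}}$ by combining the first two.

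For the morphism property, the key observation is that $Leg_L\colon \mathcal{A}\to\mathcal{A}^+$ is a fibered map over the identity of $M$ (as visible in (\ref{locLegL})), so that $\mathcal{T}Leg_L$ is the canonical prolongation of $Leg_L$: by (\ref{LLegL}) it is the identity on the $\widetilde{\mathcal{A}}$-component and applies the tangent map in the second component. By a standard derivation argument, the identity $d^{\mathcal{T}^{\widetilde{\mathcal{A}}}\mathcal{A}}\circ(\mathcal{T}Leg_L,Leg_L)^{*}=(\mathcal{T}Leg_L,Leg_L)^{*}\circ d^{\mathcal{T}^{\widetilde{\mathcal{A}}}\mathcal{A}^+}$ only needs to be checked on coordinate functions and on the dual basis $\{\tilde{e}^0,\tilde{e}^\alpha,\bar{e}^0,\bar{e}^\alpha\}$ of $(\mathcal{T}^{\widetilde{\mathcal{A}}}\mathcal{A}^+)^{*}$. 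On functions it reduces to $\rho_{\widetilde{\mathcal{A}}}^{\tau_{\mathcal{A}^+}}\circ\mathcal{T}Leg_L=T Leg_L\circ\rho_{\widetilde{\mathcal{A}}}^{\tau_{\mathcal{A}}}$, immediate from the definitions. On $\tilde{e}^0$ and $\tilde{e}^\alpha$ the check is tautological because the structure constants $C^\gamma_{0\alpha}$, $C^\gamma_{\alpha\beta}$ are intrinsic to $\widetilde{\mathcal{A}}$ and appear identically in the bracket formulas (\ref{corTVtilde}) for both prolongations; only the check on $\bar{e}^\gamma$ requires work, where using (\ref{LLegloc}) one verifies that $d^{\mathcal{T}^{\widetilde{\mathcal{A}}}\mathcal{A}}((\mathcal{T}Leg_L,Leg_L)^{*}\bar{e}^\gamma)=0$ by a routine cancellation of mixed second derivatives of $L$.

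For the first pullback identity, I unpack it pointwise at $\alpha\in\mathcal{A}$ and $(\tilde{\beta},X_\alpha)\in\mathcal{T}^{\widetilde{\mathcal{A}}}_\alpha\mathcal{A}$. The definition (\ref{lambdaE}) gives
\[
(\mathcal{T}Leg_L,Leg_L)^{*}(\lambda_{\widetilde{\mathcal{A}}})(\alpha)(\tilde{\beta},X_\alpha)=Leg_L(\alpha)(\tilde{\beta}),
\]
which is independent of $X_\alpha$ and linear in $\tilde{\beta}$. The local formula (\ref{PCloc}) shows that $\Theta_L$ has no $\tilde{V}^\alpha$-component, so $\Theta_L(\alpha)(\tilde{\beta},X_\alpha)$ is also independent of $X_\alpha$; evaluating both sides on the basis $\{\tilde{T}_0,\tilde{T}_\alpha\}$ of the first factor and comparing against the local expression (\ref{locLegL}) of $Leg_L$ viewed as a linear functional on $\widetilde{\mathcal{A}}$ through the embedding $\mathcal{A}^+\subset\widetilde{\mathcal{A}}^{*}$ produces the matching values $L-y^\alpha\,\partial L/\partial y^\alpha$ and $\partial L/\partial y^\alpha$. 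Granted this, the second identity follows formally from $\Omega_{\widetilde{\mathcal{A}}}=-d^{\mathcal{T}^{\widetilde{\mathcal{A}}}\mathcal{A}^+}\lambda_{\widetilde{\mathcal{A}}}$, the morphism property, and (\ref{PC}) via
\[
(\mathcal{T}Leg_L,Leg_L)^{*}(\Omega_{\widetilde{\mathcal{A}}})=-d^{\mathcal{T}^{\widetilde{\mathcal{A}}}\mathcal{A}}(\mathcal{T}Leg_L,Leg_L)^{*}(\lambda_{\widetilde{\mathcal{A}}})=-d^{\mathcal{T}^{\widetilde{\mathcal{A}}}\mathcal{A}}\Theta_L=\Omega_L.
\]

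The main obstacle is the bracket-compatibility check on $\bar{e}^\gamma$: this is where the explicit formulas (\ref{LLegloc}) for the Legendre transform interact with the Lie algebroid structure of $\widetilde{\mathcal{A}}$, and it is the only portion that is not essentially tautological. A conceptually cleaner alternative is to invoke, once and for all, the general principle that the canonical prolongation of any fibered map between fibrations over $M$ to a Lie algebroid over $M$ is automatically a morphism between the respective $\widetilde{\mathcal{A}}$-tangent bundles; this packages the calculation into a reusable lemma and leaves only the pointwise pullback identity for $\lambda_{\widetilde{\mathcal{A}}}$ to be verified by hand.
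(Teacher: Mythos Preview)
Your proposal is correct and follows essentially the same approach as the paper: the paper simply states that the result follows from the local formulas (\ref{formas}), (\ref{PCloc}), (\ref{LLegL}) and (\ref{LLegloc}), i.e.\ a direct coordinate verification, and refers to \cite{IMPS} for details. Your write-up carries this out with a cleaner organization, and your closing remark---that the morphism property is an instance of the general fact that the $\widetilde{\mathcal{A}}$-prolongation of any fibered map over $M$ is automatically a Lie algebroid morphism between the corresponding prolongation algebroids---is a worthwhile conceptual improvement that the paper does not make explicit.
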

From (\ref{loclegL}), it follows
\begin{proposition}\cite{IMPS}
The Lagrangian $L$ is regular if and only if the Legendre
transformation $leg_L:{\mathcal A}\to V^*$ is a local
diffeomorphism.
\end{proposition}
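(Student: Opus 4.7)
The plan is to compare the two characterizations of regularity in local coordinates and invoke the inverse function theorem. Both manifolds $\mathcal{A}$ and $V^*$ have dimension $m+n$ (where $m=\dim M$ and $n$ is the rank of $V$, hence also the rank of $\mathcal{A}$), so $leg_L$ is a local diffeomorphism if and only if its tangent map is an isomorphism at every point, which by the inverse function theorem is equivalent to the nonvanishing of the Jacobian determinant of $leg_L$ in local coordinates.

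First, I would take the local coordinates $(x^i, y^\alpha)$ on $\mathcal{A}$ and $(x^i, y_\alpha)$ on $V^*$ adapted to the decomposition of $\widetilde{\mathcal{A}}$ induced by the $1$-cocycle $1_{\mathcal{A}}$, and recall from (\ref{loclegL}) that
\begin{equation*}
leg_L(x^i, y^\alpha) = \Big(x^i,\, \frac{\partial L}{\partial y^\alpha}\Big).
\end{equation*}
Computing the Jacobian matrix of $leg_L$ with respect to these coordinates gives the block-triangular form
\begin{equation*}
J(leg_L) = \begin{pmatrix} \mathrm{Id}_{m} & 0 \\[2pt] \dfrac{\partial^2 L}{\partial x^i \partial y^\alpha} & \dfrac{\partial^2 L}{\partial y^\alpha \partial y^\beta} \end{pmatrix},
\end{equation*}
whose determinant equals $\det(W_{\alpha\beta})$.

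From the definition of regularity recalled in Section \ref{Section2.4}, $L$ is regular precisely when the Hessian matrix $(W_{\alpha\beta}) = \bigl(\frac{\partial^2 L}{\partial y^\alpha \partial y^\beta}\bigr)$ is nonsingular at every point of $\mathcal{A}$. Combining this with the computation of $J(leg_L)$ above, the Jacobian of $leg_L$ is an isomorphism at a point $a\in\mathcal{A}$ if and only if $(W_{\alpha\beta}(a))$ is invertible. Applying the inverse function theorem yields the equivalence with $leg_L$ being a local diffeomorphism.

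No serious obstacle is expected: the argument is essentially a direct coordinate computation. The only subtlety worth verifying is that the intrinsic notion of regularity (given either by the non-degeneracy of the top form $\Omega_L^n \wedge \phi_0$, or equivalently by $\flat_L$ being a vector bundle isomorphism) agrees with the coordinate condition $\det(W_{\alpha\beta}) \neq 0$; but this is precisely the content of the equivalent intrinsic formulation already stated in Section \ref{Section2.4}, so nothing new needs to be proved here.
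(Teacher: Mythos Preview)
Your proposal is correct and follows exactly the route the paper indicates: the paper simply states that the result ``follows from (\ref{loclegL})'', i.e., from the local expression $leg_L(x^i,y^{\alpha})=(x^i,\partial L/\partial y^\alpha)$, and your Jacobian computation and appeal to the inverse function theorem make this explicit.
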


Next, we will assume that $L$ is {\it hyperregular}, that is,
$leg_L$ is a global diffeomorphism. Then, from (\ref{LLegL}) and
Theorem \ref{t3.2}, we conclude that the pair $({\mathcal
T}leg_L,leg_L)$ is a Lie algebroid isomorphism. Moreover, we can
consider the Hamiltonian section $h:V^*\to {\mathcal A}^+$ defined
by
\begin{equation}\label{h}
h=Leg_L\circ leg_L^{-1}, \end{equation} the corresponding
cosymplectic structure $(\Omega_{h},\eta)$ on the Lie algebroid
$\tau_{\widetilde{{\mathcal A}}}^{\tau_V^*}:{\mathcal
T}^{\widetilde{{\mathcal A}}}V^*\to V^*$ and the Hamiltonian
section $R_{h}\in\Gamma(\tau_{\widetilde{{\mathcal
A}}}^{\tau_V^*})$.

Using (\ref{Omegah}), (\ref{Llegloc}), (\ref{pullback}), (\ref{h})
and Theorem \ref{t3.2}, we deduce that

\begin{theorem}\label{t3.4}\cite{IMPS}
If the Lagrangian $L$ is hyperregular then the Euler-Lagrange
section $R_L$ asso\-cia\-ted with $L$ and the Hamiltonian section
$R_{h}$ associated with $h$ satisfy the following relation
$$R_{h}\circ leg_L={\mathcal  T}leg_L\circ R_L.$$

Moreover, if $\gamma:I\to {\mathcal A}$ is a solution of the
Euler-Lagrange equations associated with $L$, then $leg_L\circ
\gamma:I\to V^*$ is a solution of the Hamilton equations
associated with $h$ and, conversely, if $\bar{\gamma}:I\to V^*$ is
a solution of the Hamilton equations for $h$ then
$\gamma=leg_L^{-1}\circ\bar{\gamma}$ is a solution of the
Euler-Lagrange equations for $L$.
\end{theorem}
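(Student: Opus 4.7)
The plan is to derive the intertwining relation $R_h\circ leg_L=\mathcal{T}leg_L\circ R_L$ by pulling back the defining cosymplectic equations for $R_h$ via the map $(\mathcal{T}leg_L,leg_L)$, and then to deduce the statement about trajectories from the fact that $(\mathcal{T}leg_L,leg_L)$ is a Lie algebroid isomorphism.

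First I would compute the pullbacks of $\Omega_h$ and $\eta$ under $(\mathcal{T}leg_L,leg_L)$. Since $h=Leg_L\circ leg_L^{-1}$, one has $h\circ leg_L=Leg_L$, hence
\[
(\mathcal{T}Leg_L,Leg_L)=(\mathcal{T}h\circ\mathcal{T}leg_L,\,h\circ leg_L).
\]
Combining this with the definition $\Omega_h=(\mathcal{T}h,h)^*(\Omega_{\widetilde{\mathcal{A}}})$ and with Theorem \ref{t3.2} (which gives $(\mathcal{T}Leg_L,Leg_L)^*(\Omega_{\widetilde{\mathcal{A}}})=\Omega_L$), a direct functoriality argument yields $(\mathcal{T}leg_L,leg_L)^*(\Omega_h)=\Omega_L$. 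For the $1$-cocycle, one evaluates: for $(\tilde\beta,X_a)\in\mathcal{T}_a^{\widetilde{\mathcal{A}}}\mathcal{A}$, the definition of $\mathcal{T}leg_L$ gives $\mathcal{T}leg_L(\tilde\beta,X_a)=(\tilde\beta,(T_a leg_L)(X_a))$; applying $\eta$ picks out $1_{\mathcal{A}}(\tilde\beta)=\phi_0(\tilde\beta,X_a)$, so $(\mathcal{T}leg_L,leg_L)^*(\eta)=\phi_0$.

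With these two identities in hand, the main step is immediate. Define the section $\bar R$ of $\tau^{\tau_V^*}_{\widetilde{\mathcal{A}}}$ along $leg_L$ by $\bar R=\mathcal{T}leg_L\circ R_L\circ leg_L^{-1}$; it is well-defined because $leg_L$ is a diffeomorphism. Using $(\mathcal{T}leg_L,leg_L)^*\Omega_h=\Omega_L$ and that $\mathcal{T}leg_L$ is fibrewise an isomorphism, $i_{R_L}\Omega_L=0$ translates into $i_{\bar R}\Omega_h=0$; similarly $(\mathcal{T}leg_L,leg_L)^*\eta=\phi_0$ together with $i_{R_L}\phi_0=1$ gives $i_{\bar R}\eta=1$. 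By the characterization of the Reeb section of the cosymplectic structure $(\Omega_h,\eta)$ (which is unique because $L$ is hyperregular, hence $(\Omega_h,\eta)$ is a genuine cosymplectic structure), we conclude $\bar R=R_h$, that is, $R_h\circ leg_L=\mathcal{T}leg_L\circ R_L$.

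For the second assertion, I would invoke the morphism property of $(\mathcal{T}leg_L,leg_L)$: the anchors intertwine as $\rho^{\tau_V^*}_{\widetilde{\mathcal{A}}}\circ\mathcal{T}leg_L=T leg_L\circ\rho^{\tau_{\mathcal{A}}}_{\widetilde{\mathcal{A}}}$. Combined with $R_h\circ leg_L=\mathcal{T}leg_L\circ R_L$, this gives that the vector fields $\rho^{\tau_{\mathcal{A}}}_{\widetilde{\mathcal{A}}}(R_L)$ and $\rho^{\tau_V^*}_{\widetilde{\mathcal{A}}}(R_h)$ are $leg_L$-related, hence $leg_L$ carries integral curves of the former (the solutions of the Euler--Lagrange equations) to integral curves of the latter (the solutions of the Hamilton equations for $h$), and the converse follows by using $leg_L^{-1}$. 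The only mildly delicate point is the bookkeeping in the first step, namely checking that the nondegeneracy of $(\Omega_L,\phi_0)$ is preserved under pullback so that one can uniquely identify $\bar R$ with the Reeb section $R_h$; this however is automatic from the fact that $(\mathcal{T}leg_L,leg_L)$ is an isomorphism of Lie algebroids. A consistency check against the coordinate expressions (\ref{rh}) and the local formula for $R_L$ using (\ref{Llegloc}) may then be used to verify the computation.
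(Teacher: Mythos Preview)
Your proof is correct and follows essentially the same approach as the paper, which simply invokes (\ref{Omegah}), (\ref{Llegloc}), (\ref{pullback}), (\ref{h}) and Theorem~\ref{t3.2} without further detail. You have merely spelled out explicitly the pullback identities $(\mathcal{T}leg_L,leg_L)^*\Omega_h=\Omega_L$ and $(\mathcal{T}leg_L,leg_L)^*\eta=\phi_0$ and the Reeb-section uniqueness argument that these references encode.
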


Note that if the local expression of the inverse of the Legendre
transformation $leg_L:{\mathcal A}\to V^*$ is
$$leg_L^{-1}(x^i,y_\beta)=(x^i,y^\alpha(x^j,y_\beta)),$$
then the Hamiltonian section $h$ is locally given by
$$h(x^i,y_\alpha)=(x^i,-H(x^j,y_\beta),y_\alpha),$$
where the function $H$ is
$$H(x^i,y_\alpha)=y_\beta
y^\beta(x^i,y_\alpha)-L(x^i,y^\beta(x^i,y_\alpha)).$$ Thus,
\begin{equation}\label{PartialsH}
\displaystyle\frac{\partial H}{\partial
x^i}_{|(x^j,y_\beta)}=-\frac{\partial L}{\partial
x^i}_{|(x^j,y^\alpha(x^j,y_\beta))},\;\;\;\;\frac{\partial
H}{\partial y_\beta}_{|(x^i,y_\alpha)}=y^\beta(x^i,y_\alpha).
\end{equation}
Therefore,
\begin{equation}\label{legL-1}
leg_L^{-1}(x^i,y_\alpha)=(x^i,\displaystyle\frac{\partial
H}{\partial y_\alpha}).
\end{equation}

\section{Affinely constrained Lagrangian systems}\label{Noho}

We start with a free Lagrangian system on a Lie affgebroid
${\mathcal A}$ of rank $n$. Now, we plug in some nonholonomic
affine constraints described by an affine subbbundle ${\mathcal
B}$ of rank $n-r$ of the bundle ${\mathcal A}$ of admissible
directions, that is, we have an affine bundle $\tau_{\mathcal
B}:{\mathcal B}\to M$ with associated vector bundle
$\tau_{U_{\mathcal B}}: U_{\mathcal B}\to M$ and the corresponding
inclusions $i_{\mathcal B}:{\mathcal B}\hookrightarrow {\mathcal
A}$ and $i_{U_{\mathcal B}}:U_{\mathcal B}\hookrightarrow V$,
$\tau_V:V\to M$ being the vector bundle associated with the affine
bundle $\tau_{\mathcal A}:{\mathcal A}\to M$. If we impose to the
admissible solution curves $\gamma(t)$ the condition to stay on
the manifold ${\mathcal B}$, we arrive at the equations $\delta
L_{\gamma(t)}=\lambda(t)$ and $\gamma(t)\in {\mathcal B}$, where
the constraint force $\lambda(t)\in {\mathcal A}^+_{\tau_{\mathcal
A}(\gamma(t))}$ is to be determined. In the standard case
(${\mathcal A}=J^1\tau$, $\tau:M\to\R$ being a fibration),
$\lambda$ takes values in the annihilator of the constraint
submanifold ${\mathcal B}$. Therefore, in the case of a general
Lie affgebroid, the natural \emph{Lagrange-d'Alembert equations}
one should pose are
$$\delta L_{\gamma(t)}\in
{\mathcal B}^\circ_{\tau_{\mathcal
B}(\gamma(t))}\makebox[1cm]{and}\gamma(t)\in {\mathcal B},$$ where
${\mathcal B}^\circ=\{\varphi\in {\mathcal
A}^+/\varphi_{|{\mathcal B}}\equiv 0\}$ is the affine annihilator
of ${\mathcal B}$ which is a vector subbundle of ${\mathcal A}^+$
with rank equal to $r$.

In more explicit terms, we look for curves $\gamma(t)\in {\mathcal
A}$ such that
\begin{itemize}
\item[-]$\gamma\in Adm({\mathcal A})$, \item[-]$\gamma(t)\in
{\mathcal B}_{\tau_{\mathcal A}(\gamma(t))}$, \item[-]there exists
$\lambda(t)\in {\mathcal B}^\circ_{\tau_{\mathcal A}(\gamma(t))}$
such that $\delta L_{\gamma(t)}=\lambda(t)$.
\end{itemize}

If $\gamma(t)$ is one of such curves, then
$(\gamma(t),\dot{\gamma}(t))$ is a curve in ${\mathcal
T}^{\widetilde{{\mathcal A}}}{\mathcal A}$ with
$i_{(\gamma(t),\dot{\gamma}(t))}\phi_0=1$. Moreover, since
$\gamma(t)$ is in ${\mathcal B}$, we have that $\dot{\gamma}(t)$
is tangent to ${\mathcal B}$, that is,
$(\gamma(t),\dot{\gamma}(t))\in{\mathcal T}^{\widetilde{{\mathcal
A}}}{\mathcal B}$ with $i_{(\gamma(t),\dot{\gamma}(t))}\phi_0=1$.
Note that, since ${\mathcal B}$ is an affine subbundle of
${\mathcal A}$, the Lie algebroid prolongation
$\tau_{\widetilde{{\mathcal A}}}^{\tau_{\mathcal B}}:{\mathcal
T}^{\widetilde{{\mathcal A}}}{\mathcal B}\to {\mathcal B}$ of the
Lie algebroid $\widetilde{{\mathcal A}}$ over the fibration
$\tau_{\mathcal B}:{\mathcal B}\to M$ is well-defined. Under some
regularity conditions (to be made precise later on), we may assume
that the above admi\-ssi\-ble curves are integral curves of a
section $X$ which we will assume that it is a SODE section
ta\-king va\-lues in ${\mathcal T}^{\widetilde{{\mathcal
A}}}{\mathcal B}$. Then, from (\ref{E-Loperator}), we deduce that
$$(i_X\Omega_L)(\b)\in\tilde{{\mathcal B}}^\circ_\b,\mbox{ for all
}\b\in{\mathcal B},$$ where $\tilde{{\mathcal B}}^\circ$ is the
vector bundle over ${\mathcal B}$ whose fibre at the point
$\b\in{\mathcal B}$ is
$$\tilde{{\mathcal B}}^\circ_\b=\{\alpha\in({\mathcal
T}^{\widetilde{{\mathcal A}}}_\b{\mathcal
A})^*/\,\alpha(\b',v_\b)=0,\forall\,(\b',v_\b)\in{\mathcal
T}^{\widetilde{{\mathcal A}}}_\b{\mathcal A}\mbox{ such that
}\b'\in{\mathcal B}_{\tau_{{\mathcal A}}(\b)}\}.$$

 Based on the previous  arguments, we may reformulate geometrically
our problem as the search for such a SODE $X$ (defined at least on
a neighborhood of ${\mathcal B}$) satisfying
$i_{X(\b)}\Omega_L(\b)\in \tilde{{\mathcal B}}^\circ_\b$,
$i_{X(\b)}\phi_0(\b)=1$ and $X(\b)\in{\mathcal
T}^{\widetilde{{\mathcal A}}}_\b{\mathcal B}$, at every point
$\b\in {\mathcal B}$.

Now, we will prove the following result.

\begin{proposition}\label{prop3.0} If $S$ is the vertical endomorphism then the
vector bundles over ${\mathcal B}$, $\tilde{{\mathcal
B}}^\circ\to{\mathcal B}$ and $S^*({\mathcal
T}^{\widetilde{{\mathcal A}}}{\mathcal B})^\circ\to{\mathcal B}$,
are equal. In other words,
$$\tilde{{\mathcal B}}^\circ_\b=S^*({\mathcal
T}_\b^{\widetilde{{\mathcal A}}}{\mathcal B})^\circ,\mbox{ for all
}\b\in{\mathcal B}.$$
\end{proposition}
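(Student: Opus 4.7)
The plan is to reduce the desired equality to a linear-algebraic statement about the endomorphism $S$ on the fiber $\mathcal{T}^{\widetilde{\mathcal{A}}}_\b \mathcal{A}$, and then verify it by a short coordinate inspection. First I would invoke the general linear-algebra identity
\[
T^*(W^\circ) \;=\; (T^{-1}(W))^\circ,
\]
valid for any endomorphism $T$ of a finite-dimensional vector space and any subspace $W$ (the forward inclusion is immediate, and the reverse is the standard factor-and-extend argument using $\ker T \subseteq T^{-1}(W)$). Applied with $T = S_\b$ and $W = \mathcal{T}^{\widetilde{\mathcal{A}}}_\b \mathcal{B}$, this reduces the proposition to
\[
\mathrm{span}\,W_1(\b) \;=\; S^{-1}(\mathcal{T}^{\widetilde{\mathcal{A}}}_\b \mathcal{B}),
\]
where $W_1(\b) = \{(\b',v_\b)\in \mathcal{T}^{\widetilde{\mathcal{A}}}_\b \mathcal{A} : \b'\in\mathcal{B}_{\tau_\mathcal{A}(\b)}\}$ is the (affine, not linear) set annihilated on the left-hand side. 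A brief check shows that a linear form vanishes on $W_1(\b)$ iff it vanishes on $\mathrm{span}\,W_1(\b)$, so $\tilde{\mathcal{B}}^\circ_\b = (\mathrm{span}\,W_1(\b))^\circ$.

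For the $\subseteq$ direction I would argue directly from formula (\ref{verend}). If $Z=(\b',v_\b)\in W_1(\b)$, then $\b,\b'\in\mathcal{A}$ both satisfy $1_\mathcal{A}=1$, so in the adapted local basis $Z=\tilde{T}_0+z^\alpha\tilde{T}_\alpha+v^\alpha\tilde{V}_\alpha$ and
\[
S(Z) \;=\; (z^\alpha - y^\alpha)\tilde{V}_\alpha,
\]
which under the canonical identification of vertical elements of $\mathcal{T}^{\widetilde{\mathcal{A}}}_\b\mathcal{A}$ with the fiber of $V$ at $\tau_\mathcal{A}(\b)$ is precisely the difference $\b'-\b$. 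Since both $\b$ and $\b'$ lie in the affine subbundle $\mathcal{B}$, this difference lies in $U_\mathcal{B}$, so $S(Z)$ is a $U_\mathcal{B}$-valued vertical element and therefore belongs to $\mathcal{T}^{\widetilde{\mathcal{A}}}_\b\mathcal{B}$. Taking linear combinations gives $\mathrm{span}\,W_1(\b)\subseteq S^{-1}(\mathcal{T}^{\widetilde{\mathcal{A}}}_\b\mathcal{B})$.

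For the reverse inclusion I would use a dimension count in coordinates adapted to $\mathcal{B}$, say with $\mathcal{B}$ cut out by $y^a=\psi^a(x)$ for $a=1,\dots,r$ and $U_\mathcal{B}$ spanned by the complementary sections $\{e_A\}$. Reading (\ref{verend}) directly one finds $\ker S = \mathrm{span}\{\tilde{V}_\alpha,\,\tilde{T}_0+y^\alpha\tilde{T}_\alpha\}$ of dimension $n+1$, $\mathrm{im}\,S=\mathrm{span}\{\tilde{V}_\alpha\}$ of dimension $n$, and $\mathrm{im}\,S\cap\mathcal{T}^{\widetilde{\mathcal{A}}}_\b\mathcal{B}=\mathrm{span}\{\tilde{V}_A\}$ of dimension $n-r$. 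Hence $\dim S^{-1}(\mathcal{T}^{\widetilde{\mathcal{A}}}_\b\mathcal{B})=(n+1)+(n-r)=2n-r+1$, and a direct parameterization of $W_1(\b)$ (free parameters $z^A$ and $v^\alpha$, modulo the fixed $z^0=1$, $z^a=y^a$) shows $\dim\mathrm{span}\,W_1(\b)=2n-r+1$ as well. Matching dimensions together with the one-sided inclusion yields equality, and passing to annihilators concludes the proof.

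The main subtlety to get right is the affine-to-linear passage at the start: because $W_1(\b)$ does not contain the origin, $\tilde{\mathcal{B}}^\circ_\b$ is not literally the annihilator of a linear subspace, and this must be spelled out before the identity $T^*(W^\circ)=(T^{-1}(W))^\circ$ can be invoked. Once this is cleared, the intrinsic reading $S(Z)\leftrightarrow \b'-z^0\b$ makes the remaining verifications essentially routine.
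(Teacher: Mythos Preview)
Your argument is correct, but it takes a different route from the paper's. The paper works directly on the annihilator side: it first shows that $S^*$ is injective on $({\mathcal T}^{\widetilde{\mathcal A}}_\b{\mathcal B})^\circ$ (using that $T_\b\tau_{\mathcal A}:T_\b{\mathcal B}\to T_{\tau_{\mathcal A}(\b)}M$ is onto), so $\dim S^*({\mathcal T}^{\widetilde{\mathcal A}}_\b{\mathcal B})^\circ=r$; then computes $\dim\tilde{\mathcal B}^\circ_\b=r$ by the same surjectivity; and finally checks the inclusion $S^*({\mathcal T}^{\widetilde{\mathcal A}}_\b{\mathcal B})^\circ\subseteq\tilde{\mathcal B}^\circ_\b$ by noting that $S(\b',v_\b)=(0,v'_\b)$ with $v'_\b\in T_\b{\mathcal B}$. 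You instead dualize via the general identity $T^*(W^\circ)=(T^{-1}W)^\circ$ and compare the large subspaces $\mathrm{span}\,W_1(\b)$ and $S^{-1}({\mathcal T}^{\widetilde{\mathcal A}}_\b{\mathcal B})$, both of dimension $2n-r+1$. The core geometric step is the same in both proofs --- $S(Z)\in{\mathcal T}^{\widetilde{\mathcal A}}_\b{\mathcal B}$ whenever the first leg of $Z$ lies in ${\mathcal B}$ --- but your version makes the reason explicit and intrinsic: $S(Z)$ is the vertical lift of $\b'-\b\in U_{\mathcal B}$. That formula is a genuine bonus and could be stated as a lemma in its own right. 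The paper's argument is a bit leaner because it stays with the rank-$r$ objects and avoids the extra linear-algebra identity and the affine-to-linear passage you flag at the end; on the other hand, your approach packages more of the work into a reusable duality statement.
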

\begin{proof}First, we will see that the map
$$S^*:({\mathcal
T}^{\widetilde{{\mathcal A}}}_\b{\mathcal B})^\circ\to
S^*({\mathcal T}^{\widetilde{{\mathcal A}}}_\b{\mathcal
B})^\circ$$ is a linear isomorphism.

In fact, suppose that $\alpha\in({\mathcal
T}^{\widetilde{{\mathcal A}}}_\b{\mathcal B})^\circ$ and
$S^*\alpha=0$. Then, we have that
$$\alpha(0,v_\b)=0,\mbox{ for all }v_\b\in T_\b{\mathcal A}\mbox{
such that }(T_\b\tau_{{\mathcal A}})(v_\b)=0.$$

Moreover, if $(\tilde{a},u_\b)\in{\mathcal T}^{\widetilde{\mathcal
A}}_\b{\mathcal A}$ then, since $T_\b\tau_{\mathcal
A}:T_\b{\mathcal B}\rightarrow T_{\tau_{\mathcal A}(\b)}M$ is a
linear epimorphism, it follows that there exists $v_\b\in
T_\b{\mathcal B}$ satisfying
$$(T_\b\tau_{\mathcal A})(v_\b)=\rho_{\widetilde{\mathcal
A}}(\tilde{a}).$$

Thus, using that $\alpha\in({\mathcal T}^{\widetilde{{\mathcal
A}}}_\b{\mathcal B})^\circ$, we deduce that
$$\alpha(\tilde{a},u_\b)=\alpha(\tilde{a},v_\b)+\alpha(0,u_\b-v_\b)=0.$$
Therefore, $\alpha=0$.

This implies that $S^*:({\mathcal T}^{\widetilde{{\mathcal
A}}}_\b{\mathcal B})^\circ\to S^*({\mathcal
T}^{\widetilde{{\mathcal A}}}_\b{\mathcal B})^\circ$ is a linear
isomorphism and
$$dim\,S^*({\mathcal
T}^{\widetilde{{\mathcal A}}}_\b{\mathcal B})^\circ=r.$$

On the other hand, since $T_\b\tau_{\mathcal A}:T_\b{\mathcal
B}\to T_{\tau_{\mathcal A}(\b)}M$ is a linear epimorphism, we
obtain that
$$dim\,\tilde{\mathcal B}^\circ_\b=r=dim\,S^*({\mathcal
T}^{\widetilde{{\mathcal A}}}_\b{\mathcal B})^\circ.$$ Finally, we
will prove that $S^*({\mathcal T}_\b^{\widetilde{{\mathcal
A}}}{\mathcal B})^\circ\subseteq \tilde{{\mathcal B}}^\circ_\b$.

In fact, if $\alpha\in({\mathcal T}^{\widetilde{{\mathcal
A}}}_\b{\mathcal B})^\circ$, $(\b',v_\b)\in {\mathcal
T}^{\widetilde{{\mathcal A}}}_\b{\mathcal A}$ and $\b'\in{\mathcal
B}_{\tau_{\mathcal B}(\b)}$, it follows that
$$S(\b',v_\b)=(0,v_\b'),\mbox{ with }v_\b'\in T_\b{\mathcal B}.$$
Consequently,
$$(S^*\alpha)(\b',v_\b)=0.$$

\end{proof}

Proposition \ref{prop3.0} suggests us to introduce the following
definition.

\begin{definition} A nonholonomically constrained Lagrangian
system on a Lie affgebroid ${\mathcal A}$ is a pair $(L,{\mathcal
B})$, where $L: {\mathcal A}\longrightarrow \R$ is a
$C^\infty$-\emph{Lagrangian function}, and $i_{\mathcal
B}:{\mathcal B}\hookrightarrow {\mathcal A}$ is a smooth affine
subbundle of ${\mathcal A}$, called the \emph{constraint affine
subbundle}. By a \emph{solution} of the nonholonomically
constrained Lagrangian system $(L,{\mathcal B})$ we mean a section
$X\in\Gamma(\tau_{\widetilde{{\mathcal A}}}^{\tau_{\mathcal A}})$
which is a SODE on ${\mathcal B}$ and satisfies the
Lagrange-d'Alembert equations
\begin{equation}\label{LdA}
\left\{ \begin{array}{rcl} (i_X\Omega_L)_{|{\mathcal
B}}&\in&\Gamma(\tau_{S^*({\mathcal
T}^{\widetilde{{\mathcal A}}}{\mathcal B})^\circ}),\\
(i_X\phi_0)_{|{\mathcal B}}&=&1,\\
X_{|{\mathcal B}}&\in&\Gamma(\tau_{\widetilde{{\mathcal
A}}}^{\tau_{\mathcal B}}),
\end{array} \right.
\end{equation}
where $\tau_{S^*({\mathcal T}^{\widetilde{{\mathcal A}}}{\mathcal
B})^\circ}$ is the vector bundle projection of $S^*({\mathcal
T}^{\widetilde{{\mathcal A}}}{\mathcal B})^\circ\to {\mathcal B}$.
\end{definition}

With a slight abuse of language, we will interchangeably refer to
a solution of the constrained Lagrangian system as a section or
the collection of its corresponding integral curves.

\begin{remark}{\em We want to stress that a solution of the
Lagrange-d'Alembert equations needs to be defined only over
${\mathcal B}$, but for practical purposes we consider it extended
to ${\mathcal A}$ (or just to a neighborhood of ${\mathcal B}$ in
${\mathcal A}$). We will not make any notational distinction
between a solution on ${\mathcal B}$ and any of its extensions.
Solutions which coincide on ${\mathcal B}$ will be considered as
equal. In accordance with this convention, by a SODE on ${\mathcal
B}$ we mean a section of ${\mathcal T}^{\widetilde{{\mathcal
A}}}{\mathcal B}$ which is the restriction to ${\mathcal B}$ of
some SODE defined in a neighborhood of ${\mathcal
B}$.\hfill$\diamondsuit$}
\end{remark}

Now, we will analyze the form of the Lagrange-d'Alembert equations
in local coordinates. Let $(x^i)$ be local coordinates on an open
subset $U$ of $M$ and $\{e_0,e_{\alpha}\}$ be a local basis of
sections of the vector bundle $\tau^{-1}_{\widetilde{{\mathcal
A}}}(U)\rightarrow U$ adapted to $1_{\mathcal A}$. Denote by
$(x^i,y^0,y^\alpha)$ the corresponding local coordinates on
$\widetilde{\mathcal A}$ and suppose that
$\{\mu_\alpha^ae^\alpha\}$ is a local basis of the vector bundle
$U_{\mathcal B}^\circ\to M$ and that $e_0^{\mathcal B}$ is a local
section of the affine bundle ${\mathcal B}\to M$ such that
$e_0-e_0^{\mathcal B}=a_0^\alpha e_\alpha$. Then, using that an
element $e_0(x)+y^\alpha e_\alpha(x)$ of ${\mathcal A}_x$ belongs
to ${\mathcal B}_x$ if and only if $(e_0(x)+y^\alpha
e_\alpha(x))-e_0^{\mathcal B}(x)\in(U_{\mathcal B})_x$, we deduce
that the local equations defining the constrained subbundle
${\mathcal B}$ as an affine subbundle of ${\mathcal A}$ are
$$\Psi^a=\mu^a_0+\mu^a_\alpha y^\alpha=0,\makebox[1.3cm]{for all}a\in\{1,\dots,r\},$$
where $\mu_0^a=\mu_\alpha^a a_0^\alpha$. Observe that
$d^{{\mathcal T}^{\widetilde{{\mathcal A}}}{\mathcal
A}}\Psi^a\in\Gamma((\tau_{\widetilde{{\mathcal
A}}}^{\tau_{\mathcal A}})^*)$ verifies that
$$(d^{{\mathcal
T}^{\widetilde{{\mathcal A}}}{\mathcal
A}}\Psi^a)(\b)(\tilde{\a},X_\b)=0,\makebox[1.5cm]{for
all}(\tilde{\a},X_\b)\in{\mathcal T}^{\widetilde{{\mathcal
A}}}_\b{\mathcal B},$$ and, therefore, $(d^{{\mathcal
T}^{\widetilde{{\mathcal A}}}{\mathcal A}}\Psi^a)(\b)\in({\mathcal
T}^{\widetilde{{\mathcal A}}}_\b{\mathcal B})^\circ$, for every
$\b\in {\mathcal B}$ and $a\in\{1,\dots,r\}$.

On the other hand, using (\ref{dE}) and (\ref{corTVtilde}), we
deduce that
$$d^{{\mathcal T}^{\widetilde{{\mathcal A}}}{\mathcal
A}}\Psi^a=\rho_0^i\Big(\displaystyle\frac{\partial
\mu^a_0}{\partial x^i}+y^\beta\frac{\partial \mu^a_\beta}{\partial
x^i}\Big)\phi_0+\rho_\alpha^i\Big(\displaystyle\frac{\partial
\mu^a_0}{\partial x^i}+y^\beta\frac{\partial \mu^a_\beta}{\partial
x^i}\Big)\tilde{T}^\alpha+\mu^a_\alpha\tilde{V}^\alpha,$$
$\{\phi_0,\tilde{T}^\alpha,\tilde{V}^\alpha\}$ being the dual
basis of $\{\tilde{T}_0,\tilde{T}_\alpha,\tilde{V}_\alpha\}$.

Thus, since the matrix $(\mu^a_\alpha)$ has maximum rank r, it
follows that the sections $\{d^{{\mathcal T}^{\widetilde{{\mathcal
A}}}{\mathcal A}}\Psi^a\}_{a=1,\dots,r}$ are linearly independent
and, using that $rank({\mathcal T}^{\widetilde{{\mathcal
A}}}{\mathcal B})^\circ=r$, we deduce that $\{(d^{{\mathcal
T}^{\widetilde{{\mathcal A}}}{\mathcal A}}\Psi^a)_{|{\mathcal B}}
\}_{a=1,\dots,r}$ is a local basis of sections of $({\mathcal
T}^{\widetilde{{\mathcal A}}}{\mathcal B})^\circ\to {\mathcal B}$.

Therefore, using (\ref{verend}) and the fact that $S^*:({\mathcal
T}^{\widetilde{{\mathcal A}}}_\b{\mathcal B})^\circ \to
S^*({\mathcal T}^{\widetilde{{\mathcal A}}}_\b{\mathcal B})^\circ$
is an isomorphism, for all $\b\in {\mathcal B}$, we obtain that
$\{\mu^a_\alpha\theta^\alpha\}_{a=1,\dots,r}$ is a local basis of
sections of $S^*({\mathcal T}^{\widetilde{{\mathcal A}}}{\mathcal
B})^\circ\to {\mathcal B}$, where
$\theta^\alpha=\tilde{T}^\alpha-y^\alpha\phi_0$.

Consequently, the constrained motion equations (\ref{LdA}) can be
written as
\begin{equation}\label{LdA2}
\left\{ \begin{array}{rcl} i_X\Omega_L&=&\lambda_a\mu^a_\alpha\theta^\alpha,\\
i_X\phi_0&=&1,\\
(d^{{\mathcal T}^{\widetilde{{\mathcal A}}}{\mathcal
A}}\Psi^a)(X)&=&0,
\end{array} \right.
\end{equation}
along the points of ${\mathcal B}$, where $\lambda_a$ are some
Lagrange multipliers to be determined.

Now, a section $X\in\Gamma(\tau_{\widetilde{{\mathcal
A}}}^{\tau_{\mathcal A}})$ is of the form
$X=z^0\tilde{T}_0+z^\alpha\tilde{T}_\alpha+v^\alpha\tilde{V}_\alpha$.
If we assume that the Lagrangian $L$ is regular, then the first
two equations in (\ref{LdA2}) imply that any solution $X$ has to
be a SODE, that is,
$X=\tilde{T}_0+y^\alpha\tilde{T}_\alpha+X^\alpha\tilde{V}_\alpha$
and, then, the first equation in (\ref{LdA2}) becomes (see
(\ref{PCloc}))
$$X^\alpha\displaystyle\frac{\partial^2L}{\partial y^\alpha\partial
y^\beta}+(\rho^i_0+y^\gamma\rho^i_\gamma)\frac{\partial^2L}{\partial
x^i\partial y^\beta}-\rho_\beta^i\frac{\partial L}{\partial
x^i}-(C_{0\beta}^\gamma+y^\nu C_{\nu\beta}^\gamma)\frac{\partial
L}{\partial y^\gamma}+\lambda_a\mu_\beta^a=0.$$

Thus, if $R_L$ is the Euler-Lagrange section associated with $L$,
$X=R_L-W^{\alpha\beta}\lambda_a\mu_\beta^a\tilde{V}_\alpha$ and
then, the third equation in (\ref{LdA2}) implies that
$$(d^{{\mathcal T}^{\widetilde{{\mathcal A}}}{\mathcal
A}}\Psi^a)(R_L)+\lambda_b(d^{{\mathcal T}^{\widetilde{{\mathcal
A}}}{\mathcal
A}}\Psi^a)(-W^{\alpha\beta}\mu_\beta^b\tilde{V}_\alpha)=0,\makebox[1.5cm]{for
all} a\in\{1,\dots,r\}.$$

As a consequence, we get that there exists a unique solution of
the Lagrange-d'Alembert equations if and only if the matrix
\begin{equation}\label{Cij1}
{\mathcal
C}^{ab}(\b)=(-W^{\alpha\beta}\mu_\beta^b\mu_\alpha^a)(\b)
\end{equation}
is regular, for all $\b\in {\mathcal B}$.

From (\ref{LdA2}), we deduce that the differential equations for
the integral curves of the vector field
$\rho_{\widetilde{{\mathcal A}}}^{\tau_{\mathcal B}}(X)$ are the
Lagrange-d'Alembert differential equations, which read
\begin{equation}\label{3.4'}
\left\{ \begin{array}{c}\displaystyle\frac{dx^i}{dt}=\rho_0^i+\rho_\alpha^iy^\alpha,\\[8pt]
\displaystyle\frac{d}{dt}\left(\frac{\partial L}{\partial
y^\alpha}\right)-\rho_\alpha^i\frac{\partial L}{\partial
x^i}+(C_{\alpha0}^\gamma+ C_{\alpha\beta}^\gamma
y^\beta)\frac{\partial L}{\partial y^\gamma}=-\lambda_a\mu^a_\alpha,\\[8pt]
\mu^a_0+\mu^a_\alpha y^\alpha=0, \end{array}\right. \end{equation}
\vspace{0.2cm}
 with $i\in\{1,\dots,m\}$, $\alpha\in\{1,\dots,n\}$
and $a\in\{1,\dots,r\}$.

\begin{remark}
{\em  In the above discussion, we obtained a complete set of
differential equations that determine the dynamics. Now, we will
analyze the form of the Lagrange-d'Alembert equations in terms of
adapted coordinates to ${\mathcal B}$.  Consider local coordinates
$(x^i)$ on a open set ${U}$ of $M$ and the local basis
$\{\phi_A\}_{A=1,\dots,n-r}$ of sections of $U_{\mathcal B}$.
Complete it to a local basis of sections $\{e_0,\phi_A, \phi_a\}$
of the vector bundle $\tau^{-1}_{\widetilde{{\mathcal
A}}}(U)\rightarrow U$ adapted to $1_{\mathcal A}$. Then, in
coordinates $(x^i,y^0,y^A,y^a)$ adapted to this basis, the
equations defining the constrained subbundle ${\mathcal B}$ as an
affine subbundle of ${\mathcal A}$ (respectively, as an affine
subbundle of $\widetilde{\mathcal A}$) are $y^a=0$ (respectively,
$y^0=1$ and $y^a=0$). Thus, the Lagrange-d'Alembert differential
equations read as
$$\left\{ \begin{array}{c}\displaystyle\frac{dx^i}{dt}=\rho_0^i+\rho^i_Ay^A,\\[8pt]
\displaystyle\frac{d}{dt}(\frac{\partial L}{\partial
y^A})-\rho_A^i\frac{\partial L}{\partial x^i}+(C_{A0}^\gamma+
C_{AB}^\gamma y^B)\frac{\partial
L}{\partial y^\gamma}=0,\\[8pt]
 y^a=0. \end{array}\right.$$ \vspace{0.2cm}
\hfill$\diamondsuit$}
\end{remark}

\section{Solution of Lagrange-d'Alembert equations}\label{secSol}
In this section, we will perform a precise global analysis of the
existence and uniqueness of the solution of Lagrange-d'Alembert
equations. In what follows, we will assume that the Lagrangian $L$
is regular at least in a neighborhood of ${\mathcal B}$.

\begin{definition} A constrained Lagrangian system $(L,{\mathcal B})$ is said
to be \textbf{regular} if the Lagrange-d'Alembert equations have a
unique solution.
\end{definition}

In order to characterize geometrically these nonholonomic systems
which are regular, we define the vector subbundle
$F\subset{\mathcal T}^{\widetilde{{\mathcal A}}}{\mathcal
A}_{|{\mathcal B}}\to {\mathcal B}$ whose fibre at point $\b\in
{\mathcal B}$ is $F_\b=\flat^{-1}_L(S^*({\mathcal
T}^{\widetilde{{\mathcal A}}}_\b{\mathcal B})^\circ)$, where
$\flat_L:{\mathcal T}^{\widetilde{{\mathcal A}}}{\mathcal
A}\to({\mathcal T}^{\widetilde{{\mathcal A}}}{\mathcal A})^*$ is
the vector bundle isomorphism defined by
\begin{equation}\label{flatL}
\flat_L(X)=i_X\Omega_L+\phi_0(X)\phi_0, \makebox[1.5cm]{for
all}X\in {\mathcal T}^{\widetilde{{\mathcal A}}}{\mathcal A}.
\end{equation} More explicitly,
$$F_\b=\{X\in{\mathcal T}^{\widetilde{{\mathcal A}}}_\b{\mathcal A}/exists \;\chi_a\;
s.t.\; \flat_L(X)=\chi_a\mu^a_\alpha\theta^\alpha|_\b\}.$$ From
the definition, it is clear that the rank of $F$ is $rank
(F)=rank({\mathcal T}^{\widetilde{{\mathcal A}}}{\mathcal
B})^\circ=rank({\mathcal T}^{\widetilde{{\mathcal A}}}{\mathcal
A})-rank({\mathcal T}^{\widetilde{{\mathcal A}}}{\mathcal
B})=rank({\mathcal B}^\circ)=r$. If we consider the sections
$Z_a\in\Gamma(\tau_{\widetilde{{\mathcal A}}}^{\tau_{\mathcal
A}})$ such that $\flat_L(Z_a)=\mu^a_\alpha\theta^\alpha$, with
$a\in\{1,\dots,r\}$, then $\{Z_a\}$ is a local basis of sections
of $F$. Moreover, if $R_L$ is the Euler-Lagrange section
associated with $L$, we have that
$$(i_{Z_a}\Omega_L)(R_L)+\phi_0(Z_a)\phi_0(R_L)=\mu_\alpha^a\theta^\alpha(R_L)=0$$
which implies that $\phi_0(Z_a)=0$. Therefore, $Z_a$ is completely
characterized by the conditions
$$i_{Z_a}\Omega_L=\mu^a_\alpha\theta^\alpha\makebox[1cm]{and}i_{Z_a}\phi_0=0.$$

In addition, using these two conditions, we conclude that the
local expression of $Z_a$ is the following
\begin{equation}\label{Zi}
Z_a=-W^{\alpha\beta}\mu_\beta^a\tilde{V}_\alpha,
\end{equation}
where $(W^{\alpha\beta})$ is the inverse matrix of
$(W_{\alpha\beta})$. Thus, the matrix defined in (\ref{Cij1}) is
just ${\mathcal C}^{ab}=(d^{{\mathcal T}^{\widetilde{{\mathcal
A}}}{\mathcal
A}}\Psi^a)(Z_b)=-W^{\alpha\beta}\mu_\beta^a\mu_{\alpha}^b$ and we
will denote it by ${\mathcal C}=({\mathcal C}^{ab})$.

A second important geometric object is the vector subbundle
$G\subset{\mathcal T}^{\widetilde{{\mathcal A}}}{\mathcal
A}_{|{\mathcal B}}\to {\mathcal B}$ whose annihilator at a point
$\b\in {\mathcal B}$ is $S^*({\mathcal T}^{\widetilde{{\mathcal
A}}}_\b{\mathcal B})^\circ$. We can consider the subbundle
$G^\perp\subset{\mathcal T}^{\widetilde{{\mathcal A}}}{\mathcal
A}_{|{\mathcal B}}\to {\mathcal B}$, the orthogonal to $G$ with
respect to the cosymplectic structure $(\Omega_L,\phi_0)$, which
is given by
$$G^\perp_\b=\{X\in{\mathcal
T}^{\widetilde{{\mathcal A}}}_\b{\mathcal
A}/(i_X\Omega_L+\phi_0(X)\phi_0)_{|G_\b}\equiv 0\},\mbox{ for
}\b\in{\mathcal B}.$$ Note that
$G_\b^\perp=\flat_L^{-1}(G_\b^\circ)=F_\b$, for all $\b\in
{\mathcal B}$. Moreover, we obtain
\begin{proposition}\label{propcoisot} $G_\b$ is coisotropic in
$({\mathcal T}^{\widetilde{{\mathcal A}}}_\b{\mathcal
A},\Omega_L(\b),\phi_0(\b))$, for all $\b\in {\mathcal B}$. In
other words, $G^\perp_\b\subseteq G_\b$, for $\b\in{\mathcal B}$.
\end{proposition}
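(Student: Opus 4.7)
The plan is to exploit the intrinsic description $G_\b = \{X \in \mathcal{T}^{\widetilde{\mathcal{A}}}_\b \mathcal{A} : S(X) \in \mathcal{T}^{\widetilde{\mathcal{A}}}_\b \mathcal{B}\}$ and then to check directly that every generator of $G^\perp_\b = F_\b$ lies in $\ker S$, which is manifestly contained in $G_\b$.

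First I would establish the description above. By definition, $G^\circ_\b = S^*\bigl((\mathcal{T}^{\widetilde{\mathcal{A}}}_\b \mathcal{B})^\circ\bigr)$, so $X \in G_\b$ iff $\beta(S(X)) = (S^*\beta)(X) = 0$ for every $\beta \in (\mathcal{T}^{\widetilde{\mathcal{A}}}_\b \mathcal{B})^\circ$. By biduality, this is exactly the condition $S(X) \in \mathcal{T}^{\widetilde{\mathcal{A}}}_\b \mathcal{B}$, and in particular $\ker S \subseteq G_\b$.

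Next I would use the local formula (\ref{Zi}), $Z_a = -W^{\alpha\beta}\mu^a_\beta \tilde{V}_\alpha$, together with the expression (\ref{verend}) of the vertical endomorphism $S = (\tilde{T}^\alpha - y^\alpha \phi_0) \otimes \tilde{V}_\alpha$, which yields $S(\tilde{V}_\alpha) = 0$. Hence $S(Z_a) = 0$ for every $a$, so each $Z_a$ lies in $\ker S$ and therefore in $G_\b$. Since $\{Z_a\}$ is a local frame for $F = G^\perp$, the inclusion $G^\perp_\b \subseteq G_\b$ follows.

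I do not anticipate any real obstacle. The only conceptual step is the intrinsic reformulation $G = S^{-1}(\mathcal{T}^{\widetilde{\mathcal{A}}}\mathcal{B})$; once it is in place, coisotropy reduces to the trivial observation that the Lagrange-multiplier sections $Z_a$, being purely $\tilde{V}$-components, are annihilated by $S$. A coordinate-free variant would rewrite each generator $\mu^a_\alpha \theta^\alpha$ of $G^\circ$ as $S^*(\mu^a_\alpha \tilde{V}^\alpha)$, using the identity $\theta^\alpha = S^*\tilde{V}^\alpha$, and then apply the same duality to conclude $F \subseteq \ker S$.
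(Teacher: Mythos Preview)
Your proof is correct and follows essentially the same route as the paper: both arguments reduce to the local observation that the generators $Z_a = -W^{\alpha\beta}\mu^a_\beta \tilde{V}_\alpha$ of $F=G^\perp$ are annihilated by every element $\mu^b_\alpha\theta^\alpha$ of $G^\circ$, since $\theta^\alpha$ vanishes on the vertical basis $\tilde{V}_\gamma$. Your intrinsic reformulation $G_\b = S^{-1}(\mathcal{T}^{\widetilde{\mathcal{A}}}_\b\mathcal{B})$, whence $\ker S \subseteq G_\b$, is a clean way to package this, but the underlying computation is the same one the paper carries out (more tersely) by invoking (\ref{Zi}) and the local generators of $G^\circ$.
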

\begin{proof} In fact, using (\ref{Zi}) and since $F$ (respectively, $G^\circ$) is locally generated by
$\{Z_a\}_{a=1,\dots,r}$ (respectively, $\{S^*(d^{{\mathcal
T}^{\widetilde{{\mathcal A}}}{\mathcal
A}}\Psi^a)\}_{a=1,\dots,r}$), we deduce that
$$G_\b^\perp=F_\b\subset G_\b,\makebox{ for all }\b\in {\mathcal
B}.$$
\end{proof}

Now, we introduce the section $\Lambda_L$ of the vector bundle
$\wedge^2({\mathcal T}^{\widetilde{\mathcal A}}{\mathcal A})\to
{\mathcal A}$ defined by
\begin{equation}\label{LambdaL}
\Lambda_L(\alpha,\beta)=\Omega_L(\flat_L^{-1}(\alpha),\flat_L^{-1}(\beta)),
\end{equation}
for $\alpha,\beta\in({\mathcal T}^{\widetilde{\mathcal
A}}{\mathcal A})^*$. $\Lambda_L$ is the (algebraic) Poisson
structure (on the vector bundle ${\mathcal T}^{\widetilde{\mathcal
A}}{\mathcal A}\to {\mathcal A}$) associated with the cosymplectic
structure $(\Omega_L,\phi_0)$. We will denote by
$\sharp_{\Lambda_L}:({\mathcal T}^{\widetilde{{\mathcal
A}}}{\mathcal A})^*\to{\mathcal T}^{\widetilde{{\mathcal
A}}}{\mathcal A}$ the vector bundle morphism given by
$$\sharp_{\Lambda_L}(\alpha)=i_{\alpha}\Lambda_L,\mbox{ for }\alpha\in({\mathcal T}^{\widetilde{{\mathcal A}}}{\mathcal
A})^*.$$ Note that
$$\phi_0(\sharp_{\Lambda_L}(\alpha))=\Omega_L(\flat_L^{-1}(\alpha),R_L)=0.$$
On the other hand, it is clear that
$$\alpha=\flat_L(\flat_L^{-1}(\alpha))=i_{\flat_L^{-1}(\alpha)}\Omega_L+\phi_0(\flat_L^{-1}(\alpha))\phi_0$$
and, thus
\begin{equation}\label{4.3'}
\alpha(R_L)=\phi_0(\flat_L^{-1}(\alpha)). \end{equation} Moreover,
from (\ref{LambdaL}), we have that
$$\beta(\sharp_{\Lambda_L}(\alpha))=-(i_{\flat_L^{-1}(\beta)}\Omega_L)(\flat_L^{-1}(\alpha))=-\beta(\flat_L^{-1}(\alpha))+\phi_0(\flat_L^{-1}(\beta))\phi_0(\flat_L^{-1}(\alpha))$$
which implies that (see (\ref{4.3'}))
$$\beta(\sharp_{\Lambda_L}(\alpha))=\beta(-\flat_L^{-1}(\alpha)+\alpha(R_L)R_L).$$
Therefore, we have proved that
\begin{equation}\label{sosbem}
\sharp_{\Lambda_L}(\alpha)=-\flat_L^{-1}(\alpha)+\alpha(R_L)R_L,\mbox{
for }\alpha\in({\mathcal T}^{\widetilde{{\mathcal A}}}{\mathcal
A})^*.
\end{equation}

Next, we consider the vector subbundle
$G^{\perp,\Lambda_L}\subset{\mathcal T}^{\widetilde{{\mathcal
A}}}{\mathcal A}_{|{\mathcal B}}\to {\mathcal B}$, the orthogonal
to $G$ with respect to the Poisson structure
$(\Lambda_L)_{|{\mathcal B}}$, which is given by
$$G^{\perp,\Lambda_L}_\b=\sharp_{\Lambda_L}(G^\circ_\b),\mbox{ for }\b\in{\mathcal B}.$$
From (\ref{sosbem}) and since $S^*(d^{{\mathcal
T}^{\widetilde{{\mathcal A}}}{\mathcal A}}\Psi^a)(R_L)=0$, we
deduce that
$$\sharp_{\Lambda_L}(S^*(d^{{\mathcal
T}^{\widetilde{{\mathcal A}}}{\mathcal A}}\Psi^a)_{|{\mathcal B}}
)=-Z_a.$$

Consequently, we obtain that
$$G^{\perp,\Lambda_L}_\b=G_\b^\perp=F_\b,\mbox{ for }\b\in
{\mathcal B}.$$

Now, we will consider the vector subbundle $H\subset{\mathcal
T}^{\widetilde{{\mathcal A}}}{\mathcal A}_{|{\mathcal B}}\to
{\mathcal B}$ whose fibre at point $\b\in {\mathcal B}$ is
$H_\b={\mathcal T}^{\widetilde{{\mathcal A}}}_\b{\mathcal B}\cap
G_\b$. We will denote, as above, by $H^\perp$ the orthogonal to
$H$ with respect to the cosymplectic structure $(\Omega_L,\phi_0)$
and by $H^{\perp,\Lambda_L}$ the orthogonal to $H$ with respect to
the Poisson structure $\Lambda_L$.

Using (\ref{verend}) and (\ref{PCloc}), we have that
$i_S\Omega_L=0$, that is,
$$\Omega_L(SX,Y)+\Omega_L(X,SY)=0,\makebox[1.5cm]{for
all}X,Y\in\Gamma(\tau_{\widetilde{{\mathcal A}}}^{\tau_{\mathcal
A}}),$$ or, equivalently,
\begin{equation}\label{SOmega}
i_{SX}\Omega_L=-S^*(i_X\Omega_L),\makebox[1.5cm]{for
all}X\in\Gamma(\tau_{\widetilde{{\mathcal A}}}^{\tau_{\mathcal
A}}).
\end{equation}
Next, denote by $grad\;\Psi^a$ the gradient section corresponding
to the function $\Psi^a$ with respect to the cosymplectic
structure $(\Omega_L,\phi_0)$, that is,
$grad\;\Psi^a=\flat_L^{-1}(d^{{\mathcal T}^{\widetilde{{\mathcal
A}}}{\mathcal A}}\Psi^a)$. Then, using (\ref{SOmega}) and the fact
that $S^*\phi_0=0$, we have that $i_{(S\,grad\;\Psi^a)_{|{\mathcal
B}}}\Omega_L=-i_{Z_a}\Omega_L$ and
$\phi_0(S\,grad\;\Psi^a)=\phi_0(Z_a)=0$ which implies that
\begin{equation}\label{Relation}
(S\,grad\;\Psi^a)_{|{\mathcal B}}=-Z_a,\mbox{ for all
}a\in\{1,\dots,r\}.
\end{equation}

Since the sections $\{d^{{\mathcal T}^{\widetilde{{\mathcal
A}}}{\mathcal A}}\Psi^a\}$ are linearly independent, it follows
that the sections $\{grad\;\Psi^a\}$ also are linearly
independent. Thus, using (\ref{Relation}), we conclude that the
sections $\{(grad\;\Psi^a)_{|{\mathcal B}}, Z_a\}$ are linearly
independent. On the other hand,
\begin{equation}\label{H0}
H^\circ=({\mathcal T}^{\widetilde{\mathcal A}}{\mathcal
B})^\circ+G^\circ
\end{equation}
and, therefore, the space of sections of $H^\perp$ is locally
generated by $\{(grad\;\Psi^a)_{|{\mathcal B}}, Z_a\}$. This
implies that $\{(grad\;\Psi^a)_{|{\mathcal B}}, Z_a\}$ is a local
basis of sections of $H^\perp$ and $rank(H^\perp)=corank(H)=2r$.

Now, using (\ref{H0}), we deduce that
$$H^{\perp,\Lambda_L}=({\mathcal T}^{\widetilde{{\mathcal
A}}}{\mathcal B})^{\perp,\Lambda_L}+F.$$

On the other hand, we introduce the Hamiltonian section
$X_{\Psi^a}^{\Lambda_L}$ associated with the function $\Psi^a$
with respect to $\Lambda_L$ which is defined by
\begin{equation}\label{sHLambdaL}
X_{\Psi^a}^{\Lambda_L}=-\sharp_{\Lambda_L}(d^{{\mathcal
T}^{\widetilde{{\mathcal A}}}{\mathcal A}}\Psi^a),\mbox{ for
}a\in\{1,\dots,r\}.
\end{equation}

From (\ref{sosbem}), we have that
\begin{equation}\label{bLX}
\flat_L(X_{\Psi^a}^{\Lambda_L})=d^{{\mathcal
T}^{\widetilde{{\mathcal A}}}{\mathcal A}}\Psi^a-(d^{{\mathcal
T}^{\widetilde{{\mathcal A}}}{\mathcal A}}\Psi^a)(R_L)\phi_0.
\end{equation}

Thus, using that the sections $\{d^{{\mathcal
T}^{\widetilde{{\mathcal A}}}{\mathcal A}}\Psi^a,\phi_0\}$ are
independent, it follows that the sections
$\{X_{\Psi^a}^{\Lambda_L}\}$ are also independent. Moreover, from
(\ref{bLX}), we obtain that
$$X_{\Psi^a}^{\Lambda_L}=grad\;\Psi^a-(d^{{\mathcal
T}^{\widetilde{{\mathcal A}}}{\mathcal A}}\Psi^a)(R_L)R_L$$ which
implies that $$S\,X_{\Psi^a}^{\Lambda_L}=-Z_a,\mbox{ for
}a\in\{1,\dots,r\}.$$

Therefore, $\{X_{\Psi^a}^{\Lambda_L},Z_a\}$ is a local basis of
sections of $H^{\perp,\Lambda_L}$ and
$rank(H^{\perp,\Lambda_L})=corank(H)=2r$.

The relation among these objects is described by the next result.

\begin{theorem}\label{Threg} The following properties are equivalent:
\begin{enumerate}
\item[(1)] The constrained Lagrangian system $(L,{\mathcal B})$ is
regular, \item[(2)] the matrices ${\mathcal C}=({\mathcal
C}^{ab})$ are non-singular, \item[(3)]\label{th3} ${\mathcal
T}^{\widetilde{{\mathcal A}}}_\b{\mathcal B}\cap
F_\b=\{0\},\makebox[1.5cm]{for all} \b\in {\mathcal B},$
\item[(4)]\label{th4} $H_\b\cap H_\b^{\perp}=\{0\}$, for all
$\b\in {\mathcal B}$, \item[(5)]\label{th5} $H_\b\cap
H_\b^{\perp,\Lambda_L}=\{0\}$, for all $\b\in {\mathcal B}$.
\end{enumerate}
\end{theorem}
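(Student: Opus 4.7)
The plan is to establish the chain (1) $\Leftrightarrow$ (2) $\Leftrightarrow$ (3), and then prove (2) $\Leftrightarrow$ (4) and (2) $\Leftrightarrow$ (5) separately, using invertibility of the matrix $\mathcal{C}$ as a common hub.

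For (1) $\Leftrightarrow$ (2), this is essentially already contained in the discussion preceding the theorem. Regularity of $L$ combined with the first two equations of (\ref{LdA2}) forces any solution $X$ to be a SODE of the form $X = R_L - W^{\alpha\beta}\lambda_a \mu_\beta^a \tilde{V}_\alpha$; substituting into the third equation and using $\mathcal{C}^{ab} = (d^{\mathcal{T}^{\widetilde{\mathcal{A}}}\mathcal{A}}\Psi^a)(Z_b)$ yields a linear system for the multipliers $\lambda_a$ with coefficient matrix $\mathcal{C}$, so unique solvability is equivalent to $\mathcal{C}$ being regular pointwise.

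For (2) $\Leftrightarrow$ (3), the key is that $\{Z_a\}$ is a local basis of $F$ (as shown in (\ref{Zi})) and that $\{d^{\mathcal{T}^{\widetilde{\mathcal{A}}}\mathcal{A}}\Psi^a\}$ is a local defining family for $\mathcal{T}^{\widetilde{\mathcal{A}}}\mathcal{B}$ inside $\mathcal{T}^{\widetilde{\mathcal{A}}}\mathcal{A}$. A generic element $\chi^a Z_a \in F_\b$ lies in $\mathcal{T}^{\widetilde{\mathcal{A}}}_\b\mathcal{B}$ precisely when $\mathcal{C}(\b)\chi = 0$, so the two conditions are pointwise equivalent.

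For (2) $\Leftrightarrow$ (4) (and, analogously, (2) $\Leftrightarrow$ (5)) I would work with the local basis $\{(\mathop{\mathrm{grad}}\Psi^a)_{|\mathcal{B}}, Z_a\}$ of $H^\perp$ (respectively $\{X_{\Psi^a}^{\Lambda_L}, Z_a\}$ of $H^{\perp,\Lambda_L}$) constructed in the text. Any $x \in H_\b \cap H_\b^\perp$ decomposes as $x = \lambda^a \mathop{\mathrm{grad}}\Psi^a(\b) + \mu^a Z_a(\b)$; because $x \in H_\b \subset G_\b$, the condition $(d^{\mathcal{T}^{\widetilde{\mathcal{A}}}\mathcal{A}}\Psi^b)(Sx) = 0$ combined with the identity $S(\mathop{\mathrm{grad}}\Psi^a) = -Z_a$ of (\ref{Relation}) and $SZ_a = 0$ (since $Z_a$ is purely vertical in the $\tilde{V}_\alpha$-directions) collapses to $\mathcal{C}^{ba}\lambda^a = 0$. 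Assuming (2), this forces $\lambda = 0$, so $x = \mu^a Z_a \in F_\b$; combined with $x \in \mathcal{T}^{\widetilde{\mathcal{A}}}_\b\mathcal{B}$, condition (3) then gives $x = 0$. Conversely, if (2) fails then so does (3), producing a nonzero $w \in F_\b \cap \mathcal{T}^{\widetilde{\mathcal{A}}}_\b\mathcal{B}$; by the coisotropy $F \subset G$ (Proposition \ref{propcoisot}), $w \in H_\b$, and $w \in F \subset H_\b^\perp$ by construction, contradicting (4). The argument for (5) is identical, with $X_{\Psi^a}^{\Lambda_L}$ replacing $\mathop{\mathrm{grad}}\Psi^a$ and the identity $SX_{\Psi^a}^{\Lambda_L} = -Z_a$ (together with $F \subset H^{\perp,\Lambda_L}$) playing the corresponding role.

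The only delicate step I expect is verifying that the condition $x \in G$, when applied to the generic section of $H^\perp$ or $H^{\perp,\Lambda_L}$, decouples the decomposition and isolates the $\lambda$-part through the $S$-identities. Once those identities are in place, the five equivalences reduce to the single algebraic statement that $\mathcal{C}$ is invertible, and the rank counts $\operatorname{rank}F = \operatorname{rank}H^\perp/\!\!\!\operatorname{mod} F = r$, $\operatorname{rank}H^\perp = \operatorname{rank}H^{\perp,\Lambda_L} = 2r$ already established in the text keep the dimensional bookkeeping tight.
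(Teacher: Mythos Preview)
Your proof is correct. For $(1)\Leftrightarrow(2)\Leftrightarrow(3)$ you follow the paper exactly. For $(4)$ and $(5)$ you take a genuinely different route from the paper. You exploit the explicit local bases $\{(\mathrm{grad}\,\Psi^a)_{|\mathcal B},Z_a\}$ of $H^\perp$ and $\{X_{\Psi^a}^{\Lambda_L},Z_a\}$ of $H^{\perp,\Lambda_L}$, together with the $S$-identities $S(\mathrm{grad}\,\Psi^a)_{|\mathcal B}=S\,X_{\Psi^a}^{\Lambda_L}=-Z_a$ and $SZ_a=0$, to turn the condition $x\in G_\b$ into the linear system $\mathcal C^{ba}\lambda_a=0$; this reduces both $(4)$ and $(5)$ directly to the invertibility of $\mathcal C$ and to $(3)$. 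The paper instead proves $(3)\Leftrightarrow(4)$ structurally, first deriving the decomposition $G_\b=H_\b\oplus G_\b^\perp$ from coisotropy and a dimension count, and then proves $(4)\Leftrightarrow(5)$ by a separate argument involving the restricted cosymplectic structure $(\Omega_L^{H_\b},\phi_0^{H_\b})$ on $H_\b$ and its Reeb vector $R^H(\b)$. Your approach is more uniformly computational and handles $(4)$ and $(5)$ symmetrically in one stroke, which is economical; the paper's $(3)\Leftrightarrow(4)$ is basis-free and conceptually cleaner, but its $(4)\Leftrightarrow(5)$ step is no simpler than yours.
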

\begin{proof} $[(1)\Leftrightarrow(2)]$ This result was proved in Section \ref{Noho}.

$[(2)\Leftrightarrow(3)]$ $(\Rightarrow)$ Suppose that ${\mathcal
C}$ is non-singular and let be $X\in {\mathcal
T}^{\widetilde{{\mathcal A}}}_\b{\mathcal B}\cap F_\b$. Thus,
$X=\sum_{a=1}^r\lambda_aZ_a(\b)$ and $(d^{{\mathcal
T}^{\widetilde{{\mathcal A}}}{\mathcal A}}\Psi^a)(\b)(X)=0$, for
all $a\in\{1,\dots,r\}$, which implies that
\[
\sum_{c=1}^r\lambda_c(d^{{\mathcal T}^{\widetilde{{\mathcal
A}}}{\mathcal A}}\Psi^a)(\b)(Z_c(\b))=\sum_{c=1}^r\lambda_c
{\mathcal C}^{ac}=0.
\]
Therefore, we deduce that $\lambda_c=0$, for all $c$, and
consequently $X=0$.

$(\Leftarrow)$ Conversely, take an arbitrary linear combination of
columns of ${\mathcal C}$ at some point $\b$ such that
$$\displaystyle\sum_{c=1}^{r}\lambda_c{\mathcal C}^{ac}(\b)=0,\mbox{ for all }a\in\{1,\dots,r\}.$$
Thus, $\sum_{c=1}^r\lambda_cZ_c(\b)\in{\mathcal
T}^{\widetilde{{\mathcal A}}}_\b{\mathcal B}$ which implies that
$\sum_{c=1}^r\lambda_cZ_c(\b)=0$, and hence $\lambda_c=0$, for all
$c\in\{1,\dots,r\}$.

$[(3)\Leftrightarrow(4)]$ $(\Rightarrow)$ Let $\b\in {\mathcal
B}$, then we have that ${\mathcal T}^{\widetilde{{\mathcal
A}}}_\b{\mathcal B}\cap F_\b={\mathcal T}^{\widetilde{{\mathcal
A}}}_\b{\mathcal B}\cap G_\b^{\perp}=\{0\}$ and ${\mathcal
T}^{\widetilde{{\mathcal A}}}_\b{\mathcal A}={\mathcal
T}^{\widetilde{{\mathcal A}}}_\b{\mathcal B}\oplus G_\b^{\perp}$
(note that $dim\, {\mathcal T}_\b^{\widetilde{\mathcal
A}}{\mathcal B}+dim\, G_\b^\perp=dim\, {\mathcal
T}_\b^{\widetilde{\mathcal A}}{\mathcal A}$). Hence, from
Proposition \ref{propcoisot}, we obtain that
$$G_\b=({\mathcal T}^{\widetilde{{\mathcal A}}}_\b{\mathcal B}\cap G_\b)\oplus
G_\b^{\perp} =H_\b\oplus G_\b^{\perp}.$$ If $X\in H_\b\cap
H_\b^{\perp}$ then, from the above decomposition, we also have
that $X\in G_\b^\perp$ and, thus, $X=0$.

$(\Leftarrow)$ Conversely, let $\b\in {\mathcal B}$ and take an
element $X\in{\mathcal T}^{\widetilde{{\mathcal A}}}_\b{\mathcal
B}\cap F_\b={\mathcal T}^{\widetilde{{\mathcal A}}}_\b{\mathcal
B}\cap G_\b^\perp\subset{\mathcal T}^{\widetilde{{\mathcal
A}}}_\b{\mathcal B}\cap G_\b=H_\b$. Since, $\Omega_L(\b)(X,Y)=0$
and $\phi_0(\b)(X)=0$, for all $Y\in H_\b$, we conclude that $X\in
H_\b^\perp$ and, therefore, $X=0$.

$[(4)\Leftrightarrow(5)]$ $(\Rightarrow)$ Let $\b\in {\mathcal
B}$, then we have that $H_\b\cap H^\perp_\b=\{0\}$ and, as a
consequence, the restriction $(\Omega_L^{H_\b},\phi_0^{H_\b})$ to
$H_\b$ of the cosymplectic structure $(\Omega_L,\phi_0)$ is a
cosymplectic structure on $H_\b$. In other words, the map
$$\flat_L^{H_\b}:H_\b\to H_\b^*,\;\;X\in
H_\b\mapsto(i_X\Omega_L(\b)+\phi_0(\b)(X)\phi_0(\b))_{|H_\b}\in
H_\b^*,$$ is a linear isomorphism. Thus, we can consider the Reeb
vector $R^H(\b)$ of the cosymplectic vector space
$(H_\b,\Omega_L^{H_\b},\phi_0^{H_\b})$ which is characterized by
the following conditions
\begin{equation}\label{RH}
i_{R^H(\b)}\Omega_L^{H_\b}=0\;\mbox{ and
}\;i_{R^H(\b)}\phi_0^{H_\b}=1.
\end{equation}
Now, we will prove that
$H_\b\cap\sharp_{\Lambda_L}(H^\circ_\b)=\{0\}$.

Suppose that $\alpha\in H^\circ_\b$ and
$\sharp_{\Lambda_L}(\alpha)\in H_\b$. Then, using (\ref{sosbem})
and the fact that $\phi_0(\b)(\sharp_{\Lambda_L}(\alpha))=0$, we
deduce that
\begin{equation}\label{sosLambL}
i_{\sharp_{\Lambda_L}(\alpha)}\Omega_L(\b)=-\alpha+\alpha(R_L(\b))\phi_0(\b).
\end{equation}
Therefore, from (\ref{RH}) and since $\alpha\in H_\b^\circ$ and
$\sharp_{\Lambda_L}(\alpha)\in H_\b$, we have that
$$0=(i_{\sharp_{\Lambda_L}(\alpha)}\Omega_L(\b))(R^H(\b))=\alpha(R_L(\b))$$
which implies that (see (\ref{sosLambL}))
$$\flat_L^{H_\b}(\sharp_{\Lambda_L}(\alpha))=0.$$
Consequently, $\sharp_{\Lambda_L}(\alpha)=0$.

$(\Leftarrow)$ Let $\b\in {\mathcal B}$ and take an element $X\in
H_\b\cap H^\perp_\b$. Then
$$(i_X\Omega_L(\b)+\phi_0(\b)(X)\phi_0(\b))_{|H_\b}=0.$$

Thus, using that $X\in H_\b$, it follows that
$$\phi_0(\b)(X)=0\;\mbox{ and }\;\alpha=i_X\Omega_L(\b)\in H_\b^\circ.$$

Therefore, $\alpha(R_L(\b))=0$ and $\alpha=\flat_L(X)$. This
implies that (see (\ref{sosbem}))
$$X=-\sharp_{\Lambda_L}(\alpha)\in\sharp_{\Lambda_L}(H_\b^\circ)=H^{\perp,\Lambda_L}_\b.$$

Consequently, $X=0$.

\end{proof}


\begin{proposition} Conditions $(3)$, $(4)$ and $(5)$ in Theorem
\ref{Threg} are equivalent, respectively, to
\begin{enumerate}
\item[(3')] ${\mathcal T}^{\widetilde{{\mathcal A}}}{\mathcal
A}_{|{\mathcal B}}={\mathcal T}^{\widetilde{{\mathcal
A}}}{\mathcal B}\oplus F,$ \item[(4')] ${\mathcal
T}^{\widetilde{{\mathcal A}}}{\mathcal A}_{|{\mathcal B}}=H\oplus
H^\perp,$ \item[(5')] ${\mathcal T}^{\widetilde{{\mathcal
A}}}{\mathcal A}_{|{\mathcal B}}=H\oplus H^{\perp,\Lambda_L}.$
\end{enumerate}
\end{proposition}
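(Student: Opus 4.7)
My plan is a pure dimension count. In each of the three cases, the primed statement asserts that a specific pair of vector subbundles of ${\mathcal T}^{\widetilde{{\mathcal A}}}{\mathcal A}_{|{\mathcal B}}$ are complementary, while the corresponding unprimed statement asserts only that their intersection is trivial. Since a direct sum decomposition trivially forces the intersection to be zero and, conversely, a trivial intersection together with matching ranks forces the direct sum, it will suffice to verify that in each case the ranks of the two summands add up to the rank of ${\mathcal T}^{\widetilde{{\mathcal A}}}{\mathcal A}$.

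First I would collect the ranks, all of which are already recorded in the excerpt. The general formula for the prolongation of a Lie algebroid over a fibration gives $rank({\mathcal T}^{\widetilde{{\mathcal A}}}{\mathcal A}) = 2n+1$ and $rank({\mathcal T}^{\widetilde{{\mathcal A}}}{\mathcal B}) = 2n+1-r$, where $n=rank({\mathcal A})$ and $r$ is the codimension of ${\mathcal B}$ in ${\mathcal A}$. Because $\flat_L$ is an isomorphism, $rank(F) = rank(S^*({\mathcal T}^{\widetilde{{\mathcal A}}}{\mathcal B})^\circ) = r$. The ranks of $H^\perp$ and $H^{\perp,\Lambda_L}$ are both equal to $2r$, as witnessed by the explicit local bases $\{(grad\,\Psi^a)_{|{\mathcal B}},Z_a\}$ and $\{X_{\Psi^a}^{\Lambda_L},Z_a\}$ exhibited before the proposition; this forces $rank(H) = 2n+1-2r$, so $H$ is a genuine subbundle of locally constant rank over ${\mathcal B}$.

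With these ranks in hand the three equivalences are immediate. For $(3)\Leftrightarrow(3')$ one uses $rank({\mathcal T}^{\widetilde{{\mathcal A}}}{\mathcal B}) + rank(F) = (2n+1-r)+r = 2n+1$; for $(4)\Leftrightarrow(4')$ and $(5)\Leftrightarrow(5')$ one uses $rank(H)+rank(H^\perp) = rank(H)+rank(H^{\perp,\Lambda_L}) = (2n+1-2r)+2r = 2n+1$. In each case the trivial pointwise intersection is then equivalent to a pointwise direct sum decomposition of fibres, and this assembles to the claimed Whitney-sum decomposition of vector bundles over ${\mathcal B}$.

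No serious obstacle is anticipated. The only delicate point is ensuring that $H$, $H^\perp$ and $H^{\perp,\Lambda_L}$ are indeed vector subbundles (i.e.\ of locally constant rank on ${\mathcal B}$), but this is already guaranteed by the explicit local generators above; once this constancy of rank is in place, the whole statement reduces to the elementary linear algebra observation that two subspaces of a vector space with complementary dimensions span the space precisely when they intersect trivially, applied fibrewise.
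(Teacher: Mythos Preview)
Your proposal is correct and follows essentially the same route as the paper: all three equivalences are reduced to a fibrewise dimension count, using exactly the ranks you list. The only cosmetic difference is that for $(4)\Leftrightarrow(4')$ the paper invokes the cosymplectic structure directly (non-degeneracy of $\flat_L$ forces $rank(H^\perp)=corank(H)$), whereas you read off $rank(H^\perp)=2r$ from the explicit local basis $\{(grad\,\Psi^a)_{|{\mathcal B}},Z_a\}$; these are the same fact phrased two ways.
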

\begin{proof} The equivalence of $(3)$ and $(3')$ follows
by computing the dimension of the corresponding spaces. The ranks
of ${\mathcal T}^{\widetilde{{\mathcal A}}}{\mathcal A}$,
${\mathcal T}^{\widetilde{{\mathcal A}}}{\mathcal B}$ and $F$ are
$$\begin{array}{l}
rank({\mathcal T}^{\widetilde{{\mathcal A}}}{\mathcal A})=2\;rank({\mathcal A})+1,\\
rank({\mathcal T}^{\widetilde{{\mathcal A}}}{\mathcal B})=rank({\mathcal A})+rank({\mathcal B})+1,\\
rank(F)=rank({\mathcal B}^\circ)=rank({\mathcal A})-rank({\mathcal
B}).
\end{array}$$
Thus, $rank({\mathcal T}^{\widetilde{{\mathcal A}}}{\mathcal
A})=rank({\mathcal T}^{\widetilde{{\mathcal A}}}{\mathcal
B})+rank(F)$, and the result follows. The equivalence between
$(4)$ and $(4')$ is obvious, since we are assuming that the free
Lagrangian is regular, i.e., $(\Omega_L,\phi_0)$ is a cosymplectic
structure on ${\mathcal T}^{\widetilde{{\mathcal A}}}{\mathcal
A}$. Finally, the equivalence of $(5)$ and $(5')$ is also obvious
because we have that $rank(H^{\perp,\Lambda_L})=corank(H)$.

\end{proof}

\subsection{Projectors}

Now, we can express the constrained dynamical section in terms of
the free dynamical section by projecting to the adequate space,
either ${\mathcal T}^{\widetilde{{\mathcal A}}}{\mathcal B}$ or
$H$, according to each of the above decompositions of ${\mathcal
T}^{\widetilde{{\mathcal A}}}{\mathcal A}_{|{\mathcal B}}$. Of
course, both procedures give the same result.

\subsubsection{ Projection to ${\mathcal
T}^{\widetilde{{\mathcal A}}}{\mathcal B}$} Assuming that the
constrained Lagrangian system is regular, we have a direct sum
decomposition
$${\mathcal T}^{\widetilde{{\mathcal A}}}_\b{\mathcal A}={\mathcal
T}^{\widetilde{{\mathcal A}}}_\b{\mathcal B}\oplus
F_\b,\makebox[1.5cm]{for all}\b\in {\mathcal B},$$ where we recall
that the subbundle $F\subset{\mathcal T}^{\widetilde{{\mathcal
A}}}{\mathcal A}$ is defined by $F=\flat^{-1}_L(S^*({\mathcal
T}^{\widetilde{{\mathcal A}}}{\mathcal B})^\circ)$. We will denote
by $P$ and $Q$ the complementary projectors defined by this
decomposition, that is,
$$P_\b:{\mathcal T}^{\widetilde{{\mathcal A}}}_\b{\mathcal A}\to {\mathcal
T}^{\widetilde{{\mathcal A}}}_\b{\mathcal
B}\makebox[1cm]{and}Q_\b:{\mathcal T}^{\widetilde{{\mathcal
A}}}_\b{\mathcal A}\to F_\b,\makebox[1.5cm]{for all}\b\in
{\mathcal B}.$$

\begin{theorem}
Let $(L,{\mathcal B})$ be a regular constrained Lagrangian system
and $R_L$ be the solution of the free dynamics, i.e.,
$i_{R_L}\Omega_L=0$ and $i_{R_L}\phi_0=1$. Then, the solution of
the constrained dynamics is the SODE $R_{nh}$ obtained by
projection $R_{nh}=P({R_L}_{|{\mathcal B}})$.
\end{theorem}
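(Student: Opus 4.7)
The plan is to verify directly that $R_{nh}:=P(R_L{}_{|{\mathcal B}})$ meets the three requirements of the Lagrange-d'Alembert system (\ref{LdA}); by regularity the solution is unique, so this will identify $R_{nh}$ as the constrained dynamics. The key input is the splitting ${\mathcal T}^{\widetilde{{\mathcal A}}}{\mathcal A}_{|{\mathcal B}}={\mathcal T}^{\widetilde{{\mathcal A}}}{\mathcal B}\oplus F$ coming from regularity (condition $(3')$), which allows us to write
\[
R_L{}_{|{\mathcal B}}=R_{nh}+Q(R_L{}_{|{\mathcal B}}),\qquad Q(R_L{}_{|{\mathcal B}})\in\Gamma(F).
\]
I will then read off each of the three defining conditions of (\ref{LdA}) from this decomposition together with the characterization $\flat_L(R_L)=\phi_0$ (which is just a rewriting of $i_{R_L}\Omega_L=0$, $\phi_0(R_L)=1$).

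First, the third condition, $R_{nh}\in\Gamma(\tau_{\widetilde{{\mathcal A}}}^{\tau_{\mathcal B}})$, is immediate from the definition of $P$. For the second, $\phi_0(R_{nh})=1$, I will use that $\phi_0$ vanishes on $F$: indeed $F$ is locally spanned by the sections $\{Z_a\}$ of (\ref{Zi}), for which it was already noted that $\phi_0(Z_a)=0$; hence $\phi_0(Q(R_L{}_{|{\mathcal B}}))=0$ and $\phi_0(R_{nh})=\phi_0(R_L{}_{|{\mathcal B}})=1$. For the first condition, apply $\flat_L$ to the decomposition: since $\flat_L(R_L)=\phi_0$, we get
\[
\flat_L(R_{nh})=\phi_0{}_{|{\mathcal B}}-\flat_L(Q(R_L{}_{|{\mathcal B}})),
\]
and by the definition $F=\flat_L^{-1}(S^*({\mathcal T}^{\widetilde{{\mathcal A}}}{\mathcal B})^\circ)$ the right-most term lies in $S^*({\mathcal T}^{\widetilde{{\mathcal A}}}{\mathcal B})^\circ$. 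Combining this with $\phi_0(R_{nh})=1$ and the definition of $\flat_L$, I obtain
\[
i_{R_{nh}}\Omega_L=-\flat_L(Q(R_L{}_{|{\mathcal B}}))\in\Gamma(\tau_{S^*({\mathcal T}^{\widetilde{{\mathcal A}}}{\mathcal B})^\circ}),
\]
which is exactly the first requirement.

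It remains to verify that $R_{nh}$ is a SODE. The condition $\phi_0(R_{nh})=1$ was already checked, so only $S\,R_{nh}=0$ needs to be addressed. Since the unconstrained Reeb section $R_L$ is itself a SODE we have $S\,R_L=0$, and it suffices to show $S$ vanishes on $F$. From the local expression (\ref{verend}) of $S$, one sees $S(\tilde{V}_\beta)=0$, and by (\ref{Zi}) every $Z_a$ is a combination of $\tilde{V}_\beta$'s; hence $S(Q(R_L{}_{|{\mathcal B}}))=0$, giving $S\,R_{nh}=0$.

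This argument is essentially a direct verification; the only subtlety — and the step I expect to need most care — is making sure that $\phi_0$ and $S$ both annihilate the entire subbundle $F$, not just the specific local basis $\{Z_a\}$, so that the argument is independent of the local adapted basis chosen. This is handled by observing the characterizing conditions $i_{Z_a}\phi_0=0$ and $S\,Z_a=0$ hold pointwise in $F$ because $F=\flat_L^{-1}(S^*({\mathcal T}^{\widetilde{\mathcal A}}{\mathcal B})^\circ)$ and $S^*\phi_0=0$, so that the computation above is intrinsic.
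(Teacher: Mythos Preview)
Your proof is correct and follows essentially the same route as the paper: both decompose $R_L{}_{|{\mathcal B}}=R_{nh}+Q(R_L{}_{|{\mathcal B}})$, use that $Q(R_L{}_{|{\mathcal B}})\in F$ implies $i_{Q(R_L)}\Omega_L\in S^*({\mathcal T}^{\widetilde{{\mathcal A}}}{\mathcal B})^\circ$ and $\phi_0(Q(R_L))=0$, and conclude the SODE property from $S(Q(R_L))=0$. The only cosmetic difference is that you organize the computation through $\flat_L$ whereas the paper works directly with $i_{R_{nh}}\Omega_L$; your extra care in noting that $\phi_0$ and $S$ vanish on all of $F$ (not just the local basis $\{Z_a\}$) is a welcome clarification.
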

\begin{proof}
Indeed, if we write $R_{nh}(\b)=R_L(\b)-Q_\b(R_L(\b))$, for $\b\in
{\mathcal B}$, then we have
$$i_{R_{nh}(\b)}\Omega_L(\b)=i_{R_L(\b)}\Omega_L(\b)-i_{Q_\b(R_L(\b))}\Omega_L(\b)=-i_{Q_\b(R_L(\b))}\Omega_L(\b)$$
 which is an element
of $S^*({\mathcal T}^{\widetilde{{\mathcal A}}}_\b{\mathcal
B})^\circ$, because $Q_\b(R_L(\b))\in F_\b$ and
$\phi_0(Q_\b(R_L(\b)))=0$. Moreover, using this last fact and
since $R_L$ is a SODE and $S(Q(R_L))=0$, we have that $R_{nh}$ is
also a SODE.
\end{proof}

Let $(x^i)$ be local coordinates on an open subset $U$ of $M$ and
$\{e_0,e_\alpha\}$ be a local basis of sections of the vector
bundle $\tau_{\widetilde{{\mathcal A}}}^{-1}(U)\to U$ adapted to
$1_{\mathcal A}$. Denote by $(x^i,y^0,y^\alpha)$ the corresponding
local coordinates on $\widetilde{\mathcal A}$ and suppose that the
equations defining the constrained subbundle ${\mathcal B}$ as an
affine subbundle of ${\mathcal A}$ are
$$\Psi^a=\mu_0^a+\mu_\alpha^ay^\alpha=0,\mbox{ for
}a\in\{1,\dots,r\}.$$ Then, the local expression of the projector
over ${\mathcal T}^{\widetilde{{\mathcal A}}}{\mathcal B}$ is
$$P_\b=Id-\sum_{1\leq a,c\leq r}{\mathcal C}_{ac}(\b)Z_c(\b)\otimes(d^{{\mathcal
T}^{\widetilde{{\mathcal A}}}{\mathcal A}}\Psi^a)(\b),$$ for all
$\b\in {\mathcal B}$, $({\mathcal C}_{ab})$ being the inverse
matrix of $({\mathcal C}^{ab})$.

Hence, if the constrained Lagrangian system is regular, the
solution of the constrained dynamics is
$$R_{nh}={R_L}_{|{\mathcal B}}-\sum_{1\leq a,c\leq r}{\mathcal
C}_{ac}\rho_{\widetilde{{\mathcal A}}}^{\tau_{\mathcal
A}}(R_L)(\Psi^a){Z_c}_{|{\mathcal B}}.$$

From the regularity of the local matrices ${\mathcal C}$ we deduce
that $(P,Q)$ may be extended (in many ways) to an open
neighborhood of ${\mathcal B}$. Therefore, $R_{nh}$ may also be
extended to an open neighborhood of ${\mathcal B}$. This fact will
be used in the following proposition.

\begin{proposition} Let $(L,{\mathcal B})$ be a regular constrained Lagrangian
system, $\Theta_L$ be the Poincar\'{e}-Cartan 1-section and
${\mathcal L}^{{\mathcal T}^{\widetilde{\mathcal A}}{\mathcal A}}$
be the Lie derivate on ${\mathcal T}^{\widetilde{\mathcal
A}}{\mathcal A}$.
\begin{enumerate}
\item[i)] If $R_{nh}$ is the solution of the constrained dynamics
$${\mathcal L}^{{\mathcal T}^{\widetilde{{\mathcal A}}}{\mathcal
A}}_{R_{nh}}\Theta_L=d^{{\mathcal T}^{\widetilde{{\mathcal
A}}}{\mathcal A}}L-{\mathcal L}^{{\mathcal
T}^{\widetilde{{\mathcal A}}}{\mathcal A}}_{Q(R_L)}\Theta_L.$$
\item[ii)] We have that
$$({\mathcal L}^{{\mathcal T}^{\widetilde{{\mathcal
A}}}{\mathcal A}}_{Q(R_L)}\Theta_L)_{|{\mathcal B}}\in
\Gamma(\tau_{S^*({\mathcal T}^{\widetilde{{\mathcal A}}}{\mathcal
B})^\circ}).$$
\end{enumerate}

\end{proposition}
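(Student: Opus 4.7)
My plan is to use Cartan's magic formula $\mathcal{L}_X^{\mathcal{T}^{\widetilde{\mathcal{A}}}\mathcal{A}} = i_X \circ d^{\mathcal{T}^{\widetilde{\mathcal{A}}}\mathcal{A}} + d^{\mathcal{T}^{\widetilde{\mathcal{A}}}\mathcal{A}} \circ i_X$ together with the SODE identity $i_\xi \Theta_L = L$ valid for every SODE $\xi$, which follows immediately from $\Theta_L = L\phi_0 + (d^{\mathcal{T}^{\widetilde{\mathcal{A}}}\mathcal{A}}L)\circ S$, $\phi_0(\xi)=1$ and $S\xi=0$.

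For part (i), I would first observe that, thanks to the regularity hypothesis, the projectors $P$ and $Q$ extend to an open neighbourhood of $\mathcal{B}$, so on that neighbourhood we have the identity of sections $R_L = R_{nh} + Q(R_L)$. Since the Lie derivative is $\mathbb{R}$-linear in its vector-field argument, this gives $\mathcal{L}^{\mathcal{T}^{\widetilde{\mathcal{A}}}\mathcal{A}}_{R_L}\Theta_L = \mathcal{L}^{\mathcal{T}^{\widetilde{\mathcal{A}}}\mathcal{A}}_{R_{nh}}\Theta_L + \mathcal{L}^{\mathcal{T}^{\widetilde{\mathcal{A}}}\mathcal{A}}_{Q(R_L)}\Theta_L$. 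Now I compute $\mathcal{L}^{\mathcal{T}^{\widetilde{\mathcal{A}}}\mathcal{A}}_{R_L}\Theta_L$ via Cartan's formula: using $d^{\mathcal{T}^{\widetilde{\mathcal{A}}}\mathcal{A}}\Theta_L = -\Omega_L$, $i_{R_L}\Omega_L = 0$, and $i_{R_L}\Theta_L = L$, one gets $\mathcal{L}^{\mathcal{T}^{\widetilde{\mathcal{A}}}\mathcal{A}}_{R_L}\Theta_L = d^{\mathcal{T}^{\widetilde{\mathcal{A}}}\mathcal{A}}L$. Solving the displayed linear identity for $\mathcal{L}^{\mathcal{T}^{\widetilde{\mathcal{A}}}\mathcal{A}}_{R_{nh}}\Theta_L$ yields the stated formula.

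For part (ii), the key structural input is that $Q(R_L)$ is a section of $F = \flat_L^{-1}(S^*(\mathcal{T}^{\widetilde{\mathcal{A}}}\mathcal{B})^\circ)$. The local basis $Z_a = -W^{\alpha\beta}\mu_\beta^a \tilde{V}_\alpha$ shows that every section of $F$ is $\tilde{V}$-vertical, and from the local expression $S = (\tilde{T}^\alpha - y^\alpha\phi_0)\otimes \tilde{V}_\alpha$ one reads off immediately that $\phi_0(\tilde{V}_\alpha) = 0$ and $S(\tilde{V}_\alpha) = 0$. Hence $\phi_0(Q(R_L)) = 0$ and $S(Q(R_L)) = 0$, so $i_{Q(R_L)}\Theta_L = L\phi_0(Q(R_L)) + d^{\mathcal{T}^{\widetilde{\mathcal{A}}}\mathcal{A}}L(S(Q(R_L))) = 0$. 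Cartan's formula then collapses to $\mathcal{L}^{\mathcal{T}^{\widetilde{\mathcal{A}}}\mathcal{A}}_{Q(R_L)}\Theta_L = -i_{Q(R_L)}\Omega_L$. Restricted to $\mathcal{B}$, the right-hand side equals $-\flat_L(Q(R_L)) + \phi_0(Q(R_L))\phi_0 = -\flat_L(Q(R_L))$, which lies in $S^*(\mathcal{T}^{\widetilde{\mathcal{A}}}\mathcal{B})^\circ$ by the very definition of $F$.

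I do not foresee a genuine obstacle here: part (i) is a purely formal consequence of the Cartan calculus once the projector decomposition is extended to a neighbourhood, and part (ii) reduces to the observation that $F$ consists of vertical sections (for which both $\phi_0$ and $S$ vanish), so only the $i_X d$ half of Cartan's formula survives and it already delivers the required annihilator property. The only care needed is the distinction between identities that hold on a neighbourhood of $\mathcal{B}$ (needed to invoke $\mathbb{R}$-linearity of $\mathcal{L}$ in part (i)) and identities that are asserted only upon restriction to $\mathcal{B}$ in part (ii).
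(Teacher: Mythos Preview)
Your proposal is correct and follows essentially the same route as the paper. For part (i) the paper simply invokes $R_{nh}=R_L-Q(R_L)$ together with ${\mathcal L}_{R_L}\Theta_L=d^{{\mathcal T}^{\widetilde{\mathcal A}}{\mathcal A}}L$, exactly as you do; for part (ii) the paper writes $Q(R_L)=\sum_a\Lambda_a Z_a$ and uses Cartan's formula plus $i_{Z_a}\Theta_L=0$ and $i_{Z_a}\Omega_L=\mu^a_\alpha\theta^\alpha$ to obtain ${\mathcal L}_{Q(R_L)}\Theta_L=-\Lambda_a\mu^a_\alpha\theta^\alpha$, which is your computation $-\flat_L(Q(R_L))\in S^*({\mathcal T}^{\widetilde{\mathcal A}}{\mathcal B})^\circ$ spelled out in the local basis.
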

\begin{proof}
$(i)$ It follows since $R_{nh}=P(R_L)=R_L-Q(R_L)$ and ${\mathcal
L}^{{\mathcal T}^{\widetilde{{\mathcal A}}}{\mathcal
A}}_{R_L}\Theta_L=d^{{\mathcal T}^{\widetilde{{\mathcal
A}}}{\mathcal A}}L$.

$(ii)$ Since $Q(R_L)=\sum_{a=1}^r\Lambda_aZ_a$, with
$\Lambda_a={\mathcal C}_{ac}\rho_{\widetilde{{\mathcal
A}}}^{\tau_{\mathcal A}}(R_L)(\Psi^c)$, and using (\ref{PCloc})
and (\ref{Zi}), we deduce that
$${\mathcal L}^{{\mathcal T}^{\widetilde{{\mathcal A}}}{\mathcal A}}_{Q(R_L)}\Theta_L=i_{\sum\Lambda_aZ_a}d^{{\mathcal
T}^{\widetilde{{\mathcal A}}}{\mathcal A}}\Theta_L+d^{{\mathcal
T}^{\widetilde{{\mathcal A}}}{\mathcal
A}}i_{\sum\Lambda_aZ_a}\Theta_L=-\sum_{a=1}^r\Lambda_ai_{Z_a}\Omega_L=-\Lambda_a\mu_\alpha^a\theta^\alpha.$$

Thus, $({\mathcal L}^{{\mathcal T}^{\widetilde{{\mathcal
A}}}{\mathcal A}}_{Q(R_L)}\Theta_L)_{|{\mathcal B}}\in
\Gamma(\tau_{S^*({\mathcal T}^{\widetilde{{\mathcal A}}}{\mathcal
B})^\circ}).$
\end{proof}

\subsubsection{Cosymplectic Projection to $H$} We have seen
that the regularity condition for the cons\-trained system
$(L,{\mathcal B})$ can be equivalently expressed by requiring that
the subbundle $H$ is a cosymplectic subbundle of $({\mathcal
T}^{\widetilde{{\mathcal A}}}{\mathcal A},\Omega_L,\phi_0)$. Thus,
the restriction $(\Omega_L^H,\phi_0^H)$ to $H$ of the cosymplectic
structure $(\Omega_L,\phi_0)$ is a cosymplectic structure on $H$.
Therefore, there exists a unique solution on $H$ of the equations
\begin{equation}\label{cosH}
i_X\Omega_L^H=0\makebox[1cm]{and}i_X\phi_0^H=1.
\end{equation}

\begin{proposition}\label{prop4.7} The solution of equations (\ref{cosH}) is
precisely the solution of the constrained dynamics $R_{nh}$.
\end{proposition}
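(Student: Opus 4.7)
The plan is to verify that $R_{nh}$ satisfies the defining conditions of the cosymplectic equations on $H$, and then invoke uniqueness (which is guaranteed since $(\Omega_L^H,\phi_0^H)$ is a cosymplectic structure on $H$, as noted in the proof of Theorem \ref{Threg}).

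First I would check that $R_{nh}|_{\mathcal{B}}$ is actually a section of $H$. Recall that $H_\b = \mathcal{T}^{\widetilde{\mathcal{A}}}_\b\mathcal{B} \cap G_\b$ and that $G_\b^\circ = S^*(\mathcal{T}^{\widetilde{\mathcal{A}}}_\b\mathcal{B})^\circ$. By the Lagrange-d'Alembert equations (\ref{LdA}), $R_{nh}|_{\mathcal{B}} \in \Gamma(\tau_{\widetilde{\mathcal{A}}}^{\tau_{\mathcal{B}}})$, so the first inclusion is automatic. For the second, since $R_{nh}$ is a SODE, $S(R_{nh}) = 0$; hence for any $\alpha = S^*\beta \in S^*(\mathcal{T}^{\widetilde{\mathcal{A}}}_\b\mathcal{B})^\circ$ one has $\alpha(R_{nh}(\b)) = \beta(S(R_{nh}(\b))) = 0$, so $R_{nh}(\b) \in (G_\b^\circ)^\circ = G_\b$.

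Next I would verify the two cosymplectic conditions on $H$. The condition $i_{R_{nh}}\phi_0^H = 1$ is just the restriction to $H$ of the second Lagrange-d'Alembert equation $i_{R_{nh}}\phi_0 = 1$, which holds along $\mathcal{B}$. For $i_{R_{nh}}\Omega_L^H = 0$, the first Lagrange-d'Alembert equation gives $(i_{R_{nh}}\Omega_L)|_{\mathcal{B}} \in \Gamma(\tau_{S^*(\mathcal{T}^{\widetilde{\mathcal{A}}}\mathcal{B})^\circ}) = \Gamma(G^\circ)$, and since $H \subseteq G$ we conclude $(i_{R_{nh}}\Omega_L)|_H = 0$, i.e., $i_{R_{nh}}\Omega_L^H = 0$.

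Finally, since the restriction $(\Omega_L^H,\phi_0^H)$ is a cosymplectic structure on $H$ (this is equivalent to condition (4) in Theorem \ref{Threg}, which is satisfied by regularity), the Reeb-type equations (\ref{cosH}) admit a unique solution on $H$. As $R_{nh}|_{\mathcal{B}}$ has been shown to satisfy them, it must coincide with that unique solution. The argument is essentially a one-line check once the preparatory observation $S(R_{nh}) = 0 \Rightarrow R_{nh} \in G$ is made; I expect no real obstacle, the only subtlety being the careful identification of $G^\circ$ with $S^*(\mathcal{T}^{\widetilde{\mathcal{A}}}\mathcal{B})^\circ$ used to pass from the Lagrange-d'Alembert formulation to the orthogonality with respect to $H$.
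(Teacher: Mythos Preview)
Your proof is correct and follows essentially the same approach as the paper's own proof: show $R_{nh}(\b)\in H_\b$ by combining the last Lagrange--d'Alembert condition with the SODE property (which forces $R_{nh}(\b)\in G_\b$), then observe that $i_{R_{nh}}\Omega_L(\b)\in G_\b^\circ\subseteq H_\b^\circ$ so both equations in (\ref{cosH}) hold, and conclude by uniqueness of the Reeb section of the cosymplectic structure on $H$. The only difference is that you spell out the implication ``SODE $\Rightarrow R_{nh}\in G$'' via $S(R_{nh})=0$, whereas the paper simply asserts it.
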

\begin{proof} Since $R_{nh}$ is the solution of the constrained
dynamics, then $R_{nh}(\b)\in{\mathcal T}^{\widetilde{{\mathcal
A}}}_\b{\mathcal B}$, for all $\b\in {\mathcal B}$. Moreover,
using that $R_{nh}$ is a SODE, we obtain that $R_{nh}(\b)\in
G_\b$. Then, $R_{nh}(\b)\in H_\b$ and it is obvious that it
verifies (\ref{cosH}). Note that $i_{R_{nh}}\Omega_L(\b)\in
S^*({\mathcal T}^{\widetilde{{\mathcal A}}}_\b{\mathcal
B})^\circ=G^\circ_\b\subseteq H^\circ_\b$, for all $\b\in{\mathcal
B}$.
\end{proof}

On the other hand, we have a direct sum decomposition
$${\mathcal T}^{\widetilde{{\mathcal A}}}_\b{\mathcal A}=H_\b\oplus H_\b^\perp,$$
for all $\b\in {\mathcal B}$. We will denote by ${\mathcal P}$ and
${\mathcal Q}$ the complementary projectors defined by this
decomposition, that is,
$${\mathcal P}_\b:{\mathcal T}^{\widetilde{{\mathcal A}}}_\b{\mathcal A}\to H_\b\makebox[1cm]{and}{\mathcal Q}_\b:{\mathcal
T}^{\widetilde{{\mathcal A}}}_\b{\mathcal A}\to
H_\b^\perp,\makebox[1.5cm]{for all}\b\in {\mathcal B}.$$

Then, we have the following result.
\begin{theorem}
Let $(L,{\mathcal B})$ be a regular constrained Lagrangian system
and $R_L$ be the solution of the free dynamics, i.e.,
$i_{R_L}\Omega_L=0$ and $i_{R_L}\phi_0=1$. Then, the solution of
the constrained dynamics is the SODE $R_{nh}$ obtained by
projection $R_{nh}={\mathcal P}({R_L}_{|{\mathcal B}})$.
\end{theorem}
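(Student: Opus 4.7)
The plan is to reduce the statement to Proposition \ref{prop4.7}, which characterises $R_{nh}$ as the unique section of $H$ satisfying $i_X\Omega_L^H=0$ and $i_X\phi_0^H=1$. Under the regularity hypothesis, condition $(4)$ of Theorem \ref{Threg} guarantees that $(H,\Omega_L^H,\phi_0^H)$ is a cosymplectic subbundle, and therefore the associated flat map
\[
\flat_L^H\colon H\longrightarrow H^*,\qquad \flat_L^H(X)(Y)=\Omega_L(X,Y)+\phi_0(X)\phi_0(Y),
\]
is a vector bundle isomorphism. The two Reeb conditions for $R_{nh}$ are together equivalent to the single equation $\flat_L^H(X)=\phi_0^H$. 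Hence it is enough to check that the candidate $X=\mathcal{P}({R_L}_{|{\mathcal B}})$ lies in $H$ and satisfies $\flat_L^H(X)=\phi_0^H$.

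The membership $\mathcal{P}({R_L}_{|{\mathcal B}})\in H$ is immediate, by the very definition of $\mathcal{P}$ as the projector of $\mathcal{T}^{\widetilde{\mathcal{A}}}\mathcal{A}_{|{\mathcal B}}$ onto $H$ along $H^{\perp}$. For the second condition I would decompose ${R_L}_{|{\mathcal B}}=\mathcal{P}({R_L}_{|{\mathcal B}})+\mathcal{Q}({R_L}_{|{\mathcal B}})$ and use the linearity of $\flat_L$ together with the basic identity $\flat_L(R_L)=i_{R_L}\Omega_L+\phi_0(R_L)\phi_0=0+\phi_0=\phi_0$. For any $Y\in H$ this gives
\[
\flat_L^H(\mathcal{P}(R_L))(Y)=\flat_L(R_L)(Y)-\flat_L(\mathcal{Q}(R_L))(Y)=\phi_0(Y)-0=\phi_0^H(Y),
\]
where the subtracted term vanishes because $\mathcal{Q}(R_L)\in H^{\perp}$ and $Y\in H$; this is precisely the defining property of $H^{\perp}$ with respect to the cosymplectic pair $(\Omega_L,\phi_0)$.

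Combining these two facts with the bijectivity of $\flat_L^H$, one concludes $\mathcal{P}({R_L}_{|{\mathcal B}})=R_{nh}$; the SODE character is then automatic because, by definition, every solution of the constrained Lagrangian system is a SODE. The only place where the regularity of $(L,\mathcal{B})$ is genuinely invoked is to promote $\flat_L^H$ to an isomorphism; once this is in hand, the computation above is purely formal, so no substantial obstacle arises. The conceptual content of the proof is the observation that the $\flat_L$-image of the free Reeb section $R_L$ equals $\phi_0$ and that this equality survives unchanged after restriction to $H$, thanks to the way $H^\perp$ is built from the cosymplectic flat.
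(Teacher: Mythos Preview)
Your proof is correct and follows essentially the same route as the paper. Both arguments establish the identity $\flat_L^H(\mathcal{P}({R_L}_{|\mathcal B}))=\phi_0^H$ by writing $\mathcal{P}(R_L)=R_L-\mathcal{Q}(R_L)$ and using $\flat_L(R_L)=\phi_0$ together with $\mathcal{Q}(R_L)\in H^{\perp}$; the paper then unpacks this into the two Reeb conditions via Proposition~\ref{prop4.7}, while you invoke the bijectivity of $\flat_L^H$ directly, which is a slightly cleaner packaging of the same idea.
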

\begin{proof} If $\b\in{\mathcal B}$ and $X\in {\mathcal T}^{\widetilde{{\mathcal A}}}_\b{\mathcal
A}$ then, using that ${\mathcal Q}_\b(R_L(\b))\in H^\perp_\b$, it
follows that
\begin{equation}\label{Clave}
\begin{array}{l}
\Big(i_{{\mathcal P}_\b(R_L(\b))}
\Omega_L^H(\b)+\phi_0^H(\b)({\mathcal
P}_\b(R_L(\b)))\phi_0^H(\b)\Big)({\mathcal P}_\b(X))\\[10pt]
=-\Big(i_{{\mathcal Q}_\b(R_L(\b))}
\Omega_L(\b)+\phi_0(\b)({\mathcal
Q}_\b(R_L(\b)))\phi_0^H(\b)\Big)({\mathcal
P}_\b(X))+\phi_0^H(\b)({\mathcal P}_\b(X))\\[10pt]
=\phi_0^H(\b)({\mathcal P}_\b(X)).
\end{array}
\end{equation}
Thus, we deduce that
$$i_{{\mathcal P}({R_L}_{|{\mathcal B}})}
\Omega_L^H+\phi_0^H({\mathcal P}({R_L}_{|{\mathcal B}}
))\phi_0^H=\phi_0^H.$$ In particular, from Proposition
\ref{prop4.7}, we obtain that
\begin{equation}\label{Firstcond}
1=\phi_0^H(R_{nh})=\phi_0^H({\mathcal P}({R_L}_{|{\mathcal B}})).
\end{equation}
Therefore, using (\ref{Clave}), we conclude that
\begin{equation}\label{Secondcond}
i_{{\mathcal P}({R_L}_{|{\mathcal B}})} \Omega_L^H=0.
\end{equation}
Consequently, from (\ref{Firstcond}) and (\ref{Secondcond}), it
follows that $$R_{nh}={\mathcal P}({R_L}_{|{\mathcal B}}).$$
\end{proof}

Next, we will denote by $\{\cdot,\cdot\}_L$ the Poisson bracket on
${\mathcal A}$ induced by the algebraic Poisson structure
$\Lambda_L$ given by
$$\{\Psi,\Psi'\}_L=-(d^{{\mathcal T}^{\widetilde{\mathcal
A}}{\mathcal
A}}\Psi')(X_\Psi^{\Lambda_L})=\Omega_L(X_\Psi^{\Lambda_L},X_{\Psi'}^{\Lambda_L}),$$
for $\Psi,\Psi'\in C^\infty({\mathcal A})$.

On the other hand, we introduce the Hamiltonian section
$X_{\Psi^a}^L\in\Gamma(\tau_{\widetilde{\mathcal
A}}^{\tau_{\mathcal A}})$ associated with the function $\Psi^a$
with respect to the cosymplectic structure $(\Omega_L,\phi_0)$
which is defined by
$$X_{\Psi^a}^L=\flat_L^{-1}(d^{{\mathcal T}^{\widetilde{\mathcal
A}}{\mathcal A}}\Psi^a)-(d^{{\mathcal T}^{\widetilde{\mathcal
A}}{\mathcal A}}\Psi^a)(R_L)R_L,\mbox{ for }a\in\{1,\dots,r\},$$
where $\flat_L:{\mathcal T}^{\widetilde{\mathcal A}}{\mathcal
A}\to({\mathcal T}^{\widetilde{\mathcal A}}{\mathcal A})^*$ is the
vector bundle isomorphism defined in (\ref{flatL}). Note that,
from (\ref{sosbem}) and (\ref{sHLambdaL}), we obtain that
$$X_{\Psi^a}^L=X_{\Psi^a}^{\Lambda_L}.$$

Then, using that $\{(grad\;\Psi^a)_{|{\mathcal B}}
,Z_a=-S(grad\,\Psi^a)_{|{\mathcal B}}\}_{a=1,\dots,r}$ is a local
basis of sections of $H^\perp$, we deduce that the local
expression of the projector ${\mathcal P}$ is
$$\begin{array}{lcl}
{\mathcal P}&=&\displaystyle{Id-\sum_{1\leq a,b,c,d\leq
r}{\mathcal C}_{ab}{\mathcal
C}_{cd}\Big(\{\Psi^b,\Psi^d\}_L+(d^{{\mathcal
T}^{\widetilde{{\mathcal A}}}{\mathcal
A}}\Psi^b)(R_L)(d^{{\mathcal T}^{\widetilde{{\mathcal
A}}}{\mathcal A}}\Psi^d)(R_L)\Big) Z_a\otimes S^*(d^{{\mathcal
T}^{\widetilde{{\mathcal A}}}{\mathcal A}}\Psi^c)}\\[8pt]
&&\displaystyle{+ \sum_{1\leq a,b\leq r}{\mathcal
C}_{ab}\Big(X_{\Psi^a}^L+(d^{{\mathcal T}^{\widetilde{{\mathcal
A}}}{\mathcal A}}\Psi^a)(R_L)R_L\Big)\otimes S^*(d^{{\mathcal
T}^{\widetilde{{\mathcal A}}}{\mathcal A}}\Psi^b)-\sum_{1\leq
a,b\leq r}{\mathcal C}_{ab}Z_a\otimes(d^{{\mathcal
T}^{\widetilde{{\mathcal A}}}{\mathcal A}}\Psi^b)},
\end{array}
$$
along the points of ${\mathcal B}$.

\subsubsection{Poisson Projection to $H$} Assuming that the
constrained Lagrangian system is regular, we have a direct sum
decomposition
$${\mathcal T}^{\widetilde{{\mathcal A}}}_\b{\mathcal A}=H_\b\oplus
H^{\perp,\Lambda_L}_\b,\makebox[1.5cm]{for all}\b\in {\mathcal
B},$$ where we recall that
$H^{\perp,\Lambda_L}=\sharp_{\Lambda_L}(H^\circ)$. We will denote
by $\tilde{\mathcal P}$ and $\tilde{\mathcal Q}$ the complementary
projectors defined by this decomposition, that is,
$$\tilde{\mathcal P}_\b:{\mathcal T}^{\widetilde{{\mathcal A}}}_\b{\mathcal A}\to
H_\b\makebox[1cm]{and}\tilde{\mathcal Q}_\b:{\mathcal
T}^{\widetilde{{\mathcal A}}}_\b{\mathcal A}\to
H_\b^{\perp,\Lambda_L},\makebox[1.5cm]{for all}\b\in {\mathcal
B}.$$

\begin{theorem}
Let $(L,{\mathcal B})$ be a regular constrained Lagrangian system
and $R_L$ be the solution of the free dynamics. Then, the solution
of the constrained dynamics is the SODE $R_{nh}$ obtained by
projection $R_{nh}=\tilde{\mathcal P}({R_L}_{|{\mathcal B}})$.
\end{theorem}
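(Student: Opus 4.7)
The plan is to verify the two conditions that characterize the projection $\tilde{\mathcal{P}}(R_L|_{\mathcal{B}})$: namely that $R_{nh}$ takes values in $H$ and that $R_L|_{\mathcal{B}}-R_{nh}$ takes values in $H^{\perp,\Lambda_L}$. By the direct sum decomposition $\mathcal{T}^{\widetilde{\mathcal{A}}}\mathcal{A}|_{\mathcal{B}}=H\oplus H^{\perp,\Lambda_L}$, which is equivalent to regularity by condition $(5')$ in the preceding proposition, this will force $\tilde{\mathcal{P}}(R_L|_{\mathcal{B}})=R_{nh}$.

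For the first condition, I would simply invoke Proposition \ref{prop4.7}, whose proof already observed that $R_{nh}\in\Gamma(H)$: the constrained dynamics is a SODE tangent to $\mathcal{B}$ (so it lives in $\mathcal{T}^{\widetilde{\mathcal{A}}}\mathcal{B}$) and satisfies $i_{R_{nh}}\Omega_L\in S^*(\mathcal{T}^{\widetilde{\mathcal{A}}}\mathcal{B})^\circ=G^\circ$, which places $R_{nh}$ in $G$ as well, hence in $H=\mathcal{T}^{\widetilde{\mathcal{A}}}\mathcal{B}\cap G$.

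For the second condition, I would use the explicit form of the constrained solution derived in Section \ref{Noho}. Writing $R_{nh}=R_L-W^{\alpha\beta}\lambda_a\mu_\beta^a\tilde{V}_\alpha$ and recalling $Z_a=-W^{\alpha\beta}\mu_\beta^a\tilde{V}_\alpha$ from (\ref{Zi}), one reads off that
$$R_L|_{\mathcal{B}}-R_{nh}=-\lambda_a Z_a|_{\mathcal{B}}\in\Gamma(F).$$
It then remains only to note the inclusion $F\subset H^{\perp,\Lambda_L}$. This follows from the identity $G^{\perp,\Lambda_L}=F$ established earlier, combined with $H\subset G$: the latter gives $G^\circ\subset H^\circ$, and applying $\sharp_{\Lambda_L}$ yields $F=\sharp_{\Lambda_L}(G^\circ)\subset\sharp_{\Lambda_L}(H^\circ)=H^{\perp,\Lambda_L}$. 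Combined with uniqueness of the splitting, this finishes the proof.

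No genuine obstacle is expected: the argument assembles pieces already produced in the earlier regularity analysis, in particular the identification $G^{\perp,\Lambda_L}=F$ and the local description of $R_{nh}$ via the Lagrange multipliers, which are well-defined along $\mathcal{B}$ thanks to the invertibility of the matrix $\mathcal{C}^{ab}$ (Theorem \ref{Threg}). The only point worth a sentence of care is that although $R_L$ is an ambient section while $R_{nh}$ is a priori only defined on $\mathcal{B}$, both the decomposition and the identity above are statements about sections of the pulled-back bundle $\mathcal{T}^{\widetilde{\mathcal{A}}}\mathcal{A}|_{\mathcal{B}}$, so the comparison is legitimate.
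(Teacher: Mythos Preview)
Your proof is correct and follows essentially the same route as the paper's: both show that $R_{nh}\in H$ and $R_L|_{\mathcal B}-R_{nh}\in F\subseteq H^{\perp,\Lambda_L}$, then invoke uniqueness of the decomposition. The only cosmetic difference is that the paper obtains $R_L|_{\mathcal B}-R_{nh}\in F$ by citing the earlier projector results $R_{nh}=P(R_L|_{\mathcal B})={\mathcal P}(R_L|_{\mathcal B})$ (so the difference is $Q(R_L|_{\mathcal B})\in F$), whereas you verify it via the local expression $R_L-R_{nh}=-\lambda_aZ_a$; the underlying argument is the same.
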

\begin{proof}Suppose that $\b\in{\mathcal B}$. Since ${\mathcal Q}_\b(R_L(\b))=Q_\b(R_L(\b))\in
F_\b=G^\perp_\b=G^{\perp,\Lambda_L}_\b\subseteq
H^{\perp,\Lambda_L}_\b$, ${\mathcal P}_\b(R_L(\b))\in H_\b$ and
$R_L(\b)={\mathcal P}_\b(R_L(\b))+{\mathcal Q}_\b(R_L(\b))$, we
conclude that
$$\begin{array}{ccccccc}
P_\b(R_L(\b))&=&{\mathcal P}_\b(R_L(\b))&=&\tilde{\mathcal
P}_\b(R_L(\b))&=&R_{nh}(\b),\\
Q_\b(R_L(\b))&=&{\mathcal Q}_\b(R_L(\b))&=&\tilde{\mathcal
Q}_\b(R_L(\b)).&&
\end{array}$$

\end{proof}

 Since also
$\{Z_a, X_{\Psi^a}^{\Lambda_L}\}_{a=1,\dots,r}$ is a local basis
of $H^{\perp,\Lambda_L}$, we obtain that the local expression of
the projector $\tilde{\mathcal P}$ is
\begin{eqnarray*}
\tilde{\mathcal P}&=&Id-\sum_{1\leq a,b,c,d\leq r}{\mathcal
C}_{ab}{\mathcal C}_{cd}\{\Psi^b,\Psi^d\}_L Z_a\otimes
S^*(d^{{\mathcal T}^{\widetilde{{\mathcal {\mathcal {\mathcal
A}}}}}{\mathcal {\mathcal A}}}\Psi^c)\\
&&+ \sum_{1\leq a,b\leq r}{\mathcal
C}_{ab}X_{\Psi^a}^{\Lambda_L}\otimes S^*(d^{{\mathcal
T}^{\widetilde{{\mathcal {\mathcal A}}}}{\mathcal {\mathcal
A}}}\Psi^b)-\sum_{1\leq a,b\leq r}{\mathcal
C}_{ab}Z_a\otimes(d^{{\mathcal T}^{\widetilde{{\mathcal
A}}}{\mathcal A}}\Psi^b), \end{eqnarray*}
 along the points of
${\mathcal B}$.

\subsection{The constrained Poincar\'{e}-Cartan 2-section}
Let $(L,{\mathcal B})$ be a regular constrained Lagrangian system
on a Lie affgebroid ${\mathcal A}$ of rank $n$ with regular
Lagrangian function $L:{\mathcal A}\to \R$ and with constraint
subbundle ${\mathcal B}$ of corank $r$. The equations for the
Lagrange-d'Alembert section $R_{nh}=P({R_L}_{|{\mathcal B}})$ can
be entirely written in terms of objects in the Lie algebroid
prolongation $(\tau_{\widetilde{{\mathcal A}}}^{\tau_{\mathcal
B}}:{\mathcal T}^{\widetilde{{\mathcal A}}}{\mathcal B}\to
{\mathcal B} ,\lcf\cdot,\cdot\rcf_{\widetilde{{\mathcal
A}}}^{\tau_{\mathcal B}},\rho_{\widetilde{{\mathcal
A}}}^{\tau_{\mathcal B}})$ of the Lie algebroid
$(\tau_{\widetilde{{\mathcal A}}}:\widetilde{{\mathcal A}}\to M
,\lcf\cdot,\cdot\rcf_{\widetilde{{\mathcal
A}}},\rho_{\widetilde{{\mathcal A}}})$ over the fibration
$\tau_{\mathcal B}:{\mathcal B}\to M$. In order to do this, for
every point $\b\in {\mathcal B}$, we define
$$\omega(\b)=\Omega_L(\b)-(i_{Q_\b(R_L(\b))}\Omega_L(\b))\wedge\phi_0(\b).$$
$\omega$ is a section of the vector bundle $\wedge^2({\mathcal
T}^{\widetilde{{\mathcal A}}}{\mathcal A})^*_{|{\mathcal B}}\to
{\mathcal B}$. We also have that $\phi_0(\b)\wedge\omega^n(\b)\neq
0$, for all $\b\in {\mathcal B}$. Thus, there exists a unique
section $X$ of ${{\mathcal T}^{\widetilde{{\mathcal A}}}{\mathcal
A}}_{|{\mathcal B}}\to {\mathcal B}$ such that
\begin{equation}\label{reeb1}
i_X\omega=0\makebox[1cm]{and}i_X\phi_0=1.
\end{equation}

Note that $X$ is the solution of the constrained dynamics. In
fact, since $\phi_0(P({R_L}_{|{\mathcal B}}))=1$, it follows that
$$\begin{array}{rcl}
i_{{P}({R_L}_{|{\mathcal B}})}\omega&=&i_{{P}({R_L}_{|{\mathcal
B}})}\Omega_L+i_{{Q}({R_L}_{|{\mathcal
B}})}\Omega_L-(i_{{Q}({R_L}_{|{\mathcal
B}})}\Omega_L)({P}({R_L}_{|{\mathcal
B}})){\phi_0}_{|{\mathcal B}}\\
&=&-(i_{{Q}({R_L}_{|{\mathcal B}})}\Omega_L)({P}({R_L}_{|{\mathcal
B}})){\phi_0}_{|{\mathcal B}}.
\end{array}$$
Thus, using that $Q({R_L}_{|{\mathcal B}})$ is a section of
$F\to{\mathcal B}$ and the fact that $P({R_L}_{|{\mathcal B}})$ is
a SODE, we conclude that $i_{P({R_L}_{|{\mathcal B}})}\omega=0$
which proves that $X=P({R_L}_{|{\mathcal B}})=R_{nh}$.

Next, we get the following.

\begin{theorem} If $\tilde{\omega}$ and $\tilde{\phi_0}$ are the
restrictions of $\omega$ and $\phi_0$ to the vector subbundle
${\mathcal T}^{\widetilde{{\mathcal A}}}{\mathcal B}$ of
${\mathcal T}^{\widetilde{{\mathcal A}}}{\mathcal A}$, then the
solution $R_{nh}=P({R_L}_{|{\mathcal B}})$ of the constrained
dynamics verifies the equations
\begin{equation}\label{reeb2}
i_X\tilde{\omega}=0\makebox[1cm]{and}i_X\tilde{\phi_0}=1.
\end{equation}
Moreover, the unique SODE $X$ on ${\mathcal
T}^{\widetilde{{\mathcal A}}}{\mathcal B}$ satisfying
(\ref{reeb2}) is just $R_{nh}=P({R_L}_{|{\mathcal B}})$.
\end{theorem}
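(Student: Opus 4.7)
The first assertion should be essentially automatic. Since $R_{nh}=P(R_L|_{\mathcal B})$ was already shown to be a SODE on ${\mathcal B}$ satisfying (\ref{reeb1}), $R_{nh}$ is in particular a section of $\tau_{\widetilde{\mathcal A}}^{\tau_{\mathcal B}}:{\mathcal T}^{\widetilde{\mathcal A}}{\mathcal B}\to{\mathcal B}$, and the identities $i_{R_{nh}}\omega=0$ and $i_{R_{nh}}\phi_0=1$ restrict to $i_{R_{nh}}\tilde\omega=0$ and $i_{R_{nh}}\tilde\phi_0=1$ simply by testing only against sections of the smaller bundle ${\mathcal T}^{\widetilde{\mathcal A}}{\mathcal B}$.

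For uniqueness, my plan is to start from an arbitrary SODE $X$ on ${\mathcal T}^{\widetilde{\mathcal A}}{\mathcal B}$ satisfying (\ref{reeb2}) and show that $Y:=X-R_{nh}$ vanishes. The first step will be to confine both $X$ and $R_{nh}$ to the subbundle $H={\mathcal T}^{\widetilde{\mathcal A}}{\mathcal B}\cap G$. By hypothesis $X$ takes values in ${\mathcal T}^{\widetilde{\mathcal A}}{\mathcal B}$, and the SODE condition $S(X)=0$ lies trivially in ${\mathcal T}^{\widetilde{\mathcal A}}{\mathcal B}$; since $G^\circ=S^*({\mathcal T}^{\widetilde{\mathcal A}}{\mathcal B})^\circ$, this forces $X\in G$, hence $X\in H$. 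The same argument gives $R_{nh}\in H$, so $Y\in H$; and because both sections are SODEs, $\phi_0(Y)=0$.

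The second step is to translate $i_Y\tilde\omega=0$ into an equation involving $\Omega_L$ alone on $H$. The correction one-form $\alpha:=i_{Q(R_L|_{\mathcal B})}\Omega_L$ is a section of $G^\circ$: indeed $Q(R_L|_{\mathcal B})\in F=\flat_L^{-1}(G^\circ)$ together with $\phi_0(Q(R_L|_{\mathcal B}))=0$ gives $\alpha=\flat_L(Q(R_L|_{\mathcal B}))\in G^\circ$. Expanding $\omega(Y,Z)=\Omega_L(Y,Z)-\alpha(Y)\phi_0(Z)+\alpha(Z)\phi_0(Y)$ for $Z\in H\subseteq G$, both correction terms vanish ($\alpha(Y)=0$ because $Y\in H\subseteq G$, and $\phi_0(Y)=0$), leaving $\tilde\omega(Y,Z)=\Omega_L(Y,Z)$. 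Therefore the equation $i_Y\tilde\omega=0$, tested on $Z\in H$, yields $\Omega_L(Y,Z)=0$ for all $Z\in H$.

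To conclude, the regularity of $(L,{\mathcal B})$ (Theorem \ref{Threg}, condition (4)) guarantees that $(\Omega_L^H,\phi_0^H)$ is a cosymplectic structure on $H$ — equivalently, that $\flat_L^H\colon H\to H^*$ is a fibrewise isomorphism, exactly as used in the proof of Proposition \ref{prop4.7}. Combining $\Omega_L(Y,Z)=0$ for all $Z\in H$ with $\phi_0(Y)=0$ gives $\flat_L^H(Y)=0$, whence $Y=0$ and $X=R_{nh}$. The main obstacle is conceptual rather than computational: one has to notice that both the SODE constraint and the ``defect term'' in $\omega$ are simultaneously governed by the subbundle $H$, so that the whole problem collapses onto the cosymplectic structure $(\Omega_L^H,\phi_0^H)$ already exploited in Proposition \ref{prop4.7}.
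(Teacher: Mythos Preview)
Your argument is correct, and it takes a genuinely different route from the paper's. The paper proves uniqueness by \emph{extending} back from ${\mathcal T}^{\widetilde{\mathcal A}}{\mathcal B}$ to ${\mathcal T}^{\widetilde{\mathcal A}}{\mathcal A}_{|\mathcal B}$: given a SODE $X$ on ${\mathcal T}^{\widetilde{\mathcal A}}{\mathcal B}$ satisfying (\ref{reeb2}), it shows $(i_X\omega)(P(Y))=0$ trivially and then checks $(i_X\omega)(Q(Y))=0$ directly from $S(Q(Y))=0$, $\phi_0(Q(Y))=0$ and the fact that $X$ is a SODE; this yields $i_X\omega=0$ on the full bundle, whence $X=R_{nh}$ by the uniqueness already established for (\ref{reeb1}). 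You instead \emph{restrict} further, to the subbundle $H={\mathcal T}^{\widetilde{\mathcal A}}{\mathcal B}\cap G$: observing that any SODE in ${\mathcal T}^{\widetilde{\mathcal A}}{\mathcal B}$ automatically lies in $H$ (since $S(X)=0$ kills $G^\circ=S^*({\mathcal T}^{\widetilde{\mathcal A}}{\mathcal B})^\circ$), you reduce $i_Y\tilde\omega=0$ to $i_Y\Omega_L^H=0$ by noting that the correction term $\alpha=i_{Q(R_L)}\Omega_L$ lies in $G^\circ$, and then invoke the nondegeneracy of $\flat_L^H$ from Theorem~\ref{Threg}(4) and Proposition~\ref{prop4.7}.

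Both approaches hinge on regularity, but they cash it in differently: the paper uses the direct-sum decomposition ${\mathcal T}^{\widetilde{\mathcal A}}{\mathcal A}_{|\mathcal B}={\mathcal T}^{\widetilde{\mathcal A}}{\mathcal B}\oplus F$ together with the nondegeneracy of $(\omega,\phi_0)$ on the ambient bundle, while you use the cosymplectic structure $(\Omega_L^H,\phi_0^H)$ on the smaller bundle $H$. Your route is arguably cleaner in that it makes the role of $H$ and Proposition~\ref{prop4.7} explicit, and it avoids the separate computation of $(i_X\omega)(Q(Y))$. A minor stylistic point: in your expansion of $\omega(Y,Z)$ both correction terms already vanish for \emph{any} $Z$ (since $\alpha(Y)=0$ and $\phi_0(Y)=0$), so the restriction to $Z\in H$ is only needed at the very last step to invoke $\flat_L^H$.
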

\begin{proof}
Since the section $R_{nh}=P({R_L}_{|{\mathcal B}})$ satisfies
(\ref{reeb1}), then it also verifies (\ref{reeb2}).

Now, let $X$ be a SODE on ${\mathcal T}^{\widetilde{{\mathcal
A}}}{\mathcal B}$ such that $i_X\tilde{\omega}=0$ and
$i_X\tilde{\phi_0}=1$. Then, we have that
\begin{equation}\label{proof1}
(i_X\omega)(P(Y))=0,\makebox[1.5cm]{for all}Y\mbox{ section of }
{{\mathcal T}^{\widetilde{{\mathcal A}}}{\mathcal A}}_{|{\mathcal
B}}\to {\mathcal B}.
\end{equation}

On the other hand, using that $S(Q(Y))=0$, $\phi_0(Q(Y))=0$ and
the fact that $X$ is a SODE, we obtain
\begin{equation}\label{proof2}
(i_X\omega)(Q(Y))=-(i_{Q(Y)}\Omega_L)(X)-(i_{Q(R_L)}\Omega_L)(X)\phi_0(Q(Y))+(i_{Q(R_L)}\Omega_L)(Q(Y))=0,
\end{equation}
for all $Y$ section of ${{\mathcal T}^{\widetilde{{\mathcal
A}}}{\mathcal A}}_{|{\mathcal B}}\to {\mathcal B}$. Finally, from
(\ref{proof1}) and (\ref{proof2}), we conclude that $i_X\omega=0$
which implies that $X=R_{nh}$.
\end{proof}

\begin{definition} The 2-section $\tilde{\omega}$ is said to be
the constrained Poincar\'{e}-Cartan 2-section.
\end{definition}
\begin{remark} {\em Note that $(\tilde{\omega},\tilde{\phi_0})$ is not
a cosymplectic structure on ${\mathcal T}^{\widetilde{{\mathcal
A}}}{\mathcal B}$ so that it may be another solution of the
equations
$$i_X\tilde{\omega}=0\makebox[1cm]{and}i_X\tilde{\phi_0}=1.$$\hfill$\diamondsuit$}
\end{remark}

\section{Constrained Hamiltonian Systems and the nonholonomic bracket}

\subsection{Constrained Hamiltonian Systems}\label{sec5.1}
We now pass to the Hamiltonian description of the nonholonomic
system on a Lie affgebroid. The equivalence between the Lagrangian
and the Hamiltonian description of an unconstrained system was
discussed in Section \ref{sec3.3}.
We now consider a nonholonomic system, described by a hyperregular
Lagrangian $L$ and a constrained affine subbundle $\tau_{\mathcal
B}:{\mathcal B}\to M$ of the bundle $\tau_{\mathcal A}:{\mathcal
A}\to M$. Denote by $\bar{{\mathcal B}}$ the image of ${\mathcal
B}$ under the Legendre transformation, which is a submanifold of
$V^*$. If ${\mathcal B}$ is again locally defined by $r$
independent functions $\Psi^a$, then the constraint functions on
$V^*$ describing $\bar{{\mathcal B}}$ become $\psi^a=\Psi^a\circ
leg_L^{-1}$, i.e.,
$$\psi^a(x^j,y_\alpha)=\Psi^a(x^j,\displaystyle\frac{\partial
H}{\partial y_\alpha}),$$ where the local expression of the
Hamiltonian section is
$h(x^i,y_\alpha)=(x^i,-H(x^j,y_\beta),y_\alpha)$.

A curve $\gamma:I\to V^*$ is a solution of the equations of motion
for the nonholonomic Hamiltonian system $(h,\bar{\mathcal B})$ if
$leg_L^{-1}\circ\gamma:I\to{\mathcal A}$ is a solution of the
Lagrange-d'Alambert equations for the system $(L,{\mathcal B})$.
Thus, using (\ref{PartialsH}), (\ref{legL-1}) and (\ref{3.4'}), we
deduce that
$$\gamma:I\to V^*,\;\;t\mapsto\gamma(t)=(x^i(t),y_\alpha(t)),$$
is a solution of the equations of motion if
$$\left\{ \begin{array}{c}\displaystyle\frac{dx^i}{dt}=\rho_0^i+\rho_\alpha^i\frac{\partial H}{\partial y_\alpha},\\[8pt]
\displaystyle\frac{dy_\alpha}{dt}+\rho_\alpha^i\frac{\partial
H}{\partial x^i}+(C_{\alpha0}^\gamma+
C_{\alpha\beta}^\gamma\frac{\partial H}{\partial
y_\beta})y_\gamma=-\bar{\lambda}_a{\mathcal H}^{\alpha\beta}
\frac{\partial\psi^a}{\partial y_\beta},\\[8pt]
\psi^a(x^j,y_\alpha)=0, \end{array}\right.$$

\vspace{0.2cm} with $i\in\{1,\dots,m\}$, $\alpha\in\{1,\dots,n\}$
and $a\in\{1,\dots,r\}$ and where $\bar{\lambda}_i$ are the
Lagrange multipliers and ${\mathcal H}^{\alpha\beta}$ are the
components of the inverse matrix of the regular matrix $({\mathcal
H}_{\alpha\beta})=\Big(\displaystyle\frac{\partial^2H}{\partial
y_\alpha\partial y_\beta}\Big)$. Note that
\begin{equation}\label{5.0}
(\displaystyle\frac{\partial\psi^a}{\partial y_\alpha}{\mathcal
H}^{\alpha\beta})(x^j,y_\gamma)=(\frac{\partial\Psi^a}{\partial
y^\beta}\circ leg_L^{-1})(x^j,y_\gamma)=(\mu^a_\beta\circ
leg_L^{-1})(x^i,y_\gamma).\end{equation}

An intrinsic description of the equations of motion is obtained as
follows. Consider the subbundle $\bar{G}\subset{\mathcal
T}^{\widetilde{{\mathcal A}}}V^*_{|\bar{{\mathcal
B}}}\to\bar{{\mathcal B}}$ such that
$\bar{G}_{\bar{\b}}=({\mathcal T}leg_L)(G_\b)$, with
$\bar{\b}=leg_L(\b)$ and $\b\in {\mathcal B}$. It is easy to prove
that $\bar{G}_{\bar{\b}}^\circ$ is locally generated by the
independent sections
\begin{equation}\label{mubarraa}
\bar{\mu}^a=\displaystyle\frac{\partial\psi^a}{\partial
y_\alpha}{\mathcal H}^{\alpha\beta}(\tilde{e}^\beta-\frac{\partial
H}{\partial y_\beta}\tilde{e}^0),\makebox[1.5cm]{for
all}a\in\{1,\dots,r\}.
\end{equation}

Note that, from (\ref{Llegloc}) and (\ref{5.0}), it follows that
$$({\mathcal
T}leg_L,leg_L)^*(\bar{\mu}^a)=\mu_\alpha^a\theta^\alpha,\mbox{ for
all }a.$$ Moreover, it is clear that the restriction
$\tau_{\bar{\mathcal B}}:\bar{\mathcal B}\to M$ to $\bar{\mathcal
B}$ of the vector bundle projection $\tau_V^*:V^*\to M$ is a
fibration. Thus, we can consider the Lie algebroid
$\tau_{\widetilde{\mathcal A}}^{\tau_{\bar{\mathcal B}}}:{\mathcal
T}^{\widetilde{\mathcal A}}{\bar{\mathcal B}}\to\bar{\mathcal B}$
and the Hamilton equations of motion of the nonholonomic system
can then be rewritten in intrinsic form as
\begin{equation}\label{NHeq}
\left\{ \begin{array}{rcl} (i_{\bar{X}}\Omega_h)_{|\bar{{\mathcal B}}}&\in&\Gamma(\tau_{\bar{G}^\circ}),\\
(i_{\bar{X}}\eta)_{|\bar{{\mathcal B}}}&=&1,\\
\bar{X}_{|\bar{{\mathcal
B}}}&\in&\Gamma(\tau_{\widetilde{{\mathcal
A}}}^{\tau_{\bar{{\mathcal B}}}}),
\end{array} \right.
\end{equation}
$(\Omega_h,\eta)$ being the cosymplectic structure on ${\mathcal
T}^{\widetilde{{\mathcal A}}}V^*$ defined in (\ref{Omegah}) and
(\ref{eta}) and $\tau_{\bar{G}^\circ}$ being the vector bundle
projection of ${\bar{G}^\circ}\to\bar{\mathcal B}$. It is said
that the constrained Hamiltonian system $(h,\bar{{\mathcal B}})$
is {\it regular} if the Hamilton equations have a unique solution,
which we will denote by $\bar{R}_{nh}$.

Now, denote by $\flat_h:{\mathcal T}^{\widetilde{\mathcal
A}}V^*\to({\mathcal T}^{\widetilde{\mathcal A}}V^*)^*$ the vector
bundle isomorphism given by
$$\flat_h(X)=i_X\Omega_h+\eta(X)\eta,\mbox{ for }X\in{\mathcal T}^{\widetilde{\mathcal
A}}V^*,$$ and by $\Lambda_h$ the (algebraic) Poisson structure on
${\mathcal T}^{\widetilde{\mathcal A}}V^*$ defined by
$$\Lambda_h(\alpha,\beta)=\Omega_h(\flat_h^{-1}(\alpha),\flat_h^{-1}(\beta)),\mbox{
for }\alpha,\beta\in({\mathcal T}^{\widetilde{\mathcal
A}}V^*)^*.$$

 We also can transport the vector subbundle
$F\subset{\mathcal T}^{\widetilde{{\mathcal A}}}{\mathcal
A}_{|{\mathcal B}}\to {\mathcal B}$ and obtain a subbundle
$\bar{F}\subset{\mathcal T}^{\widetilde{{\mathcal
A}}}V^*_{|\bar{{\mathcal B}}}\to\bar{{\mathcal B}}$ such that
$\bar{F}_{\bar{\b}}=({\mathcal T}leg_L)(F_\b)$, with
$\bar{\b}=leg_L(\b)$ and $\b\in {\mathcal B}$. Moreover, it
satisfies that
$$\bar{F}_{\bar{\b}}=\flat_h^{-1}(\bar{G}_{\bar{\b}}^\circ)=\bar{G}_{\bar{\b}}^\perp=\bar{G}_{\bar{\b}}^{\perp,\Lambda_h},$$
where $\bar{G}^\perp$ (respectively, $\bar{G}^{\perp,\Lambda_h}$)
denotes the orthogonal to $\bar{G}$ with respect to the
cosymplectic (respectively, Poisson) structure $(\Omega_h,\eta)$
(respectively, $\Lambda_h$). Notice that $\bar{F}$ is locally
generated by the sections
$\bar{Z}_a=X_{\bar{\mu}^a}^{\Lambda_h}\in\Gamma(\tau_{\widetilde{{\mathcal
A}}}^{\tau_{\bar{\mathcal B}}})$ characterized by the conditions
$$i_{\bar{Z}_a}\Omega_h=\bar{\mu}^a\makebox[1cm]{and}i_{\bar{Z}_a}\eta=0,$$
for all $a\in\{1,\dots,r\}$. Of course, $Z_a$ and $\bar{Z}_a$ are
$({\mathcal T}leg_L,leg_L)$-related. Moreover, its local
expression is the following
\begin{equation}\label{Zbarraa}
\bar{Z}_a=-{\mathcal
H}^{\alpha\beta}\displaystyle\frac{\partial\psi^a}{\partial
y_\beta}\bar{e}_\alpha.
\end{equation}

We will denote by $(\bar{\mathcal C}^{ab})$ the matrix which
elements are
\begin{equation}\label{Cbarraab}
\bar{\mathcal C}^{ab}=(d^{{\mathcal T}^{\widetilde{{\mathcal
A}}}V^*}\psi^a)(\bar{Z}_b)=-\bar{\mu}^b(X_{\psi^a}^{\Lambda_h})=-{\mathcal
H}^{\alpha\beta}\displaystyle\frac{\partial\psi^a}{\partial
y_\beta}\frac{\partial\psi^b}{\partial y_\alpha},
\end{equation}
for all $a,b\in\{1,\dots,r\}$, where $X_{\psi^a}^{\Lambda_h}$ is
the Hamiltonian section of $\psi^a$ with respect to the
(algebraic) Poisson structure $\Lambda_h$, that is,
$X_{\psi^a}^{\Lambda_h}=-\sharp_{\Lambda_h}(d^{{\mathcal
T}^{\widetilde{\mathcal A}}V^*}\psi^a)$,
$\sharp_{\Lambda_h}:({\mathcal T}^{\widetilde{\mathcal
A}}V^*)^*\to{\mathcal T}^{\widetilde{\mathcal A}}V^*$ being the
vector bundle morphism given by
$\sharp_{\Lambda_h}(\alpha)=i_\alpha\Lambda_h$, for
$\alpha\in({\mathcal T}^{\widetilde{\mathcal A}}V^*)^*$. It is
easy to check that $\bar{\mathcal C}^{ab}\circ leg_L={\mathcal
C}^{ab}$.


In a similar way that in the Lagrangian side, we can consider the
subbundle $\bar{H}\subset{\mathcal T}^{\widetilde{{\mathcal
A}}}V^*_{|\bar{{\mathcal B}}}\to\bar{{\mathcal B}}$ such that
$\bar{H}_{\bar{\b}}={\mathcal T}^{\widetilde{{\mathcal
A}}}_{\bar{\b}}\bar{{\mathcal B}}\cap\bar{G}_{\bar{\b}}=({\mathcal
T}leg_L)(H_\b)$, with $\bar{\b}=leg_L(\b)$ and $\b\in {\mathcal
B}$, and its orthogonal $\bar{H}^\perp$ (respectively,
$\bar{H}^{\perp,\Lambda_h}$) with respect to the cosymplectic
(respectively, Poisson) structure $(\Omega_h,\eta)$ (respectively,
$\Lambda_h$).

Using Theorem \ref{Threg} and the fact that the Lagrangian $L$ is
hyperregular, we deduce the next result.
\begin{theorem}\label{NHth} The following properties are equivalent:
\begin{enumerate}
\item[(1)] The constrained Lagrangian system $(L,{\mathcal B})$ is
regular, \item[(2)] the constrained Hamiltonian system
$(h,\bar{{\mathcal B}})$ is regular, \item[(3)] the matrices
$\bar{\mathcal C}=(\bar{\mathcal C}^{ab})$ are non-singular,
\item[(4)] ${\mathcal T}^{\widetilde{{\mathcal
A}}}_{\bar{\b}}\bar{{\mathcal B}}\cap \bar{F}_{\bar{\b}}=\{0\},$
for all $\bar{\b}\in\bar{{\mathcal B}},$ \item[(5)]
$\bar{H}_{\bar{\b}}\cap\bar{H}_{\bar{\b}}^{\perp}=\{0\}$, for all
$\bar{\b}\in\bar{{\mathcal B}}$, \item[(6)]
$\bar{H}_{\bar{\b}}\cap
\bar{H}_{\bar{\b}}^{\perp,\Lambda_h}=\{0\}$, for all $\bar{\b}\in
\bar{{\mathcal B}}$.
\end{enumerate}
\end{theorem}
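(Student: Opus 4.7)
The plan is to lift all the geometric structures from the Lagrangian side to the Hamiltonian side via the Legendre transformation, and then invoke Theorem~\ref{Threg}. Since $L$ is hyperregular, $leg_L:{\mathcal A}\to V^*$ is a global diffeomorphism. Combining Theorem~\ref{t3.2} with the identity $h=Leg_L\circ leg_L^{-1}$ and the definition~(\ref{Omegah}) of $\Omega_h$, the Lie algebroid morphism $({\mathcal T}leg_L,leg_L)$ is in fact a Lie algebroid isomorphism and one checks the pullback identities
$$({\mathcal T}leg_L,leg_L)^*(\Omega_h)=\Omega_L,\qquad ({\mathcal T}leg_L,leg_L)^*(\eta)=\phi_0.$$
Hence it intertwines the vector bundle isomorphisms $\flat_L$ and $\flat_h$ as well as the algebraic Poisson tensors $\Lambda_L$ and $\Lambda_h$, and by construction it maps $G$, $F$, $H$ and ${\mathcal T}^{\widetilde{{\mathcal A}}}{\mathcal B}$ onto $\bar G$, $\bar F$, $\bar H$ and ${\mathcal T}^{\widetilde{{\mathcal A}}}\bar{{\mathcal B}}$, respectively.

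Because $({\mathcal T}leg_L,leg_L)$ preserves both the cosymplectic and the Poisson structures, orthogonal operations are preserved as well: $({\mathcal T}leg_L)(H^\perp)=\bar H^\perp$ and $({\mathcal T}leg_L)(H^{\perp,\Lambda_L})=\bar H^{\perp,\Lambda_h}$. Furthermore $\bar{\mathcal{C}}^{ab}\circ leg_L={\mathcal C}^{ab}$ implies that ${\mathcal C}$ is non-singular if and only if $\bar{{\mathcal C}}$ is non-singular. Transporting intersections through the isomorphism, condition (3) becomes the non-singularity of ${\mathcal C}$, condition (4) becomes ${\mathcal T}_\b^{\widetilde{{\mathcal A}}}{\mathcal B}\cap F_\b=\{0\}$, condition (5) becomes $H_\b\cap H_\b^\perp=\{0\}$ and condition (6) becomes $H_\b\cap H_\b^{\perp,\Lambda_L}=\{0\}$. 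Applying Theorem~\ref{Threg} we conclude that (1), (3), (4), (5) and (6) are all equivalent.

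It remains to incorporate~(2). One direct route is to imitate the analysis of Section~\ref{Noho} on the Hamiltonian side: any candidate solution of~(\ref{NHeq}) differs from the unconstrained Hamiltonian section $R_h$ by a section of $\bar F$ of the form $\sum_a\bar\lambda_a\bar Z_a$, and the constraint equations $(d^{{\mathcal T}^{\widetilde{{\mathcal A}}}V^*}\psi^a)(\bar X)=0$ then reduce to a linear system for the multipliers $\bar\lambda_a$ with coefficient matrix $\bar{\mathcal C}$, yielding (2)$\Leftrightarrow$(3). Equivalently, one can transport uniqueness along the Legendre diffeomorphism: if $\gamma$ solves~(\ref{LdA}) then $leg_L\circ\gamma$ solves~(\ref{NHeq}) and conversely, because the pullback identities above match each line of~(\ref{LdA}) with the corresponding line of~(\ref{NHeq}); this gives (1)$\Leftrightarrow$(2) directly.

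The main technical point that must be verified carefully is the claim that $({\mathcal T}leg_L,leg_L)$ transports the annihilator $S^*({\mathcal T}^{\widetilde{{\mathcal A}}}{\mathcal B})^\circ=G^\circ$ onto $\bar G^\circ$. This does not follow from a purely abstract isomorphism argument, because the vertical endomorphism $S$ has no a priori analogue on $V^*$; rather one confirms it by unwinding the local expressions~(\ref{mubarraa}) and~(\ref{Zbarraa}) and checking the identity $({\mathcal T}leg_L,leg_L)^*(\bar\mu^a)=\mu^a_\alpha\theta^\alpha$ noted immediately after~(\ref{mubarraa}). Once this compatibility is in hand, every other transport property is an immediate consequence of the cosymplectic isomorphism established above, and the chain of equivalences closes.
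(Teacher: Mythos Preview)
Your proof is correct and follows the same approach as the paper, which simply states that the result is deduced ``using Theorem~\ref{Threg} and the fact that the Lagrangian $L$ is hyperregular''; you have fleshed out exactly this transport-via-Legendre argument. In particular, the key compatibility $({\mathcal T}leg_L,leg_L)^*(\bar\mu^a)=\mu^a_\alpha\theta^\alpha$ that you flag as the main technical point is indeed noted by the authors immediately after~(\ref{mubarraa}), and the relation $\bar{\mathcal C}^{ab}\circ leg_L={\mathcal C}^{ab}$ is recorded right after~(\ref{Cbarraab}).
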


Assuming the regularity of the constrained system, we have the
corresponding direct sum decompositions of ${{\mathcal
T}^{\widetilde{{\mathcal A}}}V^*}_{|\bar{{\mathcal B}}}$ (similar
to the Lagrangian case). But we will only develop the Poisson
decomposition. Since the condition $(6)$ of the above theorem, we
have the decomposition
$${\mathcal
T}^{\widetilde{{\mathcal
A}}}_{\bar{\b}}V^*=\bar{H}_{\bar{\b}}\oplus\bar{H}_{\bar{\b}}^{\perp,\Lambda_h},\makebox[1.5cm]{for
all}\bar{\b}\in\bar{{\mathcal B}},$$ and, hence, one can define
two complementary projectors
$$\bar{\tilde{\mathcal P}}_{\bar{\b}}:{\mathcal
T}^{\widetilde{{\mathcal
A}}}_{\bar{\b}}V^*\to\bar{H}_{\bar{\b}}\makebox[1cm]{and}\bar{\tilde{\mathcal
Q}}_{\bar{\b}}:{\mathcal T}^{\widetilde{{\mathcal
A}}}_{\bar{\b}}V^*\to\bar{H}_{\bar{\b}}^{\perp,\Lambda_h},\makebox[1.5cm]{for
all}\bar{\b}\in\bar{{\mathcal B}}.$$ Moreover,
$(\bar{\tilde{\mathcal P}},\bar{\tilde{\mathcal Q}})$ is
$({\mathcal T}leg_L,leg_L)$-related with $(\tilde{\mathcal
P},\tilde{\mathcal Q})$. Then, the section $\bar{\tilde{\mathcal
P}}({R_h}_{|\bar{{\mathcal B}}})$ is the unique solution of the
constrained Hamilton equations (\ref{NHeq}), where $R_h$ is the
solution of the Hamilton equations for the free dynamics, that is,
$\bar{R}_{nh}=\bar{\tilde{\mathcal P}}({R_h}_{|\bar{{\mathcal
B}}})$.

Now, we are going to construct a decomposition of ${\mathcal
T}^{\widetilde{{\mathcal A}}}{\mathcal A}^+$, which allows us to
obtain the solution of the constrained Hamiltonian system as the
projection by ${\mathcal T} \mu$ of a certain section. Given a
Hamiltonian section $h: V^*\to {\mathcal A}^+$, one can construct
an affine function $F_h:{\mathcal A}^+\to\R$ with respect to the
AV-bundle $\mu: {\mathcal A}^+\to V^*$ as follows. For each
$\varphi_x\in {\mathcal A}^+_x$, with $x\in M$, exists a unique
$F_h(\varphi_x)\in\R$ such that
\begin{equation}\label{conFh}
h(\mu(\varphi_x))-\varphi_x=F_h(\varphi_x)1_{\mathcal A}(x).
\end{equation}
The function $F_h:{\mathcal A}^+\to\R$ is locally given by
\begin{equation}\label{Fhloc}
F_h(x^i,y_0,y_\alpha)=-H(x^i,y_\alpha)-y_0. \end{equation}
 Note
that this function was introduced in another way in Section
\ref{sec3.1}. Using (\ref{lambdaE}), (\ref{Omegah}), (\ref{eta})
and (\ref{conFh}), it is easy to prove that
$$\lambda_{\widetilde{{\mathcal A}}}=({\mathcal
T}\mu,\mu)^*\lambda_h-F_h({\mathcal T}\mu,\mu)^*\eta,$$
where $\lambda_{\widetilde{{\mathcal A}}}$ is the Liouville
section associated with the Lie algebroid $\widetilde{{\mathcal
A}}$ and $({\mathcal T}\mu,\mu)$ is the Lie algebroid epimorphism
between the Lie algebroids $\tau_{\widetilde{{\mathcal
A}}}^{\tau_{{\mathcal A}^+}}:{\mathcal T}^{\widetilde{{\mathcal
A}}}{\mathcal A}^+\to {\mathcal A}^+$ and
$\tau_{\widetilde{{\mathcal A}}}^{\tau_V^*}:{\mathcal
T}^{\widetilde{{\mathcal A}}}V^*\to V^*$ defined by ${\mathcal
T}\mu(\tilde{a},X_{\varphi_x})=(\tilde{a},(T_{\varphi_x}\mu)(X_{\varphi_x}))$,
for $(\tilde{a},X_{\varphi_x})\in{\mathcal
T}_{\varphi_x}^{\widetilde{{\mathcal A}}}{\mathcal A}^+$. As
consequence, we have that
\begin{equation}\label{OmegaAh}
\Omega_{\widetilde{{\mathcal A}}}=({\mathcal
T}\mu,\mu)^*\Omega_h+\theta_h\wedge({\mathcal T}\mu,\mu)^*\eta,
\end{equation}
where $\theta_h=d^{{\mathcal T}^{\widetilde{{\mathcal
A}}}{\mathcal A}^+}F_h$ and $\Omega_{\widetilde{{\mathcal A}}}$ is
the canonical symplectic section associated with
$\widetilde{{\mathcal A}}$.

We will denote by $E\in\Gamma(\tau_{\widetilde{{\mathcal
A}}}^{\tau_{{\mathcal A}^+}})$ the section given by
\begin{equation}\label{defE}
E(\varphi_x)=(0(x),-1_{\mathcal A}^V(\varphi_x))\in{\mathcal
T}^{\widetilde{{\mathcal A}}}_{\varphi_x}{\mathcal
A}^+,\makebox[1.5cm]{for all}\varphi_x\in {\mathcal A}^+_x,
\end{equation} where $1_{\mathcal A}^V \in \frak X({\mathcal A}^+)$ is
the vertical lift of the section $1_{\mathcal
A}\in\Gamma(\tau_{{\mathcal A}^+})$. Using (\ref{Fhloc}), we
obtain that
\begin{equation}\label{eq5.9'}
ker\, {\mathcal T}_{\varphi_x}\mu=<E(\varphi_x)>,\;\;
\theta_h(\varphi_x)(E(\varphi_x))=1,
\end{equation}
for all $\varphi_x\in {\mathcal A}^+_x$, with $x\in M$.

Now, we introduce the vector subbundle $({\mathcal
T}^{\widetilde{{\mathcal A}}}V^*)^H\subset {\mathcal
T}^{\widetilde{{\mathcal A}}}{\mathcal A}^+\to {\mathcal A}^+$
whose fibre at point $\varphi_x\in {\mathcal A}^+_x$ is
$$({\mathcal
T}^{\widetilde{{\mathcal
A}}}_{\mu(\varphi_x)}V^*)^H_{\varphi_x}=\{X_{\varphi_x}\in{\mathcal
T}^{\widetilde{{\mathcal A}}}_{\varphi_x}{\mathcal
A}^+/\theta_h(\varphi_x)(X_{\varphi_x})=0\}.$$

It is not difficult to prove that $({\mathcal
T}_{\varphi_x}\mu)_{|({\mathcal T}^{\widetilde{{\mathcal
A}}}_{\mu(\varphi_x)}V^*)^H_{\varphi_x}}:({\mathcal
T}^{\widetilde{{\mathcal
A}}}_{\mu(\varphi_x)}V^*)^H_{\varphi_x}\to {\mathcal
T}^{\widetilde{{\mathcal A}}}_{\mu(\varphi_x)}V^*$ is a linear
isomorphism, for all $\varphi_x\in {\mathcal A}^+_x$. Then, we
deduce that
\begin{equation}\label{decA}
{\mathcal T}^{\widetilde{{\mathcal A}}}_{\varphi_x}{\mathcal
A}^+=({\mathcal T}^{\widetilde{{\mathcal
A}}}_{\mu(\varphi_x)}V^*)^H_{\varphi_x}\oplus <E(\varphi_x)>,
\end{equation} for all $\varphi_x\in {\mathcal A}^+_x$, $x\in M$. We will denote by ${}^H_{\varphi_x}:{\mathcal
T}^{\widetilde{{\mathcal A}}}_{\mu(\varphi_x)}V^*\to ({\mathcal
T}^{\widetilde{{\mathcal A}}}_{\mu(\varphi_x)}V^*)^H_{\varphi_x}$
the inverse map of $({\mathcal T}_{\varphi_x}\mu)_{|({\mathcal
T}^{\widetilde{{\mathcal
A}}}_{\mu(\varphi_x)}V^*)^H_{\varphi_x}}:({\mathcal
T}^{\widetilde{{\mathcal
A}}}_{\mu(\varphi_x)}V^*)^H_{\varphi_x}\to {\mathcal
T}^{\widetilde{{\mathcal A}}}_{\mu(\varphi_x)}V^*$. So, if
$X\in{\mathcal T}^{\widetilde{{\mathcal A}}}_{\varphi_x}{\mathcal
A}^+$ it follows that
\begin{equation}\label{eq5.10'}
X=({\mathcal
T}_{\varphi_x}\mu)(X_{\varphi_x})_{\varphi_x}^H+\theta_h(\varphi_x)(X_{\varphi_x})E(\varphi_x).
\end{equation}

On the other hand, consider the section
$\tilde{\eta}\in\Gamma((\tau_{\widetilde{\mathcal
A}}^{\tau_{{\mathcal A}^+}})^*)$ defined by
$$\tilde{\eta}(\varphi_x)(\tilde{a},X_{\varphi_x})=1_{\mathcal
A}(\tilde{a}),\mbox{ for }\varphi_x\in{\mathcal A}_x^+\mbox{ and
}(\tilde{a},X_{\varphi_x})\in{\mathcal
T}_{\varphi_x}^{\widetilde{\mathcal A}}{\mathcal A}^+.$$

A direct computation, using (\ref{formas}) and (\ref{Fhloc}),
proves that
$$\tilde{\eta}(X_{F_h}^{\Omega_{\widetilde{\mathcal A}}})=-1,$$
where $X_{F_h}^{\Omega_{\widetilde{\mathcal
A}}}\in\Gamma(\tau_{\widetilde{\mathcal A}}^{\tau_{{\mathcal
A}^+}})$ is the Hamiltonian section of $F_h$ with respect to the
symplectic structure $\Omega_{\widetilde{\mathcal A}}$. Thus,
since $({\mathcal T}\mu,\mu)^*\eta=\tilde{\eta}$, we deduce that
$$\eta(\mu(\varphi_x))(({\mathcal T}_{\varphi_x}\mu)(X_{F_h}^{\Omega_{\widetilde{\mathcal
A}}}(\varphi_x)))=-1.$$ Therefore, from (\ref{OmegaAh}), we obtain
that
$$\Big(i_{({\mathcal T}_{\varphi_x}\mu)(X_{F_h}^{\Omega_{\widetilde{\mathcal
A}}}(\varphi_x))}\Omega_h(\mu(\varphi_x))\Big)(({\mathcal
T}_{\varphi_x}\mu)(Z_{\varphi_x}))=0,\mbox{ for
}Z_{\varphi_x}\in{\mathcal T}_{\varphi_x}^{\widetilde{\mathcal
A}}{\mathcal A}^+.$$ This implies that
\begin{equation}\label{5.6'}
({\mathcal T}_{\varphi_x}\mu)(X_{F_h}^{\Omega_{\widetilde{\mathcal
A}}}(\varphi_x))=-R_h(\mu(\varphi_x)),\mbox{ for
}\varphi_x\in{\mathcal A}^+_x.
\end{equation}

Now, if $\alpha\in\Gamma((\tau_{\widetilde{\mathcal
A}}^{\tau_V^*})^*)$, then the Hamiltonian section
$X_\alpha^{\Lambda_h}\in\Gamma(\tau_{\widetilde{\mathcal
A}}^{\tau_V^*})$ of $\alpha$ with respect to the (algebraic)
Poisson structure $\Lambda_h$ is characterized by the following
conditions
$$i_{X_\alpha^{\Lambda_h}}\Omega_h=\alpha-\alpha(R_h)\eta\;\mbox{
and }\;i_{X_\alpha^{\Lambda_h}}\eta=0.$$ On the other hand, denote
by $\Lambda_{\widetilde{\mathcal
A}}\in\Gamma(\wedge^2(\tau_{\widetilde{\mathcal
A}}^{\tau_{{\mathcal A}^+}}))$ the (algebraic) Poisson structure
associated with the symplectic structure
$\Omega_{\widetilde{\mathcal A}}$. In other words,
$$\Lambda_{\widetilde{\mathcal
A}}(\tilde{\alpha},\tilde{\beta})=\Omega_{\widetilde{\mathcal
A}}(\tilde{\flat}^{-1}(\tilde{\alpha}),\tilde{\flat}^{-1}(\tilde{\beta})),\mbox{
for
}\tilde{\alpha},\tilde{\beta}\in\Gamma((\tau_{\widetilde{\mathcal
A}}^{\tau_{{\mathcal A}^+}})^*),$$ $\tilde{\flat}:{\mathcal
T}^{\widetilde{\mathcal A}}{\mathcal A}^+\to({\mathcal
T}^{\widetilde{\mathcal A}}{\mathcal A}^+)^*$ being the vector
bundle isomorphism induced by $\Omega_{\widetilde{\mathcal A}}$.
Note that if $\tilde{\alpha}\in\Gamma((\tau_{\widetilde{\mathcal
A}}^{\tau_{{\mathcal A}^+}})^*)$ then
$$X_{\tilde{\alpha}}^{\Omega_{\widetilde{\mathcal
A}}}=-\sharp_{\Lambda_{\widetilde{\mathcal A}}}(\tilde{\alpha}).$$

\begin{proposition}\label{prop5.1'} The prolongation of $\mu$
$${\mathcal T}\mu:{\mathcal T}^{\widetilde{\mathcal A}}{\mathcal
A}^+\to{\mathcal T}^{\widetilde{\mathcal A}}V^*$$ is a Poisson
morphism over $\mu$, that is, if
$\alpha\in\Gamma((\tau_{\widetilde{\mathcal A}}^{\tau_V^*})^*)$
then
\begin{equation}\label{Poismorph}
{\mathcal T}\mu\circ X_{({\mathcal
T}\mu,\mu)^*\alpha}^{\Omega_{\widetilde{\mathcal
A}}}=X_\alpha^{\Lambda_h}\circ\mu.
\end{equation}
\end{proposition}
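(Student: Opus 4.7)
The plan is to show pointwise that, setting $Y = X_{(\mathcal{T}\mu,\mu)^*\alpha}^{\Omega_{\widetilde{\mathcal{A}}}}$, the element $(\mathcal{T}_{\varphi_x}\mu)(Y(\varphi_x))$ satisfies the two equations that characterize $X_\alpha^{\Lambda_h}(\mu(\varphi_x))$, namely, writing $\nu = \mu(\varphi_x)$,
$$i_{X_\alpha^{\Lambda_h}(\nu)}\Omega_h(\nu) = \alpha(\nu) - \alpha(\nu)(R_h(\nu))\,\eta(\nu), \qquad \eta(\nu)(X_\alpha^{\Lambda_h}(\nu)) = 0.$$
The whole proof then reduces to unpacking the pullback identity (\ref{OmegaAh}) together with the defining equation $i_Y\Omega_{\widetilde{\mathcal{A}}} = (\mathcal{T}\mu,\mu)^*\alpha$.

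Applying $i_Y$ to both sides of (\ref{OmegaAh}) and evaluating on an arbitrary $W \in \mathcal{T}^{\widetilde{\mathcal{A}}}_{\varphi_x}\mathcal{A}^+$ gives
$$\Omega_h(\nu)(\mathcal{T}\mu(Y), \mathcal{T}\mu(W)) + \theta_h(Y)\,\eta(\nu)(\mathcal{T}\mu(W)) - \theta_h(W)\,\eta(\nu)(\mathcal{T}\mu(Y)) = \alpha(\nu)(\mathcal{T}\mu(W)).$$
First I would substitute $W = E(\varphi_x)$: by (\ref{eq5.9'}) one has $\mathcal{T}\mu(E) = 0$ and $\theta_h(E) = 1$, so the identity collapses to $\eta(\nu)(\mathcal{T}\mu(Y)) = 0$, which is the second characterizing condition. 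Next, I would take $W$ in the horizontal complement $(\mathcal{T}^{\widetilde{\mathcal{A}}}_\nu V^*)^H_{\varphi_x}$ supplied by the decomposition (\ref{decA}); there $\theta_h(W) = 0$, and $\mathcal{T}_{\varphi_x}\mu$ restricts to a linear isomorphism onto $\mathcal{T}^{\widetilde{\mathcal{A}}}_\nu V^*$. Letting $\tilde W = \mathcal{T}\mu(W)$ range freely, I obtain
$$i_{\mathcal{T}\mu(Y)}\Omega_h(\nu) = \alpha(\nu) - \theta_h(Y)\,\eta(\nu).$$

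To match this with the first characterizing equation, it remains to identify $\theta_h(Y)$ with $\alpha(\nu)(R_h(\nu))$. I would get this by pairing the defining identity $i_Y\Omega_{\widetilde{\mathcal{A}}} = (\mathcal{T}\mu,\mu)^*\alpha$ against $X_{F_h}^{\Omega_{\widetilde{\mathcal{A}}}}$: the left-hand side becomes
$$\Omega_{\widetilde{\mathcal{A}}}(Y, X_{F_h}^{\Omega_{\widetilde{\mathcal{A}}}}) = -\bigl(i_{X_{F_h}^{\Omega_{\widetilde{\mathcal{A}}}}}\Omega_{\widetilde{\mathcal{A}}}\bigr)(Y) = -\bigl(d^{\mathcal{T}^{\widetilde{\mathcal{A}}}\mathcal{A}^+}F_h\bigr)(Y) = -\theta_h(Y),$$
whereas the right-hand side equals $\alpha(\nu)\bigl(\mathcal{T}\mu(X_{F_h}^{\Omega_{\widetilde{\mathcal{A}}}}(\varphi_x))\bigr)$, which by (\ref{5.6'}) is $-\alpha(\nu)(R_h(\nu))$. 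Hence $\theta_h(Y) = \alpha(\nu)(R_h(\nu))$, and both characterizing equations of $X_\alpha^{\Lambda_h}(\nu)$ are satisfied by $\mathcal{T}\mu(Y(\varphi_x))$. Uniqueness of the Hamiltonian section then yields (\ref{Poismorph}).

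The only real subtlety is to make sure that the pullback identity (\ref{OmegaAh}) can be contracted pointwise with the fiberwise-linear map $\mathcal{T}_{\varphi_x}\mu$ in the three cases above; this is straightforward because $(\mathcal{T}\mu,\mu)$ is a Lie algebroid morphism, so the pullback of multilinear sections commutes with evaluation on fibre vectors. Apart from that, the argument is an essentially algebraic bookkeeping built around the split provided by the horizontal decomposition (\ref{decA}).
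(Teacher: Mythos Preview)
Your proposal is correct and follows essentially the same route as the paper: contract the identity (\ref{OmegaAh}) with $Y=X_{(\mathcal{T}\mu,\mu)^*\alpha}^{\Omega_{\widetilde{\mathcal{A}}}}$, verify the two characterizing equations of $X_\alpha^{\Lambda_h}$, and identify the coefficient $\theta_h(Y)$ with $\alpha(R_h)$ via (\ref{5.6'}). The only cosmetic difference is that the paper obtains $\eta(\mathcal{T}\mu(Y))=0$ by a direct local-coordinate computation using (\ref{formas}) and (\ref{eta}), whereas you extract it intrinsically by plugging $W=E(\varphi_x)$ into the contracted identity; after that step the two arguments coincide.
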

\begin{proof} If $\varphi_x\in{\mathcal A}_x^+$ then a direct
computation, using (\ref{formas}) and (\ref{eta}), proves that
$$\eta(\mu(\varphi_x))[({\mathcal T}_{\varphi_x}\mu)(X_{({\mathcal
T}\mu,\mu)^*\alpha}^{\Omega_{\widetilde{\mathcal
A}}}(\varphi_x))]=0.$$ Thus, from (\ref{OmegaAh}) and
(\ref{5.6'}), it follows that

$\Big(i_{({\mathcal T}_{\varphi_x}\mu)(X_{({\mathcal
T}\mu,\mu)^*\alpha}^{\Omega_{\widetilde{\mathcal
A}}}(\varphi_x))}\Omega_h(\mu(\varphi_x))\Big)(({\mathcal
T}_{\varphi_x}\mu)(Z_{\varphi_x}))=\alpha(\mu(\varphi_x))(({\mathcal
T}_{\varphi_x}\mu)(Z_{\varphi_x}))$
\begin{flushright}
$+\Omega_{\widetilde{\mathcal A}}(\varphi_x)(X_{({\mathcal
T}\mu,\mu)^*\alpha}^{\Omega_{\widetilde{\mathcal
A}}}(\varphi_x),X_{F_h}^{\Omega_{\widetilde{\mathcal
A}}}(\varphi_x))\eta(\mu(\varphi_x))(({\mathcal
T}_{\varphi_x}\mu)(Z_{\varphi_x}))$
\end{flushright}\begin{flushright}
\hspace{2cm}$=\alpha(\mu(\varphi_x))(({\mathcal
T}_{\varphi_x}\mu)(Z_{\varphi_x}))-\alpha(\mu(\varphi_x))(R_h(\mu(\varphi_x)))
\eta(\mu(\varphi_x))(({\mathcal
T}_{\varphi_x}\mu)(Z_{\varphi_x})),$
\end{flushright}

for $Z_{\varphi_x}\in{\mathcal T}_{\varphi_x}^{\widetilde{\mathcal
A}}{\mathcal A}^+$. This implies that
$$i_{({\mathcal T}_{\varphi_x}\mu)(X_{({\mathcal
T}\mu,\mu)^*\alpha}^{\Omega_{\widetilde{\mathcal
A}}}(\varphi_x))}\Omega_h(\mu(\varphi_x))=\alpha(\mu(\varphi_x))-\alpha(\mu(\varphi_x))(R_h(\mu(\varphi_x)))
\eta(\mu(\varphi_x)).$$ Therefore, we deduce that
(\ref{Poismorph}) holds.
\end{proof}

\begin{remark} {\rm From Proposition \ref{prop5.1'}, it follows that
\begin{equation}\label{eq5.12'}
{\mathcal
T}_{\varphi_x}\mu\circ\sharp_{\Lambda_{\widetilde{\mathcal
A}}|({\mathcal T}^{\widetilde{\mathcal A}}_{\varphi_x}{\mathcal
A}^+)^*}\circ({\mathcal
T}_{\varphi_x}\mu)^*=\sharp_{\Lambda_h|({\mathcal
T}^{\widetilde{\mathcal A}}_{\mu(\varphi_x)}V^*)^*},\makebox{ for
}\varphi_x\in{\mathcal A}^+_x.
\end{equation}}
\end{remark}

Next, we consider the submanifold ${\mathcal B}^+$ of ${\mathcal
A}^+$ defined by ${\mathcal B}^+=\mu^{-1}(\bar{{\mathcal B}})$ and
the vector subbundle $H^+\subset{\mathcal T}^{\widetilde{{\mathcal
A}}}{\mathcal A}^+_{|{\mathcal B}^+}\to {\mathcal B}^+$ given by
\begin{equation}\label{defH+}
H^+_{\varphi_x}=({\mathcal
T}_{\varphi_x}\mu)^{-1}(\bar{H}_{\mu(\varphi_x)})=(\bar{H}_{\mu(\varphi_x)})^H_{\varphi_x}\oplus<E(\varphi_x)>,\makebox[1.5cm]{for
all}\varphi_x\in {\mathcal B}^+_x,
\end{equation}

where $(\bar{H}_{\mu(\varphi_x)})_{\varphi_x}^H=({\mathcal
T}_{\varphi_x}\mu)^{-1}(\bar{H}_{\mu(\varphi_x)})\cap({\mathcal
T}_{\mu(\varphi_x)}^{\widetilde{\mathcal A}}V^*)_{\varphi_x}^H$.

We will denote by
$(H^+_{\varphi_x})^{\perp,\Omega_{\widetilde{{\mathcal A}}}}$ the
orthogonal to $H^+_{\varphi_x}$ with respect to the symplectic
structure $\Omega_{\widetilde{{\mathcal A}}}(\varphi_x)$ of
${\mathcal T}^{\widetilde{{\mathcal A}}}_{\varphi_x}{\mathcal
A}^+$, for all $\varphi_x\in {\mathcal B}^+_x$. In other words,
\begin{equation}\label{eq5.13'}
(H^+_{\varphi_x})^{\perp,\Omega_{\widetilde{\mathcal
A}}}=\tilde{\flat}^{-1}((H^+_{\varphi_x})^\circ)=\sharp_{\Lambda_{\widetilde{\mathcal
A}}}((H^+_{\varphi_x})^\circ). \end{equation} Using (\ref{defH+}),
we have that
$$(H^+_{\varphi_x})^\circ=({\mathcal
T}_{\varphi_x}\mu)^*((\bar{H}_{\mu(\varphi_x)})^\circ)$$ and, from
(\ref{eq5.12'}) and (\ref{eq5.13'}), it follows that
\begin{equation}\label{TH+}({\mathcal
T}_{\varphi_x}\mu)((H^+_{\varphi_x})^{\perp,\Omega_{\widetilde{{\mathcal
A}}}})=(\bar{H}_{\mu(\varphi_x)})^{\perp,\Lambda_h},\makebox[1.5cm]{for
all}\varphi_x\in {\mathcal B}^+_x.
\end{equation}

In addition, using that $\{\bar{\mu}^a, d^{{\mathcal
T}^{\widetilde{{\mathcal A}}}V^*}\psi^a\}$ is a local basis of
$\bar{H}^\circ$, we obtain that $\{({\mathcal T}\mu,\mu)^*
\bar{\mu}^a,$ $({\mathcal T}\mu,\mu)^*(d^{{\mathcal
T}^{\widetilde{{\mathcal A}}}V^*}\psi^a)\}$ is a local basis of
$(H^+)^\circ$. This implies that $\{X_{({\mathcal T}\mu,\mu)^*
\bar{\mu}^a}^{\Omega_{\widetilde{{\mathcal A}}}},X_{({\mathcal
T}\mu,\mu)^*(d^{{\mathcal T}^{\widetilde{{\mathcal
A}}}V^*}\psi^a)}^{\Omega_{\widetilde{{\mathcal A}}}}\}$ is a local
basis of $(H^+)^{\perp,\Omega_{\widetilde{{\mathcal A}}}}$.

Note that, from Proposition \ref{prop5.1'}, we have that
\begin{equation}\label{5.9'}
\begin{array}{rcl}
({\mathcal T}_{\varphi_x}\mu)(X_{({\mathcal T}\mu,\mu)^*
\bar{\mu}^a}^{\Omega_{\widetilde{{\mathcal
A}}}}(\varphi_x))&=&X_{\bar{\mu}^a}^{\Lambda_h}(\mu(\varphi_x))=\bar{Z}_a(\mu(\varphi_x)),\\[8pt]
({\mathcal T}_{\varphi_x}\mu)(X_{({\mathcal
T}\mu,\mu)^*(d^{{\mathcal T}^{\widetilde{{\mathcal
A}}}V^*}\psi^a)}^{\Omega_{\widetilde{{\mathcal
A}}}}(\varphi_x))&=&X_{\psi^a}^{\Lambda_h}(\mu(\varphi_x)).
\end{array}\end{equation}

Then, we will prove the following result.

\begin{theorem} The following conditions are equivalent:
\begin{enumerate}
\item[(1)] The constrained Hamiltonian system $(h,\bar{{\mathcal
B}})$ is regular, \item[(2)]
$H^+_{\varphi_x}\cap(H^+_{\varphi_x})^{\perp,\Omega_{\widetilde{{\mathcal
A}}}}=\{0\}$, for all $\varphi_x\in {\mathcal B}^+_x$, with $x\in
M$.
\end{enumerate}
\end{theorem}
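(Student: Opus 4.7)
The plan is to reduce the claim to one of the equivalences already established in Theorem \ref{NHth}, using the Lie algebroid epimorphism $({\mathcal T}\mu,\mu)$ as a bridge between the prolongation over ${\mathcal A}^+$ and the prolongation over $V^*$. By part (6) of that theorem, condition (1) is equivalent to $\bar H_{\mu(\varphi_x)}\cap\bar H_{\mu(\varphi_x)}^{\perp,\Lambda_h}=\{0\}$ at every $\varphi_x\in{\mathcal B}^+_x$, so I will show instead that this condition is equivalent to (2). Three ingredients from the preceding text will do the work: the fact that $H^+_{\varphi_x}=({\mathcal T}_{\varphi_x}\mu)^{-1}(\bar H_{\mu(\varphi_x)})$ with $\ker({\mathcal T}_{\varphi_x}\mu|_{H^+_{\varphi_x}})=\langle E(\varphi_x)\rangle$ (from (\ref{eq5.9'}) and (\ref{defH+})); the surjectivity ${\mathcal T}_{\varphi_x}\mu\bigl((H^+_{\varphi_x})^{\perp,\Omega_{\widetilde{\mathcal A}}}\bigr)=\bar H_{\mu(\varphi_x)}^{\perp,\Lambda_h}$ recorded in (\ref{TH+}); and the pullback formula (\ref{OmegaAh}).

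The direction $(2)\Rightarrow(1)$ will be a one-line diagram chase: given $Y\in\bar H_{\mu(\varphi_x)}\cap\bar H_{\mu(\varphi_x)}^{\perp,\Lambda_h}$, I will lift it by (\ref{TH+}) to some $X\in(H^+_{\varphi_x})^{\perp,\Omega_{\widetilde{\mathcal A}}}$, observe that $X$ automatically lies in $H^+_{\varphi_x}$ because its image under ${\mathcal T}_{\varphi_x}\mu$ belongs to $\bar H_{\mu(\varphi_x)}$, and then invoke (2) to conclude $X=0=Y$.

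For $(1)\Rightarrow(2)$ I will take $X\in H^+_{\varphi_x}\cap(H^+_{\varphi_x})^{\perp,\Omega_{\widetilde{\mathcal A}}}$ and push it down: ${\mathcal T}_{\varphi_x}\mu(X)$ will lie in $\bar H_{\mu(\varphi_x)}\cap\bar H_{\mu(\varphi_x)}^{\perp,\Lambda_h}=\{0\}$, so $X=tE(\varphi_x)$ for some $t\in\R$. To rule out $t\neq 0$, I will evaluate $\Omega_{\widetilde{\mathcal A}}(\varphi_x)\bigl(E(\varphi_x),\tilde Y^H_{\varphi_x}\bigr)$ at the horizontal lift of an arbitrary $\tilde Y\in\bar H_{\mu(\varphi_x)}$; using $\theta_h(E)=1$, $\theta_h(\tilde Y^H)=0$ and ${\mathcal T}_{\varphi_x}\mu(E)=0$ in (\ref{OmegaAh}), this quantity should collapse to $\eta(\mu(\varphi_x))(\tilde Y)$. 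Thus $t\neq 0$ would force $\eta$ to vanish identically on $\bar H_{\mu(\varphi_x)}$.

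The heart of the argument — and the step I expect to be the main obstacle — will be producing an element of $\bar H_{\mu(\varphi_x)}$ on which $\eta$ does not vanish. I plan to construct one on the Lagrangian side: at $\b=leg_L^{-1}(\mu(\varphi_x))\in{\mathcal B}$, any vector of SODE form $\tilde T_0+y^\alpha\tilde T_\alpha+v^\alpha\tilde V_\alpha$ automatically annihilates every $\theta^\beta=\tilde T^\beta-y^\beta\phi_0$ and so lies in $G_\b$; a short calculation using the local constraint $\Psi^a=\mu^a_0+\mu^a_\alpha y^\alpha$ and the fact that $(\mu^a_\alpha)$ has maximal rank $r$ shows that the $v^\alpha$ can be chosen so that the underlying tangent vector is also tangent to ${\mathcal B}$, placing it in $H_\b$. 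Since $({\mathcal T}leg_L,leg_L)^*\eta=\phi_0$ and $\phi_0$ evaluates to $1$ on such a vector, its image under $({\mathcal T}leg_L)$ gives the sought-after element of $\bar H_{\mu(\varphi_x)}$ with $\eta=1$, producing the required contradiction and forcing $t=0$, hence $X=0$.
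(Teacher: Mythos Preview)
Your proposal is correct and follows essentially the same route as the paper. Both directions use the same structural ingredients --- the preimage description (\ref{defH+}), the surjectivity (\ref{TH+}), and formula (\ref{OmegaAh}) --- and your $(2)\Rightarrow(1)$ argument is identical to the paper's. The only difference lies in the $(1)\Rightarrow(2)$ direction, at the step of producing a vector in $H^+_{\varphi_x}$ on which $({\mathcal T}\mu,\mu)^*\eta$ does not vanish: you build one by hand, constructing a SODE-type element of $H_\b$ on the Lagrangian side and pushing it across via ${\mathcal T}leg_L$, whereas the paper simply takes any lift $\tilde{\bar R}_{nh}(\varphi_x)\in H^+_{\varphi_x}$ of the constrained Hamilton solution $\bar R_{nh}(\mu(\varphi_x))$, which exists precisely by the regularity hypothesis (1) and satisfies $\eta(\bar R_{nh})=1$. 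The paper's witness is the tidier choice because it exploits assumption (1) directly; your construction is correct but does more work than necessary.
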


\begin{proof} $(1)\Rightarrow(2)$ Suppose that the constrained
Hamiltonian system is regular. Then, since ${\mathcal
B}^+=\mu^{-1}(\bar{{\mathcal B}})$ and using Theorem \ref{NHth},
we have that
$\bar{H}_{\mu(\varphi_x)}\cap\bar{H}_{\mu(\varphi_x)}^{\perp,\Lambda_h}=\{0\}$,
for all $\varphi_x\in {\mathcal B}^+_x$. If $X\in
H^+_{\varphi_x}\cap(H^+_{\varphi_x})^{\perp,\Omega_{\widetilde{{\mathcal
A}}}}$ then, from (\ref{defH+}), $\bar{X}=({\mathcal
T}_{\varphi_x}\mu)(X)\in\bar{H}_{\mu(\varphi_x)}$. Moreover, using
(\ref{TH+}), we deduce that
$\bar{X}\in(\bar{H}_{\mu(\varphi_x)})^{\perp,\Lambda_h}$ and,
therefore, $\bar{X}=0$. Thus, $X\in$ $<E(\varphi_x)>$. Therefore,
from (\ref{OmegaAh}) and (\ref{eq5.9'}),
$i_X\Omega_{\widetilde{\mathcal A}}(\varphi_x)=\lambda(({\mathcal
T}\mu,\mu)^*\eta)(\varphi_x)\in(H^+_{\varphi_x})^\circ$, with
$\lambda\in\R$. However, $(({\mathcal
T}\mu,\mu)^*\eta)(\varphi_x)\not\in (H^+_{\varphi_x})^\circ$. In
fact, there exists $\tilde{\bar{R}}_{nh}(\varphi_x)\in
H^+_{\varphi_x}$ such that $({\mathcal
T}_{\varphi_x}\mu)(\tilde{\bar{R}}_{nh}(\varphi_x))=\bar{R}_{nh}(\mu(\varphi_x))$
and, thus, $(({\mathcal
T}\mu,\mu)^*\eta)(\varphi_x)(\tilde{\bar{R}}_{nh}(\varphi_x))=\eta(\mu(\varphi_x))(\bar{R}_{nh}(\mu(\varphi_x)))=1$.
Consequently, $\lambda=0$ and $X=0$.

$(2)\Rightarrow(1)$ Suppose that
$H^+_{\varphi_x}\cap(H^+_{\varphi_x})^{\perp,\Omega_{\widetilde{{\mathcal
A}}}}=\{0\}$, for all $\varphi_x\in {\mathcal B}^+_x$ and take
$\bar{X}\in\bar{H}_{\mu(\varphi_x)}\cap\bar{H}_{\mu(\varphi_x)}^{\perp,\Lambda_h}$.
Using (\ref{TH+}), there exists $X\in
(H^+_{\varphi_x})^{\perp,\Omega_{\widetilde{{\mathcal A}}}}$ such
that $({\mathcal T}_{\varphi_x}\mu)(X)=\bar{X}$ and, since
$\bar{X}\in\bar{H}_{\mu(\varphi_x)}$, we have that $X\in
H^+_{\varphi_x}$. Thus, $X=0$, which implies that $\bar{X}=0$.
Finally, using Theorem \ref{NHth}, we conclude the result.

\end{proof}

Assuming that the constrained system is regular, the bundle
$H^+\to {\mathcal B}^+$ is a symplectic subbundle of the
symplectic bundle $({{\mathcal T}^{\widetilde{{\mathcal
A}}}{\mathcal A}^+}_{|{\mathcal B}^+}\to {\mathcal
B}^+,(\Omega_{\widetilde{{\mathcal A}}})_{|{\mathcal B}^+})$ and
we have a direct sum decomposition
$${\mathcal
T}^{\widetilde{{\mathcal A}}}_{\varphi_x}{\mathcal
A}^+=H^+_{\varphi_x}\oplus(H^+_{\varphi_x})^{\perp,\Omega_{\widetilde{{\mathcal
A}}}},\makebox[1.5cm]{for all}\varphi_x\in {\mathcal
B}^+_x,\makebox[1cm]{with}x\in M.$$

Let us denote by ${\mathcal P}^+$ and ${\mathcal Q}^+$ the
complementary projectors defined by this decomposition, that is,
$${\mathcal P}^+_{\varphi_x}:{\mathcal
T}^{\widetilde{{\mathcal A}}}_{\varphi_x}{\mathcal A}^+\to
H^+_{\varphi_x}\makebox[1cm]{and}{\mathcal
Q}^+_{\varphi_x}:{\mathcal T}^{\widetilde{{\mathcal
A}}}_{\varphi_x}{\mathcal A}^+\to
(H^+_{\varphi_x})^{\perp,\Omega_{\widetilde{{\mathcal
A}}}},\makebox[1.5cm]{for all}\varphi_x\in {\mathcal B}^+_x.$$

\begin{theorem}\label{th5.3} Let $(h,\bar{{\mathcal B}})$ be a regular constrained
Hamiltonian system. Then, the solution of the constrained dynamics
is the section $\bar{R}_{nh}$ obtained as follows
$\bar{R}_{nh}\circ\mu=-{\mathcal T}\mu({\mathcal
P}^+(X_{F_h}^{\Omega_{\widetilde{{\mathcal A}}}}))$.
\end{theorem}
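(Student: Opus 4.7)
The plan is to reduce the statement to two ingredients already developed in the excerpt: the identity $({\mathcal T}_{\varphi_x}\mu)(X_{F_h}^{\Omega_{\widetilde{\mathcal A}}}(\varphi_x))=-R_h(\mu(\varphi_x))$ from (\ref{5.6'}), and an intertwining of projectors of the form $\mathcal T\mu\circ{\mathcal P}^+=\bar{\tilde{\mathcal P}}\circ\mathcal T\mu$. Once these two facts are in hand, for $\varphi_x\in{\mathcal B}^+$ we compute
\begin{equation*}
-{\mathcal T}\mu\bigl({\mathcal P}^+(X_{F_h}^{\Omega_{\widetilde{\mathcal A}}}(\varphi_x))\bigr)
=-\bar{\tilde{\mathcal P}}\bigl({\mathcal T}\mu(X_{F_h}^{\Omega_{\widetilde{\mathcal A}}}(\varphi_x))\bigr)
=\bar{\tilde{\mathcal P}}(R_h(\mu(\varphi_x)))=\bar R_{nh}(\mu(\varphi_x)),
\end{equation*}
using in the last step the characterization $\bar R_{nh}=\bar{\tilde{\mathcal P}}(R_h|_{\bar{\mathcal B}})$ established in Section \ref{sec5.1}.

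The substantive step is therefore the intertwining. First I would observe that $\mathcal T\mu$ sends $H^+$ onto $\bar H$: by its definition (\ref{defH+}), $H^+_{\varphi_x}=(\bar H_{\mu(\varphi_x)})^H_{\varphi_x}\oplus\langle E(\varphi_x)\rangle$, the horizontal lift part is mapped isomorphically onto $\bar H_{\mu(\varphi_x)}$ via the isomorphism of (\ref{decA}), and the vertical line $\langle E(\varphi_x)\rangle$ lies in $\ker{\mathcal T}_{\varphi_x}\mu$ by (\ref{eq5.9'}). Secondly, by equation (\ref{TH+}) one has $({\mathcal T}_{\varphi_x}\mu)((H^+_{\varphi_x})^{\perp,\Omega_{\widetilde{\mathcal A}}})=(\bar H_{\mu(\varphi_x)})^{\perp,\Lambda_h}$. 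Given the decomposition ${\mathcal T}^{\widetilde{\mathcal A}}_{\varphi_x}{\mathcal A}^+=H^+_{\varphi_x}\oplus(H^+_{\varphi_x})^{\perp,\Omega_{\widetilde{\mathcal A}}}$ (valid by regularity and the previous theorem), for any $X\in{\mathcal T}^{\widetilde{\mathcal A}}_{\varphi_x}{\mathcal A}^+$ its image ${\mathcal T}_{\varphi_x}\mu(X)=({\mathcal T}_{\varphi_x}\mu)({\mathcal P}^+(X))+({\mathcal T}_{\varphi_x}\mu)({\mathcal Q}^+(X))$ decomposes into an element of $\bar H_{\mu(\varphi_x)}$ plus an element of $\bar H_{\mu(\varphi_x)}^{\perp,\Lambda_h}$. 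By uniqueness of the decomposition ${\mathcal T}^{\widetilde{\mathcal A}}_{\mu(\varphi_x)}V^*=\bar H_{\mu(\varphi_x)}\oplus\bar H_{\mu(\varphi_x)}^{\perp,\Lambda_h}$, which is precisely the one defining $\bar{\tilde{\mathcal P}}$, the intertwining identity follows.

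Finally I would combine these observations exactly as in the displayed line above, using (\ref{5.6'}) to replace $\mathcal T\mu(X_{F_h}^{\Omega_{\widetilde{\mathcal A}}})$ by $-R_h\circ\mu$, and then the intertwining to push the projector across. The part requiring a little care, and in my view the only real obstacle, is justifying that both pieces of the decomposition of $\mathcal T\mu(X)$ land in the intended summands; this relies on (\ref{TH+}), whose proof uses the Poisson-morphism property of $\mathcal T\mu$ (Proposition \ref{prop5.1'}) together with (\ref{eq5.12'}). Everything else is either a direct quotation of previously proved facts or a one-line algebraic manipulation.
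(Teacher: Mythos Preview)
Your proposal is correct and follows essentially the same approach as the paper: establish the intertwining relation ${\mathcal T}_{\varphi_x}\mu\circ{\mathcal P}^+_{\varphi_x}=\bar{\tilde{\mathcal P}}_{\mu(\varphi_x)}\circ{\mathcal T}_{\varphi_x}\mu$ by showing that ${\mathcal T}_{\varphi_x}\mu$ maps $H^+_{\varphi_x}$ onto $\bar H_{\mu(\varphi_x)}$ (via (\ref{defH+})) and $(H^+_{\varphi_x})^{\perp,\Omega_{\widetilde{\mathcal A}}}$ onto $\bar H_{\mu(\varphi_x)}^{\perp,\Lambda_h}$ (via (\ref{TH+})), then combine with (\ref{5.6'}) and $\bar R_{nh}=\bar{\tilde{\mathcal P}}(R_h|_{\bar{\mathcal B}})$. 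The paper records the intertwining as equation (\ref{Pconm}) and concludes in one line exactly as you do; your argument for ${\mathcal T}_{\varphi_x}\mu(H^+_{\varphi_x})=\bar H_{\mu(\varphi_x)}$ via the horizontal/vertical splitting is a slightly more explicit version of the paper's observation that this follows directly from $H^+_{\varphi_x}=({\mathcal T}_{\varphi_x}\mu)^{-1}(\bar H_{\mu(\varphi_x)})$ together with surjectivity of ${\mathcal T}_{\varphi_x}\mu$.
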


\begin{proof} If $\varphi_x\in{\mathcal B}_x^+$ then, from
(\ref{defH+}) and since the map ${\mathcal
T}_{\varphi_x}\mu:{\mathcal T}^{\widetilde{{\mathcal
A}}}_{\varphi_x}{\mathcal A}^+\to {\mathcal
T}^{\widetilde{{\mathcal A}}}_{\mu(\varphi_x)}V^*$ is a linear
epimorphism, it follows that
\begin{equation}\label{5.9''}
({\mathcal
T}_{\varphi_x}\mu)(H^+_{\varphi_x})=\bar{H}_{\mu(\varphi_x)}.
\end{equation}
Thus, if $Z_{\varphi_x}\in {\mathcal T}^{\widetilde{{\mathcal
A}}}_{\varphi_x}{\mathcal A}^+$ then, using (\ref{TH+}) and
(\ref{5.9''}), we deduce that
$$({\mathcal T}_{\varphi_x}\mu)({\mathcal
P}^+_{\varphi_x}(Z_{\varphi_x}))\in\bar{H}_{\mu(\varphi_x)},\;\;({\mathcal
T}_{\varphi_x}\mu)({\mathcal
Q}^+_{\varphi_x}(Z_{\varphi_x}))\in(\bar{H}_{\mu(\varphi_x)})^{\perp,\Lambda_h}$$
and it is clear that $$({\mathcal
T}_{\varphi_x}\mu)(Z_{\varphi_x})=({\mathcal
T}_{\varphi_x}\mu)({\mathcal
P}^+_{\varphi_x}(Z_{\varphi_x}))+({\mathcal
T}_{\varphi_x}\mu)({\mathcal Q}^+_{\varphi_x}(Z_{\varphi_x})).$$
Therefore, we have proved that
$$\begin{array}{rcl}
({\mathcal T}_{\varphi_x}\mu)({\mathcal
P}^+_{\varphi_x}(Z_{\varphi_x}))&=&\bar{\tilde{\mathcal
P}}_{\mu(\varphi_x)}(({\mathcal
T}_{\varphi_x}\mu)(Z_{\varphi_x})),\\[6pt]
({\mathcal T}_{\varphi_x}\mu)({\mathcal
Q}^+_{\varphi_x}(Z_{\varphi_x}))&=&\bar{\tilde{\mathcal
Q}}_{\mu(\varphi_x)}(({\mathcal
T}_{\varphi_x}\mu)(Z_{\varphi_x})),
\end{array}$$
i.e.,
\begin{equation}\label{Pconm}
{\mathcal T}_{\varphi_x}\mu\circ{\mathcal
P}^+_{\varphi_x}=\bar{\tilde{\mathcal
P}}_{\mu(\varphi_x)}\circ{\mathcal
T}_{\varphi_x}\mu\makebox[1cm]{and}{\mathcal
T}_{\varphi_x}\mu\circ{\mathcal
Q}^+_{\varphi_x}=\bar{\tilde{\mathcal
Q}}_{\mu(\varphi_x)}\circ{\mathcal T}_{\varphi_x}\mu.
\end{equation}


Finally, from (\ref{5.6'}) and (\ref{Pconm}), we conclude that
$$\bar{R}_{nh}(\mu(\varphi_x))=\bar{\tilde{\mathcal P}}_{\mu(\varphi_x)}(R_h(\mu(\varphi_x)))=-\bar{\tilde{\mathcal P}}_{\mu(\varphi_x)}({\mathcal
T}_{\varphi_x}\mu(X_{F_h}^{\Omega_{\widetilde{{\mathcal
A}}}}(\varphi_x)))=-{\mathcal T}_{\varphi_x}\mu({\mathcal
P}^+_{\varphi_x}(X_{F_h}^{\Omega_{\widetilde{{\mathcal
A}}}}(\varphi_x))).$$
\end{proof}

Next, we will denote by $\{\cdot,\cdot\}_h$ the Poisson bracket on
$V^*$ induced by the algebraic Poisson structure $\Lambda_h$ given
by
\begin{equation}\label{eq5.17'}
\{\psi,\psi'\}_h=-(d^{{\mathcal T}^{\widetilde{\mathcal
A}}V^*}\psi')(X_\psi^{\Lambda_h})=\Omega_h(X_\psi^{\Lambda_h},X_{\psi'}^{\Lambda_h}),
\end{equation}
for $\psi,\psi'\in C^\infty(V^*)$.

Then, using (\ref{Cbarraab}), (\ref{defH+}), (\ref{5.9'}) and the
fact that $\bar{\mu}^a(\bar{Z}_b)=0$ (see (\ref{mubarraa}) and
(\ref{Zbarraa})), we deduce that the local expression of the
projector ${\mathcal P}^+$ is
\begin{equation}\label{locP+}
\begin{array}{lcl} {\mathcal P}^+&=&Id-(\bar{\mathcal
C}_{ab}\circ\mu)(\bar{\mathcal
C}_{cd}\circ\mu)(\{\psi^b,\psi^d\}_h\circ\mu) X_{({\mathcal
T}\mu,\mu)^*\bar{\mu}^a}^{\Omega_{\widetilde{{\mathcal
A}}}}\otimes({\mathcal
T}\mu,\mu)^* \bar{\mu}^c\\[8pt]
&-&(\bar{\mathcal C}_{ab}\circ\mu) X_{({\mathcal
T}\mu,\mu)^*\bar{\mu}^a}^{\Omega_{\widetilde{{\mathcal
A}}}}\otimes({\mathcal T}\mu,\mu)^*(d^{{\mathcal
T}^{\widetilde{{\mathcal A}}}V^*}\psi^b)\\[8pt]
&+&(\bar{\mathcal C}_{ab}\circ\mu) X_{({\mathcal
T}\mu,\mu)^*(d^{{\mathcal T}^{\widetilde{{\mathcal
A}}}V^*}\psi^a)}^{\Omega_{\widetilde{{\mathcal
A}}}}\otimes({\mathcal T}\mu,\mu)^* \bar{\mu}^b,
\end{array}
\end{equation}
along the points of ${\mathcal B}^+$.

Now, we will denote by $\bar{\theta}_h$ the section of $({\mathcal
T}^{\widetilde{{\mathcal A}}}V^*)^*_{|\bar{{\mathcal B}}}\to
\bar{{\mathcal B}}$ given by
$$\bar{\theta}_h(\bar{\b})=-(i_{\bar{R}_{nh}}\Omega_h)(\bar{\b}),\makebox[1.5cm]{for
all}\bar{\b}\in\bar{{\mathcal B}}.$$
Then, using (\ref{OmegaAh}), (\ref{Pconm}), Theorem \ref{th5.3}
and the facts that $Im({\mathcal P}^+)\subseteq {H}^+$,
$Im({\mathcal Q}^+)\subseteq
(H^+)^{\perp,\Omega_{\widetilde{{\mathcal A}}}}$,
$\bar{R}_{nh}\in\bar{H}$ and $i_{\bar{R}_{nh}}\eta=1$ , it is not
difficult to prove that
\begin{equation}\label{thetaconm}
\bar{\theta}_h(\mu(\varphi_x))(\bar{\tilde{\mathcal
Q}}_{\mu(\varphi_x)}({\mathcal
T}_{\varphi_x}\mu(X_{\varphi_x})))=-\theta_h(\varphi_x)({\mathcal
Q}^+_{\varphi_x}(X_{\varphi_x})),
\end{equation}
for all $X_{\varphi_x}\in{\mathcal T}^{\widetilde{{\mathcal
A}}}_{\varphi_x}{\mathcal A}^+$ and $\varphi_x\in {\mathcal
B}^+_x$, with $x\in M$. Note that
$\eta(\mu(\varphi_x))\circ{\mathcal
T}_{\varphi_x}\mu\circ{\mathcal
Q}^+_{\varphi_x}=\eta(\mu(\varphi_x))\circ \bar{\tilde{\mathcal
Q}}_{\mu(\varphi_x)}\circ {\mathcal T}_{\varphi_x}\mu=0$. We also
recall that $\theta_h=d^{{\mathcal T}^{\widetilde{{\mathcal
A}}}{\mathcal A}^+}F_h$.

Thus, using (\ref{eq5.10'}), (\ref{Pconm}) and (\ref{thetaconm}),
we obtain that
\begin{equation}\label{PQdec}
\begin{array}{lcl}
{\mathcal P}^+_{\varphi_x}(X_{\varphi_x})&=&(\bar{\tilde{\mathcal
P}}_{\mu(\varphi_x)}({\mathcal
T}_{\varphi_x}\mu(X_{\varphi_x})))^H_{\varphi_x}+\Big(\theta_h(\varphi_x)(X_{\varphi_x})\\[8pt]
&&+\bar{\theta}_h(\mu(\varphi_x))(\bar{\tilde{\mathcal
Q}}_{\mu(\varphi_x)}({\mathcal
T}_{\varphi_x}\mu(X_{\varphi_x})))\Big)E(\varphi_x),\\[8pt]
{\mathcal Q}^+_{\varphi_x}(X_{\varphi_x})&=&(\bar{\tilde{\mathcal
Q}}_{\mu(\varphi_x)}({\mathcal
T}_{\varphi_x}\mu(X_{\varphi_x})))^H_{\varphi_x}-\bar{\theta}_h(\mu(\varphi_x))(\bar{\tilde{\mathcal
Q}}_{\mu(\varphi_x)}({\mathcal
T}_{\varphi_x}\mu(X_{\varphi_x})))E(\varphi_x),
\end{array}
\end{equation}
for all $X_{\varphi_x}\in{\mathcal
T}_{\varphi_x}^{\widetilde{{\mathcal A}}}{\mathcal A}^+$, with
$\varphi_x\in {\mathcal B}^+_x$ and $x\in M$.

\subsection{The nonholonomic bracket}\label{secnhbrac}
We consider a regular nonholonomic system on a Lie aff\-ge\-broid
${\mathcal A}$ described by a hyperregular Lagrangian function
$L:{\mathcal A}\to \R$ and a constraint affine subbundle
$\tau_{\mathcal B}:{\mathcal B}\to M$ of the bundle
$\tau_{\mathcal A}:{\mathcal A}\to M$. We will denote by
$(h,\bar{\mathcal B})$ the corresponding regular constrained
Hamiltonian system, by $\bar{h}$ the restriction to $\bar{\mathcal
B}$ of $h$, by ${\mathcal B}^+$ the submanifold of ${\mathcal
A}^+$ given by ${\mathcal B}^+=\mu^{-1}(\bar{\mathcal B})$ and by
$${\mathcal P}^+_{\varphi_x}:{\mathcal
T}^{\widetilde{{\mathcal A}}}_{\varphi_x}{\mathcal A}^+\to
H^+_{\varphi_x}\makebox[1cm]{and}{\mathcal
Q}^+_{\varphi_x}:{\mathcal T}^{\widetilde{{\mathcal
A}}}_{\varphi_x}{\mathcal A}^+\to
(H^+_{\varphi_x})^{\perp,\Omega_{\widetilde{{\mathcal A}}}},$$ the
corresponding complementary projectors, for $\varphi_x\in
{\mathcal B}^+_x$.

We have that ${\mathcal B}^+$ is the total space of an AV-bundle
over $\bar{\mathcal B}$. In fact, the affine bundle projection is
the restriction $\mu_{{\mathcal B}^+}:{\mathcal
B}^+\to\bar{{\mathcal B}}$ to ${\mathcal B}^+$ of the canonical
projection $\mu:{\mathcal A}^+\to V^*$.

Note that the restriction $\tau_{{\mathcal B}^+}:{\mathcal B}^+\to
M$ of the projection $\tau_{{\mathcal A}^+}:{\mathcal A}^+\to M$
is a fibration. Thus, one may consider the prolongation ${\mathcal
T}^{\widetilde{\mathcal A}}{\mathcal B}^+$ of the Lie algebroid
$\widetilde{\mathcal A}$ over $\tau_{{\mathcal B}^+}:{\mathcal
B}^+\to M$. ${\mathcal T}^{\widetilde{\mathcal A}}{\mathcal B}^+$
is a Lie algebroid over ${\mathcal B}^+$.

Now, suppose that
$$\bar{h}',\bar{h}'':\bar{{\mathcal B}}\to
{\mathcal B}^+\in\Gamma(\mu_{{\mathcal B}^+})$$ are two sections
of the AV-bundle $\mu_{{\mathcal B}^+}:{\mathcal
B}^+\to\bar{{\mathcal B}}$ and that
$$h',h'':V^*\to {\mathcal A}^+\in\Gamma(\mu)$$
are arbitrary extensions to $V^*$ of $\bar{h}'$ and $\bar{h}''$,
respectively.

We will denote by
$$F_{h'},F_{h''}:{\mathcal A}^+\to\R$$
the affine functions associated with the sections $h'$ and $h''$,
respectively, and by $X_{F_{h'}}^{\Omega_{\widetilde{{\mathcal
A}}}}$ and $X_{F_{h''}}^{\Omega_{\widetilde{{\mathcal A}}}}$ the
corresponding Hamiltonian sections. Then, we have that
$$\begin{array}{l}
h'(\mu(\varphi_x))-\varphi_x=F_{h'}(\varphi_x)1_{\mathcal
A}(x),\;\;\;\;h''(\mu(\varphi_x))-\varphi_x=F_{h''}(\varphi_x)1_{\mathcal
A}(x),\\[6pt]
i_{X_{F_{h'}}^{\Omega_{\widetilde{{\mathcal
A}}}}}\Omega_{\widetilde{{\mathcal A}}}=d^{{\mathcal
T}^{\widetilde{{\mathcal A}}}{\mathcal
A}^+}F_{h'},\makebox[2cm]{}i_{X_{F_{h''}}^{\Omega_{\widetilde{{\mathcal
A}}}}}\Omega_{\widetilde{{\mathcal A}}}=d^{{\mathcal
T}^{\widetilde{{\mathcal A}}}{\mathcal A}^+}F_{h''},
\end{array}$$
for $\varphi_x\in{\mathcal A}_x^+$.

Moreover, we will prove the following results.

\begin{lemma}\label{lema5.5} The real function on ${\mathcal B}^+$
$$\Omega_{\widetilde{\mathcal A}}({\mathcal P}^+({X_{F_{h'}}^{\Omega_{\widetilde{{\mathcal
A}}}}}_{|{\mathcal B}^+}),{\mathcal
P}^+({X_{F_{h''}}^{\Omega_{\widetilde{{\mathcal A}}}}}_{|{\mathcal
B}^+}))$$ does not depend on the chosen extensions $h'$ and $h''$
of $\bar{h}'$ and $\bar{h}''$, respectively.
\end{lemma}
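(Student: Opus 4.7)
\bigskip

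\noindent\textbf{Proof plan.} The plan is to reduce the claim to the stronger assertion that $\mathcal{P}^+(X_{F_{h'}}^{\Omega_{\widetilde{\mathcal{A}}}}{}_{|\mathcal{B}^+})$ itself depends only on $\bar{h}'=h'_{|\bar{\mathcal{B}}}$, and analogously for $h''$. By bilinearity and skew-symmetry of $\Omega_{\widetilde{\mathcal{A}}}$, this implies the independence of the full expression. So let $h'_1,h'_2\in\Gamma(\mu)$ be two extensions of $\bar{h}'$ and set $F:=F_{h'_1}-F_{h'_2}\in C^\infty(\mathcal{A}^+)$. From the defining relation (\ref{conFh}), $F_{h'_i}(\varphi_x)$ is determined by $h'_i(\mu(\varphi_x))$; since $h'_1$ and $h'_2$ agree on $\bar{\mathcal{B}}$, we have $F_{|\mathcal{B}^+}\equiv 0$.

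The key geometric observation is the inclusion $H^+\subseteq \mathcal{T}^{\widetilde{\mathcal{A}}}\mathcal{B}^+{}_{|\mathcal{B}^+}$. This comes from the fact that $\mathcal{B}^+=\mu^{-1}(\bar{\mathcal{B}})$ and $\mu\colon\mathcal{A}^+\to V^*$ is a submersion, so $T_{\varphi_x}\mathcal{B}^+=(T_{\varphi_x}\mu)^{-1}(T_{\mu(\varphi_x)}\bar{\mathcal{B}})$, and therefore
$$\mathcal{T}^{\widetilde{\mathcal{A}}}_{\varphi_x}\mathcal{B}^+=(\mathcal{T}_{\varphi_x}\mu)^{-1}(\mathcal{T}^{\widetilde{\mathcal{A}}}_{\mu(\varphi_x)}\bar{\mathcal{B}}).$$
Since $\bar{H}\subseteq \mathcal{T}^{\widetilde{\mathcal{A}}}\bar{\mathcal{B}}$ by its very definition, and $H^+=(\mathcal{T}\mu)^{-1}(\bar{H})$ by (\ref{defH+}), this inclusion is immediate.

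Next, I would use that $F_{|\mathcal{B}^+}=0$ to show $X_F^{\Omega_{\widetilde{\mathcal{A}}}}{}_{|\mathcal{B}^+}\in (H^+)^{\perp,\Omega_{\widetilde{\mathcal{A}}}}$. For $\varphi_x\in\mathcal{B}^+$ and any $X_{\varphi_x}=(\tilde{a},v_{\varphi_x})\in \mathcal{T}^{\widetilde{\mathcal{A}}}_{\varphi_x}\mathcal{B}^+$, the anchor satisfies $\rho^{\tau_{\mathcal{A}^+}}_{\widetilde{\mathcal{A}}}(X_{\varphi_x})=v_{\varphi_x}\in T_{\varphi_x}\mathcal{B}^+$, so
$$(d^{\mathcal{T}^{\widetilde{\mathcal{A}}}\mathcal{A}^+}F)(\varphi_x)(X_{\varphi_x})=v_{\varphi_x}(F)=0.$$
By definition of the Hamiltonian section $X_F^{\Omega_{\widetilde{\mathcal{A}}}}$ this means $\Omega_{\widetilde{\mathcal{A}}}(X_F^{\Omega_{\widetilde{\mathcal{A}}}}(\varphi_x),X_{\varphi_x})=0$ for all $X_{\varphi_x}\in\mathcal{T}^{\widetilde{\mathcal{A}}}_{\varphi_x}\mathcal{B}^+$, hence in particular for all $X_{\varphi_x}\in H^+_{\varphi_x}$ by the inclusion above. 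Thus $X_F^{\Omega_{\widetilde{\mathcal{A}}}}{}_{|\mathcal{B}^+}$ lies in $(H^+)^{\perp,\Omega_{\widetilde{\mathcal{A}}}}$, which is exactly the kernel of $\mathcal{P}^+$ in the decomposition $\mathcal{T}^{\widetilde{\mathcal{A}}}\mathcal{A}^+{}_{|\mathcal{B}^+}=H^+\oplus (H^+)^{\perp,\Omega_{\widetilde{\mathcal{A}}}}$ afforded by regularity. Consequently $\mathcal{P}^+(X_F^{\Omega_{\widetilde{\mathcal{A}}}}{}_{|\mathcal{B}^+})=0$, so $\mathcal{P}^+(X_{F_{h'_1}}^{\Omega_{\widetilde{\mathcal{A}}}}{}_{|\mathcal{B}^+})=\mathcal{P}^+(X_{F_{h'_2}}^{\Omega_{\widetilde{\mathcal{A}}}}{}_{|\mathcal{B}^+})$ and the lemma follows. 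There is no real obstacle: the only non-automatic step is the inclusion $H^+\subseteq\mathcal{T}^{\widetilde{\mathcal{A}}}\mathcal{B}^+$, and this is just a direct unpacking of the definitions of the two prolongations and of $H^+$.
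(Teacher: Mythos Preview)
Your proof is correct and uses the same key ingredients as the paper: the inclusion $H^+\subseteq\mathcal{T}^{\widetilde{\mathcal A}}\mathcal{B}^+$, the vanishing $(F_{h'_1}-F_{h'_2})_{|\mathcal{B}^+}=0$, and the identity $\Omega_{\widetilde{\mathcal A}}(X_F^{\Omega_{\widetilde{\mathcal A}}},Y)=\rho_{\widetilde{\mathcal A}}^{\tau_{\mathcal{A}^+}}(Y)(F)$ for $Y$ tangent to $\mathcal{B}^+$. The organization differs slightly: the paper first rewrites the pairing as $\Omega_{\widetilde{\mathcal A}}(X_{F_{h'}}^{\Omega_{\widetilde{\mathcal A}}},\mathcal{P}^+(X_{F_{h''}}^{\Omega_{\widetilde{\mathcal A}}}))$ using that $\mathcal{Q}^+$ lands in $(H^+)^{\perp,\Omega_{\widetilde{\mathcal A}}}$, and then shows this expression is independent of the extension $h'$; you instead show directly that $\mathcal{P}^+(X_{F_{h'}}^{\Omega_{\widetilde{\mathcal A}}}{}_{|\mathcal{B}^+})$ itself is independent of the extension, which is a marginally stronger and cleaner statement that immediately yields the lemma.
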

\begin{proof}
If $\varphi_x\in{\mathcal B}_x^+$ then ${\mathcal
Q}_{\varphi_x}^+(X_{F_{h'}}^{\Omega_{\widetilde{{\mathcal
A}}}}(\varphi_x))\in(H_{\varphi_x}^+)^{\perp,\Omega_{\widetilde{\mathcal
A}}}$ and, thus,
$$\Omega_{\widetilde{\mathcal A}}(\varphi_x)({\mathcal Q}^+_{\varphi_x}(X_{F_{h'}}^{\Omega_{\widetilde{{\mathcal
A}}}}(\varphi_x)),{\mathcal
P}^+_{\varphi_x}(X_{F_{h''}}^{\Omega_{\widetilde{{\mathcal
A}}}}(\varphi_x)))=0$$ which implies that
\begin{equation}\label{sin+}
\Omega_{\widetilde{\mathcal A}}({\mathcal
P}^+({X_{F_{h'}}^{\Omega_{\widetilde{{\mathcal A}}}}}_{|{\mathcal
B}^+}),{\mathcal P}^+({X_{F_{h''}}^{\Omega_{\widetilde{{\mathcal
A}}}}}_{|{\mathcal B}^+}))=\Omega_{\widetilde{\mathcal
A}}({X_{F_{h'}}^{\Omega_{\widetilde{{\mathcal A}}}}}_{|{\mathcal
B}^+},{\mathcal P}^+({X_{F_{h''}}^{\Omega_{\widetilde{{\mathcal
A}}}}}_{|{\mathcal B}^+})).
\end{equation}

Now, let $h_1':V^*\to{\mathcal A}^+$ be another extension of
$\bar{h}':\bar{\mathcal B}\to{\mathcal B}^+$. Then, it is clear
that
$$(F_{h'}-F_{h_1'})_{|{\mathcal B}^+}=0.$$
Therefore, since ${\mathcal
P}^+_{\varphi_x}(X_{F_{h''}}^{\Omega_{\widetilde{\mathcal
A}}}(\varphi_x))\in H^+_{\varphi_x}\subseteq {\mathcal
T}_{\varphi_x}^{\widetilde{\mathcal A}}{\mathcal B}^+$, we obtain
that
$$\Omega_{\widetilde{\mathcal A}}(\varphi_x)(X_{F_{h'}}^{\Omega_{\widetilde{{\mathcal
A}}}}(\varphi_x)-X_{F_{h'_1}}^{\Omega_{\widetilde{{\mathcal
A}}}}(\varphi_x),{\mathcal
P}^+_{\varphi_x}(X_{F_{h''}}^{\Omega_{\widetilde{{\mathcal
A}}}}(\varphi_x)))=\rho_{\widetilde{\mathcal A}}^{\tau_{{\mathcal
A}^+}}({\mathcal
P}^+_{\varphi_x}(X_{F_{h''}}^{\Omega_{\widetilde{\mathcal
A}}}(\varphi_x)))(F_{h'}-F_{h'_1})=0.$$ This proves that
$$\Omega_{\widetilde{\mathcal A}}({\mathcal
P}^+({X_{F_{h'}}^{\Omega_{\widetilde{{\mathcal A}}}}}_{|{\mathcal
B}^+}),{\mathcal P}^+({X_{F_{h''}}^{\Omega_{\widetilde{{\mathcal
A}}}}}_{|{\mathcal B}^+}))=\Omega_{\widetilde{\mathcal
A}}({\mathcal P}^+({X_{F_{h'_1}}^{\Omega_{\widetilde{{\mathcal
A}}}}}_{|{\mathcal B}^+}),{\mathcal
P}^+({X_{F_{h''}}^{\Omega_{\widetilde{{\mathcal A}}}}}_{|{\mathcal
B}^+})).$$ Consequently, using that $\Omega_{\widetilde{\mathcal
A}}$ is skew-symmetric, we deduce the result.
\end{proof}

\begin{lemma}\label{lema5.6} The real function on ${\mathcal B}^+$
$$\Omega_{\widetilde{\mathcal A}}({\mathcal P}^+({X_{F_{h'}}^{\Omega_{\widetilde{{\mathcal
A}}}}}_{|{\mathcal B}^+}),{\mathcal
P}^+({X_{F_{h''}}^{\Omega_{\widetilde{{\mathcal A}}}}}_{|{\mathcal
B}^+}))$$ is basic with respect to the affine bundle projection
$\mu_{{\mathcal B}^+}:{\mathcal B}^+\to\bar{\mathcal B}$.
\end{lemma}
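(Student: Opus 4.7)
The plan is to show that the function in the statement is invariant under the $\R$-action on ${\mathcal A}^+$ given by fibrewise translation along $1_{\mathcal A}$. Writing $T_t(\varphi_x)=\varphi_x+t\cdot 1_{\mathcal A}(x)$ for the flow of the vertical field $X_{{\mathcal A}^+}=1_{\mathcal A}^V$, one has $\mu\circ T_t=\mu$, so $T_t$ preserves ${\mathcal B}^+=\mu^{-1}(\bar{\mathcal B})$ and its restriction sweeps out the one-dimensional affine fibres of $\mu_{{\mathcal B}^+}:{\mathcal B}^+\to\bar{\mathcal B}$. Therefore a smooth function on ${\mathcal B}^+$ is basic with respect to $\mu_{{\mathcal B}^+}$ if and only if it is $T_t$-invariant for every $t\in\R$, and it suffices to prove $f\circ T_t=f$, where $f$ stands for the function in the statement.

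The first step is to check that the canonical symplectic section $\Omega_{\widetilde{\mathcal A}}$ is ${\mathcal T}T_t$-invariant. In the adapted coordinates $(x^i,y_0,y_\alpha)$ on ${\mathcal A}^+$ coming from a basis $\{e_0,e_\alpha\}$ adapted to $1_{\mathcal A}$, the map $T_t$ acts as $(x^i,y_0,y_\alpha)\mapsto(x^i,y_0+t,y_\alpha)$. Because $1_{\mathcal A}$ is a $1$-cocycle of $\widetilde{\mathcal A}$, the structure functions $C_{0\alpha}^\gamma$ and $C_{\alpha\beta}^\gamma$ carry only indices $\gamma\in\{1,\dots,n\}$; hence the local expression of $\Omega_{\widetilde{\mathcal A}}$ obtained from (\ref{formas}) depends only on $x^i$ and the $y_\gamma$ with $\gamma\geq 1$, and involves only the sections $\tilde{e}^0,\tilde{e}^\alpha,\bar{e}^0,\bar{e}^\alpha$, all of which are ${\mathcal T}T_t$-invariant because $T_t$ is a pure $y_0$-translation. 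We conclude $({\mathcal T}T_t,T_t)^*\Omega_{\widetilde{\mathcal A}}=\Omega_{\widetilde{\mathcal A}}$, which is the Lie-algebroid counterpart of the $X_{{\mathcal A}^+}$-invariance of $\Pi_{{\mathcal A}^+}$ used in the proof of Theorem \ref{teor3.2}.

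From (\ref{conFh}) one reads off $T_t^*F_{h'}=F_{h'}-t$, and similarly for $h''$, so the Lie algebroid differentials $d^{{\mathcal T}^{\widetilde{\mathcal A}}{\mathcal A}^+}F_{h'}$ and $d^{{\mathcal T}^{\widetilde{\mathcal A}}{\mathcal A}^+}F_{h''}$ are ${\mathcal T}T_t$-invariant; combined with the invariance and non-degeneracy of $\Omega_{\widetilde{\mathcal A}}$, this forces
$$X_{F_{h'}}^{\Omega_{\widetilde{\mathcal A}}}\circ T_t={\mathcal T}T_t\circ X_{F_{h'}}^{\Omega_{\widetilde{\mathcal A}}},\qquad X_{F_{h''}}^{\Omega_{\widetilde{\mathcal A}}}\circ T_t={\mathcal T}T_t\circ X_{F_{h''}}^{\Omega_{\widetilde{\mathcal A}}}.$$
Moreover, using $\mu\circ T_t=\mu$ and the definition $H^+_{\varphi_x}=({\mathcal T}_{\varphi_x}\mu)^{-1}(\bar{H}_{\mu(\varphi_x)})$, one obtains ${\mathcal T}T_t(H^+_{\varphi_x})=H^+_{T_t(\varphi_x)}$, and the ${\mathcal T}T_t$-invariance of $\Omega_{\widetilde{\mathcal A}}$ then yields ${\mathcal T}T_t((H^+_{\varphi_x})^{\perp,\Omega_{\widetilde{\mathcal A}}})=(H^+_{T_t(\varphi_x)})^{\perp,\Omega_{\widetilde{\mathcal A}}}$. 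The complementary projectors therefore intertwine with ${\mathcal T}T_t$, that is, ${\mathcal P}^+_{T_t(\varphi_x)}\circ{\mathcal T}_{\varphi_x}T_t={\mathcal T}_{\varphi_x}T_t\circ{\mathcal P}^+_{\varphi_x}$.

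Assembling everything, for $\varphi_x\in{\mathcal B}^+$ one computes
$$f(T_t(\varphi_x))=\Omega_{\widetilde{\mathcal A}}(T_t(\varphi_x))\bigl({\mathcal T}T_t({\mathcal P}^+(X_{F_{h'}}^{\Omega_{\widetilde{\mathcal A}}}(\varphi_x))),{\mathcal T}T_t({\mathcal P}^+(X_{F_{h''}}^{\Omega_{\widetilde{\mathcal A}}}(\varphi_x)))\bigr)=f(\varphi_x),$$
so $f$ descends to a function on $\bar{\mathcal B}$. The main technical step is the ${\mathcal T}T_t$-invariance of $\Omega_{\widetilde{\mathcal A}}$, which is where the cocycle property of $1_{\mathcal A}$ enters in an essential way.
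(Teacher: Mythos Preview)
Your argument is correct and takes a genuinely different route from the paper's proof. The paper proceeds by first using (\ref{sin+}) to rewrite the function as $\theta_{h'}(X_{F_{h''}}^{\Omega_{\widetilde{\mathcal A}}})-\theta_{h'}({\mathcal Q}^+(X_{F_{h''}}^{\Omega_{\widetilde{\mathcal A}}}))$, then shows the first term is $\{F_{h'},F_{h''}\}_{\Pi_{{\mathcal A}^+}}$ and hence basic by the $1_{\mathcal A}^V$-invariance of $\Pi_{{\mathcal A}^+}$, and finally handles the second term via the explicit decomposition (\ref{PQdec}) of ${\mathcal Q}^+$ together with the computation $\lcf E,Z^H\rcf_{\widetilde{\mathcal A}}^{\tau_{{\mathcal A}^+}}=0$. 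Your approach instead establishes directly a global $\R$-equivariance of all ingredients: $\Omega_{\widetilde{\mathcal A}}$, the Hamiltonian sections, the subbundle $H^+$ and its symplectic orthogonal, and hence the projector ${\mathcal P}^+$. This is more conceptual and avoids the piece-by-piece analysis and the bracket computation; the cocycle condition on $1_{\mathcal A}$ enters exactly once, to ensure that no $y_0$ appears in the local formula for $\Omega_{\widetilde{\mathcal A}}$. The paper's longer route has the side benefit of recording auxiliary identities such as (\ref{PQdec}) and $\lcf E,Z^H\rcf=0$, but for the lemma itself your symmetry argument is shorter and more transparent.
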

\begin{proof}Denote by $\theta_{h'}$ the section of $({\mathcal
T}^{\widetilde{\mathcal A}}{\mathcal A}^+)^*\to{\mathcal A}^+$
defined by
$$\theta_{h'}=d^{{\mathcal T}^{\widetilde{\mathcal A}}{\mathcal
A}^+}F_{h'}.$$ Then, from (\ref{sin+}), it follows that
$$\Omega_{\widetilde{\mathcal A}}({\mathcal P}^+({X_{F_{h'}}^{\Omega_{\widetilde{{\mathcal
A}}}}}_{|{\mathcal B}^+}),{\mathcal
P}^+({X_{F_{h''}}^{\Omega_{\widetilde{{\mathcal A}}}}}_{|{\mathcal
B}^+}))={\theta_{h'}}_{|{\mathcal
B}^+}({X_{F_{h''}}^{\Omega_{\widetilde{\mathcal A}}}}_{|{\mathcal
B}^+})-{\theta_{h'}}_{|{\mathcal B}^+}({\mathcal
Q}^+({X_{F_{h''}}^{\Omega_{\widetilde{\mathcal A}}}}_{|{\mathcal
B}^+})).$$ Moreover, if $\Pi_{{\mathcal A}^+}$ is the linear
Poisson structure on ${\mathcal A}^+$ induced by the Lie algebroid
${\widetilde{\mathcal A}}$, we have that (see (\ref{2.2'}))
$\theta_{h'}(X_{F_{h''}}^{\Omega_{\widetilde{\mathcal
A}}})=\{F_{h'},F_{h''}\}_{\Pi_{{\mathcal A}^+}}$. In addition, the
vertical bundle of the fibration $\mu:{\mathcal A}^+\to V^*$ is
generated by the vertical lift $1_{\mathcal A}^V$ of the section
$1_{\mathcal A}\in\Gamma(\tau_{{\mathcal A}^+})$. Since
$1_{\mathcal A}$ is a 1-cocycle of $\widetilde{\mathcal A}$, we
deduce that $1_{\mathcal A}^V$ is an infinitesimal Poisson
automorphism of $\Pi_{{\mathcal A}^+}$ and
$$\begin{array}{lcl}
1_{\mathcal A}^V (\{F_{h'},F_{h''}\}_{\Pi_{{\mathcal
A}^+}})&=&\{1_{\mathcal A}^V(F_{h'}),F_{h''}\}_{\Pi_{{\mathcal
A}^+}}+\{F_{h'},1_{\mathcal A}^V(F_{h''})\}_{\Pi_{{\mathcal
A}^+}}\\[6pt]
&=&-\{1,F_{h''}\}_{\Pi_{{\mathcal
A}^+}}-\{F_{h'},1\}_{\Pi_{{\mathcal A}^+}}=0.
\end{array}$$

This implies that the function
$\theta_{h'}(X_{F_{h''}}^{\Omega_{\widetilde{\mathcal A}}})$ is
basic with respect to the fibration $\mu:{\mathcal A}^+\to V^*$.
Thus, the function
$\theta_{h'}(X_{F_{h''}}^{\Omega_{\widetilde{\mathcal
A}}})_{|{\mathcal B}^+}$ is basic with respect to the fibration
$\mu_{{\mathcal B}^+}:{\mathcal B}^+\to\bar{\mathcal B}$.

On the other hand, from (\ref{PQdec}), we deduce that
$$\begin{array}{rcl}\theta_{h'}(\varphi_x)({\mathcal
Q}^+_{\varphi_x}(X_{F_{h''}}^{\Omega_{{\widetilde{\mathcal
A}}}}(\varphi_x)))&=&\theta_{h'}(\varphi_x)((\bar{\tilde{\mathcal
Q}}_{\mu(\varphi_x)}({\mathcal
T}_{\varphi_x}\mu(X_{F_{h''}}^{\Omega_{\widetilde{\mathcal
A}}}(\varphi_x))))_{\varphi_x}^H)\\[6pt]
&-&\bar{\theta}_h(\mu(\varphi_x))(\bar{\tilde{\mathcal
Q}}_{\mu(\varphi_x)}({\mathcal
T}_{\varphi_x}\mu(X_{F_{h''}}^{\Omega_{\widetilde{\mathcal
A}}}(\varphi_x))))\theta_{h'}(\varphi_x)(E(\varphi_x)),\end{array}$$
for $\varphi_x\in{\mathcal B}^+_x$.

Now, using (\ref{conFh}) and (\ref{defE}), one proves that
\begin{equation}\label{thetah'E}
\theta_{h'}(\varphi_x)(E(\varphi_x))=-1_{\mathcal
A}^V(\varphi_x)(F_{h'})
=-\displaystyle\frac{d}{dt}_{|t=0}(F_{h'}(\varphi_x)-t)=1,\mbox{
for }\varphi_x\in{\mathcal A}_x^+.
\end{equation}

Furthermore, proceeding as in Section \ref{sec5.1} (see
(\ref{5.6'})), we have that there exists a section
$R_{h''}\in\Gamma(\tau_{\widetilde{\mathcal A}}^{\tau_V^*})$ such
that
$$({\mathcal T}_{\varphi_x}\mu)(X_{F_{h''}}^{\Omega_{\widetilde{\mathcal
A}}}(\varphi_x))=-R_{h''}(\mu(\varphi_x)),\mbox{ for all
}\varphi_x\in{\mathcal A}^+_x.$$ Therefore,
$$\begin{array}{rcl}\theta_{h'}(\varphi_x)({\mathcal
Q}^+_{\varphi_x}(X_{F_{h''}}^{\Omega_{{\widetilde{\mathcal
A}}}}(\varphi_x)))&=&-\theta_{h'}(\varphi_x)((\bar{\tilde{\mathcal
Q}}_{\mu(\varphi_x)}(R_{h''}(\mu(\varphi_x))))_{\varphi_x}^H)\\[6pt]
&&+\bar{\theta}_h(\mu(\varphi_x))(\bar{\tilde{\mathcal
Q}}_{\mu(\varphi_x)}(R_{h''}(\mu(\varphi_x)))),\mbox{ for
}\varphi_x\in{\mathcal B}^+_x.
\end{array}$$

Consequently, it only remains to prove that the function
$\theta_{h'}(\bar{\tilde{\mathcal Q}}(R_{h''}))^H:{\mathcal
B}^+\to\R$ defined by
$$\theta_{h'}(\bar{\tilde{\mathcal Q}}(R_{h''}))^H(\varphi_x)=\theta_{h'}(\varphi_x)((\bar{\tilde{\mathcal
Q}}_{\mu(\varphi_x)}(R_{h''}(\mu(\varphi_x))))_{\varphi_x}^H)$$ is
$\mu_{{\mathcal B}^+}$-projectable or, equivalently, that
$${\mathcal L}^{{\mathcal T}^{\widetilde{{\mathcal A}}}{\mathcal A}^+}_{E_{|{\mathcal B}^+}}
(\theta_{h'}(\bar{\tilde{\mathcal Q}}(R_{h''}))^H)=0.$$ In fact,
we will see that if $Z\in\Gamma(\tau_{\widetilde{{\mathcal
A}}}^{\tau_V^*})$ and $Z^H$ is the section of ${\mathcal
T}^{\widetilde{{\mathcal A}}}{\mathcal A}^+$ given by
$Z^H(\varphi_x)=(Z(\mu(\varphi_x)))^H_{\varphi_x}$, then
\begin{equation}\label{liethetaZ}
{\mathcal L}^{{\mathcal T}^{\widetilde{{\mathcal A}}}{\mathcal
A}^+}_E(\theta_{h'}(Z^H))=0.
\end{equation}

Note that, from (\ref{thetah'E}), it follows that $${\mathcal
L}^{{\mathcal T}^{\widetilde{{\mathcal A}}}{\mathcal A}^+}_E
\theta_{h'}=d^{{\mathcal T}^{\widetilde{{\mathcal A}}}{\mathcal
A}^+}( {\mathcal L}^{{\mathcal T}^{\widetilde{{\mathcal
A}}}{\mathcal A}^+}_E F_{h'})=d^{{\mathcal
T}^{\widetilde{{\mathcal A}}}{\mathcal A}^+}(\theta_{h'}(E))= 0.$$
Thus,
$${\mathcal L}^{{\mathcal
T}^{\widetilde{{\mathcal A}}}{\mathcal
A}^+}_E(\theta_{h'}(Z^H))=\theta_{h'}(\lcf
E,Z^H\rcf_{\widetilde{{\mathcal A}}}^{\tau_{{\mathcal A}^+}}).$$
Now, since $d^{{\mathcal T}^{\widetilde{{\mathcal A}}}{\mathcal
A}^+}\theta_{h}=0$ and using that $\theta_h(E)=1$ and the fact
that $\theta_h(Z^H)=0$, we obtain that
\begin{equation}\label{partevertical}
\theta_{h}(\lcf E,Z^H\rcf_{\widetilde{{\mathcal
A}}}^{\tau_{{\mathcal A}^+}})=0.
\end{equation}

In addition,
$${\mathcal T}\mu\circ E=0,\;\;\;{\mathcal T}\mu\circ
Z^H=Z\circ\mu$$ and the pair $({\mathcal T}\mu,\mu)$ is a morphism
between the Lie algebroids ${\mathcal T}^{\widetilde{\mathcal
A}}{\mathcal A}^+\to{\mathcal A}^+$ and ${\mathcal
T}^{\widetilde{\mathcal A}}V^*\to V^*$. This implies that
\begin{equation}\label{partehorizontal}
{\mathcal T}\mu\circ \lcf E,Z^H\rcf_{\widetilde{{\mathcal
A}}}^{\tau_{{\mathcal A}^+}}=0.
\end{equation}

Therefore, from (\ref{eq5.10'}), (\ref{partevertical}) and
(\ref{partehorizontal}), we deduce that $\lcf
E,Z^H\rcf_{\widetilde{{\mathcal A}}}^{\tau_{{\mathcal A}^+}}=0$
and (\ref{liethetaZ}) holds.

\end{proof}

From Lemmas \ref{lema5.5} and \ref{lema5.6}, we have that there
exists a real function $\{\bar{h}',\bar{h}''\}_{nh}\in
C^{\infty}(\bar{{\mathcal B}})$ which is characterized by the
following condition
\begin{equation}\label{eq5.26'}
\{\bar{h}',\bar{h}''\}_{nh}\circ\mu_{{\mathcal
B}^+}=\Omega_{\widetilde{{\mathcal A}}}({\mathcal
P}^+({X_{F_{h'}}^{\Omega_{\widetilde{{\mathcal A}}}}}_{|{\mathcal
B}^+}),{\mathcal P}^+({X_{F_{h''}}^{\Omega_{\widetilde{{\mathcal
A}}}}}_{|{\mathcal B}^+})). \end{equation}
 This function is called
\emph{the nonholonomic bracket of the sections $\bar{h}'$ and
$\bar{h}''$} and the resultant map
$$\{\cdot,\cdot\}_{nh}:\Gamma(\mu_{{\mathcal
B}^+})\times\Gamma(\mu_{{\mathcal B}^+})\to C^\infty(\bar{\mathcal
B})$$ is called \emph{the nonholonomic bracket associated with the
regular constrained Hamiltonian system $(h,\bar{\mathcal B})$}.

\begin{theorem} $a)$ The nonholonomic bracket $\{\cdot,\cdot\}_{nh}:\Gamma(\mu_{{\mathcal
B}^+})\times\Gamma(\mu_{{\mathcal B}^+})\to C^\infty(\bar{\mathcal
B})$ associated with the system $(h,\bar{\mathcal B})$ is an
almost-aff-Poisson bracket on the AV-bundle $\mu_{{\mathcal
B}^+}:{\mathcal B}^+\to\bar{\mathcal B}$, that is,
\begin{enumerate}
\item $\{\cdot,\cdot\}_{nh}$ is a bi-affine map,
\item $\{\cdot,\cdot\}_{nh}$ is skew-symmetric and
\item If $\bar{h}'\in\Gamma(\mu_{{\mathcal B}^+})$ then
$$\{\bar{h}',\cdot\}_{nh}:\Gamma(\mu_{{\mathcal B}^+})\to
C^\infty(\bar{{\mathcal
B}}),\;\;\;\bar{h}''\mapsto\{\bar{h}',\bar{h}''\}_{nh},$$ is an
affine derivation.
\end{enumerate}

$b)$ If $\bar{f}\in C^\infty(\bar{\mathcal B})$ is an observable
and $\{\bar{h},\cdot\}_{nh}^{al}:C^\infty(\bar{{\mathcal B}})\to
C^\infty(\bar{{\mathcal B}})$ is the linear map associated with
the affine map $\{\bar{h},\cdot\}_{nh}:\Gamma(\mu_{{\mathcal
B}^+})\to C^\infty(\bar{{\mathcal B}})$, then
$$\dot{\bar{f}}=\{\bar{h},\bar{f}\}_{nh}^{al}$$
where $\dot{\bar{f}}$ is the evolution of $\bar{f}$ along the
solutions of the constrained Hamilton equations.
\end{theorem}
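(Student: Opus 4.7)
My plan is as follows. Part $a.2$ follows at once from the skew-symmetry of $\Omega_{\widetilde{\mathcal A}}$ via the defining relation (\ref{eq5.26'}). For $a.1$ (bi-affineness), I would first check that the assignment $h\mapsto F_h$ of $\Gamma(\mu)\to C^\infty(\mathcal A^+)$ is affine: inserting an affine combination $(1-t)h_1+th_2$ (which makes sense in the affine bundle $\mu$) into the defining identity (\ref{conFh}) and using that the fibres of $\mu$ are one-dimensional affine spaces acted on by $\mathbb R$ through $1_{\mathcal A}$ yields $F_{(1-t)h_1+th_2}=(1-t)F_{h_1}+tF_{h_2}$. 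Since $F\mapsto X_F^{\Omega_{\widetilde{\mathcal A}}}$ is $\mathbb{R}$-linear, $\mathcal P^+$ is a vector bundle morphism, and $\Omega_{\widetilde{\mathcal A}}$ is bilinear, the composition appearing on the right of (\ref{eq5.26'}) is bi-affine in $(h',h'')$; independence from extensions was already settled by Lemma \ref{lema5.5}.

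For $a.3$ and $b)$, the heart of the argument is an explicit formula for $\{\bar h',\cdot\}_{nh}^{al}$. Given $\bar f\in C^\infty(\bar{\mathcal B})$, pick a reference $\bar h_0\in\Gamma(\mu_{\mathcal B^+})$ with extension $h_0\in\Gamma(\mu)$ and an extension $f\in C^\infty(V^*)$ of $\bar f$; then $h_0+f$ (the section obtained by the fibrewise action) extends $\bar h_0+\bar f$, and (\ref{conFh}) yields $F_{h_0+f}-F_{h_0}=f\circ\mu$, hence $X_{F_{h_0+f}}^{\Omega_{\widetilde{\mathcal A}}}-X_{F_{h_0}}^{\Omega_{\widetilde{\mathcal A}}}=X_{f\circ\mu}^{\Omega_{\widetilde{\mathcal A}}}$. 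Subtracting the two instances of (\ref{eq5.26'}) and exploiting the orthogonality used in the proof of Lemma \ref{lema5.5}, I obtain
$$\{\bar h',\bar f\}_{nh}^{al}\circ\mu_{\mathcal B^+}=\Omega_{\widetilde{\mathcal A}}\bigl(\mathcal P^+(X_{F_{h'}}^{\Omega_{\widetilde{\mathcal A}}}|_{\mathcal B^+}),X_{f\circ\mu}^{\Omega_{\widetilde{\mathcal A}}}|_{\mathcal B^+}\bigr)=-\rho_{\widetilde{\mathcal A}}^{\tau_{\mathcal A^+}}\bigl(\mathcal P^+(X_{F_{h'}}^{\Omega_{\widetilde{\mathcal A}}}|_{\mathcal B^+})\bigr)(f\circ\mu).$$
Since $(\mathcal T\mu,\mu)$ is a Lie algebroid morphism, this equals $-\rho_{\widetilde{\mathcal A}}^{\tau_V^*}(\mathcal T\mu\circ\mathcal P^+(X_{F_{h'}}^{\Omega_{\widetilde{\mathcal A}}}|_{\mathcal B^+}))(f)$; and since $\mathrm{Im}\,\mathcal P^+\subset H^+\subset \mathcal T^{\widetilde{\mathcal A}}\mathcal B^+$, the vector $\mathcal T\mu\circ\mathcal P^+(X_{F_{h'}}^{\Omega_{\widetilde{\mathcal A}}})$ is tangent to $\bar{\mathcal B}$, so the right-hand side depends only on $f|_{\bar{\mathcal B}}=\bar f$. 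Thus $\{\bar h',\cdot\}_{nh}^{al}$ coincides with the action on $C^\infty(\bar{\mathcal B})$ of a well-defined vector field on $\bar{\mathcal B}$, giving the affine-derivation property $a.3$.

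Specializing $\bar h'=\bar h$ and invoking Theorem \ref{th5.3} (which gives $\mathcal T\mu\circ\mathcal P^+(X_{F_h}^{\Omega_{\widetilde{\mathcal A}}})|_{\mathcal B^+}=-\bar R_{nh}\circ\mu_{\mathcal B^+}$), the above formula collapses to
$$\{\bar h,\bar f\}_{nh}^{al}\circ\mu_{\mathcal B^+}=\rho_{\widetilde{\mathcal A}}^{\tau_{\bar{\mathcal B}}}(\bar R_{nh})(\bar f)\circ\mu_{\mathcal B^+}=\dot{\bar f}\circ\mu_{\mathcal B^+},$$
and surjectivity of $\mu_{\mathcal B^+}$ gives $b)$. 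I expect the main technical obstacle to lie in the careful bookkeeping of extensions --- verifying that the linear part of $\{\bar h',\cdot\}_{nh}$ really is independent of all choices (of $\bar h_0$, of its extension $h_0$, and of the extension $f$ of $\bar f$), which amounts to re-running Lemmas \ref{lema5.5}--\ref{lema5.6} at the level of the linear part and uses crucially that $1_{\mathcal A}$ is a $1$-cocycle of $\widetilde{\mathcal A}$.
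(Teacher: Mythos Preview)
Your proposal is correct and follows essentially the same route as the paper. The paper proves bi-affineness by explicitly writing out the affine-linear, linear-affine and bilinear parts via the identity $F_{h''+f''}=F_{h''}+f''\circ\mu$ and then checks each is well defined, whereas you package this more succinctly by first observing that $h\mapsto F_h$ is affine and then composing with linear maps; for $a.3$ and $b)$ your computation --- reducing $\{\bar h',\cdot\}_{nh}^{al}$ to the action of the vector field $\rho_{\widetilde{\mathcal A}}^{\tau_{\bar{\mathcal B}}}\bigl(-\mathcal T\mu\circ\mathcal P^+(X_{F_{h'}}^{\Omega_{\widetilde{\mathcal A}}})\bigr)$ and then invoking Theorem \ref{th5.3} --- is exactly the paper's argument leading to (\ref{alnh}).
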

\begin{proof} $a)$ It is clear that $\{\cdot,\cdot\}_{nh}$ is
skew-symmetric.

Next, we will prove that $\{\cdot,\cdot\}_{nh}$ is a bi-affine
map. Suppose that $\bar{h}'$ and $\bar{h}''$ are sections of the
AV-bundle $\mu_{{\mathcal B}^+}:{\mathcal B}^+\to\bar{\mathcal B}$
and that $\bar{f}'$ and $\bar{f}''$ are real functions on
$\bar{\mathcal B}$. If $h',h''\in\Gamma(\mu)$ are arbitrary
extensions of $\bar{h}'$ and $\bar{h}''$, respectively, and
$f',f''\in C^\infty(V^*)$ are arbitrary extensions of $\bar{f}'$
and $\bar{f}''$, respectively, then we can consider the real
functions on ${\mathcal B}^+$ given by
$$\begin{array}{rcl}
\Omega_{\widetilde{\mathcal
A}}({X_{F_{h'}}^{\Omega_{\widetilde{\mathcal A}}}}_{|{\mathcal
B}^+},{\mathcal
P}^+({X_{(f''\circ\mu)}^{\Omega_{\widetilde{\mathcal
A}}}}_{|{\mathcal B}^+}))&=&-\Omega_{\widetilde{\mathcal
A}}({X_{(f''\circ\mu)}^{\Omega_{\widetilde{\mathcal
A}}}}_{|{\mathcal B}^+},{\mathcal
P}^+({X_{F_{h'}}^{\Omega_{\widetilde{\mathcal A}}}}_{|{\mathcal
B}^+})),\\[6pt]
\Omega_{\widetilde{\mathcal
A}}({X_{(f'\circ\mu)}^{\Omega_{\widetilde{\mathcal
A}}}}_{|{\mathcal B}^+},{\mathcal
P}^+({X_{F_{h''}}^{\Omega_{\widetilde{\mathcal A}}}}_{|{\mathcal
B}^+}))&=&-\Omega_{\widetilde{\mathcal
A}}({X_{F_{h''}}^{\Omega_{\widetilde{\mathcal A}}}}_{|{\mathcal
B}^+},{\mathcal
P}^+({X_{(f'\circ\mu)}^{\Omega_{\widetilde{\mathcal
A}}}}_{|{\mathcal
B}^+})),\\[6pt]
\Omega_{\widetilde{\mathcal
A}}({X_{(f'\circ\mu)}^{\Omega_{\widetilde{\mathcal
A}}}}_{|{\mathcal B}^+},{\mathcal
P}^+({X_{(f''\circ\mu)}^{\Omega_{\widetilde{\mathcal
A}}}}_{|{\mathcal B}^+}))&=&-\Omega_{\widetilde{\mathcal
A}}({X_{(f''\circ\mu)}^{\Omega_{\widetilde{\mathcal
A}}}}_{|{\mathcal B}^+},{\mathcal
P}^+({X_{(f'\circ\mu)}^{\Omega_{\widetilde{\mathcal
A}}}}_{|{\mathcal B}^+})).
\end{array}$$

Now, proceeding as in the proof of Lemma \ref{lema5.5}, we deduce
that these functions don't depend on the chosen extensions
$h',h''$ and $f',f''$ of $\bar{h}',\bar{h}''$ and
$\bar{f}',\bar{f}''$, respectively.

Thus, we can introduce the linear maps
$$\widetilde{\{\bar{h}',\cdot\}_{nh}^{al}}:C^\infty(\bar{\mathcal B})\to C^\infty({\mathcal B}^+),
\;\;\bar{f}''\mapsto
\widetilde{\{\bar{h}',\bar{f}''\}_{nh}^{al}}=\Omega_{\widetilde{\mathcal
A}}({X_{F_{h'}}^{\Omega_{\widetilde{\mathcal A}}}}_{|{\mathcal
B}^+},{\mathcal
P}^+({X_{(f''\circ\mu)}^{\Omega_{\widetilde{\mathcal
A}}}}_{|{\mathcal B}^+})),$$
$$\widetilde{\{\cdot,\bar{h}''\}_{nh}^{la}}:C^\infty(\bar{\mathcal B})\to C^\infty({\mathcal B}^+),
\;\;\bar{f}'\mapsto
\widetilde{\{\bar{f}',\bar{h}''\}_{nh}^{la}}=\Omega_{\widetilde{\mathcal
A}}({X_{(f'\circ\mu)}^{\Omega_{\widetilde{\mathcal
A}}}}_{|{\mathcal B}^+},{\mathcal
P}^+({X_{F_{h''}}^{\Omega_{\widetilde{\mathcal A}}}}_{|{\mathcal
B}^+}))$$
\hspace{8.35cm}$=-\widetilde{\{\bar{h}'',\bar{f}'\}_{nh}^{al}},$

and the bilinear map
$$\begin{array}{rcl}
\widetilde{\{\cdot,\cdot\}_{nh}^{ll}}:C^\infty(\bar{\mathcal
B})\times C^\infty(\bar{\mathcal B})&\to& C^\infty({\mathcal
B}^+)\\[6pt]
(\bar{f}',\bar{f}'')&\mapsto&
\widetilde{\{\bar{f}',\bar{f}''\}_{nh}^{ll}}=\Omega_{\widetilde{\mathcal
A}}({X_{(f'\circ\mu)}^{\Omega_{\widetilde{\mathcal
A}}}}_{|{\mathcal B}^+},{\mathcal
P}^+({X_{(f''\circ\mu)}^{\Omega_{\widetilde{\mathcal
A}}}}_{|{\mathcal B}^+})).\end{array}$$

Using that
$$F_{(h''+f'')}=F_{h''}+(f''\circ\mu),$$
we obtain that
$$\{\bar{h}',\bar{h}''+\bar{f}''\}_{nh}\circ\mu_{{\mathcal B}^+}=\{\bar{h}',\bar{h}''\}_{nh}\circ\mu_{{\mathcal
B}^+}+\widetilde{\{\bar{h}',\bar{f}''\}_{nh}^{al}}.$$ This implies
that there exists ${\{\bar{h}',\bar{f}''\}_{nh}^{al}}\in
C^\infty(\bar{\mathcal B})$ such that
$$\{\bar{h}',\bar{f}''\}_{nh}^{al}\circ\mu_{{\mathcal B}^+}=\widetilde{\{\bar{h}',\bar{f}''\}_{nh}^{al}}$$
and, in addition,
\begin{equation}\label{Affder}
\{\bar{h}',\bar{h}''+\bar{f}''\}_{nh}=\{\bar{h}',\bar{h}''\}_{nh}
+ {\{\bar{h}',\bar{f}''\}_{nh}^{al}}.
\end{equation}
Since
$\widetilde{\{\cdot,\bar{h}''\}_{nh}^{la}}=-\widetilde{\{\bar{h}'',\cdot\}_{nh}^{al}}$,
we have that there exists ${\{\bar{f}',\bar{h}''\}_{nh}^{la}}\in
C^\infty(\bar{\mathcal B})$ such that
$$\{\bar{f}',\bar{h}''\}_{nh}^{la}\circ\mu_{{\mathcal B}^+}=\widetilde{\{\bar{f}',\bar{h}''\}_{nh}^{la}}$$
and
\begin{equation}\label{Affizq}
\{\bar{h}'+\bar{f}',\bar{h}''\}_{nh}=\{\bar{h}',\bar{h}''\}_{nh} +
{\{\bar{f}',\bar{h}''\}_{nh}^{la}}.
\end{equation}

On the other hand, from (\ref{Affder}) and (\ref{Affizq}), it
follows that
$$\{\bar{h}'+\bar{f}',\bar{h}''+\bar{f}''\}_{nh}\circ\mu_{{\mathcal B}^+}=(\{\bar{h}',\bar{h}''\}_{nh}
+\{\bar{h}',\bar{f}''\}_{nh}^{al}
+\{\bar{f}',\bar{h}''\}_{nh}^{la})\circ\mu_{{\mathcal B}^+}+
\widetilde{\{\bar{f}',\bar{f}''\}_{nh}^{ll}}.$$

Therefore, there exists ${\{\bar{f}',\bar{f}''\}_{nh}^{ll}}\in
C^\infty(\bar{\mathcal B})$ such that
$$\{\bar{f}',\bar{f}''\}_{nh}^{ll}\circ\mu_{{\mathcal B}^+}=\widetilde{\{\bar{f}',\bar{f}''\}_{nh}^{ll}}$$
and
$$\{\bar{h}'+\bar{f}',\bar{h}''+\bar{f}''\}_{nh}=\{\bar{h}',\bar{h}''\}_{nh}
+\{\bar{h}',\bar{f}''\}_{nh}^{al}
+\{\bar{f}',\bar{h}''\}_{nh}^{la}+
{\{\bar{f}',\bar{f}''\}_{nh}^{ll}}.$$ This proves that the map
$\{\cdot,\cdot\}_{nh}$ is bi-affine.

Moreover, the linear map associated with the affine map
$\{\bar{h}',\cdot\}_{nh}$ is just the map
$\{\bar{h}',\cdot\}_{nh}^{al}:C^\infty(\bar{\mathcal B})\to
C^\infty(\bar{\mathcal B})$ and we have that
$$\{\bar{h}',\bar{f}\}_{nh}^{al}\circ\mu_{{\mathcal
B}^+}=-d^{{\mathcal T}^{\widetilde{\mathcal A}}{\mathcal
A}^+}(f\circ\mu)_{|{\mathcal B}^+}({\mathcal
P}^+({X_{F_{h'}}^{\Omega_{\widetilde{\mathcal A}}}}_{|{\mathcal
B}^+})),$$ for $\bar{f}\in C^\infty(\bar{\mathcal B})$, where
$f\in C^\infty(V^*)$ is an arbitrary extension of $\bar{f}$. Thus,
since the pair $({\mathcal T}\mu,\mu)$ is a Lie algebroid
morphism, it follows that
\begin{equation}\label{mumorph}
\{\bar{h}',\bar{f}\}_{nh}^{al}(\mu_{{\mathcal
B}^+}(\varphi_x))=-(({\mathcal T}\mu,\mu)^*(d^{{\mathcal
T}^{\widetilde{\mathcal A}}V^*}f))(\varphi_x)({\mathcal
P}^+(X_{F_{h'}}^{\Omega_{\widetilde{\mathcal A}}})(\varphi_x)).
\end{equation}
In addition, proceeding as in Section \ref{sec5.1} (see
(\ref{5.6'})), we deduce that there exists a section
$R_{h'}\in\Gamma(\tau_{\widetilde{\mathcal A}}^{\tau_V^*})$ such
that
\begin{equation}\label{siempre}
({\mathcal
T}_{\varphi_x}\mu)(X_{F_{h'}}^{\Omega_{\widetilde{\mathcal
A}}}(\varphi_x))=-R_{h'}(\mu_{{\mathcal B}^+}(\varphi_x)),\mbox{
for }\varphi_x\in{\mathcal B}^+_x,
\end{equation}
and, using (\ref{Pconm}), we obtain that
\begin{equation}\label{P+Pbar}
({\mathcal T}_{\varphi_x}\mu)({\mathcal P}^+_{\varphi_x}(
X_{F_{h'}}^{\Omega_{\widetilde{\mathcal
A}}}(\varphi_x)))=-\bar{\tilde{{\mathcal P}}}_{\mu_{{\mathcal
B}^+}(\varphi_x)}( R_{h'}(\mu_{{\mathcal B}^+}(\varphi_x))),\mbox{
for }\varphi_x\in{\mathcal B}^+_x.
\end{equation}
Therefore, from (\ref{mumorph}), (\ref{P+Pbar}) and since
${\mathcal T}^{\widetilde{\mathcal A}}\bar{\mathcal B}$ is a Lie
subalgebroid of ${\mathcal T}^{\widetilde{\mathcal A}}V^*$, we
conclude that
\begin{equation}\label{alnh}
\{\bar{h}',\bar{f}\}_{nh}^{al}=\rho_{\widetilde{\mathcal
A}}^{\tau_{\bar{\mathcal B}}}(\bar{\tilde{\mathcal P}}
(R_{h'}))(\bar{f}).
\end{equation}
Consequently, $\{\bar{h}',\cdot\}_{nh}^{al}$ is a vector field on
$\bar{\mathcal B}$ and $\{\bar{h}',\cdot\}_{nh}$ is an affine
derivation.

$b)$ It follows using (\ref{alnh}) and the fact that the solution
of the constrained Hamilton equations is the section
$\bar{\tilde{\mathcal P}}({R_h}_{|\bar{\mathcal B}})$.

\end{proof}

Now, let $\bar{h}',\bar{h}''$ be two sections of the AV-bundle
$\mu_{{\mathcal B}^+}:{\mathcal B}^+\to\bar{\mathcal B}$ and
$h',h''\in\Gamma(\mu)$ be two arbitrary extensions to $V^*$ of
$\bar{h}'$ and $\bar{h}''$, respectively. Then, we may consider
the nonholonomic bracket $\{\bar{h}',\bar{h}''\}_{nh}$ of the
sections $\bar{h}'$ and $\bar{h}''$ and the aff-Poisson bracket
$\{h',h''\}$ of the sections $h'$ and $h''$ defined in
(\ref{eq3.82}). Using (\ref{2.2'}), (\ref{eq3.81}), (\ref{locP+}),
(\ref{sin+}), (\ref{eq5.26'}) and (\ref{siempre}), we obtain that
$\{\bar{h}',\bar{h}''\}_{nh}$ and $\{h',h''\}$ are locally related
by the following condition
$$\begin{array}{rcl}
\{\bar{h}',\bar{h}''\}_{nh}&=&\Big(\{h',h''\}-\bar{\mathcal
C}_{ab}\bar{\mathcal
C}_{cd}\{\psi^b,\psi^d\}_h\,\bar{\mu}^c(R_{h''})\bar{\mu}^a(R_{h'})-\bar{\mathcal
C}_{ab}\{\psi^b,h''\}_V\bar{\mu}^a(R_{h'})\\[6pt]
&&+\bar{\mathcal
C}_{ab}\{\psi^a,h'\}_V\bar{\mu}^b(R_{h''})\Big)_{|\bar{\mathcal
B}},
\end{array}$$

where $\{\cdot,h''\}_V:C^\infty(V^*)\to C^\infty(V^*)$ is the
linear map associated with the affine map
$\{\cdot,h''\}:\Gamma(\mu)\to C^\infty(V^*)$ (see (\ref{eq3.83})).
Thus, if the local expressions of the sections $h'$ and $h''$ are
$h'(x^i,y_\alpha)=(x^i,-H'(x^j,y_\beta),y_\alpha)$ and
$h''(x^i,y_\alpha)=(x^i,-H''(x^j,y_\beta),y_\alpha)$, then, using
(\ref{Omegahloc}), (\ref{rh}), (\ref{eq3.81}), (\ref{eq3.82}),
(\ref{eq3.83}), (\ref{eq3.85}), (\ref{mubarraa}) and
(\ref{eq5.17'}), we deduce that $$\begin{array}{l}
\kern-5pt\{\bar{h}',\bar{h}''\}_{nh}\kern-2pt=\rho_0^i\displaystyle\frac{\partial(H'\kern-4pt-\kern-2ptH'')}{\partial
x^i} +\rho^i_\alpha(\displaystyle\frac{\partial H'}{\partial
x^i}\frac{\partial H''}{\partial y_\alpha}-\frac{\partial
H'}{\partial y_\alpha}\frac{\partial H''}{\partial
x^i})+C_{0\alpha}^\gamma
y_\gamma\frac{\partial(H'\kern-4pt-\kern-2ptH'')}{\partial
y_\alpha}- C_{\alpha\beta}^\gamma
y_\gamma\displaystyle\frac{\partial H'}{\partial
y_\alpha}\frac{\partial H''}{\partial
y_\beta}\\[10pt]
\kern-5pt+\bar{\mathcal C}_{ab}\bar{\mathcal
C}_{cd}\Big[\rho_\alpha^i\Big(\displaystyle\frac{\partial
\psi^b}{\partial x^i}\frac{\partial \psi^d}{\partial
y_\alpha}-\frac{\partial \psi^b}{\partial y_\alpha}\frac{\partial
\psi^d}{\partial x^i}\Big)+C_{\alpha\beta}^\gamma
y_\gamma\displaystyle\frac{\partial \psi^b}{\partial
y_\beta}\frac{\partial \psi^d}{\partial
y_\alpha}\Big]\frac{\partial \psi^c}{\partial y_\gamma}{\mathcal
H}^{\gamma\beta}\frac{\partial (H''-H)}{\partial
y_\beta}\frac{\partial \psi^a}{\partial y_\theta}{\mathcal
H}^{\theta\nu}\frac{\partial (H'-H)}{\partial y_\nu}\\[10pt]
\kern-5pt+\bar{\mathcal C}_{ab}\Big[
\rho_0^i\displaystyle\frac{\partial \psi^b}{\partial
x^i}-\rho_\alpha^i\Big(\frac{\partial \psi^b}{\partial
y_\alpha}\frac{\partial H''}{\partial x^i}-\frac{\partial
\psi^b}{\partial x^i}\frac{\partial H''}{\partial
y_\alpha}\Big)+\frac{\partial \psi^b}{\partial
y_\alpha}y_\gamma(C_{0\alpha}^\gamma
-C_{\alpha\beta}^\gamma\frac{\partial H''}{\partial
y_\beta})\Big]\frac{\partial \psi^a}{\partial y_\gamma}{\mathcal
H}^{\gamma\beta}\frac{\partial (H'-H)}{\partial y_\beta}\\[10pt]
\kern-5pt-\bar{\mathcal C}_{ab}\Big[
\rho_0^i\displaystyle\frac{\partial \psi^a}{\partial
x^i}-\rho_\alpha^i\Big(\frac{\partial \psi^a}{\partial
y_\alpha}\frac{\partial H'}{\partial x^i}-\frac{\partial
\psi^a}{\partial x^i}\frac{\partial H'}{\partial
y_\alpha}\Big)+\frac{\partial \psi^a}{\partial
y_\alpha}y_\gamma(C_{0\alpha}^\gamma
-C_{\alpha\beta}^\gamma\frac{\partial H'}{\partial
y_\beta})\Big]\frac{\partial \psi^b}{\partial y_\gamma}{\mathcal
H}^{\gamma\beta}\frac{\partial (H''-H)}{\partial y_\beta}.
\end{array}$$

\section{Examples}

\subsection{Lagrangian systems with linear nonholonomic
constraints on a Lie algebroid}

In this section, we will discuss the particular case of Lagrangian
systems with linear nonholonomic constraints on a Lie algebroid
(see \cite{CoLeMaMa}).

First of all, we will recall an standard construction which will
be useful in the sequel.

Let $\tau_E:E\to N$ be a Lie algebroid over a manifold $N$ with
Lie algebroid structure $(\lcf\cdot,\cdot\rcf_E,\rho_E)$. Then,
the vector bundle $\tilde{\tau}_E:E\times\R\to N$ admits a natural
Lie algebroid structure $(\lcf\cdot,\cdot\rcf_{E\times\R},$ $
\rho_{E\times\R})$ given by
$$\begin{array}{rcl}
\lcf(X,f),(Y,g)\rcf_{E\times\R}&=&(\lcf
X,Y\rcf_E,\rho_E(X)(g)-\rho_E(Y)(f)),\\[4pt]
\rho_{E\times\R}(X,f)&=&\rho_E(X),
\end{array}$$
for $(X,f),(Y,g)\in\Gamma(\tau_E)\times C^\infty(N)$.

Next, suppose that $\tau_V:V\to M$ is a Lie algebroid over a
manifold $M$ with Lie algebroid structure
$(\lcf\cdot,\cdot\rcf_V,\rho_V)$. Then, the affine bundle
$\tau_{\mathcal A}=\tau_V:{\mathcal A}=V\to M$ is a Lie affgebroid
over $M$. In fact, the bidual bundle $\tau_{\widetilde{\mathcal
A}}: \widetilde{\mathcal A}\to M$ may be identified with the Lie
algebroid $\tilde{\tau}_V:V\times\R\to M$. In addition, it is easy
to prove that the prolongation ${\mathcal T}^{\widetilde{\mathcal
A}}{\mathcal A}$ of $\widetilde{\mathcal A}$ over the fibration
$\tau_{\mathcal A}:{\mathcal A}\to M$ is isomorphic to the Lie
algebroid $\tilde{\tau}_V^{\tau_V}:{\mathcal T}^VV\times\R\to V$.

We will denote by $S:{\mathcal T}^VV\to{\mathcal T}^VV$ the
vertical endomorphism on ${\mathcal T}^VV$ (see \cite{LMM,EMar})
and by $E_0$ the section $(0,1)$ of
$\tilde{\tau}_V^{\tau_V}:{\mathcal T}^VV\times\R\to V$.

Now, let $L:V\to\R$ be a Lagrangian function on $V$ and
$E_L=\Delta(L)-L\in C^\infty(V)$ be the Lagrangian energy (here,
$\Delta$ is the Liouville vector field of $V$). Then, the dual
bundle to $\tau_{\widetilde{\mathcal A}}^{\tau_{\mathcal
A}}:{\mathcal T}^{\widetilde{\mathcal A}}{\mathcal A}\to{\mathcal
A}$ is isomorphic to the vector bundle
$(\tilde{\tau}_V^{\tau_V})^*:({\mathcal T}^VV)^*\times\R\to V$
and, under this identification, the 1-cocycle $\phi_0$ is the
section $(0,1)$ and
\begin{equation}\label{Na-a}
\Omega_L=\omega_L+(0,1)\wedge d^{{\mathcal T}^VV}E_L,
\end{equation}
where $\omega_L:V\to \wedge^2({\mathcal T}^VV)^*$ is the
Poincar\'{e}-Cartan 2-section associated with the Lagrangian $L$
on $V$ (\cite{LMM,EMar}).

Next, suppose that $\tau_U:U\to M$ is a vector subbundle of
$\tau_V:V\to M$. Then, $\tau_{\mathcal B}=\tau_U:{\mathcal B}=U\to
M$ is an affine subbundle of $\tau_{\mathcal A}:{\mathcal A}\to M$
and we can consider the affine constrained Lagrangian system
$(L,{\mathcal B})$ on ${\mathcal A}$. Moreover, if $\tilde{X}$ is
a section of $\tau_{\widetilde{\mathcal A}}^{\tau_{\mathcal A}}=
\tilde{\tau}_V^{\tau_V}:{\mathcal T}^{\widetilde{\mathcal
A}}{\mathcal A}\cong {\mathcal T}^VV\times\R\to V$ then, from
(\ref{Na-a}), it follows that $\tilde{X}$ is a solution of the
Lagrange-d'Alembert equations for the affine nonholonomic system
$(L,{\mathcal B})$ if and only if
\begin{equation}\label{Lnheqs}
\begin{array}{c}
Y=\tilde{Y}_{|U}=(\tilde{X}-E_0)_{|U}\mbox{ is a SODE on }U,\\[4pt]
(i_Y\omega_L-d^{{\mathcal
T}^VV}E_L)_{|U}\in\Gamma(\tau_{S^*({\mathcal
T}^VU)^\circ}),\\[4pt]
Y_{|U}\in\Gamma(\tau_V^{\tau_U}).
\end{array}\end{equation}

(\ref{Lnheqs}) are just the Lagrange-d'Alembert equations for the
linear nonholonomic system $(L,U)$ in the terminology of
\cite{CoLeMaMa}.

Thus, the affine constrained Lagrangian system $(L,{\mathcal B})$
on ${\mathcal A}$ is regular if and only if the linear
nonholonomic system $(L,U)$ on $V$ is regular in the sense of
\cite{CoLeMaMa}.

Therefore, using the results in Section \ref{secSol} of this
paper, one directly deduces some results, for linear nonholonomic
systems on Lie algebroids, which were obtained in \cite{CoLeMaMa}
(see Section 3 in \cite{CoLeMaMa}).

Now, assume that the Lagrangian function $L:{\mathcal A}=V\to\R$
is hyperregular (that is, the Legendre transformation
$leg_L:{\mathcal A}=V\to V^*$ is a global diffeomorphism) and that
the affine constrained Lagrangian system $(L,{\mathcal B})$ is
regular. Then, the space ${\mathcal A}^+$ may be identified with
the product manifold $V^*\times\R$ and, under this identification,
we have that:
\begin{enumerate}
\item The vector bundle projection $\tau_{{\mathcal
A}^+}:{\mathcal A}^+\to M$ is the map
$\tilde{\tau}_V^*:V^*\times\R\to M$ given by
$$\tilde{\tau}_V^*(\alpha_x,t)=\tau_V^*(\alpha_x)=x,\mbox{ for
}(\alpha_x,t)\in V^*_x\times\R\mbox{ and }x\in M.$$
\item The fibration $\mu:{\mathcal A}^+\to V^*$ is the canonical
projection $pr_1:V^*\times\R\to V^*$ on the first factor and,
thus, the spaces $\Gamma(\mu)$ and $C^\infty(V^*)$ are isomorphic.
\item The Hamiltonian section $h=Leg_L\circ
leg_L^{-1}\in\Gamma(\mu)$ is the Hamiltonian energy $H\in
C^\infty(V^*)$ of the free system, that is, $H=E_L\circ
leg_L^{-1}:V^*\to\R$.
\item The canonical aff-Poisson bracket on the trivial AV-bundle $pr_1:V^*\times\R\to
V^*$ is just the linear Poisson bracket on $V^*$ induced by the
Lie algebroid structure on $V$.
\end{enumerate}

Next, let $\bar{\mathcal B}=\bar{U}$ be the Hamiltonian
constrained submanifold of $V^*$, that is, $\bar{\mathcal
B}=\bar{U}=leg_L(U)$. Then, the constrained AV-bundle
$\mu_{{\mathcal B}^+}:{\mathcal B}^+\to\bar{\mathcal B}$ may be
identified with the trivial AV-bundle
$pr_1:\bar{U}\times\R\to\bar{U}$. Therefore, the nonholonomic
bracket associated with the nonholonomic system $(L,{\mathcal B})$
is an almost-Poisson bracket on $\bar{U}$, i.e., a $\R$-bilinear
map
$$\{\cdot,\cdot\}_{nh}:C^\infty(\bar{U})\times
C^\infty(\bar{U})\to C^\infty(\bar{U})$$ which is skew-symmetric
and a derivation in each argument with respect to the standard
product of functions ($\{\cdot,\cdot\}_{nh}$ doesn't satisfy, in
general, the Jacobi identity).

Finally, since the restriction of the Legendre transformation to
$U$
$$leg_U={leg_L}_{|U}:U\to\bar{U}$$
is a global diffeomorphism, one may consider the corresponding
nonholonomic bracket on $U$, which we also denote by
$\{\cdot,\cdot\}_{nh}$, defined by
$$\{f,g\}_{nh}=\{f\circ leg_U^{-1},g\circ leg_U^{-1}\}_{nh}\circ
leg_U,\mbox{ for }f,g\in C^\infty(U).$$ This bracket was
introduced in \cite{CoLeMaMa} and its properties were discussed in
this paper (see Section 3.5 in \cite{CoLeMaMa}).

\subsection{Standard affine nonholonomic Lagrangian systems}

Let $\tau:M\to\R$ be a fibration and ${\mathcal A}=J^1\tau$ be the
1-jet bundle of local sections of $\tau:M\to\R$. Then, as we know
(see Section \ref{secaff}), the affine bundle $\tau_{\mathcal
A}=\tau_{1,0}:{\mathcal A}=J^1\tau\to M$ is a Lie affgebroid
modelled on the Lie algebroid $\tau_V=(\pi_M)_{|V\tau}:V=V\tau\to
M$. Moreover, the bidual bundle $\widetilde{\mathcal A}$ may be
identified with the standard Lie algebroid $\pi_M:TM\to M$ and,
under this identification, the 1-cocycle $1_{\mathcal A}$ of
$\widetilde{\mathcal A}$ is just the closed 1-form $\tau^*(dt)$.

Note that if $(t,q^\alpha)$ are local coordinates on $M$ which are
adapted to the fibration $\tau$ then
$\{e_0=\displaystyle\frac{\partial}{\partial
t},e_\alpha=\frac{\partial}{\partial q^\alpha}\}$ is a local basis
of sections of the Lie algebroid $\tau_{\widetilde{\mathcal
A}}=\pi_M:\widetilde{\mathcal A}=TM\to M$ such that $1_{\mathcal
A}(e_0)=1$ and $1_{\mathcal A}(e_\alpha)=0$. Furthermore, we have
that
\begin{equation}\label{Eqsstan}
\rho_{\widetilde{\mathcal
A}}(e_0)=\displaystyle\frac{\partial}{\partial
t},\;\;\;\rho_{\widetilde{\mathcal
A}}(e_\alpha)=\frac{\partial}{\partial q^\alpha},\;\;\;\lcf
e_0,e_\alpha\rcf_{\widetilde{\mathcal A}}=\lcf
e_\alpha,e_\beta\rcf_{\widetilde{\mathcal A}}=0,
\end{equation}
for all $\alpha$ and $\beta$.

On the other hand, it is easy to prove that the prolongation
$\tau_{\widetilde{\mathcal A}}^{\tau_{\mathcal A}}:{\mathcal
T}^{\widetilde{\mathcal A}}{\mathcal A}\to{\mathcal A}$ of
$\widetilde{\mathcal A}$ over the fibration $\tau_{\mathcal
A}:{\mathcal A}\to M$ is isomorphic to the standard Lie algebroid
$\pi_{J^1\tau}:T(J^1\tau)\to J^1\tau$. Thus, the space
$\Gamma(\tau_{\widetilde{\mathcal A}}^{\tau_{\mathcal A}})$ may be
identified with the set of vector fields on $J^1\tau$.

Now, let $L:J^1\tau\to\R$ be a regular Lagrangian function. If
$\tau_1=\tau\circ\tau_{1,0}:J^1\tau\to\R$ is the canonical
projection then, under the above identifications, the 1-cocycle
$\phi_0$ of the Lie algebroid $\tau_{\widetilde{\mathcal
A}}^{\tau_{\mathcal A}}:{\mathcal T}^{\widetilde{\mathcal
A}}{\mathcal A}\to{\mathcal A}$ is the closed 1-form
$\eta=\tau_1^*(dt)$ on $J^1\tau$. Moreover, the
Poincar\'{e}-Cartan 2-section $\Omega_L$ associated with $L$ is
just the Poincar\'{e}-Cartan 2-form on $J^1\tau$.

Next, suppose that $\tau_{\mathcal B}:{\mathcal B}\to M$ is an
affine subbundle of $\tau_{\mathcal A}=\tau_{1,0}:{\mathcal
A}=J^1\tau\to M$. Then, we can consider the affine constrained
Lagrangian system $(L,{\mathcal B})$.

In addition, we have that a vector field $X$ on $J^1\tau$ is a
solution of the Lagrange-d'Alembert equations if and only if
\begin{equation}\label{DynEqs}
(i_X\Omega_L)_{|{\mathcal B}}\in\Gamma(\tau_{S^*(T{\mathcal
B})^\circ}),\;\;(i_X\eta)_{|{\mathcal B}}=1,\;\;X_{|{\mathcal
B}}\in{\mathcal X}({\mathcal B}).
\end{equation}

These equations were considered in \cite{LMdM} (see also
\cite{LeMaMa2}). Note that the first two equations imply that $X$
is a SODE on ${\mathcal B}$.

From (\ref{DynEqs}), it follows that the affine nonholonomic
Lagrangian system $(L,{\mathcal B})$ on ${\mathcal A}$ is regular
if and only if it is regular in the sense of \cite{LMdM}. Thus,
using the results in Section \ref{secSol}, one directly deduces
some results, for standard affine nonholonomic Lagrangian systems
which were obtained in \cite{LMdM} (see also \cite{LeMaMa2}).

On the other hand, we have that the space ${\mathcal A}^+$ may be
identified with the cotangent bundle $T^*M$ to $M$ and, under this
identification, the fibration $\mu:{\mathcal A}^+\to V^*$ is just
the dual map $i^*_{V\tau}:T^*M\to V^*\tau$ of the canonical
inclusion $i_{V\tau}:V\tau\to TM$.

Next, assume that the Lagrangian function $L:{\mathcal
A}=J^1\tau\to\R$ is hyperregular and that the affine constrained
Lagrangian system $(L,{\mathcal B})$ is regular. Then, we can
consider the Hamiltonian constrained submanifold $\bar{\mathcal
B}=leg_L({\mathcal B})$ of $V^*\tau$ and the constrained AV-bundle
$\mu_{{\mathcal B}^+}:{\mathcal
B}^+=(i^*_{V\tau})^{-1}(\bar{\mathcal B})\to\bar{\mathcal B}$.

Now, denote by $h:V^*\tau\to T^*M$ the Hamiltonian section induced
by the hyperregular Lagrangian $L$ ($h=Leg_L\circ leg_L^{-1}$) and
suppose that $\bar{h}'$ and $\bar{h}''$ are two sections of the
AV-bundle $\mu_{{\mathcal B}^+}:{\mathcal B}^+\to\bar{\mathcal B}$
and that $h',h'':V^*\tau\to T^*M\in\Gamma(i_{V\tau}^*)$ are
arbitrary extensions of $\bar{h}'$ and $\bar{h}''$, respectively.
Then, we will obtain the local expression of the nonholonomic
bracket of $\bar{h}'$ and $\bar{h}''$. For this purpose, we take
canonical local coordinates $(t,q^\alpha,p_\alpha)$ (respectively,
$(t,q^\alpha,p,p_\alpha)$) on $V^*\tau$ (respectively, $T^*M$)
such that
$$\begin{array}{rcl}
h(t,q^\alpha,p_\alpha)&=&(t,q^\alpha,-H(t,q^\beta,p_\beta),p_\alpha),\\[6pt]
h'(t,q^\alpha,p_\alpha)&=&(t,q^\alpha,-H'(t,q^\beta,p_\beta),p_\alpha),\\[6pt]
h''(t,q^\alpha,p_\alpha)&=&(t,q^\alpha,-H''(t,q^\beta,p_\beta),p_\alpha)
\end{array}$$

and the local equations defining $\bar{\mathcal B}$ as a
submanifold of $V^*\tau$ are
$$\psi^a(t,q^\alpha,p_\alpha)=0,\;\;a\in\{1,\dots,r\}.$$

Then, using the general local expression of the nonholonomic
bracket (see Section \ref{secnhbrac}) and (\ref{Eqsstan}), we
deduce that
$$\begin{array}{rcl}
\{\bar{h}',\bar{h}''\}_{nh}&=&\Big\{\displaystyle\frac{\partial
H'}{\partial t}-\frac{\partial H''}{\partial
t}+\displaystyle\frac{\partial H'}{\partial
q^\alpha}\frac{\partial H''}{\partial p_\alpha}-\frac{\partial
H'}{\partial p_\alpha}\frac{\partial
H''}{\partial q^\alpha}\\[10pt]
&+& \bar{C}_{ab}\bar{C}_{cd}\Big( \displaystyle\frac{\partial
\psi^b}{\partial q^\alpha}\frac{\partial \psi^d}{\partial
p_\alpha}-\frac{\partial\psi^b}{\partial
p_\alpha}\frac{\partial\psi^d}{\partial
q^\alpha}\Big)\frac{\partial\psi^c}{\partial p_\gamma}{\mathcal
H}^{\gamma\beta}\frac{\partial(H''-H)}{\partial
p_\beta}\frac{\partial \psi^a}{\partial p_\theta}{\mathcal
H}^{\theta\nu}\frac{\partial(H'-H)}{\partial p_\nu}\\[10pt]
&+&\bar{\mathcal
C}_{ab}\Big[\displaystyle\frac{\partial\psi^b}{\partial t}+\Big(
\displaystyle\frac{\partial \psi^b}{\partial
q^\alpha}\frac{\partial H''}{\partial
p_\alpha}-\frac{\partial\psi^b}{\partial p_\alpha}\frac{\partial
H''}{\partial q^\alpha}\Big)\Big]\frac{\partial\psi^a}{\partial
p_\gamma}{\mathcal H}^{\gamma\beta}\frac{\partial(H'-H)}{\partial
p_\beta}\\[10pt]
&-&\bar{\mathcal
C}_{ab}\Big[\displaystyle\frac{\partial\psi^a}{\partial t}+\Big(
\displaystyle\frac{\partial \psi^a}{\partial
q^\alpha}\frac{\partial H'}{\partial
p_\alpha}-\frac{\partial\psi^a}{\partial p_\alpha}\frac{\partial
H'}{\partial q^\alpha}\Big)\Big]\frac{\partial\psi^b}{\partial
p_\gamma}{\mathcal H}^{\gamma\beta}\frac{\partial(H''-H)}{\partial
p_\beta} \Big\}_{|\bar{\mathcal B}}.
\end{array}$$

Finally, we remark that the linear-linear part of the bi-affine
map $\{\cdot,\cdot\}_{nh}:\Gamma(\mu_{{\mathcal
B}^+})\times\Gamma(\mu_{{\mathcal B}^+})\to C^\infty(\bar{\mathcal
B})$ is an almost-Poisson bracket on $\bar{\mathcal B}$. This
bracket was considered in \cite{CLMM}.

\subsection{A homogeneous rolling ball without sliding
on a rotating table with time-dependent angular
velocity}\label{example}

\cite{BKMM,CLMM,LM,NF} Consider the vector bundle
$\tau_{\widetilde{\mathcal A}}: \widetilde{\mathcal A}\rightarrow
M$ where $\widetilde{\mathcal A}=T\R^3\times \R^3$, $M=\R^3$ and
$\tau_{\widetilde{\mathcal A}}$ is the composition of the
projection over the first factor with the canonical projection
$\pi_{\R^3}:T\R^3\to\R^3$. Denote by $(t, x, y)$ the coordinates
on $M$. A global basis of sections of $\tau_{\widetilde{\mathcal
A}}$ can be construct as follows
\[
\Big\{ e_0=(\frac{\partial}{\partial t}, 0),\;
e_1=(\frac{\partial}{\partial x}, 0),\;
e_2=(\frac{\partial}{\partial y}, 0),\; e_3=(0, u_1),\;  e_4=(0,
u_2), \;e_5=(0, u_3) \Big\},
\]
where $u_1, u_2, u_3:\R^3\to \R^3$ are the constant maps $u_1(t,
x, y)=(1,0,0)$, $u_2(t, x, y)=(0,1,0)$ and $u_3(t, x, y)=(0,0,1)$.
Consider the  projection  $\rho_{\widetilde{\mathcal A}}:
T\R^3\times \R^3\to T\R^3$ over the first factor and the Lie
bracket on the space of sections $\Gamma(
\tau_{\widetilde{\mathcal A}})$ where the only non-zero Lie
brackets are
\[
\lcf e_4, e_3\rcf_{\widetilde{\mathcal A}}=e_5,\quad  \lcf e_5,
e_4\rcf_{\widetilde{\mathcal A}}=e_3 \;\;\mbox{ and }\;\; \lcf
e_3, e_5\rcf_{\widetilde{\mathcal A}}=e_4.
\]
Thus, $(\lcf \cdot, \cdot\rcf_{\widetilde{\mathcal
A}},\rho_{\widetilde{\mathcal A}})$ induces a Lie algebroid
structure on $\widetilde{\mathcal A}=T\R^3\times\R^3$. Denote by
$(t, x, y; \dot{t}, \dot{x}, \dot{y},$ $\omega_x,$ $ \omega_y,
\omega_z)$ the coordinates on $\widetilde{\mathcal A}$ induced by
$\{e_0,e_1,e_2,e_3,e_4,e_5\}$.

Moreover, $\phi: T\R^3\times\R^3\to \R$ given by $\phi(t, x, y;
\dot{t}, \dot{x}, \dot{y}, \omega_x, \omega_y, \omega_z)=\dot{t}$
is a 1-cocycle in the co\-rres\-ponding Lie algebroid cohomology
and, then, it induces a Lie affgebroid structure over ${\mathcal
A}=\phi^{-1}\{1\}\equiv \R\times T\R^2\times \R^3$. Note that the
Lie affgebroid structure on ${\mathcal A}= \R\times T\R^2\times
\R^3$ is a special type of Lie affgebroid structure called Atiyah
affgebroid structure (see Section 9.3.1 in \cite{IMPS} for a
general construction). Moreover, the affine bundle $\tau_{\mathcal
A}: \R\times T\R^2\times \R^3\rightarrow \R^3$ is modelled on the
vector bundle $\tau_V:V=\phi^{-1}\{0\}\equiv \R\times T\R^2\times
\R^3\to \R^3$. Thus, $(t, x, y; \dot{x}, \dot{y}, \omega_x,
\omega_y, \omega_z)$ may be considered as local coordinates on
${\mathcal A}$ and $V$.

In this case, from (\ref{tildeT}), we have that a basis of
sections of ${\mathcal T}^{\widetilde{{\mathcal A}}}{\mathcal A}$
is
\begin{eqnarray*}
&\Big\{\tilde{T}_0=(e_0, \displaystyle\frac{\partial}{\partial
t}), \; \tilde{T}_1=(e_1, \frac{\partial}{\partial x}), \;
\tilde{T}_2=(e_2, \frac{\partial}{\partial y}),\;
\tilde{T}_3=(e_3, 0), \quad \tilde{T}_4=(e_4,0), \;
\tilde{T}_5=(e_5, 0),\\
&\tilde{V}_1=(0, \displaystyle\frac{\partial}{\partial \dot{x}}),
\; \tilde{V}_2=(0, \frac{\partial}{\partial \dot{y}}),\;
\tilde{V}_3=(0, \frac{\partial}{\partial w_x}), \; \tilde{V}_4=(0,
\frac{\partial}{\partial w_y}), \; \tilde{V}_5=(0,
\frac{\partial}{\partial w_z})\Big\}.&
\end{eqnarray*}

Consider now the following mechanical problem. A (homogeneous)
sphere of radius $r>0$,  mass $m$ and inertia about any axis
$k^2,$ rolls without sliding on a horizontal table which rotates
with time-dependent angular velocity $\Omega(t)$ about a vertical
axis through one of its points. Apart from the constant
gravitational force, no other external forces are assumed to act
on the sphere. Therefore, the Lagrangian of the system corresponds
to the kinetic energy. Moreover,  observe that the kinetic energy
may be expressed as a Lagrangian $L: {\mathcal A}\to \R$:
\[
L(t, x, y;  \dot{x}, \dot{y}, \omega_x, \omega_y,
\omega_z)=\frac{1}{2}(m\dot{x}^2+m\dot{y}^2 + k^2(\omega_x^2 +
\omega_y^2 + \omega_z^2)), \] where $(\omega_x, \omega_y,
\omega_z)$ are the components of the angular velocity of the
sphere (see \cite{CLMM}, for more details).

After some straightforward calculations using (\ref{PCloc}), we
deduce that the Poincar\'{e}-Cartan sections associated with $L$
are given by:
\begin{eqnarray*}
\Theta_L&=& -L \phi_0+m\dot{x} \tilde{T}^1+m\dot{y}
\tilde{T}^2+k^2 \omega_x\tilde{T}^3 +k^2\omega_y
\tilde{T}^4+k^2\omega_z \tilde{T}^5,\\
\Omega_L&=& d^{{\mathcal T}^{\widetilde{{\mathcal A}}}{\mathcal
A}}L\wedge \phi_0+m \tilde{T}^1\wedge \tilde{V}^1 +m
\tilde{T}^2\wedge \tilde{V}^2+k^2\tilde{T}^3\wedge \tilde{V}^3
+k^2\tilde{T}^4\wedge \tilde{V}^4+k^2\tilde{T}^5\wedge
\tilde{V}^5\\
&&+k^2\omega_x \tilde{T}^5\wedge \tilde{T}^4+k^2\omega_y
\tilde{T}^3\wedge \tilde{T}^5+k^2\omega_z \tilde{T}^4\wedge
\tilde{T}^3.
\end{eqnarray*}

Since the ball is rolling without sliding on a rotating table then
the system is subjected to the affine constraints:
\[
\begin{array}{lcr}
 \Psi^1=\Omega(t) y+\dot{x}-rw_y,\\
 \Psi^2=-\Omega(t) x+\dot{y} + r\omega_x,
 \end{array}
 \]
which define an affine subbundle ${\mathcal B}$ of ${\mathcal A}$.
Then, we have that \begin{eqnarray*}
 d^{{\mathcal T}^{\widetilde{{\mathcal
A}}}{\mathcal A}}\Psi^1&=&\Omega'(t)y\phi_0+\Omega(t)\tilde{T}^2
+\tilde{V}^1-r\tilde{V}^4,\\
d^{{\mathcal T}^{\widetilde{{\mathcal A}}}{\mathcal A}}\Psi^2&=&
-\Omega'(t)x\phi_0-\Omega(t)\tilde{T}^1 +\tilde{V}^2+r\tilde{V}^3.
\end{eqnarray*}

Thus, the Lagrange-d'Alembert equations (see (\ref{3.4'})) for the
system determined by $(L, {\mathcal B})$ are:
$$\left\{ \begin{array}{rcl}
m\ddot{x}&=&-\lambda_1,\\
m\ddot{y}&=&-\lambda_2,\\
k^2\dot{\omega}_x&=&-r\lambda_2,\\
k^2\dot{\omega}_y&=&r\lambda_1,\\
k^2\dot{\omega}_z&=&0,
\end{array}\right.$$
and the constraints $\Psi^1=0$ and $\Psi^2=0$.

In this case, a  basis of sections of vector subbundle $F\subset
{\mathcal T}^{\widetilde{{\mathcal A}}}{\mathcal A}_{|{\mathcal
B}}\to {\mathcal B}$ is given by
\[\Big\{
Z_1=-\frac{1}{m}\tilde{V}_1+\frac{r}{k^2}\tilde{V}_4\; ,\;
Z_2=-\frac{1}{m}\tilde{V}_2-\frac{r}{k^2}\tilde{V}_3\Big\}
\]
and then, the matrix ${\mathcal C}$ is
\[
{\mathcal C}=\left(\begin{array}{cc}
-\displaystyle\frac{1}{m}-\frac{r^2}{k^2}&0\\
0&-\displaystyle\frac{1}{m}-\frac{r^2}{k^2}
\end{array}\right).
\]

The projector $P$ over ${\mathcal T}^{\widetilde{{\mathcal
A}}}{\mathcal B}$ is
\[
P=Id+\frac{mk^2}{k^2+mr^2}Z_1\otimes d^{{\mathcal
T}^{\widetilde{{\mathcal A}}}{\mathcal A}}\Psi^1 +
\frac{mk^2}{k^2+mr^2}Z_2\otimes d^{{\mathcal
T}^{\widetilde{{\mathcal A}}}{\mathcal A}}\Psi^2.
\]
Since the solution of the unconstrained dynamics is
\[
R_L=\tilde{T}_0+\dot{x} \tilde{T}_1+ \dot{y} \tilde{T}_2+\omega_x
\tilde{T}_3+\omega_y \tilde{T}_4+ \omega_z \tilde{T}_5,
\]
then the solution of the nonholonomic problem will be
\begin{eqnarray*}
R_{nh}&=&P(R_L)=\tilde{T}_0+\dot{x} \tilde{T}_1+ \dot{y}
\tilde{T}_2+\omega_x \tilde{T}_3+\omega_y \tilde{T}_4+ \omega_z
\tilde{T}_5\\
&&+\frac{mk^2}{k^2+mr^2}(\Omega'(t)y+\Omega(t)\dot{y})Z_1-\frac{mk^2}{k^2+mr^2}(\Omega'(t)x+\Omega(t)\dot{x})Z_2.\\
\end{eqnarray*}
Thus, \begin{eqnarray*} \rho^{\tau _{\mathcal
A}}_{\widetilde{\mathcal
A}}(R_{nh})\kern-3pt&\kern-5pt=\kern-5pt&\kern-3pt\frac{\partial}{\partial
t}+ \dot{x}\frac{\partial}{\partial x} +
\dot{y}\frac{\partial}{\partial y}
-\frac{k^2}{k^2+mr^2}(\Omega'(t)y+\Omega(t)\dot{y})\frac{\partial}{\partial
\dot{x}}
 +\frac{k^2}{k^2+mr^2}(\Omega'(t)x+\Omega(t)\dot{x})\frac{\partial}{\partial
\dot{y}}\\
&&
 +\frac{mr}{k^2+mr^2}(\Omega'(t)x+\Omega(t)\dot{x})\frac{\partial}{\partial
\omega_x}
+\frac{mr}{k^2+mr^2}(\Omega'(t)y+\Omega(t)\dot{y})\frac{\partial}{\partial
\omega_y} \end{eqnarray*}

and, therefore, the equations of motion of the nonholonomic
problem are $$\left\{ \begin{array}{rcl}
\ddot{x}&=&-\displaystyle\frac{k^2}{k^2+mr^2}(\Omega'(t)y+\Omega(t)\dot{y}),\\[8pt]
\ddot{y}&=&\displaystyle\frac{k^2}{k^2+mr^2}(\Omega'(t)x+\Omega(t)\dot{x}),\\[8pt]
\dot{\omega}_x&=&\displaystyle\frac{mr}{k^2+mr^2}(\Omega'(t)x+\Omega(t)\dot{x}),\\[8pt]
\dot{\omega}_y&=&\displaystyle\frac{mr}{k^2+mr^2}(\Omega'(t)y+\Omega(t)\dot{y}),\\[6pt]
\dot{\omega}_z&=&0.
\end{array}\right.$$

Now, we take the coordinates $(t,x,y;p_t,p_x,p_y,u_x,u_y,u_z)$ on
${\mathcal A}^+$ and the corresponding coordinates
$(t,x,y;p_x,p_y,u_x,u_y,u_z)$ on $V^*$. Then, we obtain that the
extended Legendre transformation $Leg_L:{\mathcal A}\to {\mathcal
A}^+$ and the Legendre transformation $leg_L:{\mathcal A}\to V^*$
associated with $L$ are given by
\[\begin{array}{rcl}
Leg_L(t,x,y;\dot x,\dot y,\omega_x,\omega_y,\omega_z)&=&
(t,x,y;-L,m\dot x,m\dot y,k^2\omega_x, k^2\omega_y,
k^2\omega_z),\\
leg_L(t,x,y;\dot x,\dot y,\omega_x,\omega_y,\omega_z)&=&
(t,x,y;m\dot x,m\dot y,k^2\omega_x, k^2\omega_y, k^2\omega_z).
\end{array}
\]

Thus, it is easy to see that the image of ${\mathcal B}$ under the
Legendre transformation, which will be denoted by $\bar{\mathcal
B}$, is defined by the vanishing of the functions
\[
\begin{array}{l}
\psi ^1=\Omega(t)y+\displaystyle \frac{1}{m}p_x-\frac{r}{k^2}u_y,\\[5pt]
\psi ^2=-\Omega(t)x+\displaystyle \frac{1}{m}p_y+\frac{r}{k^2}u_x.
\end{array}
\]

Moreover, the induced Hamiltonian section $h=Leg_L\circ
leg_L^{-1}:V^*\to{\mathcal A}^+$ is given by
\[
h (t,x,y;p_x,p_y,u_x,u_y,u_z)=(t,x,y;-H,p_x,p_y,u_x,u_y,u_z),
\]
where the function $H$ is
$$H(t,x,y;p_x,p_y,u_x,u_y,u_z)=\displaystyle
\frac{1}{2}\Big(\frac{1}{m}(p_x^2+p_y^2)+\frac{1}{k^2}(u_x^2+u_y^2+u_z^2)\Big).$$

A curve $\gamma :I\to V^*$,
$t\mapsto(t,x(t),y(t);p_x(t),p_y(t),u_x(t),u_y(t),u_z(t))$, is a
solution of the equations of motion for the nonholonomic
Hamiltonian system $(h,\bar{\mathcal B})$ if it satisfies the
equations $$\left\{\begin{array}{rcl}
\dot{p_x}&=&-\bar\lambda_1,\\
\dot{p_y}&=&-\bar\lambda_2,\\
\dot{u}_x&=&-r\bar \lambda_2,\\
\dot{u}_y&=&r\bar\lambda_1,\\
\dot{u}_z&=&0,
\end{array}\right.$$
and the constraints $\psi^1=0$ and $\psi^2=0$.

In this particular example, the expression of the nonholonomic
bracket is given by
$$\begin{array}{l}
 \{\bar{h}',\bar{h}''\}_{nh}=\frac{\partial(H'-H'')}{\partial t}
 +\Big ( \frac{\partial H'}{\partial
x}\frac{\partial H''}{\partial p_x}-\frac{\partial H'}{\partial
p_x}\frac{\partial H''}{\partial x} \Big ) +\Big ( \frac{\partial
H'}{\partial y}\frac{\partial H''}{\partial p_y}-\frac{\partial
H'}{\partial p_y}\frac{\partial H''}{\partial y}  \Big )
\\[14pt]
\kern30pt- u_x\Big (  \frac{\partial H'}{\partial
u_z}\frac{\partial H''}{\partial u_y}-\frac{\partial H'}{\partial
u_y}\frac{\partial H''}{\partial u_z} \Big ) - u_y\Big (
\frac{\partial H'}{\partial u_x}\frac{\partial H''}{\partial
u_z}-\frac{\partial H'}{\partial u_z}\frac{\partial H''}{\partial
u_x} \Big ) -
 u_z\Big (  \frac{\partial H'}{\partial
u_y}\frac{\partial H''}{\partial u_x}-\frac{\partial H'}{\partial
u_x}\frac{\partial H''}{\partial u_y} \Big )\\[14pt]
\kern30pt- \frac{k^2m}{k^2+r^2m} \Big [ \Omega '(t)y-\frac{1}{m}
\frac{\partial H''}{\partial x}+\Omega (t) \frac{\partial
H''}{\partial p_y} +\frac{r}{k^2}(u_z \frac{\partial H''}{\partial
u_x}-u_x \frac{\partial H''}{\partial u_z})\Big ] \Big [
\frac{\partial (H'-H)}{\partial p_x}   -r \frac{\partial
(H'-H)}{\partial u_y }\Big ]
\\[14pt]
\kern30pt- \frac{k^2m}{k^2+r^2m} \Big [- \Omega '(t)x-\Omega (t)
\frac{\partial H''}{\partial p_x}-\frac{1}{m} \frac{\partial
H''}{\partial y} - \frac{r}{k^2}(u_y \frac{\partial H''}{\partial
u_z}-u_z \frac{\partial H''}{\partial u_y})\Big ] \Big [
\frac{\partial (H'-H)}{\partial p_y}   + r  \frac{\partial
(H'-H)}{\partial u_x }\Big ]\\[14pt]
\kern30pt+ \frac{k^2m}{k^2+r^2m} \Big [ \Omega '(t)y-\frac{1}{m}
\frac{\partial H'}{\partial x}+\Omega (t) \frac{\partial
H'}{\partial p_y} +\frac{r}{k^2}(u_z \frac{\partial H'}{\partial
u_x}-u_x \frac{\partial H'}{\partial u_z})\Big ] \Big [
\frac{\partial (H''-H)}{\partial p_x}   -r \frac{\partial
(H''-H)}{\partial u_y }\Big ]
\\[14pt]
\kern30pt+ \frac{k^2m}{k^2+r^2m} \Big [- \Omega '(t)x-\Omega (t)
\frac{\partial H'}{\partial p_x}-\frac{1}{m} \frac{\partial
H'}{\partial y} - \frac{r}{k^2}(u_y \frac{\partial H'}{\partial
u_z}-u_z \frac{\partial H'}{\partial u_y})\Big ] \Big [
\frac{\partial (H''-H)}{\partial p_y}   + r  \frac{\partial
(H''-H)}{\partial u_x }\Big ],
\end{array}$$
for sections $h'$ and $h''$ with associated functions $H'$ and
$H''$, respectively.

\section{Conclusions and Future work}
\label{conclusions} We have developed a general geometrical
setting for nonholonomic mechanical systems in the context of Lie
affgebroids.  We list the main results obtained in this paper:

\begin{itemize}

\item The notion of regularity of a nonholonomic mechanical system
with affine constraints on a Lie affgebroid was elucidated and
characterized in geometrical terms.

\item In the regular case, the solution of the nonholonomic problem
was obtained by projecting the unconstrained one using several
decompositions of the prolongation of the Lie affgebroid along the
affine constraint subbundle.

\item The hamiltonian formalism for nonholonomic systems is completely
analyzed and a  nonholonomic bracket was defined.

\item Several examples were discussed showing the versatility of
this geometric framework.

\end{itemize}
In a forthcoming paper we will study the reduction of the Lie
affgebroid nonholonomic dyna\-mics under symmetry and we will
obtain a Lie affgebroid version of the momentum equation
introduced in \cite{CoLeMaMa} for Lie algebroids.

Other goal we have proposed is to develop a geometric formalism
for vakonomic Mechanics and optimal control theory on Lie
affgebroids.

In addition, we will explore the construction of geometric
integrators for mechanical systems on Lie affgebroids and, in
particular, for nonholonomic systems.

\section*{Acknowledgments}
This work has been partially supported by MEC (Spain) Grants MTM
2006-03322, MTM 2004-7832, project ``Ingenio Mathematica" (i-MATH)
No. CSD 2006-00032 (Consolider-Ingenio 2010) and S-0505/ESP/0158
of the CAM. D. Iglesias wants to thank MEC for a Research Contract
``Juan de la Cierva".


\begin{thebibliography}{99999}

\bibitem{BaSn}
L. Bates and J. \'Sniatycki: Nonholonomic reduction, {\sl Rep.
Math. Phys.} {\bf 32} (1), 99--115 (1992).

\bibitem{B} A. M. Bloch:  {\sl Nonholonomic mechanics and control}.
Interdisciplinary Applied Mathematics, 24. Systems and
Control. Springer-Verlag, New York, 2003


\bibitem{BKMM} A. M. Bloch, P.S. Krishnaprasad,
J. E. Marsden and R. M. Murray: Nonholonomic mechanical systems
with symmetry, {\sl Arch. Rational Mech. Anal.} {\bf 136}, 21--99
(1996).

\bibitem{CLMM}
F. Cantrijn, M. de Le\'on, J.C. Marrero and D. Mart\'{\i}n de
Diego: On almost-Poisson structures in nonholonomic mechanics.II
The time-dependent framework, {\sl Nonlinearity} {\bf 13},
1379--1409 (2000).

\bibitem{CaLeMa}
F. Cantrijn, M. de Le\'on and D. Mart\'{\i}n de Diego: On
almost-Poisson structures in nonholonomic mechanics, {\sl
Nonlinearity} {\bf 12}, 721--737  (1999).

\bibitem{Cort} J. Cort\'es: {\sl Geometric, control and numerical aspects of nonholonomic systems},
 Lecture Notes in Mathematics, vol. 1793, Springer-Verlag, 2002.


\bibitem{CoLeMaMa} J. Cort\'es, M. de Le\'on, J.C. Marrero  and
E. Mart\'{\i}nez: Nonholonomic Lagrangian systems on Lie
algebroids, {\sl Preprint, 2005}

\bibitem{CoMa} J. Cort\'es and E. Mart\'{\i}nez: Mechanical
control systems on Lie algebroids, {\sl IMA J. Math. Control.
Inform.} {\bf 21}, 457--492 (2004).

\bibitem{GGrU} J. Grabowski, K. Grabowska, P. Urbanski: Lie
brackets on affine bundles, {\sl Ann. Glob. Anal. Geom.}, {\bf
24}, 101--130 (2003).

\bibitem{GGU2} K. Grabowska, J. Grabowski, P. Urba\'nski:
AV-differential geometry: Poisson and Jacobi structures, {\sl J.
Geom. Phys.} {\bf 52}, 398--446 (2004).

\bibitem{HM} P.J. Higgins, K. Mackenzie: Algebraic constructions
in the category of Lie algebroids, {\sl J. Algebra}, {\bf 129},
194--230 (1990).

\bibitem{IMPS} D. Iglesias, J.C. Marrero, E.
Padr\'{o}n and D. Sosa, Lagrangian submanifolds and dynamics on
Lie affgebroids, \textit{Rep. Math. Phys.} \textbf{38}, 385--436
(2006).

\bibitem{KoMa} W.S. Koon and J.E. Marsden: Poisson reduction
of nonholonomic mechanical systems with symmetry, {\sl Rep. Math.
Phys.} {\bf 42} (1/2), 101--134 (1998).

\bibitem{LMdM} M. de Le\'on, J.C.  Marrero and D. Mart\'{\i}n de Diego:
Non-holonomic Lagrangian systems in jet manifolds. {\sl J. Phys.
A} {\bf 30} no. 4, 1167--1190 (1997).

\bibitem{LeMaMa} M. de Le\'on, J.C. Marrero and D. Mart{\'\i}n de Diego:
Mechanical systems with nonlinear constraints, {\sl International
Journal of Theoretical Physics} {\bf 36} (4), 979--995 (1997).

\bibitem{LeMaMa2} M. de Le\'on, J.C.  Marrero and D. Mart\'{\i}n de Diego:
Time-dependent mechanical systems with non-linear constraints,
Proc. Conf. on Differential Geometry (Budapest, July 27-30 1996)
(Dordrecht: Kluwer), 221--234  (1999).

\bibitem{LMM} M. de Le\'{o}n, J.C. Marrero, E. Mart\'{\i}nez: Lagrangian
submanifolds and dynamics on Lie algebroids, {\sl J. Phys. A.:
Math. Gen.} {\bf 38}, R241--R308 (2005).


\bibitem{LD} M. de Le\'on and D. Mart{\'\i}n de Diego:
On the geometry of non-holonomic Lagrangian systems, {\sl  J.
Math. Phys.} {\bf 37} (7), 3389--3414 (1996).


\bibitem{Lewis}
A.D. Lewis: Affine connections and distributions with applications
to nonholonomic mechanics, Rep. Math. Phys. {\bf 42}, 135--164
(1998).

\bibitem{LM}
A.D. Lewis and R.M. Murray: Variational principles for constrained
systems: theory and experiment, {\sl International Journal of
Nonlinear Mechanics} {\bf 30} (6), 793--815 (1995).

\bibitem{Ma} K. Mackenzie:
General theory of Lie groupoids and Lie algebroids. {\sl London
Mathematical Society Lecture Note Series}, {\bf 213}, Cambridge
University Press, Cambridge, 2005.

\bibitem{Mar}
ChM. Marle: Reduction of constrained mechanical systems and
stability of relative equilibria, {\sl Comm. Math. Phys.} {\bf
174}, 295--318 (1995).

\bibitem{EMar} E. Mart\'{\i}nez: Lagrangian mechanics on Lie
algebroids {\sl Acta Appl. Math.} {\bf 67}, 295--320 (2001).

\bibitem{M2} E. Mart\'{\i}nez: Geometric Formulation of Mechanics on
Lie algebroids, Proceedings of the VIII Fall Workshop on Geometry
and Physics (Medina del Campo, 1999). Pu\-bli\-ca\-ciones de la
RSME, vol. 2, 209--222 (2001).

\bibitem{M} E. Mart\'{\i}nez: Lie algebroids, Some Generalizations and
Applications, Proceedings of the XI Fall Workshop on Geometry and
Physics (Oviedo, 2002). Pu\-bli\-ca\-ciones de la RSME, vol. 6,
103--117.

\bibitem{MMeS} E. Mart\'{\i}nez, T. Mestdag, W. Sarlet: Lie algebroid
structures and Lagrangian systems on affine bundles, {\sl J. Geom.
and Phys.} {\bf 44}, 70--95 (2002).

\bibitem{Mest} T. Mestdag: Lagrangian reduction by stages for non-holonomic
systems in a Lie algebroid framework, {\sl J. Phys. A: Math. Gen}
{\bf 38}, 10157--10179 (2005).

\bibitem{MeLa}
T. Mestdag and B. Langerock B: A Lie algebroid framework for
nonholonomic systems, {\sl J. Phys. A: Math. Gen} {\bf 38},
1097--1111  (2005).

\bibitem{NF}
J. Neimark J and N. Fufaev: {\sl Dynamics on Nonholonomic
systems},  Translation of Mathematics Monographs, {\bf 33}, AMS,
Providence, RI (1972).

\bibitem{SaCaSa}
W. Sarlet, F. Cantrijn and D.J. Saunders: A geometrical framework
for the study of nonholonomic Lagrangian systems, {\sl J. Phys. A:
Math. and Gen.} {\bf 28}, 3253--3268 (1995).

\bibitem{SaMeMa}
W. Sarlet, T. Mestdag, E. Mart\'{\i}nez: Lie algebroid structures
on a class of affine bundles, {\sl J. Math. Phys.} {\bf 43},
5654--5674 (2002).

\bibitem{Sa} D.J. Saunders: {\sl The geometry of jet bundles},
London Math. Soc., Lecture Note Series, {\bf 142} Cambridge Univ.
Press, (1989).

\bibitem{SaSaCa}
D.J. Saunders, W. Sarlet and F. Cantrijn: A geometrical framework
for the study of nonholonomic Lagrangian systems: II, {\sl J.
Phys. A: Math. and Gen.} {\bf 29}, 4265--4274  (1996).

\bibitem{SaSaCa2}
D.J. Saunders, F. Cantrijn and W. Sarlet: Regularity aspects and
Hamiltonization of non-holonomic systems, {\sl J. Phys. A: Math.
and Gen.} {\bf 32}, 6869--6890 (1999).

\bibitem{Sn}
J. \'Sniatycki: Nonholonomic Noether Theorem and reduction of
symmetries. {\sl Rep. Math. Phys.} {\bf 42}, 5--23 (1998).

\bibitem{VaMa}
A.J. Van der Schaft and V.M. Maschke: On the Hamiltonian
formulation of non-holonomic mechanical systems. {\sl Rep. Math.
Phys.} {\bf 34}, 225--233 (1994).

\bibitem{VCLM} J. Vankerschaver, F. Cantrijn, M. de Le\'on and  D. Mart{\'\i}n de Diego:
Geometric aspects of nonholonomic field theories. {\sl Rep. Math.
Phys.} {\bf 56} no. 3, 387--411 (2005).


\bibitem{We}
A. Weinstein: Lagrangian Mechanics and groupoids, {\sl In
Mechanics day (Waterloo, ON, 1992)}, Fields Institute
Communications {\bf 7}, 207--231 (1996).

\end{thebibliography}
\end{document}